\def\UrlSpecials{\do\~{\kern -.15em\lower .7ex\hbox{~}\kern .04em}} \catcode`~=13 
\newcommand{\nn}{\nonumber}
\newcommand{\calA}{\mathcal{A}}
\newcommand{\calB}{\mathcal{B}}
\newcommand{\calC}{\mathcal{C}}
\newcommand{\calE}{\mathcal{E}}
\newcommand{\calI}{\mathcal{I}}
\newcommand{\calJ}{\mathcal{J}}
\newcommand{\calL}{\mathcal{L}}
\newcommand{\calM}{\mathcal{M}}
\newcommand{\calN}{\mathcal{N}}
\newcommand{\calP}{\mathcal{P}}
\newcommand{\calS}{\mathcal{S}}
\newcommand{\calT}{\mathcal{T}}
\newcommand{\calU}{\mathcal{U}}
\newcommand{\calX}{\mathcal{X}}
\newcommand{\calY}{\mathcal{Y}}
\newcommand{\calZ}{\mathcal{Z}}
\newcommand{\hatcalX}{\hat{\calX}}
\newcommand{\bA}{\mathbf{A}}
\newcommand{\bI}{\mathbf{I}}
\newcommand{\bj}{\mathbf{j}}
\newcommand{\bJ}{\mathbf{J}}
\newcommand{\bM}{\mathbf{M}}
\newcommand{\bR}{\mathbf{R}}
\newcommand{\bS}{\mathbf{S}}
\newcommand{\bU}{\mathbf{U}}
\newcommand{\bv}{\mathbf{v}}
\newcommand{\bV}{\mathbf{V}}
\newcommand{\bX}{\mathbf{X}}
\newcommand{\bY}{\mathbf{Y}}
\newcommand{\bz}{\mathbf{z}}
\newcommand{\bZ}{\mathbf{Z}}
\newcommand{\rmb}{\mathrm{b}}
\newcommand{\rmc}{\mathrm{c}}
\newcommand{\rmd}{\mathrm{d}}
\newcommand{\rme}{\mathrm{e}}
\newcommand{\rmg}{\mathrm{g}}
\newcommand{\rmp}{\mathrm{p}}
\newcommand{\rmP}{\mathrm{P}}
\newcommand{\rms}{\mathrm{s}}
\newcommand{\bbE}{\mathbb{E}}
\newcommand{\bbN}{\mathbb{N}}
\newcommand{\bbR}{\mathbb{R}}
\newcommand{\scC}{\mathscr{C}}
\newcommand{\scP}{\mathscr{P}}
\newcommand{\scR}{\mathscr{R}}
\newcommand{\scS}{\mathscr{S}}
\newcommand{\scV}{\mathscr{V}}
\DeclareMathAlphabet{\mathbsf}{OT1}{cmss}{bx}{n}
\DeclareMathAlphabet{\mathssf}{OT1}{cmss}{m}{sl}
\newcommand{\rvd}{\mathsf{d}}
\newcommand{\rvg}{\mathsf{g}}
\DeclareSymbolFont{bsfletters}{OT1}{cmss}{bx}{n}  
\DeclareSymbolFont{ssfletters}{OT1}{cmss}{m}{n}
\DeclareMathSymbol{\bsfGamma}{0}{bsfletters}{'000}
\DeclareMathSymbol{\ssfGamma}{0}{ssfletters}{'000}
\DeclareMathSymbol{\bsfDelta}{0}{bsfletters}{'001}
\DeclareMathSymbol{\ssfDelta}{0}{ssfletters}{'001}
\DeclareMathSymbol{\bsfTheta}{0}{bsfletters}{'002}
\DeclareMathSymbol{\ssfTheta}{0}{ssfletters}{'002}
\DeclareMathSymbol{\bsfLambda}{0}{bsfletters}{'003}
\DeclareMathSymbol{\ssfLambda}{0}{ssfletters}{'003}
\DeclareMathSymbol{\bsfXi}{0}{bsfletters}{'004}
\DeclareMathSymbol{\ssfXi}{0}{ssfletters}{'004}
\DeclareMathSymbol{\bsfPi}{0}{bsfletters}{'005}
\DeclareMathSymbol{\ssfPi}{0}{ssfletters}{'005}
\DeclareMathSymbol{\bsfSigma}{0}{bsfletters}{'006}
\DeclareMathSymbol{\ssfSigma}{0}{ssfletters}{'006}
\DeclareMathSymbol{\bsfUpsilon}{0}{bsfletters}{'007}
\DeclareMathSymbol{\ssfUpsilon}{0}{ssfletters}{'007}
\DeclareMathSymbol{\bsfPhi}{0}{bsfletters}{'010}
\DeclareMathSymbol{\ssfPhi}{0}{ssfletters}{'010}
\DeclareMathSymbol{\bsfPsi}{0}{bsfletters}{'011}
\DeclareMathSymbol{\ssfPsi}{0}{ssfletters}{'011}
\DeclareMathSymbol{\bsfOmega}{0}{bsfletters}{'012}
\DeclareMathSymbol{\ssfOmega}{0}{ssfletters}{'012}
\newcommand{\hatC}{\hat{C}}
\newcommand{\hatL}{\hat{L}}
\newcommand{\till}{\tilde{l}}
\newcommand{\hatm}{\hat{m}}
\newcommand{\hatM}{\hat{M}}
\newcommand{\hatR}{\hat{R}}
\newcommand{\hatU}{\hat{U}}
\newcommand{\hatx}{\hat{x}}
\newcommand{\hatX}{\hat{X}}
\newcommand{\tilx}{\tilde{x}}
\newcommand{\tilX}{\tilde{X}}
\newcommand{\tilY}{\tilde{Y}}
\newcommand{\tilZ}{\tilde{Z}}
\newcommand{\barx}{\bar{x}}
\newcommand{\barP}{\bar{P}}
\newcommand{\barW}{\bar{W}}
\newcommand{\veps}{\varepsilon}
\newcommand{\bmu}{\bm{\mu}}
\newcommand{\bSigma	}{\bm{\Sigma}}
\def\fndot{\, \cdot \,}
\DeclareMathOperator{\var}{\mathsf{Var}}
\DeclareMathOperator{\cov}{\mathsf{Cov}}
\newcommand{\bzero}{\mathbf{0}}
\newcommand{\bone}{\mathbf{1}}
\newtheorem{theorem}{Theorem} 
\newtheorem{lemma}[theorem]{Lemma}
\newtheorem{proposition}[theorem]{Proposition}
\newtheorem{corollary}[theorem]{Corollary}
\newtheorem{definition}{Definition}
\newtheorem{remark}{Remark}
\newcommand{\qednew}{\nobreak \ifvmode \relax \else
      \ifdim\lastskip<1.5em \hskip-\lastskip
      \hskip1.5em plus0em minus0.5em \fi \nobreak
      \vrule height0.75em width0.5em depth0.25em\fi}
\newcommand{\Pe}{\rmP_{\rme}}
\DeclareMathOperator*{\plimsup}{\mathfrak{p}-lim\, sup\,}
\DeclareMathOperator*{\pliminf}{\mathfrak{p}-lim\, inf\,}
\newcommand{\underI}{\underline{I}}
\newcommand{\overI}{\overline{I}}
\newcommand{\overH}{\overline{H}} 
\newcommand{\WAK}{\mathrm{WAK}}
\newcommand{\WZ}{\mathrm{WZ}}
\newcommand{\GP}{\mathrm{GP}}
\newcommand{\hcalX}{\hat{\calX}}
\newcommand{\sizeI}{\lvert\calI\rvert}
\newcommand{\sizeL}{\lvert\calL\rvert}
\newcommand{\sizeM}{\lvert\calM\rvert}
\newcommand{\calTsWZ}{\calT_{\rmd\mathrm{,st}}^{\WZ}}
\newcommand{\bin}{\kappa}
\title{Non-Asymptotic and Second-Order Achievability Bounds for   Coding With Side-Information\thanks{This paper was presented in part at the 2013 IEEE International Symposium on Information Theory.}}
\author{Shun~Watanabe~\IEEEmembership{Member,~IEEE},       
\thanks{The first author is with the Department
of Information Science and Intelligent Systems, 
University of Tokushima,
2-1, Minami-josanjima, Tokushima,
770-8506, Japan, and with the Institute for Systems Research, University of Maryland, 
College Park, MD 20742, USA, 
e-mail:shun-wata@is.tokushima-u.ac.jp.}
Shigeaki~Kuzuoka~\IEEEmembership{Member,~IEEE},       
\thanks{The second author is with the Department of Computer and Communication Sciences,
Wakayama University, Wakayama, 640-8510, Japan, e-mail:kuzuoka@ieee.org.}
and Vincent Y.~F.\ Tan~\IEEEmembership{Member,~IEEE}
\thanks{The third author is with the Department of Electrical and Computer Engineering and Department of Mathematics, National University of Singapore (NUS), e-mail:vtan@nus.edu.sg}

\thanks{Manuscript received ; revised }}
\begin{document}
\flushbottom
\maketitle

\begin{abstract} 
We present   novel  non-asymptotic or finite blocklength achievability bounds for three       side-information problems in network  information theory. These include (i) the Wyner-Ahlswede-K\"orner (WAK) problem of almost-lossless source coding with rate-limited side-information, (ii) the Wyner-Ziv (WZ) problem of lossy source coding with side-information at the decoder and (iii) the Gel'fand-Pinsker (GP) problem of channel coding with noncausal state information available at the encoder. The bounds are proved using ideas from channel simulation and channel resolvability. Our bounds for all three problems improve  on all previous non-asymptotic bounds on the error probability of the WAK, WZ and GP problems--in particular those derived by Verd\'u. Using our novel non-asymptotic bounds, we  recover the general formulas for the optimal rates of these side-information problems. Finally, we also present achievable second-order coding rates by applying the multidimensional Berry-Ess\'een theorem to our new non-asymptotic bounds. Numerical results show that the second-order coding rates obtained using our non-asymptotic achievability bounds  are superior to those obtained using existing finite blocklength bounds. 
\end{abstract}
\begin{keywords}
 Source coding, channel coding, side-information,  Wyner-Ahlswede-K\"orner, Wyner-Ziv,  Gel'fand-Pinsker, finite blocklength, non-asymptotic, second-order coding rates
\end{keywords}

\section{Introduction}
The study of {\em network   information theory}~\cite{elgamal} involves characterizing the optimal rate regions or capacity regions for problems involving compression and transmission from multiple sources to multiple destinations. Apart from a few special channels or source models, optimal rate regions and capacity regions for many network information theory problems are still not known. In this paper, we revisit three coding problems whose asymptotic  rate characterizations are well known. These include 
\begin{itemize}
\item The {\em Wyner-Ahlswede-K\"orner} (WAK) problem of almost-lossless source coding with rate-limited (aka coded) side-information  \cite{Wyner75, Ahl75},
\item The {\em Wyner-Ziv} (WZ) problem of lossy source coding with side-information at the decoder  \cite{wynerziv}, and
\item The {\em Gel'fand-Pinsker} (GP) problem of channel coding with noncausal state information   at the encoder \cite{GP80}. 
\end{itemize}%
 These problems fall under the class of coding problems with {\em side-information}. That is, a subset of terminals has access to either a correlated source or the state of the channel. In most cases, this knowledge helps to strictly improve the rates of compression or transmission over the case where there is no side-information.

While the study of asymptotic characterizations of network information theory problems has been of key interest and importance for the past $50$ years, it is important to analyze     non-asymptotic (or finite blocklength) limits of  various network information theory problems. This is because there may be  hard constraints on decoding complexity or delay in modern, heavily-networked systems. The paper derives new non-asymptotic bounds on the error probability for the WAK  and GP problems as well as the probability of excess distortion for the WZ problem.  Our bounds improve on all existing finite blocklength bounds for these problems such as those in~\cite{Ver12}.  In addition, we use these bounds to recover  known  general formulas~\cite{Han10,miyake, iwata02, Tan12b}  and we also derive achievable second-order coding rates~\cite{Hayashi08, Hayashi09} for these side-information problems.

Traditionally,   achievability proofs of the direct pats of these coding problems are common and  involve  a covering step, a packing step and   the use of the Markov lemma~\cite{Wyner75} (also known as conditional typicality lemma in El Gamal and Kim~\cite{elgamal}). As such to prove tighter bounds, it is  necessary to develop new proof techniques in place of these lemmas~\cite{elgamal} and their non-asymptotic versions~\cite{Ver12, Han10}. These new techniques are based on the notion of {\em channel resolvability}~\cite{HV93,Hayashi06, Han10}  and {\em channel simulation}  \cite{BSST02 , Win02,Cuff12}. We use the former  in the helper's code   construction. 

\begin{figure}
\centering
\setlength{\unitlength}{.028cm}
\begin{picture}(180, 100)
\put(0, 20){\vector(1, 0){40}}
\put(0, 80){\vector(1, 0){40}}
\put(80, 80){\vector(1, 0){40}}
\put(80, 20){\line(1, 0){60}}
\put(140, 20){\vector(0, 1){40}}
\put(160, 80){\vector(1, 0){40}}
\put(40, 0){\line(1, 0){40}}
\put(40, 40){\line(1, 0){40}}
\put(40, 0){\line(0,1){40}}
\put(80, 0){\line(0,1){40}}

\put(40, 60){\line(1, 0){40}}
\put(40, 100){\line(1, 0){40}}
\put(40, 60){\line(0,1){40}}
\put(80, 60){\line(0,1){40}}

\put(120, 60){\line(1, 0){40}}
\put(120, 100){\line(1, 0){40}}
\put(120, 60){\line(0,1){40}}
\put(160, 60){\line(0,1){40}}

\put(17, 25){\mbox{$Y$}}
\put(17, 85){\mbox{$X$}}

\put(97, 25){\mbox{$L$}}
\put(97, 85){\mbox{$M$}}

\put(180, 85){\mbox{$\hatX$}}
\put(168, 60){\mbox{$\Pr(\hatX\ne X)$}}

\put(57, 78){\mbox{$f$}}
\put(57, 18){\mbox{$g$}}

\put(137, 78){\mbox{$\psi$}}
\end{picture}
\caption{Illustration of the WAK problem}
\label{fig:wak}
\end{figure}
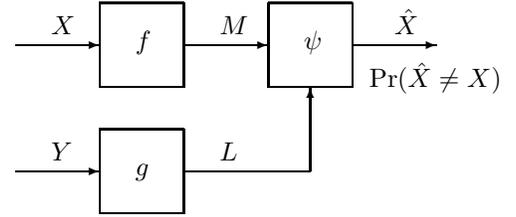

To illustrate our idea at a high level, let us use the WAK problem as a canonical example of all three problems  of interest.  Recall that in the classical WAK problem, there is an independent and identically distributed  (i.i.d.) joint source $P_{XY}^n(x^n,y^n)=\prod_{i=1}^n P_{XY}(x_i, y_i)$. The  main source $X^n\sim  P_X^n$   is to be reconstructed almost losslessly from rate-limited versions of both $X^n$ and  $Y^n$, where $Y^n$ is  a correlated random variable regarded as side-information. See Fig.~\ref{fig:wak}. The compression rates of $X^n$ and $Y^n$ are denoted as $R_1$ and $R_2$  respectively.  The  optimal rate region is the set of  rate pairs $(R_1, R_2)$ for which there exists a {\em reliable} code, that is one whose error probability can be made arbitrarily small with increasing blocklengths. WAK~\cite{Wyner75, Ahl75} showed that the optimal rate region is  
\begin{align}
 R_1 \ge H(X|U),\quad R_2\ge I(U;Y) \label{eqn:wak}
\end{align}
for some $P_{U|Y}$. For the direct part, the helper encoder compresses the side-information and transmits a description represented by $U^n$.  By the covering lemma~\cite{elgamal}, this results in the rate constraint $R_2\ge I(U;Y)$. The main encoder then uses    binning~\cite{cover75}  as in the achievability proof of the Slepian-Wolf theorem~\cite{sw73} to help the decoder recover $X$ given  the description  $U$. This results in the rate constraint $R_1\ge H(X|U)$.


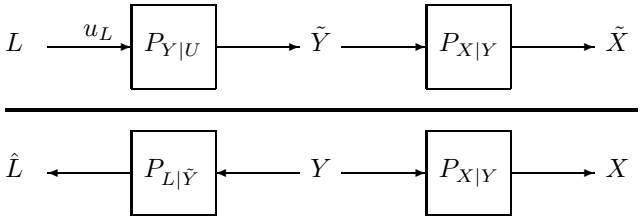
\begin{figure}
\centering
\setlength{\unitlength}{.028cm}
\begin{picture}(300, 100)
\put(60, 20){\vector(-1, 0){40}}
\put(20, 80){\vector(1, 0){40}}
\put(100, 80){\vector(1, 0){40}}
\put(140, 20){\vector(-1, 0){40}}
\put(160,80){\vector(1,0){40}}
\put(160,20){\vector(1,0){40}}
\put(240, 80){\vector(1, 0){40}}
\put(240, 20){\vector(1, 0){40}}
\put(60, 0){\line(1, 0){40}}
\put(60, 40){\line(1, 0){40}}
\put(60, 0){\line(0,1){40}}
\put(100, 0){\line(0,1){40}}

\put(60, 60){\line(1, 0){40}}
\put(60, 100){\line(1, 0){40}}
\put(60, 60){\line(0,1){40}}
\put(100, 60){\line(0,1){40}}

\put(200, 60){\line(1, 0){40}}
\put(200, 100){\line(1, 0){40}}
\put(200, 60){\line(0,1){40}}
\put(240, 60){\line(0,1){40}}

\put(200, 0){\line(1, 0){40}}
\put(200, 40){\line(1, 0){40}}
\put(200, 00){\line(0,1){40}}
\put(240, 00){\line(0,1){40}}

\put(0,78){\mbox{$L$}}
\put(0,18){\mbox{$\hat{L}$}}
\put(37, 85){\mbox{$u_L$}}



\put(66, 78){\mbox{$P_{Y|U}$}}
\put(66, 18){\mbox{$P_{L|\tilY}$}}
\put(145,78){\mbox{$\tilY$}}
\put(145,18){\mbox{$Y$}}

\put(206, 78){\mbox{$P_{X|Y}$}}
\put(206,18){\mbox{$P_{X|Y}$}}

\put(285,78){\mbox{$\tilX$}}
\put(285,18){\mbox{$X$}}

\linethickness{0.5mm}
\put(0,50){\line(1, 0){300}}

\end{picture}
\caption{High level description of helper's coding scheme for WAK. The upper row is a virtual scheme in which
the uniform random number $L$ is sent over channel $P_{Y|U}$. The lower row is the corresponding actual scheme
in which message $\hat{L}$ is stochastically generated via $P_{L|\tilY}$.}
\label{fig:sim}
\end{figure}

The main idea in our proof of the new non-asymptotic upper bound on the error probability of the WAK problem  is as follows:  In the channel resolvability problem, for given channel $P_{Y|U}$ and input distribution $P_U$,
the goal is  to approximate the output distribution $P_Y$ (induced by $(P_{Y|U},P_U)$)
by the output distribution $P_{\tilY}$ of codewords
for a codebook\footnote{Usually, the codebook is randomly generated according to the input distribution $P_U$.} 
${\cal C} = \{u_1,\ldots,u_{\sizeL}\}$ and the uniform random number $L \in \calL$.
Asymptotically, the approximation can be done successfully if the rate $R_2$ of the random number $L$ satisfies 
$R_2 \ge I(U;Y)$. In our helper's coding scheme (see Fig.~\ref{fig:sim}), 
we use   channel resolvability   as a virtual scheme that is applied to the reverse test channel
$P_{Y|U}$ of a given test channel  and the marginal $P_U$ of the auxiliary random variable as the input distribution. 
Then, we flip the roles of the input and the output, i.e., we construct the conditional distribution $P_{L|\tilY}$ from
the joint distribution $P_{L\tilY}$. In the actual coding scheme, the message $\hatL$ on $\calL$ is stochastically 
generated from helper's source $Y$ via $P_{L|\tilY}$, which is known as the {\em likelihood encoder} \cite{Cuff12}.
Since the successful approximation in the channel 
resolvability guarantees $P_{\tilY} \simeq P_Y$, the joint distributions in the virtual scheme and the actual scheme are also close, i.e.,
\begin{eqnarray} \label{eq:closeness-of-virtual-actual}
P_{\hatL XY} = P_Y P_{L|\tilY} P_{X|Y} \simeq P_{\tilY} P_{L|\tilY} P_{X|Y} = P_{L \tilX \tilY}.
\end{eqnarray}
The decoder reproduces $X$ via a Slepian-Wolf decoder by using $u_{\hatL}$ as the side-information. 
Because of \eqref{eq:closeness-of-virtual-actual}, the analysis of error probability can be done as if the decoder's observation
is $u_L$ and the underlying distribution is the virtual one $P_{L \tilX \tilY}$. 
Moreover, by taking the average over the randomly
generated codebook $\calC$, since the codeword $u_L$ is distributed according to $P_U$, $(\tilX,u_L)$ behaves like
$(X,U)$. Thus, the analysis of error probability can be done in the same manner as the Slepian-Wolf coding
with full side-information $U$. 
The above argument enables us to circumvent the need to use the so-called piggyback coding lemma (PBL)
and the Markov lemma \cite{Wyner75} which result in much poorer estimates on the error probability.

\subsection{Main Contributions}
We now describe the three main contributions in this paper. 

Our first main contribution in this paper  is to show improved bounds on the probabilities of error for   WAK, WZ and GP  coding. We briefly describe the form of the bound for WAK coding here. The primary part of the new upper  bound on the error probability $\Pe(\Phi)$ for WAK coding depends on two positive constants $\gamma_{\rmb}$ and $\gamma_{\rmc}$ and is essentially given by  
\begin{equation}
\Pe(\Phi)\lesssim \Pr(\calE_{\rmc}\cup\calE_{\rmb}) \label{eqn:le_approx}
\end{equation}
 where  the {\em covering error} is
\begin{equation}
\calE_{\rmc}:= \left\{\log \frac{P_{Y|U}(U|Y)}{P_Y(Y)}\ge \gamma_{\rmc}\right\}
\end{equation}
and  the {\em binning error} is 
\begin{equation}
\calE_{\rmb}:=\left\{\log \frac{1}{P_{X|U}(X|U)}\ge \gamma_{\rmb} \right\}.
\end{equation}
The notation  $\lesssim$     is not meant to be precise and, in fact, we are dropping several residual terms that do not contribute to the second-order coding rates in the $n$-fold i.i.d.\ setting if $\gamma_{\rmb}$ and $\gamma_{\rmc}$ are chosen appropriately. This result is stated precisely in Theorem~\ref{thm:dt}. From \eqref{eqn:le_approx}, we deduce that in the $n$-fold i.i.d.\ setting, if we choose $\gamma_{\rmc}$ and $\gamma_{\rmb}$ to be fixed numbers that are strictly larger than the mutual information $I(U;Y)$ and the conditional entropy $H(X|U)$ respectively, we are guaranteed that the error probability  $\Pe(\Phi)$ decays to zero.  This follows from Khintchine's law of large numbers~\cite[Ch.\ 1]{Han10}. Thus, we recover the direct part of  WAK's result. In fact, we can take this one step further (Theorem \ref{thm:gen_wak}) to obtain an achievable {\em general formula} (in the sense of Verd\'u-Han~\cite{Han10, VH94}) for the    WAK problem with general source~\cite[Ch.\ 1]{Han10}. This was previously done by Miyake-Kanaya~\cite{miyake} but their derivation is based on a different non-asymptotic formula more akin to Wyner's PBL. Also,  since we have the freedom to design  $\gamma_{\rmc}$ and $\gamma_{\rmb}$ as sequences instead of fixed positive numbers, if we let them be $O(\frac{1}{\sqrt{n}})$-larger than $I(U;Y)$ and $H(X|U)$, then the error probability is smaller than a prescribed constant depending on the implied constants in the $O(\fndot)$-notations.  This follows from the multivariate Berry-Ess\'een theorem~\cite{Got91}. This bound is useful because it is a {\em union} of two events and  $\calE_{\rmc}$ and $\calE_{\rmb}$ are both  information spectrum~\cite{Han10} events which are easy to analyze.

Secondly,  the preceding discussion  shows that the bound in~\eqref{eqn:le_approx} also yields  an achievable second-order coding rate~\cite{Hayashi08,Hayashi09}. However, unlike in the point-to-point setting~\cite{PPV10,Hayashi08,Hayashi09}, the achievable  second-order coding rate is expressed in terms of a so-called {\em dispersion matrix}~\cite{TK12}. We can easily show that if $\scR_{\WAK}(n,\veps)$ is the set of all rate pairs $(R_1,R_2)$ for which there exists a length-$n$ WAK code with error probability not exceeding $\veps>0$ (i.e., {\em the $(n,\veps)$-optimal rate region}), then for any $P_{U|Y}$, the set
\begin{equation} \label{eqn:2nd_order_intro}
  \begin{bmatrix}
I(U;Y)\\ H(X|U)
\end{bmatrix}  + \frac{\scS(\bV ,\veps)}{\sqrt{n}} + O\left(\frac{\log n}{n}\right)\bone_2
\end{equation}
is an inner bound to $\scR_{\WAK}(n,\veps)$. In \eqref{eqn:2nd_order_intro}, $\scS(\bV,\veps)\subset\bbR^2$ denotes the analogue of the $Q^{-1}$ function~\cite{TK12} and it depends on the covariance matrix of the so-called information-entropy density vector 
\begin{equation}
 \begin{bmatrix}
\log \frac{P_{Y|U}(U|Y)}{P_Y(Y)}  &  \log \frac{1}{P_{X|U}(X|U)} 
\end{bmatrix}^T .
\end{equation}
 The precise statement for the second-order coding rate for the WAK problem is given in Theorem~\ref{thm:second}. We see from~\eqref{eqn:2nd_order_intro} that for a fixed test channel $P_{U|Y}$, the redundancy  at blocklength $n$ in order to achieve an error probability $\veps>0$ is governed by the term $\frac{\scS(\bV ,\veps)}{\sqrt{n}}$. The pre-factor of this term $\scS(\bV ,\veps)$, is likened to the {\em dispersion}~\cite{PPV10, wang11,ingber11,kost12}, and depends not only the variances of the information and entropy densities but also their correlations. 
 
Thirdly, we note that the same flavour of non-asymptotic bounds and second-order coding rates hold  verbatim for the WZ and GP problems. In addition, since the canonical rate-distortion problem \cite{Sha59} is a special case of the WZ problem, we show that our non-asymptotic achievability bound for the WZ problem, when suitably specialized, yields the correct dispersion for lossy source coding~\cite{ingber11,kost12}. We do so using two methods: (i) the method of types~\cite{Csi97} and (ii) results involving the  $D$-tilted information~\cite{kost12}. Finally, we not only improve on the existing bounds for the GP problem~\cite{Tan12b, Ver12}, but we also consider an almost sure cost constraint on the channel input.

\subsection{Related Work} \label{subsection:related-work}

Wyner~\cite{Wyner75} and Ahlswede-K\"orner~\cite{Ahl75}  were the first to consider and solve (in the first-order sense) the problem of almost-lossless source coding with coded side information. Weak converses were proved in~\cite{Wyner75, Ahl75} and a strong converse was proved in \cite{Ahls76} using the ``blowing-up lemma''. An information spectrum characterization was provided by Miyake and Kanaya~\cite{miyake} and Kuzuoka~\cite{Kuz12} leveraged on the non-asymptotic bound which can be extracted from~\cite{miyake} to derive the redundancy for the WAK problem. Verd\'u~\cite{Ver12} strengthened the non-asymptotic bound and showed that the error probability for the WAK problem is essentially bounded as
\begin{equation}
\Pe(\Phi)\lesssim\Pr(\calE_{\rmc})+  \Pr(\calE_{\rmb}), \label{eqn:ver_bd}
\end{equation}
which is the result upon using the union bound on our bound in~\eqref{eqn:le_approx}. The notation $\lesssim$ means that the residual terms do not affect the second-order coding rates. 

Wyner and Ziv~\cite{wynerziv}   derived the rate-distortion function for lossy source coding with decoder side-information. However, they do not consider the probability of excess distortion. Rather, the quantity of interest is the expected  distortion. The generalization of the WZ problem for general correlated  sources was considered by Iwata and Muramatsu~\cite{iwata02} who showed that the general WZ function can be written as a difference of a limit superior in probability and a limit inferior in probability, reflecting the covering and packing components in the classical achievability  proof. 

The problem of channel coding with noncausal random state information was solved by Gel'fand and Pinsker~\cite{GP80}.  A general formula  for the  GP problem (with general channel and general state) was  provided by Tan~\cite{Tan12b}. Tyagi and Narayan~\cite{tyagi} proved the strong converse for this problem and used it to derive a sphere-packing bound. For both the WZ and GP problems, Verd\'u~\cite{Ver12} used generalizations of the   packing and covering lemmas in~\cite{elgamal} to derive non-asymptotic bounds on the probability of excess distortion (for WZ) and the average error probability (for GP). However, they yield   worse second-order rates because the main part of the bound is a sum of two or three probabilities as in~\eqref{eqn:ver_bd}, rather than the probability of the union as in \eqref{eqn:le_approx}.


In our work, we derive tight non-asymptotic bounds by using ideas from channel resolvability \cite{HV93} \cite[Ch.\ 6]{Han10} and channel simulation  \cite{BSST02}\footnote{Steinberg and Verd\'u also studied the channel simulation problem \cite{Ste96}. However, their problem formulation is slightly different from the one in \cite{BSST02}.}  to replace the covering part and Markov lemma. 
It was shown by Han and Verd\'u \cite{HV93}  that this problem is closely connected to channel coding and channel identification. Hayashi also studied the channel resolvability problem \cite{Hayashi06} and derived a non-asymptotic formula. We   leverage on a key lemma in Hayashi~\cite{Hayashi06} (and also Cuff~\cite{Cuff12}) to derive our  bounds.  

In \cite{BSST02}, Bennett \emph{et~al.}  proposed a problem to simulate a channel by the aid of common randomness. An application of the channel simulation to simulate the test channel in the rate-distortion problem was first investigated by Winter \cite{Win02}, and then extensively studied mainly in the field of the quantum information. 
Cuff investigated the trade-off between the rates of the message and common randomness for the channel simulation \cite{Cuff12}
(see also \cite{BDHSW09}). For a thorough list of literatures related to the channel simulation, see \cite{Cuff12,BDHSW09}.
In these works, channel resolvability is used as a building block for channel simulation. 
In particular, a code construction and analysis techniques that do not rely on the typicality argument were developed in \cite{Cuff12}.
The idea  to use channel simulation instead of the Markov lemma is motivated by aforementioned papers, and our code construction
and analysis are based on the ones in \cite{Cuff12}.
However, we stress that the derivations of our non-asymptotic bounds are not straightforward applications of channel simulation and channel resolvability. Indeed, our code construction is tailored to derive the bound as in \eqref{eqn:le_approx}, 
and we also introduce bounding techniques that have not appeared previously to the best of our knowledge.

Recently,  Yassaee-Aref-Gohari (YAG) \cite{YRG12}  proposed an alternative approach for   channel simulation, in which they   exploited the (multi-terminal version of) intrinsic randomness \cite[Ch.\ 2]{Han10} instead of   channel resolvability. This approach is coined {\em output statistics of random binning} (OSRB).  Although their approach is also used to replace the Markov lemma~\cite{Wyner75}, it was not {\em a priori} yet clear when~\cite{YRG12} was published whether our bounds can be also derived from the OSRB approach~\cite{YRG12}. One of difficulties to apply the OSRB  approach  for non-asymptotic analysis is that the amount of common randomness that can be used in the channel simulation is limited by the randomness of sources involved in a coding problem, which is not the case with the approach using the channel resolvability.   It was shown more recently by YAG~\cite{YAG13a} that a modification of the OSRB framework can,  in fact, be used to obtain achievable  dispersions of Marton's region for the broadcast channel~\cite{Marton79} and the wiretap channel~\cite{Wyn75}. In fact, in another concurrent work by YAG~\cite{YAG13b}, the authors derived very similar second-order results to the ones presented here. They derive bounds on the probability of error for Gel'fand-Pinsker,   Heegard-Berger and multiple  description coding~\cite{elgamal} among others. The main idea in their proofs is to use the {\em stochastic likelihood coder} (SLC) and exploit the   convexity of  $(x_1,x_2)\mapsto 1/(x_1x_2)$ (for $x_1,x_2>0$) to lower bound the probability of correct detection.  
Although the results in this paper and those in \cite{YAG13b} partly overlap, the approaches to derive the results are different.
To the best of our knowledge, this paper is the first to demonstrate usefulness of the channel simulation in non-asymptotic
analysis of network information theory problems, which we believe to be interesting in its own right.


Our main motivation in this work is to derive tight non-asymptotic bounds on the error probabilities. We are also interested in second-order coding rates. The study of the asymptotic expansion of the logarithm of  the maximum  number codewords that are achievable for $n$ uses a channel with maximum error probability no larger than $\veps$ was first done by Strassen~\cite{strassen}. This was re-popularized in recent times by  Kontoyiannis~\cite{Kot97},  Baron-Khojastepour-Baraniuk~\cite{Baron04b}, Hayashi~\cite{Hayashi08, Hayashi09}, and Polyanskiy-Poor-Verd\'u~\cite{PPV10} among others.  Second-order analysis for network information theory problems were considered in Tan and Kosut~\cite{TK12} as well as other authors~\cite{Huang12, Mol12, nomura, haim12}. However, this is the first work that considers second-order rates for problems with side-information.

\subsection{Paper Organization}
In Section~\ref{sec:prelims}, we state our notation and formally define the three coding problems with side-information. We then review existing first-order asymptotic results in Section~\ref{sec:exist}.  In Section~\ref{sec:nonasy}, we state our new non-asymptotic  bounds for the three problems. We then use these bounds to re-derive (direct parts of) known general formulas~\cite{Tan12b, miyake,iwata02} in Section~\ref{sec:gen}. Following that, we present achievable second-order coding rates for these coding problems. We will see that just as in the Slepian-Wolf setting~\cite{TK12, nomura}, the dispersion is in fact a matrix. In Section~\ref{sec:numerical}, we show via numerical examples that our non-asymptotic bounds lead to larger $(n,\veps)$-rate regions compared with~\cite{Ver12}. Concluding remarks and directions for future work are provided Section~\ref{sec:con}. This paper only contains achievability bounds. In the conclusion, we also discuss the difficulties associated with obtaining non-asymptotic converse bounds. To ensure that the main ideas are seamlessly communicated  in the main text, we relegate all proofs to the appendices.

\section{Preliminaries } \label{sec:prelims}

In this section, we introduce our notation and recall the WAK, WZ and GP problems.

\subsection{Notations}
Random variables (e.g., $X$) and their realizations (e.g., $x$) are in
capital and lower case respectively. All random variables take values in
some alphabets which are denoted in calligraphic font (e.g.,
$\calX$). The cardinality of $\calX$, if finite, is denoted as
$|\calX|$. Let the random vector $X^n:= (X_1,\ldots, X_n)$ and similarly
for a realization $x^n = (x_1, \ldots, x_n)$. The set of all
distributions supported on alphabet $\calX$ is denoted as $\scP(\calX)$.
The set of all channels with the input alphabet $\calX$ and the output
alphabet $\calY$ is denoted by $\scP(\calY|\calX)$.  We
will at times use the method of types~\cite{Csi97}.  The joint
distribution induced by a marginal distribution $P\in\scP(\calX)$ and a
channel $V\in\scP(\calY|\calX)$ is denoted interchangeably as $P\times
V$ or $PV$.  This should be clear from the context.

For a sequence $x^n=(x_1,\ldots, x_n)\in\calX^n$ in which $|\calX|$ is finite, its {\em type}  or {\em empirical distribution} is the probability mass function  $P(x)=\frac{1}{n}\sum_{i=1}^n\bone\{x=x_i\}$ where  the indicator function $\bone\{x\in\calA\}=1$ if $x\in\calA$ and $0$ otherwise. The set of types with denominator $n$ supported on  alphabet $\calX$ is denoted as $\scP_n(\calX)$. The {\em type class} of $P$ is  denoted as $\calT_P :=\{x^n \in\calX^n : x^n \mbox{ has type } P\}$.  For  a sequence   $x^n \in\calT_P$,  the set of sequences $y^n \in\calY^n$ such that $(x^n, y^n)$ has joint type $PV = P(x) V(y|x)$ is the {\em $V$-shell} $\calT_V(x^n)$.  Let $\scV_n(\calY;P)$  be the family of stochastic matrices $V : \calX \to \calY$ for which the $V$-shell of a sequence of type $P\in\scP_n(\calX)$ is  not empty. Information-theoretic quantities are denoted in the usual way. For example, $I(X;Y)$ and  $I(P,V)$ denote the mutual information where the latter expression makes clear that the joint  distribution of $(X,Y)$ is      $PV$.   All logarithms are with respect to  base~$2$ so information quantities are measured in bits.

The multivariate normal distribution with mean $\bmu$ and covariance matrix $\bSigma$ is denoted as $\calN(\bmu,\bSigma)$. The complementary Gaussian cumulative distribution function $Q(t) : =\int_{t}^{\infty} \frac{1}{\sqrt{2\pi}} e^{-u^2/2}\, du$ and its inverse is denoted as  $Q^{-1}(\veps):=\min\{ t\in\bbR: Q(t)\le \veps\}$. Finally, $|z|^+:=\max\{z,0\}$.

\subsection{The Wyner-Ahlswede-K\"orner  (WAK) Problem}
In this section, we recall the WAK problem  of lossless source coding with coded side-information~\cite{Wyner75, Ahl75}.  Let us consider a correlated source $(X,Y)$ taking values in $\calX\times\calY$ and having joint distribution $P_{XY}$.  Throughout, $X$, a discrete random variable, is the main source while $Y$ is the helper or side-information. The WAK problem involves reconstructing $X$ losslessly given rate-limited  (or coded) versions of both $X$ and  $Y$.  See Fig.~\ref{fig:wak}.

\begin{definition}
A (possibly stochastic) {\em source coding with side-information  code or Wyner-Ahlswede-K\"orner (WAK) code} $\Phi=(f,g,\psi)$ is a triple of mappings that includes two encoders $f:\calX\to\calM$ and $g:\calY\to\calL$ and a decoder $\psi:\calM\times\calL\to\calX$. The {\em error probability} of the WAK code $\Phi$ is defined as 
\begin{equation}
\Pe(\Phi):=\Pr\left\{X\ne\psi(f(X), g(Y)) \right\}. \label{eqn:error_pr}
\end{equation}
\end{definition}
In the following, we may call $f$ as the main encoder and $g$ the helper.

In Section~\ref{sec:2nd}, we   consider $n$-fold  i.i.d.\ extensions   of $X$ and $Y$, denoted as $X^n$ and $Y^n$.  In this case, we   use the subscript $n$ to specify the blocklength,  i.e., the code is $\Phi_n=(f_n,g_n,\psi_n)$ and the compression index sets are $\calM_n = f_n(\calX^n)$ and $\calL_n = g_n(\calY^n)$. In this case, we can define the pair of rates of the code $\Phi_n$ as 
\begin{align}
R_1(\Phi_n)&:=\frac{1}{n}\log|\calM_n|, \\
 R_2(\Phi_n)&:=\frac{1}{n}\log|\calL_n|.
\end{align}
\begin{definition}
The {\em $(n,\veps)$-optimal rate region  for the WAK problem} $\scR_{\WAK}(n,\veps)$ is defined as the set of all pairs of rates $(R_1, R_2)$ for which there exists  a blocklength-$n$  WAK code    $\Phi_n$ with rates at most $(R_1, R_2)$ and with error probability not exceeding $\veps$. In other words,
\begin{align}
 \scR_{\WAK}(n,\veps) := \bigg\{ (R_1, R_2)\in\bbR_+^2:& \exists \, \Phi_n \mbox{ s.t. } \nonumber \\
 & \frac{1}{n}\log|\calM_n|\le R_1, \nonumber \\
 & \frac{1}{n}\log|\calL_n|\le R_2, \nonumber \\
 & \Pe(\Phi_n)\le\veps \bigg\} \label{eqn:Rne}
\end{align}
We also define the {\em asymptotic rate regions}
\begin{align}
\scR_{\WAK}(\veps) &:=\mathrm{cl}\Bigg[ \bigcup_{n\ge 1}\scR_{\WAK}(n,\veps)\Bigg],  \label{eqn:Rwake}\\
\scR_{\WAK} &:=\bigcap_{0<\veps<1}\scR_{\WAK}(\veps). \label{eqn:Rwak}
\end{align}
where $\mathrm{cl}$ denotes set closure in $\bbR^2$.
\end{definition}
In the following, we will provide an inner bound to $\scR_{\WAK}(n,\veps)$ that improves on   inner bounds that can be derived from previously obtained non-asymptotic bounds on $\Pe(\Phi_n)$~\cite{Ver12, Kuz12}.

\subsection{The Wyner-Ziv (WZ) Problem}
In this section, we recall the WZ problem of lossy source coding with full side-information at the decoder~\cite{wynerziv}. Here, as in the WAK problem, we have a correlated source $(X,Y)$ taking values in $\calX\times\calY$ and having joint distribution $P_{XY}$. Again, $X$ is   the main source and $Y$ is the helper or side-information. Neither $X$ nor $Y$ has to be a discrete random variable. Unlike the WAK problem, it is not required to reconstruct $X$ exactly, rather a distortion $D$ between $X$ and its reproduction $\hatX$ is allowed. Let $ \hcalX$ be the reproduction alphabet and let $\rvd:\calX\times\hcalX\to [0,\infty)$ be a bounded distortion measure such that for every $x\in\calX$ there exists a $\hatx\in\hcalX$ such that $\rvd(x,\hatx)=0$ and $\max_{x,\hatx}\rvd(x,\hatx)=D_{\max} <\infty$. See Fig.~\ref{fig:wz}.

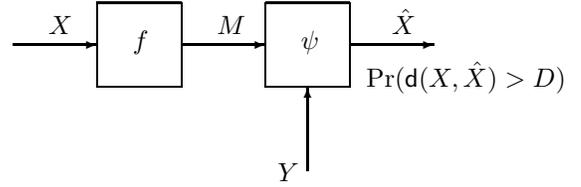
\begin{figure}
\centering
\setlength{\unitlength}{.028cm}
\begin{picture}(250, 82)
\put(0, 65){\vector(1, 0){40}}
\put(80, 65){\vector(1, 0){40}}
\put(140, 5){\vector(0, 1){40}}
\put(160, 65){\vector(1, 0){40}}


\put(40, 45){\line(1, 0){40}}
\put(40, 85){\line(1, 0){40}}
\put(40, 45){\line(0,1){40}}
\put(80, 45){\line(0,1){40}}

\put(120, 45){\line(1, 0){40}}
\put(120, 85){\line(1, 0){40}}
\put(120, 45){\line(0,1){40}}
\put(160, 45){\line(0,1){40}}

\put(126, 0){\mbox{$Y$}}
\put(17, 70){\mbox{$X$}}

\put(97, 70){\mbox{$M$}}

\put(180, 70){\mbox{$\hatX$}}
\put(168, 45){\mbox{$\Pr(\rvd( X,\hatX ) > D)$}}

\put(57, 63){\mbox{$f$}}

\put(137, 63){\mbox{$\psi$}}
\end{picture}
\caption{Illustration of the WZ problem with probability of excess distortion criterion}
\label{fig:wz}
\end{figure}

\begin{definition}
A (possibly stochastic) lossy source coding with side-information or Wyner-Ziv (WZ) code $\Phi=(f,\psi)$ is a pair of mappings that includes an encoder $f:\calX\to\calM$ and a decoder $\psi:\calM\times\calY\to\hcalX$. The {\em probability of excess distortion} for the WZ code $\Phi$ at distortion level $D$ is defined as 
\begin{equation}
\Pe (\Phi ; D) := \Pr\{ \rvd ( X, \psi ( f(X), Y) )  > D \} . \label{eqn:excess}
\end{equation}
\end{definition}
We will again consider $n$-fold extensions of $X$ and $Y$, denoted as $X^n$ and $Y^n$ in Section~\ref{sec:2nd}. The code is indexed by the blocklength as $\Phi_n = (f_n,\psi_n)$. Furthermore, the compression index set is denoted as $\calM_n= f_n(\calX^n)$.  The rate of the code $\Phi_n$ is defined as 
\begin{equation}
R(\Phi_n):=\frac{1}{n}\log |\calM_n|. \label{eqn:rate_wz}
\end{equation}
The distortion between two length-$n$ sequences $x^n \in\calX^n$ and $\hatx^n \in\hcalX^n$ is defined as 
\begin{equation}
\rvd_n(x^n,\hatx^n):=\frac{1}{n}\sum_{i=1}^n \rvd(x_i, \hatx_i).
\end{equation}
\begin{definition}
The {\em $(n,\veps)$-Wyner-Ziv rate-distortion region} $\scR_{\WZ}(n,\veps)\subset\bbR_+^2$ is the  set of all rate-distortion pairs $(R,D)$ for which there exists a blocklength-$n$ WZ code $\Phi_n$  at distortion level $D$ with rate at most $R$ and probability of excess distortion not exceeding $\veps$. In other words,
\begin{align}
 \scR_{\WZ}(n,\veps):= \bigg\{ (R,D)\in\bbR_+^2:& \exists \, \Phi_n  \mbox{ s.t. } \nonumber \\
 & \frac{1}{n}\log|\calM_n|\le R, \nonumber \\
 & \Pe (\Phi_n ; D)\le\veps \bigg\} \label{eqn:ne_rd}
\end{align}
We also define the {\em asymptotic rate-distortion regions}
\begin{align}
\scR_{\WZ}(\veps)  &:= \mathrm{cl} \Bigg[ \bigcup_{n\ge 1}\scR_{\WZ}(n,\veps)\Bigg], \\
\scR_{\WZ} &:=\bigcap_{0<\veps<1}\scR_{\WZ}(\veps).
\end{align}
The {\em $(n,\veps)$-Wyner-Ziv rate-distortion function} $R_{\WZ}(n,\veps,D)$ is defined as 
\begin{equation}
R_{\WZ} (n,\veps,D) :=\inf\{R: (R,D)\in \scR_{\WZ}(n,\veps)\} \label{eqn:ne_wz}
\end{equation}
We also define the  {\em asymptotic rate-distortion functions}
\begin{align}
R_{\WZ}(\veps,D)  &= \inf\{R: (R,D)\in \scR_{\WZ}(\veps)\} \label{eqn:wz_eps} \\
R_{\WZ}(D)  &= \lim_{\veps\to 0 }R_{\WZ}(\veps,D) \label{eqn:wz_function}
\end{align}
\end{definition}
Note that the use of the limit (as opposed to the limit superior or limit inferior) in \eqref{eqn:wz_function} is justified because $R_{\WZ}(\veps,D)$ is, from its definition, monotonically non-increasing in $\veps$. In the sequel, we will provide an inner bound to $\scR_{\WZ}(n,\veps)$ and thus an upper bound on $R_{\WZ} (n,\veps,D)$ by appealing to a new non-asymptotic upper bound on the probability of excess distortion $\Pe (\Phi_n ; D) $.  In addition, note that if $Y= \emptyset$, i.e., side-information is not available, this reduces to the point-to-point rate-distortion (lossy source coding) problem. 

Conventionally~\cite{wynerziv, elgamal}, the WZ problem is stated not with the probability of excess distortion criterion but with the {\em average fidelity criterion}. That is, the requirement that $\Pe(\Phi_n;D)\to 0$ (implicit in~\eqref{eqn:wz_function}) is replaced by 
\begin{equation}
\limsup_{n\to\infty}\bbE [ \rvd_n( X^n, \psi_n(f_n(X^n), Y^n) ) ] \le D . \label{eqn:average_fid}
\end{equation}

\subsection{The Gel'fand-Pinsker (GP) Problem}
In the previous two subsections, we dealt exclusively with source coding problems, either lossless (WAK) or lossy (WZ). In this section, we review the setup of the GP problem~\cite{GP80} which involves channel coding with noncausal state information at the encoder. It is the dual to the WZ problem~\cite{Gupta10}.  In this problem, there is a state-dependent channel $W:\calX\times\calS\to\calY$ and a random variable  representing the state $S$ with distribution $P_S$ taking values in some set $\calS$. A message $M$ chosen uniformly at random from $\calM$   is to be sent and the encoder has information about which message is to be sent as well as the channel state information $S$, which is known {\em noncausally}. (Noncausality only applies when the blocklength is larger than $1$.)  It is assumed that the message and the state are independent.  Let $\rvg:\calX\to [0,\infty)$ be some cost function.  The encoder  $f$ encodes the message and state into a  codeword (channel input) $X=f(M,S)$ that satisfies the cost constraint 
\begin{equation}
\rvg(X)\le \Gamma  ,\label{eqn:gpcost} 
\end{equation}
for some $\Gamma\ge 0$ with high probability. See precise definition/requirement in \eqref{eqn:gp_error} as well as Proposition~\ref{proposition:cost-conversion}.  The decoder receives the channel output $Y |\{ X=x, S=s \}\sim W(\fndot|x,s)$ and decides which message was sent via a decoder $\psi:\calY\to\calM$.  See Fig.~\ref{fig:gp}. More formally, we have the following definition. 

\begin{figure}
\centering
\setlength{\unitlength}{.4mm}
\begin{picture}(230, 90)
\put(0, 15){\vector(1, 0){30}}
\put(60, 15){\vector(1,0){30}}
\put(120, 15){\vector(1,0){30}}
\put(180, 15){\vector(1,0){30}}
\put(30, 0){\line(1, 0){30}}
\put(30, 0){\line(0,1){30}}
\put(60, 0){\line(0,1){30}}
\put(30, 30){\line(1,0){30}}

\put(90, 0){\line(1, 0){30}}
\put(90, 0){\line(0,1){30}}
\put(120, 0){\line(0,1){30}}
\put(90, 30){\line(1,0){30}}

\put(10, 20){  $M$}
\put(68, 20){  $X$}
\put(128, 20){  $Y$} 
\put(41, 12){$f$ } 
\put(99, 12){$W$} 

\put(150, 0){\line(1, 0){30}}
\put(150, 0){\line(0,1){30}}
\put(180, 0){\line(0,1){30}}
\put(150, 30){\line(1,0){30}}
\put(161, 12){$\psi$} 
\put(190, 20){  $\hatM $} 
\put(170, 35){  $\Pr(\hatM \ne M)$} 

\put(90, 60){\line(1, 0){30}}
\put(90, 60){\line(0,1){30}}
\put(120, 60){\line(0,1){30}}
\put(90, 90){\line(1,0){30}}
\put(105, 60){\vector(0,-1){30}}
\put(90, 75){\line(-1,0){45}}
\put(45, 75){\vector(0,-1){45}}


\put(105, 45){  $S$} 
\put(45, 45){  $S$} 
\put(99, 71){$P_{S}$} 
  \end{picture}
  \caption{Illustration of the GP problem. The channel input $X$ must satisfy \eqref{eqn:gpcost}. }
  \label{fig:gp}
\end{figure}
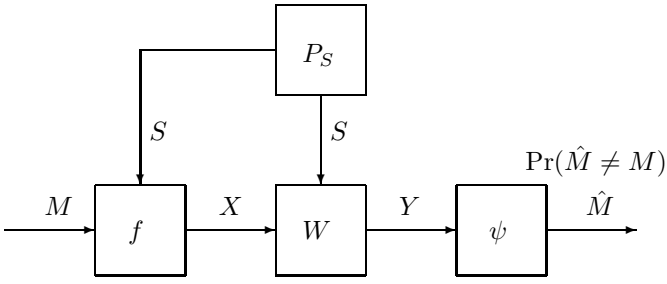
 
\begin{definition} \label{def:GP-code}
A (possibly stochastic) code for the channel coding problem with noncausal state information or Gel'fand-Pinsker (GP) code $\Phi=(f,\psi)$ is a pair of mappings that includes an encoder $f:\calM\times\calS\to\calX$   and a decoder $\psi:\calY\to\calM$. The average probability of error for the GP code is defined as 
\begin{align} \label{eqn:gp_error}
\Pe(\Phi;\Gamma) :=& 
 \frac{1}{|\calM|}\sum_{m \in \calM }\sum_{s\in\calS}P_S(s)\sum_{y\in\calY}W(y|f(m,s),s) \nonumber \\
  & \bone\left\{ \rvg( f(m,s) ) >\Gamma \, \cup \, y\in\calY\setminus\psi^{-1}(m) \right\}.
\end{align}
More simply, $\Pe(\Phi;\Gamma) = \Pr(\{\rvg(f(M,S))>\Gamma\} \, \cup \,\{  \hatM\ne M\})$ where $M$ is uniform on $\calM$ and independent of $S\sim P_S$, $\hatM :=\psi(Y)$  and $Y$ is the random variable whose conditional  distribution given $M=m$ and $S=s$ is $W(\fndot| f(m,s),s)$. 
\end{definition}
The following proposition, which will be proved in Appendix \ref{proof:proposition:cost-conversion}, guarantees that we can always convert a code in 
the sense of Definition \ref{def:GP-code} into a code in the sense of an almost sure cost constraint.
\begin{proposition}[Expurgated Code] \label{proposition:cost-conversion}
Let the set of admissible inputs in $\calX$ be 
\begin{eqnarray}
\calT_{\rmg}^{\GP}(\Gamma) := \left\{ x \in \calX : \rvg(x) \le \Gamma \right\}. \label{eqn:Tg_gp}
\end{eqnarray}
For any (stochastic) encoder $P_{X|MS}$ (this plays the role of $f$ in Definition~\ref{def:GP-code}) and   decoder $P_{\hatM|Y}$ (this plays the role of $\psi$ in Definition~\ref{def:GP-code}), there exists an encoder $\tilde{P}_{X|MS}$
such that 
\begin{equation}
\tilde{P}_X\left(\calT_{\rmg}^{\GP}(\Gamma)\right) = 1 \label{eqn:almost_sure}
\end{equation}
and 
\begin{eqnarray}
\tilde{P}_{MSXY\hatM}\left[m\neq\hatm \right] \le P_{MSXY\hatM}\left[ \rvg(x) > \Gamma \cup m \neq \hatm \right],
\end{eqnarray} 
where 
\begin{align}
P_{MSXY\hatM} &:= P_M P_S P_{X|MS} W P_{\hatM|Y}, \\
\tilde{P}_{MSXY\hatM} &:= P_M P_S \tilde{P}_{X|MS} W P_{\hatM|Y}.
\end{align}
\end{proposition}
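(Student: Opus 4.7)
My plan is to employ a one-shot \emph{expurgation}: modify $P_{X|MS}$ so that any channel input violating the cost is replaced by a fixed admissible reference symbol, and then verify directly that the decoding error probability of the modified system is dominated by the combined (cost-violation or decoding) error probability of the original system.

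Assuming $\calT_{\rmg}^{\GP}(\Gamma)$ is nonempty (otherwise the first claim is unattainable and the statement is vacuous), I would fix an arbitrary $x_0 \in \calT_{\rmg}^{\GP}(\Gamma)$ and define
\begin{equation}
\tilde{P}_{X|MS}(x|m,s) := P_{X|MS}(x|m,s)\,\bone\{\rvg(x)\le\Gamma\} + \bone\{x=x_0\}\sum_{x'\in\calX:\,\rvg(x')>\Gamma}P_{X|MS}(x'|m,s).
\end{equation}
This is a valid stochastic kernel, and because $\rvg(x_0)\le\Gamma$ while the first summand vanishes whenever $\rvg(x)>\Gamma$, every $x$ with $\tilde{P}_X(x)>0$ satisfies $\rvg(x)\le\Gamma$; this yields $\tilde{P}_X(\calT_{\rmg}^{\GP}(\Gamma))=1$.

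Next I would expand $\tilde{P}_{MSXY\hatM}[m\ne\hatm]$ using the above decomposition of $\tilde{P}_{X|MS}$, keeping in mind that $W$ and $P_{\hatM|Y}$ are unchanged. The contribution of the first summand exactly reproduces the original joint law restricted to the event $\{\rvg(X)\le\Gamma\}$ and therefore contributes $P_{MSXY\hatM}[\rvg(X)\le\Gamma,\, m\ne\hatm]$ to the error. The contribution of the second summand (where $X$ is set deterministically to $x_0$) is upper bounded by its total mass $P_{MSX}[\rvg(X)>\Gamma]$, since $\bone\{m\ne\hatm\}\le 1$. Summing these two bounds and applying the identity $\Pr(A\cup B)=\Pr(A)+\Pr(A^{\rmc}\cap B)$ with $A=\{\rvg(X)>\Gamma\}$ and $B=\{m\ne\hatm\}$ delivers the desired inequality.

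The argument presents no real obstacle: it is the one-shot analogue of the textbook expurgation step used to convert average-cost channel codes into almost-sure cost codes, and it collapses to a short direct computation because we commit to a single deterministic reference symbol $x_0$ rather than randomizing. The only place to be careful is to confirm that the channel $W$ and decoder $P_{\hatM|Y}$ act identically on both the original and expurgated inputs, which is immediate since those two kernels are not altered by the construction.
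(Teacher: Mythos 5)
Your proposal is correct and is essentially identical to the paper's own proof: the same reference symbol $x_0$ with $\rvg(x_0)\le\Gamma$, the same modified kernel $\tilde{P}_{X|MS}$ (your explicit sum over $x'$ with $\rvg(x')>\Gamma$ is just $P_{X|MS}(\calT_{\rmg}^{\GP}(\Gamma)^c|m,s)$), the same split of the error probability into the two contributions, and the same bound-by-one on the $x_0$ branch followed by the disjoint-union identity.
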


From Proposition~\ref{proposition:cost-conversion},  noting that  $\Pe( (P_{X| MS}, P_{\hatM|Y});\Gamma)=P_{MSXY\hatM}\left[ \rvg(x) > \Gamma \cup m \neq \hatm \right]$, we see that the constraint in \eqref{eqn:gpcost} is equivalent to $\rvg(X)\le\Gamma$ {\em almost surely} (implied by \eqref{eqn:almost_sure}).  For the purposes of deriving channel simulation-based bounds in Section~\ref{sec:novel_gp}, it is  easier to work with the error criterion in \eqref{eqn:gp_error} so we adopt Definition~\ref{def:GP-code}.

In order to  obtain  achievable second-order coding rates for the GP problem, we consider $n$-fold i.i.d.\ extensions of the channel and state. Hence, for every $(s^n,x^n, y^n)$, we have $W^n(y^n|x^n,s^n)=\prod_{i=1}^n W(y_i|x_i, s_i)$ and the state $S^n$ evolves in a stationary, memoryless fashion according to $P_S$. For blocklength $n$, the code and message set are denoted as $\Phi_n= (f_n, \psi_n)$ and $\calM_n$ respectively.  The cost function is denoted as $\rvg_n:\calX^n\to [0,\infty)$ and is defined as the average of the per-letter costs, i.e., 
\begin{equation}
\rvg_n(x^n):=\frac{1}{n}\sum_{i=1}^n \rvg(x_i)
\end{equation} 
For example, in the Gaussian GP problem (which is also known as {\em dirty paper coding}~\cite{costa}),    $\rvg(x)=x^2$. This  corresponds to a power constraint and  $\Gamma$ is the upper bound on the permissible power. The rate of the code is the normalized logarithm of the number of messages, i.e., 
\begin{equation}
R (\Phi_n):=\frac{1}{n}\log |\calM_n|.
\end{equation}
\begin{definition}
The {\em $(n,\veps)$-GP capacity-cost region} $\scC_{\GP}(n,\veps) \subset\bbR_+^2$ is the set of all rate-cost pairs $(R,\Gamma)$ for which there exists a blocklength-$n$  GP code $\Phi_n$ with cost not exceeding $\Gamma$, with rate at least $R$ and probability of error not exceeding $\veps$. In other words,
\begin{align}
\scC_{\GP}(n,\veps):=\bigg\{(R,\Gamma) \in\bbR_+^2:& \exists\,\Phi_n \mbox{ s.t. } \nonumber \\
 & \frac{1}{n}\log|\calM_n|\ge R, \nonumber \\
 & \Pe(\Phi_n;\Gamma)\le\veps \bigg\}.
\end{align}
We also define the {\em asymptotic capacity-cost regions}
\begin{align}
\scC_{\GP}(\veps) &:=\mathrm{cl}\left[\bigcup_{n\ge 1}\scC_{\GP}(n,\veps)\right],\\
\scC_{\GP} &:=\bigcap_{0<\veps<1}\scC_{\GP}(\veps) . \label{eqn:gp_asymp}
\end{align}
The  {\em $(n,\veps)$-capacity-cost function}  $C_{\GP}(n,\veps,\Gamma)$ is defined as 
\begin{equation}
C_{\GP}(n,\veps, \Gamma)  :=\sup\left\{ R:(R,\Gamma)\in \scC_{\GP}(n,\veps)\right\} \label{eqn:cne_gp}
\end{equation}
We also define the {\em asymptotic capacity-cost functions}
\begin{align}
C_{\GP}(\veps, \Gamma) &:=\sup \left\{ R : (R,\Gamma)\in \scC_{\GP}( \veps) \right\} \label{eqn:CGPeps}  \\
C_{\GP}(  \Gamma) &:=\lim_{\veps\to 0} C_{\GP}(\veps, \Gamma)  \label{eqn:CGP}
\end{align}
If the cost constraint \eqref{eqn:gpcost} is absent (i.e., every codeword in $\calX^n$ is admissible), we will write $ C_{\GP}(n,\veps)$ instead of  $C_{\GP}(n,\veps,\infty)$, $\Pe(\Phi_n)$ instead of $\Pe(\Phi_n;\infty)$ and so on.
\end{definition}
Once again, the limit in~\eqref{eqn:CGP} exists because the function $C_{\GP}(\veps,\Gamma)$ is monotonically non-decreasing in $\veps$. In the sequel, we will provide a lower bound on $C_{\GP} (n,\veps ,\Gamma)$ by appealing to a new non-asymptotic upper bound on the average probability of error $\Pe(\Phi_n;\Gamma)$.

\section{Review of Existing  First-Order  Results} \label{sec:exist}

\subsection{First-Order Result for the WAK Problem}

Let $\scP(P_{XY})$ be the set of all joint distributions $P_{UXY} \in \scP(\calU\times\calX\times\calY)$  such that the $\calX\times\calY$-marginal  of $P_{UXY}$  is the source distribution $P_{XY}$,  $U-Y-X$ forms a Markov chain   in that order  and\footnote{The cardinality bound on ${\cal U}$ in the definition of $\scP(P_{XY})$ is applied when we consider the single letter characterization $\scR_{\WAK}^*$ and the inner bound to the $(n,\varepsilon)$-optimal rate region $\scR_{\WAK}(n,\varepsilon)$. It is not applied when we consider non-asymptotic analysis. Similar remarks are also applied for the WZ and GP problems.} $|\calU|\le|\calY|+1$.  Define 
\begin{align}
\scR_{\WAK}^*  :=    \bigcup_{P_{UXY}  \in\scP(P_{XY})}    \{ (R_1,R_2) \in\bbR_+^2:& R_1 \ge  H(X|U), \nonumber \\
& R_2 \ge  I(U;Y)\}. \label{eqn:wak_region}
\end{align}
Wyner~\cite{Wyner75} and Ahlswede-K\"{o}rner~\cite{Ahl75} proved the following: 
\begin{theorem}[Wyner~\cite{Wyner75}, Ahlswede-K\"{o}rner~\cite{Ahl75}]
For every $0<\veps<1$, we have 
\begin{equation}
\scR_{\WAK}(\veps)=\scR_{\WAK}=\scR_{\WAK}^*,
\end{equation}
where $\scR_{\WAK}(\veps)$ and $\scR_{\WAK}$ are defined in \eqref{eqn:Rwake} and \eqref{eqn:Rwak} respectively. 
\end{theorem}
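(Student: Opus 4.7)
The plan is to establish the two inclusions $\scR_{\WAK}^*\subseteq\scR_{\WAK}(\veps)$ (direct part) and $\scR_{\WAK}(\veps)\subseteq\scR_{\WAK}^*$ (strong converse) for every $0<\veps<1$. Combined with the definition in \eqref{eqn:Rwak}, these yield the three-way equality $\scR_{\WAK}(\veps)=\scR_{\WAK}=\scR_{\WAK}^*$.

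For the direct inclusion I would invoke the non-asymptotic achievability bound \eqref{eqn:le_approx}, formally Theorem~\ref{thm:dt} developed later in the paper. Fix a rate pair $(R_1,R_2)$ in the interior of $\scR_{\WAK}^*$: there exists $P_{U|Y}$ with $|\calU|\le|\calY|+1$ and Markov chain $U-Y-X$ such that $R_1>H(X|U)+2\delta$ and $R_2>I(U;Y)+2\delta$ for some $\delta>0$. Apply the bound to the i.i.d.\ product source $P_{XY}^n$ with codebook sizes $|\calL_n|=\lceil 2^{n(R_2-\delta)}\rceil$, $|\calM_n|=\lceil 2^{n(R_1-\delta)}\rceil$ and thresholds $\gamma_\rmc=n[I(U;Y)+\delta]$, $\gamma_\rmb=n[H(X|U)+\delta]$. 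Khintchine's weak law of large numbers gives $\frac{1}{n}\log\frac{P_{Y|U}^n(U^n|Y^n)}{P_Y^n(Y^n)}\convp I(U;Y)$ and $\frac{1}{n}\log\frac{1}{P_{X|U}^n(X^n|U^n)}\convp H(X|U)$, so $\Pr(\calE_\rmc\cup\calE_\rmb)\to 0$; taking $n$ large enough drives $\Pe(\Phi_n)$ below $\veps$. Boundary points of $\scR_{\WAK}^*$ are then absorbed by the closure operation in~\eqref{eqn:Rwake}.

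For the strong converse I would follow the classical argument of Ahlswede-Gacs-K\"orner based on the blowing-up lemma. Starting from an $(n,\veps)$-WAK code with $\veps<1$, one inflates each decoding set by a Hamming ball of radius $O(\sqrt{n\log n})$; the blowing-up lemma ensures the error probability of the inflated code tends to zero while the rates increase by only $O\bigl(\frac{\log n}{\sqrt{n}}\bigr)$, thereby reducing the strong converse to a weak converse. The weak converse is standard: Fano's inequality combined with the single-letterization $U_i:=(L,Y^{i-1})$, where $L=g_n(Y^n)$, yields $\frac{1}{n}\log|\calL_n|\ge\frac{1}{n}\sum_i I(U_i;Y_i)$ and $\frac{1}{n}\log|\calM_n|\ge\frac{1}{n}\sum_i H(X_i|U_i)-o(1)$, with the Markov chain $U_i-Y_i-X_i$ preserved by memorylessness of $P_{XY}^n$. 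Introducing a uniform time-sharing index $Q$ and setting $U:=(U_Q,Q)$ lifts this to a single joint law in $\scP(P_{XY})$; the cardinality bound $|\calU|\le|\calY|+1$ comes from a standard support-lemma application.

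The hardest step, in my view, is the strong converse: Fano alone yields $R_1\ge (1-\veps)H(X|U)-o(1)$, leaving an unwanted $\veps$-dependent prefactor that vanishes only as $\veps\to 0$, and removing it requires the combinatorial precision of the blowing-up lemma together with careful preservation of the Markov structure $U-Y-X$ during the single-letterization step. The direct part, by contrast, reduces cleanly to an information-spectrum calculation once the non-asymptotic bound~\eqref{eqn:le_approx} is available.
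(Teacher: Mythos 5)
The paper itself does not prove this theorem; it is stated as a known result with the direct part attributed to Wyner~\cite{Wyner75} (via PBL and the Markov lemma) or Ahlswede--K\"orner~\cite{Ahl75} (maximal codes), and the strong converse to Ahlswede--G\'acs--K\"orner~\cite{Ahls76} via entropy and image-size characterizations built on the blowing-up lemma (see \cite[Thm.~16.4]{Csi97}). Your direct part is consistent with this (and with the paper's own later machinery): specializing Theorem~\ref{thm:dt} --- or equivalently the general-formula argument in Theorem~\ref{thm:gen_wak} --- to the i.i.d.\ source gives exactly the LLN calculation you sketch, and the parameter choices you give do drive $\Pe(\Phi_n)\to 0$ below any fixed $\veps$. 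The weak-converse single-letterization with $U_i=(L,Y^{i-1})$ is also standard and correct (the Markov structure $U_i-Y_i-X_i$ follows because $(X_i,Y_i)$ is independent of $(X^{i-1},Y^{i-1})$ and $L$ is a function of $Y^n$), although the residual Fano term is additive of size $\veps\log\lvert\calX\rvert$ rather than a multiplicative $(1-\veps)$ factor.

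The genuine gap is in your strong-converse sketch. You describe the generic channel-coding blowing-up template (``inflate each decoding set by a Hamming ball, error of the inflated code tends to zero, rate loss $O(\log n/\sqrt n)$''), but this recipe does not transfer literally to WAK: the decoder outputs a single string $\hatx(m,l)\in\calX^n$ for each $(m,l)\in\calM\times\calL$, so the ``decoding sets'' are singletons in $\calX^n$, and inflating them controls neither $\log\lvert\calM\rvert$ nor $\log\lvert\calL\rvert$. There is also no family of mutually disjoint decision regions in $\calY^n$ whose inflation can be counted, because $Y^n$ is an observed source, not a channel output whose partition encodes the message. The actual Ahlswede--G\'acs--K\"orner argument that the paper refers to operates on the helper's preimages $g^{-1}(l)\subset\calY^n$: it lower-bounds the sizes of the $\eta$-images of these sets under the ``test channel'' $P_{X|Y}^n$ using the entropy characterization of images, and the blowing-up lemma enters only to show that the exponent of the $\eta$-image size is essentially insensitive to $\eta$, which is what removes the $\veps$-dependence of Fano. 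This is a structurally different deployment of the blowing-up lemma from the one you describe, and the sketch as written would fail if executed literally on the WAK model; replacing it with the image-size argument from \cite[Thm.~16.4]{Csi97} closes the gap.
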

To prove the direct part, Wyner used the PBL  and the Markov lemma~\cite{Wyner75} while  Ahlswede-K\"{o}rner~\cite{Ahl75} used a maximal code construction. Only weak converses were provided in \cite{Wyner75} and \cite{Ahl75}.  Ahlswede-G\'{a}cs-K\"{o}rner~\cite{Ahls76} proved the strong converse using entropy and image-size  characterizations~\cite[Ch.\ 15]{Csi97}, which are based on the so-called blowing-up lemma~\cite[Ch.\ 5]{Csi97}. See~\cite[Thm.\ 16.4]{Csi97}.

\subsection{First-Order Result for the WZ Problem}

Let $\scP_D(P_{XY})$ be the set of all pairs $(P_{UXY}, g)$ where $P_{U X Y} \in \scP(\calU\times\calX\times\calY)$  is a joint distribution and $g:\calU\times\calY\to\hcalX$ is a (reproduction) function such that the $\calX\times\calY$-marginal of $P_{UXY}$ is the source distribution $P_{XY}$,  $U-X-Y$ forms a Markov chain   in that order, $|\calU|\le|\calX|+1$ and the distortion constraint is satisfied, i.e.,
\begin{equation} \label{eqn:wz_distortion}
\bbE[ \rvd(X, g( U, Y) ) ]= \sum_{ u, x, y} P_{U X Y}(u,x,y) \rvd( x, g(u, y))\le D.
\end{equation}
In Section~\ref{sec:2nd_wz}, we   allow $g$ to be stochastic  (i.e., represented by a conditional probability mass function $P_{\hatX|UY}$) but we still retain the use of the notation $\scP_D(P_{XY})$.
 Define the function
\begin{equation}
R_{\WZ}^* (D):=\min_{ (P_{UXY},g)\in\scP_D(P_{XY}) } I(U;X)-I(U;Y). \label{eqn:wz_rd_func}
\end{equation}
Note from Markovity that $I(U;X)-I(U;Y)=I(U;X|Y)$. Then, we have the following asymptotic characterization of the WZ rate-distortion function. 

\begin{theorem}[Wyner-Ziv~\cite{wynerziv}] \label{thm:wz_asy}
We have 
\begin{equation}
R_{\WZ}  (D)= R_{\WZ}^* (D),
\end{equation}
where   $R_{\WZ}  (D)$ is defined in   \eqref{eqn:wz_function}. 
\end{theorem}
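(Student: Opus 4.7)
I would split the proof into an achievability bound $R_{\WZ}(D)\le R_{\WZ}^*(D)$ and a weak converse $R_{\WZ}(D)\ge R_{\WZ}^*(D)$; a weak converse is enough here because $R_{\WZ}(D)=\lim_{\veps\to 0}R_{\WZ}(\veps,D)$. A useful bridge between the probability-of-excess-distortion criterion used in this paper and the expected-distortion criterion used in the classical Wyner-Ziv formulation is the elementary fact that if $\Pe(\Phi_n;D)\le\veps_n$ and the per-letter distortion is bounded by $D_{\max}$, then
\begin{equation*}
\bbE[\rvd_n(X^n,\hatX^n)]\le D+\veps_n D_{\max},
\end{equation*}
and conversely, small expected distortion at level $D$ yields small excess-distortion probability at level $D+\delta$ for any $\delta>0$ by Markov's inequality. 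This bridge lets me transport both directions between the two formulations.

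\textbf{Achievability.} Fix any $(P_{UXY},g)\in\scP_D(P_{XY})$. The plan is the standard Wyner-Ziv random coding scheme. Generate $2^{n\tilde R}$ codewords $U^n(l)$ i.i.d.\ from $P_U^n$ and distribute them uniformly at random into $2^{nR}$ bins. The encoder, on seeing $X^n$, searches for some index $l$ such that $(U^n(l),X^n)$ is jointly $\delta$-typical and transmits its bin index $M$; the decoder, knowing $(M,Y^n)$, selects the unique $l'$ in the announced bin for which $(U^n(l'),Y^n)$ is jointly typical and outputs $\hatX_i=g(U_i(l'),Y_i)$. The covering lemma yields encoder success with probability tending to $1$ whenever $\tilde R>I(U;X)$, the packing lemma yields correct decoding with probability tending to $1$ whenever $\tilde R-R<I(U;Y)$, and on the joint-typicality event the typical average lemma gives $\rvd_n(X^n,\hatX^n)\le(1+\delta')D$. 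Choosing $\tilde R=I(U;X)+\eta$ and $R=I(U;X)-I(U;Y)+2\eta$, and letting $n\to\infty$ then $\eta\downarrow 0$, proves $R_{\WZ}(D)\le I(U;X)-I(U;Y)$; minimizing over $(P_{UXY},g)$ yields $R_{\WZ}(D)\le R_{\WZ}^*(D)$. This step will, in any case, be subsumed by the sharper non-asymptotic channel-simulation bound (Theorem~\ref{thm:cs_wz} later in the paper) specialized to the i.i.d.\ setting.

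\textbf{Converse and main obstacle.} Consider any sequence of codes with $\Pe(\Phi_n;D)\le\veps_n\to 0$, and let $R:=\limsup_n\tfrac{1}{n}\log|\calM_n|$; the bounded-distortion bridge gives $\bbE[\rvd_n(X^n,\hatX^n)]\le D+o(1)$. Setting $M:=f_n(X^n)$ and $\hatX^n:=\psi_n(M,Y^n)$, the standard single-letterization starts from $nR\ge H(M|Y^n)\ge I(X^n;M|Y^n)$; after the chain rule and an application of Csisz\'ar's sum identity one is led to the auxiliary choice $U_i:=(M,Y^{i-1},Y_{i+1}^n)$, which satisfies the Markov chain $U_i-X_i-Y_i$ by the i.i.d.\ structure of the source (since conditionally on $X_i$, both $Y_i$ and all of $(X^{n\setminus i},Y^{n\setminus i})$ are independent), and to the bound $nR\ge\sum_{i=1}^n I(U_i;X_i|Y_i)$. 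Introducing a time-sharing variable $Q\sim\Unif\{1,\dots,n\}$ independent of everything, and defining $U:=(Q,U_Q)$, $X:=X_Q$, $Y:=Y_Q$, $\hatX:=\hatX_Q$, produces a single-letter triple in $\scP_{D+o(1)}(P_{XY})$ achieving $I(U;X|Y)=I(U;X)-I(U;Y)\le R+o(1)$. Continuity of $R_{\WZ}^*(\fndot)$ at $D$, which follows from convexity and boundedness of the feasible set, then completes the argument. The main obstacle is precisely the auxiliary identification together with the verification of $U_i-X_i-Y_i$, since this is what genuinely single-letterizes the multi-letter rate bound; the achievability direction is comparatively routine given the covering, packing, and typical average lemmas.
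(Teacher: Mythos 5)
Your proof is correct and follows essentially the same route the paper intends; the paper treats Theorem~\ref{thm:wz_asy} lightly, invoking Theorems~\ref{thm:iwa} or~\ref{thm:ver_wz} for the direct part, and for the converse passing from the excess-distortion criterion to the expected-distortion criterion via the layer-cake identity $\bbE[\rvd_n]=\int_0^{D_{\max}}\bbP(\rvd_n\ge t)\,dt$ combined with Fatou's lemma, then handing the single-letterization off to \cite[Thm.~11.3]{elgamal}. Your ``bridge'' $\bbE[\rvd_n(X^n,\hatX^n)]\le D+\veps_n D_{\max}$ is a modest but genuine simplification of the Fatou argument: you split the expectation over $\{\rvd_n\le D\}$ and $\{\rvd_n>D\}$ and use boundedness of $\rvd$, yielding $\limsup_n\bbE[\rvd_n]\le D$ without interchanging a limit and an integral. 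Both bridges deliver the same hypothesis needed for the classical expected-distortion converse, so they are interchangeable here. One small inaccuracy: Csisz\'ar's sum identity is not actually needed to arrive at $U_i=(M,Y^{i-1},Y_{i+1}^n)$; starting from $nR\ge I(X^n;M\mid Y^n)$ and applying the chain rule, the memorylessness of $(X^n,Y^n)$ (so that $H(X_i\mid Y^n,X_{i+1}^n)=H(X_i\mid Y_i)$), and conditioning-reduces-entropy suffices. Csisz\'ar's identity is more characteristic of the Gel'fand-Pinsker converse, where one must single-letterize a difference of unconditioned mutual informations. This does not affect your auxiliary choice, the $U_i-X_i-Y_i$ Markov verification, the time-sharing step, or the appeal to continuity (by convexity) of $R_{\WZ}^*(\fndot)$, all of which are correct; the achievability sketch via compress-bin with covering/packing/typical-average is likewise sound and is, as you note, subsumed by Theorem~\ref{thm:cs_wz}.
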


The direct part of the proof of the theorem in the original Wyner-Ziv paper~\cite{wynerziv} is based on the average fidelity criterion in \eqref{eqn:average_fid}. It  relies on the {\em compress-bin} idea. That is, binning is used to reduce the rate of the description of the main source to the receiver. The encoder transmits the bin index and the decoder searches within that bin for the transmitted codeword. The reproduction function $g$ is then used to reproduce the source to within a distortion $D$.  To prove Theorem~\ref{thm:wz_asy} for the probability of excess distortion criterion, we may use the new non-asymptotic bound in Section~\ref{sec:novel_wz} or the weaker non-asymptotic bounds in \cite{iwata02} or \cite{Ver12}. 

\subsection{First-Order Result for the GP Problem}
We conclude this section by stating the capacity of the GP problem~\cite{GP80}. Recall that in the GP problem, we have a channel $W:\calX\times\calS\to\calY$ and  a state distribution $P_S\in\scP(\calS)$. Assume for simplicity that all alphabets are finite sets. Let $\scP_\Gamma( W, P_S)$ be the collection of all joint distributions $P_{UXSY} \in \scP(\calU\times\calX\times\calS\times\calY)$ such that the $\calS$-marginal is $P_S$, the conditional distribution $P_{ Y|XS}=W$, $U-(X,S)-Y$ forms a Markov chain in that order, 
\begin{equation}
\bbE[\rvg(X)]\le\Gamma \label{eqn:gpcost2}
\end{equation}
and\footnote{Because of cost constraint, the second entry of the cardinality bound is increased by one compared to
the case without cost constraint \cite[Thm.\ 7.3]{elgamal}.}
$|\calU|\le \min\{ |\calX\|\calS|, |\calS| +|\calY|\}$. Define the quantity
\begin{equation}
C_{\GP}^*(\Gamma) :=\max_{P_{UXSY}\in\scP_\Gamma(W,P_S)} I(U;Y)-I(U;S), \label{eqn:gp_formula}
\end{equation}
where $I(U;Y)$ and $I(U;S)$ are computed with respect to the joint distribution  $P_{UXSY}$.  If there is no cost constraint~\eqref{eqn:gpcost2}, we simply write $C_{\GP}^*$ instead of $C_{\GP}^*(\infty)$. Then, we have the following asymptotic characterization. 
\begin{theorem}[Gel'fand-Pinsker~\cite{GP80}]
If the alphabets $\calS,\calX$ and $\calY$ are discrete, for every $0<\veps<1$, we have 
\begin{equation}
C_{\GP}(\veps)=C_{\GP} =C_{\GP}^*
\end{equation}
where  $C_{\GP}(\veps)$  and $C_{\GP}$ are defined in \eqref{eqn:CGPeps} and  \eqref{eqn:CGP} respectively. 
\end{theorem}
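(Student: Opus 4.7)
My plan is to split the claim into two inclusions: (i) achievability $C_{\GP}^* \le C_{\GP}(\veps)$ for every $\veps\in(0,1)$, and (ii) a strong converse $C_{\GP}(\veps)\le C_{\GP}^*$. Together these give $C_{\GP}(\veps)=C_{\GP}^*$ for each $\veps$, from which $C_{\GP}=\lim_{\veps\to 0}C_{\GP}(\veps)=C_{\GP}^*$ follows by the definition in \eqref{eqn:CGP}. For (i) I would fix an arbitrary admissible joint distribution $P_{UXSY}$ (with $U-(X,S)-Y$ and the correct $\calS$- and $W$-marginals) and instantiate the channel-simulation-type non-asymptotic bound developed later in Section~\ref{sec:novel_gp}, applied to the $n$-fold product distributions. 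That bound is schematically of the form
\begin{equation*}
\Pe(\Phi_n)\;\lesssim\;\Pr(\calE_{\rmc}\cup\calE_{\rmp}),
\end{equation*}
where $\calE_{\rmc}$ is a covering/channel-resolvability event involving the information density $\tfrac{1}{n}\log\tfrac{P_{U|S}^n(U^n|S^n)}{P_U^n(U^n)}$ and $\calE_{\rmp}$ is a packing event involving $\tfrac{1}{n}\log\tfrac{P_{Y|U}^n(Y^n|U^n)}{P_Y^n(Y^n)}$. Choosing thresholds just above $I(U;S)$ and just below $I(U;Y)$, Khintchine's weak law drives both probabilities to zero, so every $R<I(U;Y)-I(U;S)$ is achievable. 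The Fenchel-Eggleston-Carath\'eodory support lemma then yields the cardinality bound $|\calU|\le\min\{|\calX||\calS|,|\calS|+|\calY|-1\}$, and supremizing over admissible $P_{UXSY}$ gives $C_{\GP}^*\le C_{\GP}(\veps)$.

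The first step toward (ii) is the weak converse $C_{\GP}\le C_{\GP}^*$ by a standard Fano argument: for a length-$n$ code with $\Pe(\Phi_n)\to 0$ and uniform $M$, Fano's inequality yields $nR\le I(M;Y^n)+n\eta_n$ with $\eta_n\to 0$. Single-letterization via the Gel'fand-Pinsker auxiliary choice $U_i:=(M,Y_{i+1}^n,S^{i-1})$, together with $M\indep S^n$ and the memorylessness of $W^n$, gives
\begin{equation*}
I(M;Y^n)\;\le\;\sum_{i=1}^n\bigl[I(U_i;Y_i)-I(U_i;S_i)\bigr],
\end{equation*}
after which a time-sharing variable collapses this into a single-letter expression upper-bounded by $C_{\GP}^*$.

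The main obstacle is promoting this to the strong converse $C_{\GP}(\veps)\le C_{\GP}^*$ uniformly in $\veps\in(0,1)$: the bound $nR\le I(M;Y^n)+n\eta_n$ deteriorates with $\veps$ (Fano produces a term proportional to $\veps\log|\calM_n|$ that is not negligible when $\veps$ is fixed), and the auxiliary identification above does not discriminate between correctly and incorrectly decoded messages. My plan here is to invoke the strong converse of Tyagi and Narayan~\cite{tyagi}, whose proof combines image-size characterizations with the blowing-up lemma in the spirit of the Ahlswede-G\'acs-K\"orner strong converse for WAK~\cite{Ahls76}: the blowing-up lemma inflates any decoding region of probability $\ge 1-\veps$ into one of probability arbitrarily close to $1$ at a rate cost that vanishes in the blocklength, after which the weak-converse chain above applies uniformly in $\veps$. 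Combining (i) and (ii) then yields $C_{\GP}(\veps)=C_{\GP}=C_{\GP}^*$ for all $\veps\in(0,1)$.
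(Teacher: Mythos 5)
Your proposal is correct, and for the achievability part it takes a genuinely different route from the one the paper sketches. The paper, following Gel'fand--Pinsker and~\cite[Thm.~7.3]{elgamal}, attributes the direct part to the classical subcodebook construction: a covering step (find a codeword in the $m$-th bin jointly typical with the noncausal state), a packing step (the decoder finds the unique bin containing a codeword jointly typical with the output), and the conditional typicality lemma to transfer typicality from $(U^n,S^n)$ to $(U^n,S^n,X^n,Y^n)$. You instead instantiate the paper's new CS-type bound (Theorem~\ref{thm:cs_gp}/Corollary~\ref{thm:fein_gp}) on the $n$-fold product, choosing thresholds straddling $I(U;S)$ and $I(U;Y)$ and invoking Khintchine's law of large numbers; this is exactly the specialization of the general formula in Theorem~\ref{thm:gen_gp} to the stationary memoryless case, and it avoids the Markov/conditional-typicality machinery entirely. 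That route is logically sound (the CS-type bound does not depend on this theorem), though it is worth noting it is anachronistic relative to the exposition: the theorem appears in the ``existing results'' section, and the CS-type bound is a later contribution of the paper, so the paper's phrasing of the direct part is deliberately the classical one. For the converse your plan coincides with the paper's: you correctly observe that Fano plus the Csisz\'ar-sum identification $U_i=(M,Y_{i+1}^n,S^{i-1})$ only yields a weak converse because the Fano residual scales with $\veps\log|\calM_n|$, and then you invoke the Tyagi--Narayan strong converse (image-size characterizations plus blowing-up, for discrete memoryless channel/state without cost), which is precisely the reference the paper gives. A minor stylistic redundancy is that the weak-converse paragraph becomes superfluous once the strong converse is cited; and the statement $C_{\GP}=C_{\GP}^*$ then follows from $C_{\GP}(\veps)=C_{\GP}^*$ for each $\veps$ by taking the limit in~\eqref{eqn:CGP}, as you say.
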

The direct part was proved using a covering-packing argument as well as the conditional typicality lemma (using the notion of strong typicality). Essentially, each message $m \in \calM$ is uniquely associated to a subcodebook of size $L$. To send message $m$, the encoder looks in the $m$-th subcodebook for a codeword that is jointly typical with the noncausal state. The decoder then searches for the unique  subcodebook which contains at least one  codeword  that is jointly typical with the channel output. The weak converse in the original Gel'fand-Pinsker paper was proved using the Csisz\'ar-sum-identity. See \cite[Thm.\ 7.3]{elgamal}. In fact the weak converse shows that encoding function $P_{X|US}$ can be restricted to the  set of deterministic functions. Tyagi and Narayan proved a strong converse~\cite{tyagi}  using entropy and image-size characterizations via judicious choices of auxiliary  channels.  Their proof only applies to discrete memoryless channels with discrete state distribution without cost constraints. 
\section{Main Results: Novel Non-Asymptotic Achievability Bounds} \label{sec:nonasy}

In this section, we describe our results concerning novel non-asymptotic achievability bounds for the WAK, WZ and GP problems. We show using ideas from  channel resolvability~\cite[Ch.\ 6]{Han10}  \cite{HV93} \cite{Hayashi06} and channel simulation~\cite{BSST02 , Win02,  Cuff12} that the bounds obtained by Verd\'u in~\cite{Ver12}  can be refined so as to obtain better second-order coding rates. The definition of and techniques involving  channel resolvability and channel simulation are reviewed in Appendices~\ref{app:channel_res} and~\ref{app:sim} respectively. These are  concepts that form  crucial components of the proofs of the \underline{C}hannel-\underline{S}imulation-type (CS-type)   bounds in the sequel. 

The following quantity, introduced in \cite{Cuff12}, will be used extensively in this section so we provide its definition here. For a joint distribution $P_{UY} \in \scP(\calU\times\calY)$ and a positive constant $\gamma_{\rmc}$, define
\begin{align}
& \Delta(\gamma_{\rmc}, P_{UY}) 
 := \sum_{ y\in \calY} P_Y(y) \nonumber \\
&\times \sqrt{ \sum_{u \in \calU} P_{U|Y}(u|y)  \frac{ P_{Y|U}(y|u)}{P_Y(y)} \bone\left\{   \log\frac{ P_{Y|U}(y|u)}{P_Y(y)}  \le\gamma_{\rmc}\right\} } \label{eqn:Delta}
\end{align}
By applying the Jensen inequality, we find that 
$\Delta(\gamma_{\rmc}, P_{UY})$ has the property that
\begin{equation}
\Delta(\gamma_{\rmc}, P_{UY})\le \sqrt{2^{\gamma_{\rmc}}}. \label{eqn:bound_Delta}
\end{equation}

\subsection{Novel Non-Asymptotic Achievability Bound for the WAK Problem} \label{sec:novel_wak}
Fix an auxiliary alphabet $\calU$ and a joint distribution $P_{UXY}\in\scP(P_{XY})$. See definition of $\scP(P_{XY})$ prior to \eqref{eqn:wak_region}.  For   arbitrary non-negative constants $\gamma_{\rmb}$ and $\gamma_{\rmc}$, define two sets 
\begin{align}
\calT_{\rmb}^{\WAK}(\gamma_{\rmb})&:=\left\{ (u,x) \in \calU\times\calX:   \log \frac{1}{P_{X|U}(x|u)}\le\gamma_{\rmb} \right\}, \label{eqn:T1}\\
\calT_{\rmc}^{\WAK}(\gamma_{\rmc})&:= \left\{ (u,y)\in \calU\times\calY:  \log \frac{P_{Y|U}(y|u)}{P_Y(y)}\le\gamma_{\rmc} \right\}. \label{eqn:T2}
\end{align}
These sets are similar to the {\em typical} sets used extensively in network information theory~\cite{elgamal} but note that these sets only involve the entropy and information densities. Consequently, the probabilities of these sets (events) are entropy and information spectrum quantities~\cite{Han10}. The subscripts $\rmb$ and $\rmc$ refer respectively to {\em binning} and  {\em covering}. Similar subscripts and will be used in the sequel for the other side-information problems to demonstrate the similarities between the proof techniques all of which leverage on ideas from channel resolvability~\cite[Ch.\ 6]{Han10} \cite{Hayashi06}  and channel simulation~\cite{BSST02 , Win02,  Cuff12}. 

\begin{theorem}[CS-type bound for WAK coding] \label{thm:dt}
For arbitrary $\gamma_{\rmb},\gamma_{\rmc} \ge 0$, there exists a WAK code $\Phi$  with  error probability satisfying
\begin{align} 
 \Pe   (\Phi)  
\le& P_{UXY} \left[ (u,x)  \in \calT_{\rmb}^{\WAK}(\gamma_{\rmb})^c \cup (u,y)  \in   \calT_{\rmc}^{\WAK}(\gamma_{\rmc})^c \right]   \nn\\
&   +\frac{1}{|\calM|}\sum_{(u,\tilx)\in  \calT_{\rmb}^{\WAK}(\gamma_{\rmb}) } P_U(u) 
     + \frac{\Delta(\gamma_{\rmc}, P_{UY})  }{2 \sqrt{|\calL|}}.
     \label{eqn:dt}
\end{align}
\end{theorem}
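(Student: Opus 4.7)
The plan is to prove Theorem~\ref{thm:dt} by a random coding argument in which the helper encoder is a \emph{channel simulator} for $P_{U|Y}$ and the main encoder is standard random binning. The key novelty, relative to the separation of the covering and binning bounds used by Verd\'u~\cite{Ver12}, is that the channel-simulation coupling will let us place the covering and binning error events under a single $P_{UXY}$-probability rather than summing them.

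First I would construct the random code as follows. Draw $|\calL|$ auxiliary codewords $U(1),\ldots,U(|\calL|)$ i.i.d.\ from $P_U$, and independently draw a uniformly random binning function $\kappa:\calX\to\calM$. Let $f(X)=\kappa(X)$ be the main encoder. The stochastic helper encoder $g$, on observing $Y=y$, outputs $L\in\calL$ with probability proportional to the truncated likelihood $P_{Y|U}(y|U(l))\,\bone\{(U(l),y)\in\calT_{\rmc}^{\WAK}(\gamma_{\rmc})\}$ (with an arbitrary fallback rule when all weights vanish). The decoder, given $(M,L)$, outputs any $\hatx$ with $\kappa(\hatx)=M$ and $(U(L),\hatx)\in\calT_{\rmb}^{\WAK}(\gamma_{\rmb})$, declaring error if no such $\hatx$ equals the true source $X$.

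Next I would bound the expected error probability over the random code. The crucial step is to invoke Hayashi's channel resolvability inequality~\cite{Hayashi06}, whose proof reduces to a truncated second-moment estimate on the likelihood ratio $P_{Y|U}(Y|U)/P_Y(Y)$, to show that the joint distribution $\bar P_{U(L),Y}$ induced by the random codebook and the likelihood selector satisfies
\begin{equation*}
\bbE_{\calC}\bigl\|\bar P_{U(L),Y} - P_{UY}\bigr\|_{\mathrm{TV}} \;\le\; \sqrt{\frac{\Delta(\gamma_{\rmc},P_{UY})}{|\calL|}}.
\end{equation*}
Because $X$ is produced from $Y$ via the conditional $P_{X|Y}$ independently of both $\calC$ and $\kappa$, the contractivity of total variation under a common Markov kernel propagates this bound to the triples, giving $\bbE_{\calC}\|\bar P_{U(L),Y,X}-P_{UYX}\|_{\mathrm{TV}}\le\sqrt{\Delta(\gamma_{\rmc},P_{UY})/|\calL|}$. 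The random-code error is contained in the union of (a)~$\{(U(L),X)\notin\calT_{\rmb}^{\WAK}(\gamma_{\rmb})\}\cup\{(U(L),Y)\notin\calT_{\rmc}^{\WAK}(\gamma_{\rmc})\}$ and (b)~the binning collision $\{\exists\,\tilx\neq X:\kappa(\tilx)=\kappa(X),\,(U(L),\tilx)\in\calT_{\rmb}^{\WAK}(\gamma_{\rmb})\}$. Transferring the probability of (a) from $\bar P$ to $P_{UXY}$ produces the first summand in~\eqref{eqn:dt} plus the resolvability cost $\sqrt{\Delta(\gamma_{\rmc},P_{UY})/|\calL|}$. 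For (b), conditioning on the codebook gives conditional collision probability $\le|\calM|^{-1}\sum_{\tilx}\bone\{(U(L),\tilx)\in\calT_{\rmb}^{\WAK}(\gamma_{\rmb})\}$ since $\kappa(\tilx)$ is uniform on $\calM$; averaging over the codebook via the same TV bound replaces $U(L)$ by $U\sim P_U$ and yields the middle term. A Markov-type derandomization over the codebook, binning, and helper randomness then extracts a single deterministic code and absorbs the residual $\delta$ slack.

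The main technical obstacle lies in step three: classical covering/packing arguments analyze the covering and binning events on two different probability spaces and must union-bound them, whereas the channel-simulation coupling forces them onto the single coupled triple $(U(L),Y,X)$ whose law is within TV distance $\sqrt{\Delta(\gamma_{\rmc},P_{UY})/|\calL|}$ of $P_{UXY}$. Getting Hayashi's truncated second-moment estimate to match $\Delta(\gamma_{\rmc},P_{UY})$ exactly rather than a looser surrogate, and carefully tracking which randomness is shared between the helper, the decoder, and the source, are the delicate points; once executed, they deliver the \emph{union} of two information-spectrum events in~\eqref{eqn:dt} in place of their sum in~\eqref{eqn:ver}, which is precisely what yields the sharper second-order rates established later.
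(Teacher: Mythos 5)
Your overall architecture—channel-simulation helper plus random binning, union of the two typicality events under a single probability, then a TV transfer—matches the paper's intent, but the crucial total-variation bound you invoke is stated too optimistically and, as written, is false. You claim that the likelihood selector acting on a codebook of just $|\calL|$ i.i.d.\ codewords already satisfies
\begin{equation*}
\bbE_{\calC}\bigl\| \bar P_{U(L),Y} - P_{UY} \bigr\|_{\mathrm{TV}} \le \sqrt{\Delta(\gamma_{\rmc},P_{UY})/|\calL|}.
\end{equation*}
Hayashi's resolvability lemma (Lemma~\ref{lem:resolvability} here, \cite[Lem.~2]{Hayashi06}) bounds only the distance of the \emph{output} marginal $\tilde P_Y$ to $P_Y$, not the distance of the \emph{joint} law of the selected codeword and $Y$ to $P_{UY}$. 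The joint bound cannot hold with only $|\calL|$ codewords: by monotonicity of variational distance under marginalization, $d(\bar P_{U(L),Y},P_{UY}) \ge d(\bar P_{U(L)},P_U)$, and $\bar P_{U(L)}$ is supported on at most $|\calL|$ atoms. In the operating regime of interest $|\calL| \approx 2^{n I(U;Y)}$ while $P_U$ is nearly uniform on roughly $2^{n H(U)}$ typical atoms, so $d(\bar P_{U(L)},P_U)\to 1$; your claimed inequality would then be violated badly.

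What the paper actually does is introduce common randomness $K$ of arbitrarily large cardinality $|\calK|$, shared by the helper and the decoder, and a codebook $\{u_{kl}\}_{k\in\calK,\,l\in\calL}$ of size $|\calK||\calL|$. The joint-law approximation is then split (Lemma~\ref{lem:sim2}) into two pieces: a per-$k$ resolvability term $\tfrac12\sqrt{\Delta(\gamma_{\rmc},P_{UY})/|\calL|}$, plus $d(P_{\tilU},P_U)$ where $\tilU$ ranges over the empirical distribution of all $|\calK||\calL|$ codewords. That second piece is bounded by $P_U[-\log P_U(u)>\gamma]+\tfrac12\sqrt{2^\gamma/(|\calK||\calL|)}$, and it is precisely the term that produces the residual $\delta$ in Theorem~\ref{thm:dt}: since $|\calK|$ can be taken as large as desired, $\gamma$ can be increased enough to kill $P_U[-\log P_U(u)>\gamma]$ while keeping $\sqrt{2^\gamma/(|\calK||\calL|)}$ negligible. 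Your proposal never introduces $K$, so the $\delta$ you mention has no source; ``Markov-type derandomization'' selects a good realization of already-present randomness and cannot conjure an extra degree of freedom to absorb a non-vanishing slack. To repair the proof you need to (i)~enlarge the codebook to $|\calK||\calL|$ codewords indexed by common randomness shared with the decoder, (ii)~apply the resolvability estimate conditionally on each $k$ and average, (iii)~add the $d(P_{\tilU},P_U)$ term and observe it is controllable only because of the $|\calK|$ factor, and (iv)~derandomize over $(k,\calC,f)$ at the very end.
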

See Appendix~\ref{app:dt_wz} for the proof of Theorem~\ref{thm:dt}.   Observe that the  primary novelty of the bound in \eqref{eqn:dt} lies in the fact  that both error events $\{(u,x)  \in \calT_{\rmb}^{\WAK}(\gamma_{\rmb})^c\}$ and $\{  (u,y)  \in    \calT_{\rmc}^{\WAK}(\gamma_{\rmc})^c\}$ lie under the {\em same} probability and so can be bounded together (as a vector) in second-order coding analysis.  The sum of the information spectrum  terms (first two terms) in Verd\'u's bound in \cite[Thm.~1]{Ver12} is the result upon invoking the union bound on the first term     in~\eqref{eqn:dt}.  We illustrate the differences in the resulting second-order coding rates numerically in Section~\ref{sec:numerical}. The bound in \eqref{eqn:dt} is rather unwieldy. We can simplify it without losing too much. Indeed, using the definition  of $\calT_{\rmb}^{\WAK}(\gamma_{\rmb})$, we observe that the second term in \eqref{eqn:dt} can be bounded as 
\begin{align}
& \frac{1}{|\calM|}\sum_{(u,\tilx)\in  \calT_{\rmb}^{\WAK}(\gamma_{\rmb}) } P_U(u) \\
& =\frac{1}{|\calM|}\sum_{(u,\tilx)\in  \calT_{\rmb}^{\WAK}(\gamma_{\rmb}) } P_U(u) \frac{P_{X|U}(\tilx|u)}{P_{X|U}(\tilx|u)}\\
&\le\frac{1}{|\calM|}\sum_{(u,\tilx)\in  \calT_{\rmb}^{\WAK}(\gamma_{\rmb}) } P_U(u)  {P_{X|U}(\tilx|u)}  2^{\gamma_{\rmb}} \\
&\le\frac{  2^{\gamma_{\rmb}}}{|\calM|}.
\end{align}
Together with~\eqref{eqn:bound_Delta}, we have 
the following simplified CS-type bound, which resembles a Feinstein-type~\cite{feinstein} achievability bound (but average instead of maximum error probability).

\begin{corollary}[Simplified CS-type bound for WAK coding]\label{cor:fein}
For arbitrary $\gamma_{\rmb},\gamma_{\rmc} \ge 0$, there exists a WAK code $\Phi$ with  error probability satisfying
\begin{align} 
\Pe  (\Phi) \le&   P_{UXY} \left[ (u,x)  \in \calT_{\rmb}^{\WAK}(\gamma_{\rmb})^c \cup (u,y)  \in     \calT_{\rmc}^{\WAK}(\gamma_{\rmc})^c \right]   \nonumber \\
&  +\frac{2^{\gamma_{\rmb}}}{|\calM|} + \frac{1}{2} \sqrt{  \frac{2^{\gamma_{\rmc}}}{|\calL|}}.
\label{eqn:fein}
\end{align}
\end{corollary}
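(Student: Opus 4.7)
The plan is to obtain Corollary \ref{cor:fein} as a direct simplification of Theorem \ref{thm:dt}, using only elementary bounds on two of the four terms on the right-hand side of \eqref{eqn:dt}. No new coding construction or probabilistic argument is required: I would simply start from the bound \eqref{eqn:dt} guaranteed by Theorem \ref{thm:dt} and loosen the third and fourth terms to get the cleaner expression stated in the corollary.

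First I would handle the third term $\frac{1}{|\calM|}\sum_{(u,\tilx)\in \calT_{\rmb}^{\WAK}(\gamma_{\rmb})} P_U(u)$. The key observation is that on the set $\calT_{\rmb}^{\WAK}(\gamma_{\rmb})$, by definition \eqref{eqn:T1}, we have $P_{X|U}(\tilx|u) \ge 2^{-\gamma_{\rmb}}$. Multiplying and dividing by $P_{X|U}(\tilx|u)$ inside the sum and applying this pointwise bound yields
\begin{equation*}
\frac{1}{|\calM|}\sum_{(u,\tilx)\in \calT_{\rmb}^{\WAK}(\gamma_{\rmb})} P_U(u)
\le \frac{2^{\gamma_{\rmb}}}{|\calM|}\sum_{(u,\tilx)\in \calT_{\rmb}^{\WAK}(\gamma_{\rmb})} P_U(u) P_{X|U}(\tilx|u)
\le \frac{2^{\gamma_{\rmb}}}{|\calM|},
\end{equation*}
since the remaining sum is at most $1$ (it is the probability assigned by $P_{UX}$ to a subset of $\calU \times \calX$). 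This gives exactly the second term in \eqref{eqn:fein}.

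Next I would handle the resolvability-type term $\sqrt{\Delta(\gamma_{\rmc},P_{UY})/|\calL|}$. Here the inequality \eqref{eqn:bound_Delta}, $\Delta(\gamma_{\rmc},P_{UY}) \le 2^{\gamma_{\rmc}}$, which is immediate from the definition \eqref{eqn:Delta2} (the indicator forces the integrand to be at most $2^{\gamma_{\rmc}}$ and the expectation of an upper bound is an upper bound), gives $\sqrt{\Delta(\gamma_{\rmc},P_{UY})/|\calL|} \le \sqrt{2^{\gamma_{\rmc}}/|\calL|}$, i.e., the third term in \eqref{eqn:fein}. The remaining two terms of \eqref{eqn:dt}, namely the joint information-spectrum probability and the slack $\delta$, are carried over unchanged. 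Combining these bounds with \eqref{eqn:dt} yields \eqref{eqn:fein}. There is no real obstacle here—the work has already been done in Theorem \ref{thm:dt} and in the displayed manipulation preceding the corollary statement; my only task is to assemble the two elementary estimates above and cite Theorem \ref{thm:dt}.
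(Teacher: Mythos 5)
Your proposal is correct and matches the paper's own derivation exactly: the paper bounds the third term of \eqref{eqn:dt} by inserting $P_{X|U}(\tilx|u)/P_{X|U}(\tilx|u)$ and using $1/P_{X|U}(\tilx|u)\le 2^{\gamma_{\rmb}}$ on $\calT_{\rmb}^{\WAK}(\gamma_{\rmb})$, then bounds $\Delta(\gamma_{\rmc},P_{UY})$ by $2^{\gamma_{\rmc}}$ via \eqref{eqn:bound_Delta}. No gaps.
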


If $(X^n, Y^n)$ is drawn from the product distribution $P_{XY}^n$, then
by designing $\gamma_{\rmb}$ and $\gamma_{\rmc}$ appropriately, we see
that the dominating term in~\eqref{eqn:fein} is the first one. The other
terms vanish with $n$.  

By modifying the helper in the proof of Theorem~\ref{thm:dt}, we can show the following theorem.

\begin{theorem}[Modified CS-type bound for WAK coding] \label{thm:dt-helper-binning}
For arbitrary $\gamma_{\rmb},\gamma_{\rmc} \ge 0$, and
 positive integer $J$, there exists a WAK code $\Phi$  with  error probability satisfying
\begin{align}
\Pe   (\Phi) \le& P_{UXY} \left[ (u,x)  \in \calT_{\rmb}^{\WAK}(\gamma_{\rmb})^c \cup (u,y)  \in   \calT_{\rmc}^{\WAK}(\gamma_{\rmc})^c \right]   \nn\\
&   +\frac{1}{|\calM|}\sum_{(u,\tilx)\in
 \calT_{\rmb}^{\WAK}(\gamma_{\rmb}) } P_U(u)  \nn \\
&+\frac{J}{|\calM||\calL|}\sum_{(u,\tilx)\in  \calT_{\rmb}^{\WAK}(\gamma_{\rmb}) } P_U(u)
+ \frac{\Delta(\gamma_{\rmc}, P_{UY}) }{2 \sqrt{J}}.
\label{eqn:dt_modified}
\end{align}
\end{theorem}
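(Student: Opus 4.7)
The plan is to extend the random coding construction used in the proof of Theorem~\ref{thm:dt} by introducing an additional binning layer at the helper. Specifically, I would draw $J$ i.i.d.\ helper codewords $U_1,\dots,U_J \sim P_U$ and a uniform random binning map $\rmB_2 : \{1,\dots,J\} \to \calL$ (with $J \ge |\calL|$), while keeping the main-encoder binning $\rmB_1 : \calX \to \calM$ unchanged. The helper, upon observing $y$, selects an index $J^* \in \{1,\dots,J\}$ using exactly the same (stochastic-likelihood / channel-simulation) rule as in Theorem~\ref{thm:dt} and transmits only $L = \rmB_2(J^*)$. The main encoder still transmits $M = \rmB_1(X)$. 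The decoder, on receiving $(M,L)$, searches for a pair $(j,\tilx)$ satisfying $\rmB_2(j) = L$, $\rmB_1(\tilx) = M$, and $(U_j,\tilx) \in \calT_{\rmb}^{\WAK}(\gamma_{\rmb})$, and outputs $\tilx$ if this pair is unique.

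The error analysis then splits into three components. First, the joint covering/binning event $\{(U_{J^*},X) \notin \calT_{\rmb}^{\WAK}(\gamma_{\rmb})\} \cup \{(U_{J^*},Y) \notin \calT_{\rmc}^{\WAK}(\gamma_{\rmc})\}$ is handled \emph{verbatim} as in the proof of Theorem~\ref{thm:dt}: because the helper uses the same selection rule over $J$ codewords, the channel-simulation lemma \cite[Lem.\ 2]{Hayashi06} yields the first term $P_{UXY}[\,\cdot\,]$ together with the $\sqrt{\Delta(\gamma_{\rmc},P_{UY})/J}$ approximation error (with $J$ replacing $|\calL|$ in the resolvability bound, since $J$ is now the effective codebook size used for covering) and the common-randomness residual $\delta$. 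Second, the ``main-bin collision'' event — some $\tilx \ne X$ in the same main bin with $(U_{J^*},\tilx) \in \calT_{\rmb}^{\WAK}(\gamma_{\rmb})$ — is bounded exactly as in Theorem~\ref{thm:dt} by $\frac{1}{|\calM|}\sum_{(u,\tilx)\in\calT_{\rmb}^{\WAK}(\gamma_{\rmb})} P_U(u)$ using independence of $\rmB_1$ from the codebook.

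The genuinely new step is the third component, the ``helper-bin collision'' event: there exists $j' \ne J^*$ with $\rmB_2(j') = L$ together with some $\tilx$ (either equal to $X$ or another sequence in main bin $M$) such that $(U_{j'},\tilx) \in \calT_{\rmb}^{\WAK}(\gamma_{\rmb})$. Since $U_{j'} \sim P_U$ independently of everything else and $\rmB_2$ is uniform, averaging over $\rmB_2$ gives $\Pr[\rmB_2(j') = L] = 1/|\calL|$, and averaging over $\rmB_1$ on $\tilx \ne X$ (and separately handling $\tilx = X$) gives a $1/|\calM|$ factor per candidate $\tilx$. Taking a union bound over the at most $J-1 \le J$ competing indices and over candidate $\tilx$, this event contributes at most
\begin{equation}
\frac{J}{|\calM|\,|\calL|}\sum_{(u,\tilx)\in\calT_{\rmb}^{\WAK}(\gamma_{\rmb})} P_U(u),
\end{equation}
which is precisely the new middle term in~\eqref{eqn:dt_modified}.

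The main obstacle I anticipate is carefully decoupling the helper-bin collision analysis from the channel-simulation (covering) analysis. The helper-selection rule induces dependence between $J^*$, $U_{J^*}$, and the realized codebook, so one must verify that conditioning on $J^* = j^*$ leaves $\{U_{j'}\}_{j' \ne j^*}$ i.i.d.\ $P_U$ and independent of $\rmB_2$ restricted to those indices. This is where the proof of Theorem~\ref{thm:dt} (which already isolates the simulation step) is useful: by adopting its construction, the bound on the first term and the $\sqrt{\Delta/J}+\delta$ residual carry over without modification, and only the two binning terms need to be re-derived. Summing the three contributions yields~\eqref{eqn:dt_modified}, and the existence of a deterministic $\Phi$ follows by the usual random-coding expurgation.
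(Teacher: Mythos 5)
Your proposal is correct and follows essentially the same route as the paper's proof in Appendix~\ref{app:dt_wz-helper-binning}: replace $\calL$ by $\calJ$ in the channel-simulation codebook, add a random bin map $\bin:\calJ\to\calL$ at the helper, and split the error into (i) the covering/main-bin events (identical to Theorem~\ref{thm:dt} with $J$ replacing $|\calL|$ in the resolvability bound) plus (ii) a new helper-bin collision event bounded by $\frac{J}{|\calM||\calL|}\sum_{(u,\tilx)\in\calT_\rmb^\WAK(\gamma_\rmb)} P_U(u)$. One small imprecision: the $\tilx = X$ case in your third component never causes a decoding error (the decoder outputs the unique $\hatx$, not the unique pair $(j,\hatx)$), so the paper's $\calE_3$ correctly requires $\tilx\neq x$ \emph{and} $\tilj\neq j$; your parenthetical ``separately handling $\tilx=X$'' hints at this but should simply exclude it.
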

See Appendix~\ref{app:dt_wz-helper-binning} for the proof of Theorem~\ref{thm:dt-helper-binning}.
By letting $J=\sizeL$ in \eqref{eqn:dt_modified}, we recover \eqref{eqn:dt} up to an additional residual term, which is unimportant in second-order analysis. 
A close inspection of the proof reveals that 
the additional term is due to 
additional random bin coding at the helper, which is not needed if $J=\sizeL$.

\begin{remark} \label{remark:non-asymtptic-bound-for-corner}
For the special case such that test channel $P_{U|Y}$ is noiseless, we can show that there exists a WAK code satisfying 
\begin{align} 
\Pe(\Phi) \le& P_{XY}\left[ (x,y) \in \calT_{\rmb}^{\WAK}(\gamma_{\rmb})^c \cup \calT_{\rms}^{\WAK}(\gamma_{\rms})^c \right] \nn \\
 & + \frac{2^{\gamma_{\rmb}}}{|\calM|} + \frac{2^{\gamma_{\rms}}}{|\calM| |\calL|}
 \label{eq:non-aymptotic-bound-for-corner}
\end{align}
for any $\gamma_{\rmb}, \gamma_{\rms} \ge 0$, where 
\begin{align}
\calT_{\rms}^{\WAK}(\gamma_{\rms}) := \left\{(x,y) \in \calX \times \calY :  \log\frac{1}{P_{XY}(x,y)}\le\gamma_{\rms} \right\}.
\end{align}
We can prove the bound \eqref{eq:non-aymptotic-bound-for-corner} by using the standard Slepian-Wolf type bin coding
for both the main encoder and the helper~\cite{TK12, nomura}. As it will turn out later in Section \ref{sec:2nd_wak}, this simple bound gives tighter second-order
achievability in some cases.  
\end{remark}


\subsection{Novel Non-Asymptotic Achievability Bound for the WZ Problem} \label{sec:novel_wz}
We now turn our attention to the WZ problem where we derive a similar
bound as in Theorem~\ref{thm:dt}. This improves on Verd\'u's bound in
Theorem~\cite[Thm.~2]{Ver12}. It again uses the same CS idea for the
covering part.  

Define the three sets for fixed $(P_{UXY},g)\in\scP_D(P_{XY})$ and non-negative constants $\gamma_{\rmp}$ and $\gamma_{\rmc}$:
\begin{align}
\calT_{\rmp}^{\WZ}(\gamma_{\rmp})&:=\left\{ (u,y) \in \calU\times\calY:   \log \frac{ P_{Y|U}(y|u)}{P_Y(y)}  \ge \gamma_{\rmp} \right\} \label{eqn:T1_wz}\\
\calT_{\rmc}^{\WZ}(\gamma_{\rmc})&:= \left\{ (u,x)\in \calU\times\calX:  \log \frac{P_{X|U}(x|u)}{P_X(x)}\le \gamma_{\rmc} \right\} \label{eqn:T2_wz} \\
\calT_{\rmd}^{\WZ}(D)&:=  \left\{ (u,x,y)\in\calU\times\calX\times\calY: \rvd(x, g(u,y))\le D \right\}\label{eqn:Td_wz}.
\end{align}
These sets have intuitive explanations: $\calT_{\rmc}^{\WZ}(\gamma_{\rmc})^c$ represents the {\em covering} error that $U$ is unable to describe $X$ to the desired level indicated by $\gamma_{\rmc}$; $\calT_{\rmp}^{\WZ}(\gamma_{\rmp})^c$ represents the {\em packing} error in which the decoder is unable to decode the correct codeword $U$ given $Y$ using a threshold test based on the information density statistic and $\gamma_{\rmp}$;  $\calT_{\rmd}^{\WZ}(D)^c$ represents the {\em distortion} error in which the the reproduction $\hatX$ not within a distortion of $D$ of the source $X$. 

In the following, we allow the reproduction function
$g\colon\calU\times\calY\to\hatcalX$ to be stochastic; i.e., we consider
a reproduction channel $P_{\hatX|UY}\colon\calU\times\calY\to\hatcalX$.
When we consider a stochastic function instead of a deterministic one, we will use the set
\begin{equation}
 \calTsWZ(D):=  \left\{ (x,\hatx)\in\calX\times\hatcalX: \rvd(x,\hatx)\le D \right\}\label{eqn:Td_wz_stoch}
\end{equation}
instead of $\calT_{\rmd}^{\WZ}(D)$; see \eqref{eqn:dt_wz} and Remark
\ref{remark:WZ_deterministic_function} below.

In this subsection, a pair $(P_{U|X},P_{\hatX|UY})$ of a test
channel $P_{U|X}\colon\calX\to\calU$ and a reproduction channel
$P_{\hatX|UY}\colon\calU\times\calY\to\hatcalX$ is fixed.  Note that the joint
distribution $P_{UXY\hatX}$ of $U,X,Y,\hatX$ is also fixed as
\begin{equation}
 P_{UXY\hatX}(u,x,y,\hatx)=P_{XY}(x,y)P_{U|X}(u|x)P_{\hatX|UY}(\hatx|u,y).
\end{equation}

\begin{theorem}[CS-type bound for WZ coding] \label{thm:cs_wz}
For arbitrary constants $\gamma_{\rmp},\gamma_{\rmc} \ge 0$ and  positive integer $L$, there exists a WZ code $\Phi$  with    probability  of excess distortion satisfying
\begin{align}
\Pe   (\Phi;D)  
&\le P_{UX Y \hatX} [ (u,x)\in\calT_\rmc^\WZ(\gamma_\rmc)^c \nn\\
 & ~~~~~~~~~\cup (x,\hatx)\in\calTsWZ(D)^c \cup (u,y)\in\calT_\rmp^\WZ(\gamma_\rmp)^c ]  \nn\\
 &~+\frac{L}{|\calM|} \sum_{(u,y) \in   \calT_\rmp^\WZ(\gamma_\rmp)} P_U(u) P_Y(y) + \frac{\Delta(\gamma_{\rmc}, P_{UX})}{2\sqrt{L}} .
 \label{eqn:dt_wz}
\end{align}
where $\Delta(\gamma_{\rmc},P_{UX})$ is defined in \eqref{eqn:Delta}.
\end{theorem}

\begin{remark}\label{remark:WZ_deterministic_function}
 If $P_{\hatX|UY}$ is deterministic and represented by $g\colon\calU\times\calY\to\hatcalX$ then the event $\{ (x,\hatx)\in\calTsWZ(D)^c \}$ can be replaced by $\{(u,x,y) \in \calT_{\rmd}^{\WZ}(D)^c\}$.  In fact, by an application of the functional representation lemma~\cite[Appendix~A]{elgamal}, the assumption that the reproduction channel   $P_{\hatX|UY}$ is deterministic can be made without any loss of generality. 
\end{remark}

The proof of Theorem~\ref{thm:cs_gp} is provided in
Appendix~\ref{app:cs_wz}.  
As with Theorem~\ref{thm:dt}, the main novelty of our bound lies in the fact that the three error events lie under the same probability, making it amendable to treat all three error events {\em jointly}. The residual terms in \eqref{eqn:dt_wz} (namely, the second, third and fourth terms) are relatively small with a proper choice of constants $\gamma_{\rmp},\gamma_{\rmc}$ and $L\in\bbN$ as we shall see in the sequel. We can again relax the somewhat cumbersome second and third terms in \eqref{eqn:dt_wz} by noting the definition of $\calT_{\rmp}^{\WZ}(\gamma_{\rmp})$ and by going through the same steps  to  upper bound $\Delta$; cf.~\eqref{eqn:bound_Delta}. We thus obtain:

\begin{corollary}[Simplified CS-type bound for WZ coding] \label{thm:fein_wz}
For arbitrary constants $\gamma_{\rmp},\gamma_{\rmc}\ge 0$ and   positive integer $L$, there exists a WZ code $\Phi$  with    probability  of excess distortion satisfying
\begin{align} 
\Pe   (\Phi;D)  &\le P_{UXY\hatX} [ (u,y)  \in \calT_{\rmp}^{\WZ}(\gamma_{\rmp})^c \nn\\
&~~~~~~~\cup (u,x)  \in   \calT_{\rmc}^{\WZ}(\gamma_{\rmc})^c\cup (x,\hatx)\in\calTsWZ(D)^c ]   \nn\\
&~+\frac{L}{2^{\gamma_{\rmp}}|\calM| } + \frac{1}{2}\sqrt{\frac{2^{\gamma_{\rmc}}}{L}}    .
\label{eqn:fein_wz}
\end{align}
\end{corollary}
To obtain achievable second-order coding rates for the WZ problem, we evaluate the bound in \eqref{eqn:fein_wz} for appropriate choices of $\gamma_{\rmp},\gamma_{\rmc} \ge 0$   and $L\in\bbN$ in Section~\ref{sec:2nd_wz}.  Since the lossy source coding problem is a special case of WZ coding, we use a specialization of the bound in~\eqref{eqn:fein_wz} to derive an achievable dispersion (or second-order coding rate) of lossy source coding~\cite{ingber11,kost12}, which turns out to be tight.

\subsection{Novel Non-Asymptotic Achievability Bound for the GP Problem} \label{sec:novel_gp}
This section presents with a novel non-asymptotic achievability bound for the GP problem, which is the dual of the WZ problem~\cite{Gupta10}. Our bound improves on Verd\'u's non-asymptotic bound for GP coding~\cite[Thm.~3]{Ver12} and uses  the same Channel-Simulation idea for the covering part. 

To state the bound, we define the sets  
\begin{align}
\calT_{\rmp}^{\GP}(\gamma_{\rmp}) & := \left\{ (u,y)\in\calU\times\calY : \log \frac{P_{Y|U}(y|u ) }{P_Y(y)}\ge\gamma_{\rmp}\right\}  \label{eqn:Tp_gp}\\
\calT_{\rmc}^{\GP}(\gamma_{\rmc}) & := \left\{ (u,s)\in\calU\times\calS : \log \frac{P_{S|U}(s|u) }{P_S(s)} \le \gamma_{\rmc}\right\}  \label{eqn:Tc_gp} 
\end{align}
These are analogous to the typical sets used extensively in network information theory~\cite{elgamal} but they only involve the information densities. The first set  in~\eqref{eqn:Tp_gp}  represents {\em packing} event while the second in  \eqref{eqn:Tc_gp} represents {\em covering} event. Also recall the definition of the set $\calT_{\rmg}^{\GP}(\Gamma)$ in  \eqref{eqn:Tg_gp} which represents satisfaction of the cost constraints.  

In the following, the distribution $P_{USXY} \in\scP(\calU\times\calS\times\calX\times\calY)$ satisfying (i) the $\calS$-marginal of $P_{USXY}$ is $P_{S}$, (ii) $P_{Y|XS}=W$ and (iii)  $U-(X,S)-Y$ forms a Markov chain is fixed. Note the encoding function $P_{X|US}$ is allowed to be stochastic but  just as in Remark~\ref{remark:WZ_deterministic_function}, there is no loss in assuming $P_{X|US}$ is deterministic by the functional representation lemma. We prefer to use  $P_{X|US}$ for convenience.

\begin{theorem}[CS-type bound for GP coding] \label{thm:cs_gp}
For arbitrary constants $\gamma_{\rmp},\gamma_{\rmc} \ge 0$  and   positive integer $L$, there exists a GP code $\Phi$  with    average error probability satisfying
\begin{align}
\Pe   (\Phi;\Gamma )  
&\le P_{USXY} [ (u,y)  \in \calT_{\rmp}^{\GP}(\gamma_{\rmp})^c \nn\\
&~~~~~~~ \cup (u,s)  \in   \calT_{\rmc}^{\GP}(\gamma_{\rmc})^c \cup x\in\calT_{\rmg}^{\GP}(\Gamma)^c  ]         \nn\\*
&~ +L |\calM| \sum_{(u,y) \in \calT_{\rmp}^{\GP}(\gamma_{\rmp})} P_U(u) P_Y(y) +\frac{\Delta(\gamma_{\rmc}, P_{US}) }{2\sqrt{L  }}  \label{eqn:dt_gp} 
\end{align}
where $\Delta(\gamma_{\rmc},P_{US})$ is defined in \eqref{eqn:Delta}.
\end{theorem}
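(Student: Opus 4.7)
The plan is to transcribe the proof of Theorem~\ref{thm:cs_wz} to the dual GP setting (see \cite{Gupta10}), with $S$ playing the role that the source $X$ played in the WZ problem, while $Y$ keeps its role. I would construct a random ensemble of $|\calM|\cdot L$ codewords $\{U(m,l):(m,l)\in\calM\times\{1,\ldots,L\}\}$ drawn i.i.d.\ from $P_U$, augmented with common randomness of cardinality $K$ chosen large enough that the residual term $\delta$ arising from Hayashi's resolvability lemma \cite[Lem.~2]{Hayashi06} (see Appendix~\ref{app:channel_res}) is as small as desired. Given $(M,S)$, the encoder uses the $M$-th subcodebook together with the common randomness to implement a channel-simulation rule that stochastically selects an index $l^{*}$ so that the induced law of $(U(M,l^{*}),S)$ approximates $P_{US}$ in total variation; it then draws $X\sim P_{X|US}(\cdot\mid U(M,l^{*}),S)$ and transmits over $W$. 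The decoder, upon receiving $Y$, declares $\hat M=\hat m$ whenever $\hat m$ is the unique message such that some $l$ satisfies $(U(\hat m,l),Y)\in\calT_{\rmp}^{\GP}(\gamma_{\rmp})$.

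For the error analysis, I would apply Hayashi's resolvability lemma conditionally on each $s\in\calS$ and average over $S\sim P_S$ (exactly as in Appendix~\ref{app:sim}): this gives that the joint law of $(U(M,l^{*}),S,X,Y)$ lies within total variation $\sqrt{\Delta(\gamma_{\rmc},P_{US})/L}+\delta$ of the ideal $P_{USXY}$. By the data-processing property of total variation, all subsequent error-event probabilities can be computed under $P_{USXY}$ at this additive cost. Under $P_{USXY}$, the event $\{\rvg(X)>\Gamma\}\cup\{\hat M\neq M\}$ is contained in the union of: (i) the cost violation $x\in\calT_{\rmg}^{\GP}(\Gamma)^{c}$; (ii) the packing failure of the true codeword $(U(M,l^{*}),Y)\in\calT_{\rmp}^{\GP}(\gamma_{\rmp})^{c}$; (iii) the covering failure $(U(M,l^{*}),S)\in\calT_{\rmc}^{\GP}(\gamma_{\rmc})^{c}$, which enters because Hayashi's lemma separates the likelihood-ratio tail (added explicitly as a probability) from the bulk controlled by $\Delta$; and (iv) the spurious acceptance of some $U(m',l')$ with $m'\neq M$. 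Events (i)--(iii) together are precisely the joint-probability term in \eqref{eqn:dt_gp}. For (iv), a union bound over at most $|\calM|\cdot L$ wrong codewords --- each independent of $Y$ by the i.i.d.\ construction, distributed marginally as $P_U$, while $Y$ has marginal $P_Y$ under $P_{USXY}$ --- yields exactly $L|\calM|\sum_{(u,y)\in\calT_{\rmp}^{\GP}(\gamma_{\rmp})}P_U(u)P_Y(y)$. Averaging over the random codebook and extracting a deterministic code is standard.

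The main technical obstacle is the channel-simulation step: the stochastic selection rule for $l^{*}$ must be specified via a likelihood-based weighting (proportional to $P_{S|U}(S\mid U(M,l))$ suitably truncated on $\calT_{\rmc}^{\GP}(\gamma_{\rmc})$) so that invoking \cite[Lem.~2]{Hayashi06} conditionally on $s$ produces exactly the quantity $\Delta(\gamma_{\rmc},P_{US})$ defined in \eqref{eqn:Delta} inside the square root, rather than the weaker bound $2^{\gamma_{\rmc}}$ that a cruder analysis would give. A secondary subtlety is the bookkeeping of independence for step (iv): although the encoder's choice of $l^{*}$ creates dependence between $U(M,l^{*})$ and $Y$, the remaining codewords $U(m',l')$ with $m'\neq M$ are never consulted by the encoder and therefore remain independent of $(X,S,Y)$, which is exactly what permits the union bound to be evaluated under the product marginal $P_U\times P_Y$ in the last additive term.
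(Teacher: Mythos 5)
Your strategy matches the paper's (random subcodebooks, channel simulation via a likelihood-weighted selection rule, Hayashi's resolvability lemma, a Feinstein-type threshold decoder, and a union bound over spurious codewords), and your handling of the spurious-acceptance term and its independence structure is correct. However, there is a genuine gap in the covering-error bookkeeping, and it is exactly the step that gives this theorem its edge over Verd\'u's bound.

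You state that the simulated joint law of $(U(M,l^*),S,X,Y)$ is within total variation $\sqrt{\Delta(\gamma_{\rmc},P_{US})/L}+\delta$ of $P_{USXY}$, then claim that ``by data processing'' the error probability can be computed under $P_{USXY}$ at this additive cost, and that the covering failure $(u,s)\in\calT_{\rmc}^{\GP}(\gamma_{\rmc})^{c}$ both ``enters \ldots added explicitly as a probability'' and simultaneously sits inside the single joint-probability term of \eqref{eqn:dt_gp}. These two readings are mutually inconsistent, and neither is correct as stated. The true total variation $d(P_{\hatU SXY},P_{USXY})$ is \emph{not} bounded by $\sqrt{\Delta/L}+\delta$; via \eqref{eqn:resolv2} and data processing it is bounded only by $P_{US}(\calT_{\rmc}^{\GP}(\gamma_{\rmc})^{c})+\frac{1}{2}\sqrt{\Delta/L}+\delta$, i.e.\ the covering-failure probability reappears \emph{added} to the other terms. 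Running your argument with this corrected TV cost gives $P_{USXY}[\calE_0\cup\calE_1]+P_{US}[\calT_\rmc^c]+\sqrt{\Delta/L}+\delta+\cdots$, which is Verd\'u's union-bound form, strictly weaker than \eqref{eqn:dt_gp}. The paper's mechanism for pushing the covering failure \emph{inside} the union is the smoothed sub-normalized measure $\barP_{US}$ on $\calT_{\rmc}^{\GP}(\gamma_{\rmc})$: combining property \eqref{eqn:distance-property3} (which contributes $\frac{1}{2}P_{US}[\calT_\rmc^c]$ as $\frac{1-\barP(\text{all})}{2}$) with Lemma~\ref{lem:sim1} (which contributes another $\frac{1}{2}P_{US}[\calT_\rmc^c]$), and using that $\barP_{USXY}[\text{bad}]=P_{USXY}[\text{bad}\cap\calT_\rmc]$, one obtains the identity $P_{USXY}[\text{bad}\setminus\calE_3]+P_{USXY}[\calE_3]=P_{USXY}[\text{bad}\cup\calE_3]$, which is what puts the covering event under the same probability as the packing and cost events. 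Only the residual distance $d(\barP_{\hatU S},\barP_{US})$ is then bounded by Lemma~\ref{lem:sim2} as $\frac{1}{2}\sqrt{\Delta/L}+\delta$. Without this smoothing argument your proposal cannot reach the single joint probability in \eqref{eqn:dt_gp}.

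A secondary (minor) imprecision: Hayashi's resolvability lemma is invoked per value $k$ of the common randomness (each $k$-block of $\sizeL$ codewords resolves the $\calS$-marginal), not ``conditionally on each $s\in\calS$.'' This is purely notational given that you otherwise defer to Appendix~\ref{app:sim}, but the conditioning variable is $k$, not $s$.
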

Because the technique to prove Theorem~\ref{thm:cs_gp} is similar to that for Theorems~\ref{thm:dt} and \ref{thm:cs_wz}, we only sketch the code construction in Appendix~\ref{app:cs_gp}. In the second-order asymptotics sense, Theorem~\ref{thm:cs_gp}  improves on~\cite[Thm.~3]{Ver12} because the error events are under the {\em same} error probability.  Notice that unlike the existing asymptotic and non-asymptotic results for GP coding \cite{Tan12b,Ver12, Mou07}, the channel input $x$ satisfies the cost constraint \eqref{eqn:gpcost} or its almost sure equivalent  (cf.~Proposition~\ref{proposition:cost-conversion}). Direct application of \eqref{eqn:bound_Delta}  to bound  $\Delta(\gamma_{\rmc}, P_{US})$  and the definition of $\calT_{\rmp}^{\GP}(\gamma_{\rmp})$ in~\eqref{eqn:Tp_gp} yields the following:

\begin{corollary}[Simplified CS-type bound for GP coding] \label{thm:fein_gp}
For arbitrary constants $\gamma_{\rmp},\gamma_{\rmc} \ge 0$  and   positive integer $L$, there exists a GP code $\Phi$  with    average error probability satisfying
\begin{align} 
\Pe   (\Phi;\Gamma )  
 &\le P_{USXY} [ (u,y)  \in \calT_{\rmp}^{\GP}(\gamma_{\rmp})^c \nn\\
 &~~~~~~~~~\cup (u,s)  \in   \calT_{\rmc}^{\GP}(\gamma_{\rmc})^c   \cup x\in\calT_{\rmg}^{\GP}(\Gamma)^c ]  \nn\\
 &~+\frac{L |\calM|}{2^{\gamma_{\rmp}} }    +  \frac{1}{2}\sqrt{\frac{2^{\gamma_{\rmc}}}{L}}     .
\label{eqn:fein_gp}
\end{align}
\end{corollary}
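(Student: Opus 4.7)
The plan is simply to invoke Theorem~\ref{thm:cs_gp} and then bound the second and third residual terms by elementary manipulations, keeping the main probability term and the common-randomness slack $\delta$ exactly as they appear. Since the first term on the right-hand side of \eqref{eqn:dt_gp} already matches the first term in the target bound \eqref{eqn:fein_gp}, and since $\delta$ is carried through unchanged, the only work is to replace the complicated expressions $L|\calM|\sum_{(u,y)\in\calT_{\rmp}^{\GP}(\gamma_{\rmp})} P_U(u)P_Y(y)$ and $\sqrt{\Delta(\gamma_{\rmc},P_{US})/L}$ by their cleaner upper bounds $L|\calM|/2^{\gamma_{\rmp}}$ and $\sqrt{2^{\gamma_{\rmc}}/L}$ respectively.

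For the packing-type residual, I would exploit the definition of $\calT_{\rmp}^{\GP}(\gamma_{\rmp})$ in \eqref{eqn:Tp_gp}: on this set, $P_Y(y)\le 2^{-\gamma_{\rmp}} P_{Y|U}(y|u)$. Multiplying by $P_U(u)$ and summing gives
\begin{align}
\sum_{(u,y)\in\calT_{\rmp}^{\GP}(\gamma_{\rmp})} P_U(u) P_Y(y)
&\le 2^{-\gamma_{\rmp}} \sum_{(u,y)\in\calT_{\rmp}^{\GP}(\gamma_{\rmp})} P_U(u) P_{Y|U}(y|u) \le 2^{-\gamma_{\rmp}},
\end{align}
where the final inequality follows because $P_{UY}$ is a probability measure. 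Multiplying both sides by $L|\calM|$ yields the desired $L|\calM|/2^{\gamma_{\rmp}}$.

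For the covering-type residual, I would directly invoke the already-established inequality \eqref{eqn:bound_Delta}, namely $\Delta(\gamma_{\rmc},P_{US})\le 2^{\gamma_{\rmc}}$, which is immediate from the definition \eqref{eqn:Delta2} since the indicator caps the likelihood ratio at $2^{\gamma_{\rmc}}$ and the remaining expectation is at most $1$. Dividing by $L$ and taking the square root gives $\sqrt{2^{\gamma_{\rmc}}/L}$. Substituting both bounds into \eqref{eqn:dt_gp} yields \eqref{eqn:fein_gp} verbatim. There is no real obstacle here, only bookkeeping; the content of the corollary is entirely carried by Theorem~\ref{thm:cs_gp}, and these two substitutions produce a form that is easier to evaluate in the $n$-fold i.i.d.\ setting where $\gamma_{\rmp}$ and $\gamma_{\rmc}$ will be chosen near $I(U;Y)$ and $I(U;S)$ respectively.
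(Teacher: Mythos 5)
Your proposal is correct and matches the paper's own derivation: the corollary follows from Theorem~\ref{thm:cs_gp} by bounding $\sum_{(u,y)\in\calT_{\rmp}^{\GP}(\gamma_{\rmp})}P_U(u)P_Y(y)\le 2^{-\gamma_{\rmp}}$ via the definition in \eqref{eqn:Tp_gp} and by applying $\Delta(\gamma_{\rmc},P_{US})\le 2^{\gamma_{\rmc}}$ from \eqref{eqn:bound_Delta}. Both of your elementary estimates are exactly the bookkeeping the paper performs, so nothing further is needed.
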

To obtain achievable second-order coding rates for the GP problem, we evaluate the bound in \eqref{eqn:fein_gp} for appropriate choices of $\gamma_{\rmp},\gamma_{\rmc}$  and $L\in\bbN$ in Section~\ref{sec:2nd_gp}.  
\section{General Formulas} \label{sec:gen}
In this section, we use  the simplified CS-type bounds in Corollaries~\ref{cor:fein}, \ref{thm:fein_wz} and \ref{thm:fein_gp} to derive achievable general formulas for the optimal rate region of the WAK problem, the rate-distortion function of the  WZ problem  and the capacity of the GP problem. This allows us to recover known results in~\cite{miyake,iwata02,Tan12b}. By {\em general formula}, we mean that we consider sequences of these problems and do not place any underlying structure such as stationarity, memorylessness and ergodicity on the source and channel~\cite{Han10, VH94}.  To state our results, let us  first recall the following probabilistic limit operations.   Their properties are similar to the limit superior and limit inferior for numerical sequences in mathematical analysis and are summarized in~\cite{Han10}.
\begin{definition}
Let $\bU:=\{U_n \}_{n=1}^{\infty}$ be a sequence of real-valued random variables. The {\em limit superior in probability} of $\bU$ is defined as 
\begin{equation}
\plimsup_{n\to\infty} U_n := \inf \left\{ \alpha\in\bbR: \lim_{n\to\infty}\Pr( U_n>\alpha) = 0 \right\}. \label{eqn:plimsup}
\end{equation}
The {\em limit inferior in probability} of $\bU$ is defined as 
\begin{equation}
\pliminf_{n\to\infty} U_n := - \plimsup_{n\to\infty} (-U_n ) \label{eqn:pliminf}
\end{equation}
\end{definition}
We also recall the following definitions from Han~\cite{Han10}. These definitions  play a prominent role in the rest of this section.
\begin{definition}
Given a pair of   stochastic processes $( \bX,\bY)=\{ X^n,Y^n\}_{n=1}^{\infty}$ with joint distributions $\{P_{X^n, Y^n}\}_{n=1}^{\infty}$, the {\em spectral sup-mutual information rate} is defined as 
\begin{equation}
\overI(\bX;\bY):=\plimsup_{n\to\infty}\frac{1}{n}\log \frac{P_{Y^n|X^n}(Y^n|X^n)}{P_{Y^n}(Y^n)}. \label{eqn:spectral_mi}
\end{equation}
The {\em spectral  inf-mutual information rate} $\underI(\bX;\bY)$  is defined as in~\eqref{eqn:plimsup} with $\pliminf$ in place of $\plimsup$.  The {\em spectral sup- and inf-conditional mutual information rates}  are defined similarly. 

The {\em spectral sup-conditional  entropy rates} is defined as 
\begin{equation}
\overH(\bY|\bX):=\plimsup_{n\to\infty}\frac{1}{n}\log \frac{1}{P_{Y^n|X^n}(Y^n|X^n)} . \label{eqn:spectral_ce}
\end{equation}
The {\em spectral inf-conditional  entropy rates} is defined as in \eqref{eqn:spectral_ce} with $\pliminf$ in place of $\plimsup$. 
\end{definition}
\subsection{General Formula for the WAK problem} \label{sec:gen_wak}
In this section, we consider sequences of the WAK problem indexed by the blocklength $n$ where the  sequence of source distributions $\{P_{X^n  Y^n}\}_{n=1}^{\infty}$ is {\em general}, i.e., we do not place any assumptions on the structure of the source such as stationarity, memorylessness and ergodicity. We aim to characterize an inner bound to the optimal rate region defined in \eqref{eqn:Rwak}. We show that our inner bound coincides with that derived by Miyake and Kanaya~\cite{miyake} but is derived based on  the upper bound on the error probability provided in our CS-type  bound in Corollary~\ref{cor:fein}. The choice of the parameters $\gamma_{\rmb}, \gamma_{\rmc}$ and $\delta$ plays a crucial role and guides our choice of these parameters for second-order coding analysis in the following section. 

Let $\scP( \{ P_{X^n  Y^n}\}_{n=1}^{\infty}  )$  be  the set of all sequences of distributions $\{ P_{U^n X^n Y^n} \}_{n=1}^{\infty}$ such that for every $n\ge 1$, $U^n-Y^n-X^n$ forms a Markov chain and the $(\calX^n\times\calY^n)$-marginal of $P_{U^n X^n Y^n} $ is $P_{ X^n Y^n}$. Define the set
\begin{align}
\hat{\scR}_{\WAK}^*&:=\bigcup_{ \{ P_{U^n X^n Y^n} \}_{n=1}^{\infty}\in \scP( \{ P_{X^n  Y^n}\}_{n=1}^{\infty}  ) } \nn\\
 &\left\{ (R_1, R_2)\in\bbR_+^2: R_1\ge\overH(\bX|\bU),R_2\ge\overI(\bU;\bY) \right\}
\end{align}
\begin{theorem}[Inner Bound to  the Optimal Rate Region for WAK~\cite{miyake}] \label{thm:gen_wak}
We have 
\begin{equation}
\hat{\scR}_{\WAK}^*\subset\scR_{\WAK}. \label{eqn:inner_general_wak}
\end{equation}
\end{theorem}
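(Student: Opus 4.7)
The plan is to apply the simplified CS-type bound of Corollary~\ref{cor:fein} at each blocklength $n$ to the given sequence $\{P_{U^n X^n Y^n}\}_{n=1}^\infty \in \scP(\{P_{X^n Y^n}\}_{n=1}^\infty)$ and tune the threshold parameters so that every residual term vanishes. Fix $(R_1, R_2) \in \hat{\scR}_{\WAK}^*$. It suffices to handle the strict case $R_1 > \overH(\bX|\bU)$ and $R_2 > \overI(\bU;\bY)$, since each $\scR_{\WAK}(\veps)$ is closed by definition (and hence so is $\scR_{\WAK}$), so any boundary point will be absorbed by taking limits of interior points.

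For the strict case, choose $\eta \in (0, \min\{R_1 - \overH(\bX|\bU),\, R_2 - \overI(\bU;\bY)\})$. For each $n$, set $\gamma_{\rmb,n} := n(R_1 - \eta)$, $\gamma_{\rmc,n} := n(R_2 - \eta)$, $\delta_n := 1/n$, and let the index sets have sizes $\log|\calM_n| = \lceil n R_1 \rceil$ and $\log|\calL_n| = \lceil n R_2 \rceil$. Instantiating Corollary~\ref{cor:fein} at blocklength $n$ with the joint distribution $P_{U^n X^n Y^n}$ yields a WAK code $\Phi_n$ satisfying
\[ \Pe(\Phi_n) \;\le\; \Pr[\calA_n \cup \calB_n] \;+\; 2^{-n\eta} \;+\; 2^{-n\eta/2} \;+\; \tfrac{1}{n}, \]
where $\calA_n := \bigl\{ \tfrac{1}{n}\log \tfrac{1}{P_{X^n|U^n}(X^n|U^n)} > R_1 - \eta \bigr\}$ and $\calB_n := \bigl\{ \tfrac{1}{n}\log \tfrac{P_{Y^n|U^n}(Y^n|U^n)}{P_{Y^n}(Y^n)} > R_2 - \eta \bigr\}$.

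The last three residuals tend to zero by construction. For the first, the union bound gives $\Pr[\calA_n \cup \calB_n] \le \Pr[\calA_n] + \Pr[\calB_n]$; since $R_1 - \eta > \overH(\bX|\bU)$, the definition of $\overH$ via $\plimsup$ in~\eqref{eqn:plimsup} gives $\Pr[\calA_n] \to 0$, and analogously $\Pr[\calB_n] \to 0$. Hence $\Pe(\Phi_n) \to 0$. For every $\veps \in (0,1)$ we can therefore find an $n_\veps$ so that $\Pe(\Phi_{n_\veps}) \le \veps$, which places the realized rate pair $(\lceil n_\veps R_1 \rceil/n_\veps,\,\lceil n_\veps R_2\rceil/n_\veps)$ into $\scR_{\WAK}(n_\veps, \veps)$. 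Sending $n_\veps \to \infty$, the pair converges to $(R_1, R_2)$, and closedness of $\scR_{\WAK}(\veps)$ gives $(R_1, R_2) \in \scR_{\WAK}(\veps)$. Intersecting over $\veps \in (0,1)$ yields $(R_1, R_2) \in \scR_{\WAK}$, proving \eqref{eqn:inner_general_wak}.

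The main obstacle is purely bookkeeping: the thresholds $\gamma_{\rmb,n}$ and $\gamma_{\rmc,n}$ must simultaneously be (i) above $n\overH(\bX|\bU)$ and $n\overI(\bU;\bY)$ respectively so that the information-spectrum term $\Pr[\calA_n \cup \calB_n]$ decays, and (ii) below $\log|\calM_n|$ and $\log|\calL_n|$ by a margin linear in $n$ so that the binning residual $2^{\gamma_{\rmb,n}}/|\calM_n|$ and the resolvability residual $\sqrt{2^{\gamma_{\rmc,n}}/|\calL_n|}$ both vanish. The choice $\gamma_{\cdot,n} = n(R_\cdot - \eta)$ threads this needle for any $\eta$ in the allowed range. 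Note that in this general-formula proof the union bound on $\calA_n \cup \calB_n$ loses nothing, so the joint form of Corollary~\ref{cor:fein} is inessential here; its refinement over Verd\'u's bound~\eqref{eqn:ver} becomes essential only in the subsequent second-order analysis, where the correlation between the covering and binning events shapes the dispersion matrix.
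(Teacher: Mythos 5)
Your proof is correct and follows essentially the same route as the paper's: instantiate Corollary~\ref{cor:fein} with thresholds wedged between the spectral quantities and the rates, observe that the information-spectrum term and the three residuals all vanish, and conclude that every interior point is achievable. The only cosmetic differences are that you parameterize the thresholds by $R_1-\eta$ and $R_2-\eta$ rather than $\overH(\bX|\bU)+\eta$ and $\overI(\bU;\bY)+\eta$, and you make the absorption of boundary points via closedness of $\scR_{\WAK}(\veps)$ explicit where the paper leaves it to the arbitrariness of $\eta$.
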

We remark that by using techniques from~\cite{Ste96}, Miyake and Kanaya~\cite{miyake} showed that~\eqref{eqn:inner_general_wak} is in fact an equality, i.e., $\hat{\scR}_{\WAK}^*$ is also an outer bound to $\scR_{\WAK}$. In addition, when the   source distributions $\{P_{X^n  Y^n}\}_{n=1}^{\infty}$  are stationary and memoryless (and the alphabets $\calX$ and $\calY$ are discrete and finite), $\hat{\scR}_{\WAK}^*$ reduces to the single-letter region ${\scR}_{\WAK}^*$ defined in \eqref{eqn:wak_region}. This follows easily from the law of large numbers. The proof  of Theorem~\ref{thm:gen_wak} follows directly from the finite blocklength bound in Corollary~\ref{cor:fein}. In fact, the weaker bounds in \cite{Kuz12} and \cite{Ver12}  suffice for this purpose.
\begin{proof}
Consider \eqref{eqn:fein} and let us  fix a process $\{ P_{U^n X^n Y^n} \}_{n=1}^{\infty}\in \scP( \{ P_{X^n  Y^n}\}_{n=1}^{\infty} )$ and a constant $\eta>0$. Set  
\begin{align}
\frac{1}{n}\log |\calM| &:= \overH(\bX|\bU)+2 \eta \label{eqn:sizeM}\\
\frac{1}{n}\log |\calL| &:= \overI(\bU;\bY)+2 \eta \label{eqn:sizeL}\\
\gamma_{\rmb} &:= n( \overH(\bX|\bU)+ \eta)\\
\gamma_{\rmc} &:= n(\overI( \bU;\bY)+ \eta)
\end{align}
Then for blocklength $n$, the probability on the RHS of~\eqref{eqn:fein} can be written as 
\begin{align}
  &P_{U^n X^n Y^n} \bigg[  \bigg\{\frac{1}{n}\log\frac{1}{P_{X^n|U^n}(X^n|U^n)} \ge \overH(\bX|\bU)+ \eta \bigg\} \nn\\
   &~~~~\bigcup\bigg\{ \frac{1}{n} \log\frac{ P_{Y^n|U^n}(Y^n|U^n)}{P_{Y^n}(Y^n)}\ge \overI( \bU;\bY)+ \eta \bigg\}\bigg]    \label{eqn:prob_in_fein}
\end{align}
By the definition of the spectral sup-entropy rate and the spectral sup-mutual information rate, the probabilities of both events in \eqref{eqn:prob_in_fein} tend  to zero.  Further,
\begin{align}
\frac{2^{\gamma_{\rmb}}}{|\calM|} &=2^{-n\eta}\to 0  ,\quad \mbox{and}\quad  \frac{1}{2}\sqrt{\frac{2^{\gamma_{\rmc}}}{|\calL|}} =\frac{1}{2}\cdot 2^{-n\eta/2}\to 0. \label{eqn:res_terms}
\end{align}
Hence, $\Pe(\Phi_n)\to 0$. Since $\eta > 0$ is arbitrary, from \eqref{eqn:sizeM} and \eqref{eqn:sizeL} we deduce that any    pair of rates $(R_1,R_2)$ satisfying  $R_1 > \overH(\bX|\bU)$ and $R_2 > \overI(\bU;\bY)$  is achievable. 
\end{proof}

\subsection{General Formula for the WZ problem} \label{sec:gen_wz}
In a similar way, we can recover the general formula for WZ coding derived by Iwata and Muramatsu~\cite{iwata02}. Note however, that we directly work with the probability of excess distortion, which is related to but different from the maximum-distortion criterion employed in~\cite{iwata02}. Once again, we assume that the source is $\{ P_{X^n  Y^n}\}_{n=1}^{\infty}$ is {\em general} in the sense explained in Section~\ref{sec:gen_wak}.

Let $\scP_D( \{ P_{X^n  Y^n}\}_{n=1}^{\infty}  )$  be  the set of all sequences of distributions $\{ P_{U^n X^n Y^n} \}_{n=1}^{\infty}$ and reproduction functions $\{ g_n:\calU^n\times\calY^n\to\hcalX^n \}$ such that for every $n\ge 1$,  $U^n-X^n-Y^n$ forms a Markov chain, the $(\calX^n\times\calY^n)$-marginal of $P_{U^n X^n Y^n} $ is $P_{ X^n Y^n}$ and 
\begin{equation}
\plimsup_{n\to\infty}  \rvd_n (X^n , g_n(U^n, Y^n))\le D \label{eqn:dist_cond}
\end{equation}
Define the rate-distortion function
\begin{equation}
\hatR^*_{\WZ}(D) := \inf \left\{ \overI(\bU;\bX)-\underI(\bU;\bY) \right\} \label{eqn:rd_func}
\end{equation}
where the infimum is over all $  \{ P_{U^n X^n Y^n}, g_n \}_{n=1}^{\infty} \in \scP_D( \{ P_{X^n  Y^n}\}_{n=1}^{\infty}  )$. 
\begin{theorem}[Upper Bound to the  Rate-Distortion Function for WZ~\cite{iwata02}] \label{thm:gen_wz}
We have 
\begin{equation}
R_{\WZ}(D) \le \hatR^*_{\WZ}(D) . \label{eqn:inner_general_wz}
\end{equation}
\end{theorem}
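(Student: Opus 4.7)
The plan is to mirror the proof of Theorem~\ref{thm:gen_wak} from the excerpt, substituting the simplified CS-type bound for WZ coding, \eqref{eqn:fein_wz} from Corollary~\ref{thm:fein_wz}, in place of \eqref{eqn:fein}. Fix an arbitrary sequence $\{P_{U^n X^n Y^n}, g_n\}_{n=1}^{\infty} \in \scP_D(\{P_{X^n Y^n}\}_{n=1}^{\infty})$, an arbitrary $\eta > 0$, and an auxiliary slack $\alpha > 0$. For each blocklength $n$, apply \eqref{eqn:fein_wz} to the joint distribution $P_{U^n X^n Y^n}$ with reproduction function $g_n$ and distortion level $D+\alpha$, with parameter choices
\begin{align*}
 \gamma_{\rmp} &:= n(\underI(\bU;\bY) - \eta), \\
 \gamma_{\rmc} &:= n(\overI(\bU;\bX) + \eta), \\
 \log L &:= n(\overI(\bU;\bX) + 2\eta), \\
 \log|\calM_n| &:= n(\overI(\bU;\bX) - \underI(\bU;\bY) + 4\eta), \\
 \delta &:= 2^{-n},
\end{align*}
with the obvious rounding to integer values where necessary.

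A direct calculation shows that the last three residual terms in \eqref{eqn:fein_wz} equal $2^{-n\eta}$, $2^{-n\eta/2}$, and $2^{-n}$ respectively, all vanishing as $n\to\infty$. The remaining union probability in \eqref{eqn:fein_wz} is, by the union bound, upper bounded by the sum of the packing-error probability $P_{U^n X^n Y^n}(\calT_{\rmp}^{\WZ}(\gamma_{\rmp})^c)$, the covering-error probability $P_{U^n X^n Y^n}(\calT_{\rmc}^{\WZ}(\gamma_{\rmc})^c)$, and the distortion-error probability $P_{U^n X^n Y^n}(\calT_{\rmd}^{\WZ}(D+\alpha)^c)$. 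The first two are information-spectrum probabilities that vanish by the definitions of $\underI(\bU;\bY)$ and $\overI(\bU;\bX)$, exactly as in the proof of Theorem~\ref{thm:gen_wak}. The third equals $\Pr(\rvd_n(X^n, g_n(U^n, Y^n)) > D + \alpha)$, which tends to zero by \eqref{eqn:dist_cond} combined with the strict positivity of $\alpha$.

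Combining the pieces gives $\Pe(\Phi_n; D+\alpha)\to 0$ at rate $n^{-1}\log|\calM_n| = \overI(\bU;\bX) - \underI(\bU;\bY) + 4\eta$, so the pair $(\overI(\bU;\bX)-\underI(\bU;\bY)+4\eta,\, D+\alpha)$ lies in the asymptotic WZ rate-distortion region $\scR_{\WZ}$. Since each $\scR_{\WZ}(\veps)$ is closed by definition and $\scR_{\WZ}=\bigcap_{\veps>0}\scR_{\WZ}(\veps)$, letting $\alpha\to 0^+$ yields $(\overI(\bU;\bX)-\underI(\bU;\bY)+4\eta,\, D) \in \scR_{\WZ}$; finally letting $\eta\to 0^+$ and infimizing over $\scP_D(\{P_{X^n Y^n}\}_{n=1}^{\infty})$ delivers $R_{\WZ}(D)\le \hat R_{\WZ}^*(D)$.

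The main technical subtlety is that the distortion event $\calT_{\rmd}^{\WZ}(D)^c$ is not an information-spectrum quantity: assumption \eqref{eqn:dist_cond} is only a $\plimsup \le D$ bound and does not directly force $\Pr(\rvd_n > D)\to 0$ at the boundary. The small-slack device above (applying the CS-type bound at $D+\alpha$ and then invoking closedness of $\scR_{\WZ}(\veps)$ as $\alpha\to 0^+$) circumvents this without strengthening the hypothesis on $\{g_n\}$; all remaining steps are a straightforward transcription of the parameter-matching argument used in the proof of Theorem~\ref{thm:gen_wak}.
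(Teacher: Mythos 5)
Your proposal is correct and mirrors the paper's own proof essentially step for step, using the identical parameter choices $\gamma_{\rmp}=n(\underI(\bU;\bY)-\eta)$, $\gamma_{\rmc}=n(\overI(\bU;\bX)+\eta)$, $\log L = n(\overI(\bU;\bX)+2\eta)$, $\log|\calM_n|=n(\overI(\bU;\bX)-\underI(\bU;\bY)+4\eta)$, $\delta=2^{-n}$, and applying the bound at the relaxed distortion level. The only cosmetic difference is in closing the boundary: you invoke closedness of $\scR_{\WZ}$ to pass $\alpha\to0^+$ and $\eta\to0^+$, whereas the paper uses the diagonal line argument of \cite[Thm.~1.8.2]{Han10}; these are interchangeable ways of handling the same technicality.
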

Iwata and Muramatsu~\cite{iwata02} showed in fact that \eqref{eqn:inner_general_wz} is an equality by proving a converse along the lines of \cite{Ste96}. It can be shown that the general rate-distortion function defined in \eqref{eqn:rd_func} reduces to the one derived by Wyner and Ziv~\cite{wynerziv} in the case where the alphabets are finite and the source is stationary and memoryless.  Also Iwata and Muramatsu~\cite{iwata02} showed that deterministic reproduction functions $g_n:\calU^n\times\calY^n\to\hcalX^n$ suffice and we do not need the more general stochastic reproduction functions $P_{\hatX^n|U^nY^n}$.
\begin{proof}
Let $\eta>0$. 
We start from the bound on the probability of excess distortion in~\eqref{eqn:fein_wz},
where we first consider $D+\eta$ instead of $D$. Let us fix the sequence of distribution and the sequence of  functions $  \{ (P_{U^n X^n Y^n}, g_n )\}_{n=1}^{\infty} \in \scP_{D }( \{ P_{X^n  Y^n}\}_{n=1}^{\infty}  )$.  Set
\begin{align}
\frac{1}{n}\log |\calM| &:=  \overI(\bU;\bX)-\underI(\bU;\bY) + 4\eta \label{eqn:Mwz}\\
\frac{1}{n}\log L &:=  \overI(\bU;\bX) + 2 \eta\\
\gamma_{\rmp} &:= n( \underI(\bU;\bY) -   \eta )\\
\gamma_{\rmc} &:= n( \overI(\bU;\bX)  +   \eta ) . 
\end{align}
Then, the probability in \eqref{eqn:fein_wz}  for  blocklength $n$ can be written as 
\begin{align}
P_{U^n X^n Y^n} \bigg[ & \left\{\frac{1}{n}\log \frac{P_{Y^n|U^n}(Y^n|U^n)}{P_{Y^n}(Y^n)}\le  \underI(\bU;\bY) -   \eta   \right\}\nn\\*
&\quad\bigcup\left\{\frac{1}{n}\log \frac{P_{X^n|U^n}(X^n|U^n)}{P_{X^n}(X^n)}\ge \overI(\bU;\bX)  +   \eta \right\}\nn\\*
&\qquad \bigcup\bigg\{  \rvd_n(X^n, g_n(U^n, Y^n))\ge D+\eta \bigg\} \bigg] \label{eqn:three_terms_prob}
\end{align}
By the definition of the spectral sup- and inf-mutual information rates and the distortion condition in \eqref{eqn:dist_cond}, we observe that the probability in \eqref{eqn:three_terms_prob} tends to zero as $n$ grows.  By a similar calculation as in~\eqref{eqn:res_terms}, the other terms in \eqref{eqn:fein_wz} also tend to zero. Hence, the probability of excess distortion $\Pe(\Phi_n;D+\eta)\to 0$ as $n$ grows. This holds for every $\eta>0$. By \eqref{eqn:Mwz}, the any rate below $\overI(\bU;\bX)-\underI(\bU;\bY)+4\eta$ is achievable. In order to complete the proof, we choose a positive sequence satisfying $\eta_1 > \eta_2 > \cdots > 0$ and
$\eta_k \to 0$ as $k \to \infty$. Then, by using the {\em diagonal line argument} \cite[Thm.~1.8.2]{Han10}, we complete the proof of \eqref{eqn:inner_general_wz}. 
\end{proof}
\subsection{General Formula for the GP problem} \label{sec:gen_gp}
We conclude this section by showing that the non-asymptotic bound on the average probability of error derived in Corollary~\ref{thm:fein_gp} can be adapted to recover the general formula for the GP problem derived in Tan~\cite{Tan12b}. Here, both the state distribution $\{P_{S^n} \in \scP(\calS^n)\}_{n=1}^{\infty}$ and the channel $\{W^n : \calX^n\times\calS^n\to\calY^n\}_{n=1}^{\infty}$ are general. In particular, the only requirement on  the stochastic mapping $W^n$ is that for every $(x^n,s^n)\in\calX^n\times\calS^n$,
\begin{equation}
\sum_{y^n\in\calY^n}W^n(y^n|x^n,s^n)=1.
\end{equation}
Let $\scP_{\Gamma}( \{ W^n, P_{S^n}\}_{n=1}^{\infty} )$  be the family of joint distributions $P_{U^n S^n X^n Y^n}$ such that for every $n\ge 1$, $U^n-(X^n,S^n)-Y^n$ forms a Markov chain, the $\calS^n$-marginal of $P_{U^n S^n X^n Y^n}$ is $P_{S^n}$, the channel law $P_{Y^n|X^n, S^n}=W^n$ and
\begin{equation}
\plimsup_{n\to\infty}  \rvg_n (X^n  )\le \Gamma \label{eqn:cost_con}
\end{equation}
  Define the quantity
\begin{equation}
\hatC_{\GP}^*(\Gamma) := \sup\left\{ \underI(\bU;\bY)-\overI(\bU;\bS)\right\} \label{eqn:gen_gp}
\end{equation}
where the supremum is over all  joint distributions $ \{P_{U^n S^n X^n Y^n}\}_{n=1}^{\infty} \in \scP_\Gamma( \{ W^n, P_{S^n}\}_{n=1}^{\infty} )$.

\begin{theorem}[Lower Bound to the  GP capacity~\cite{Tan12b}] \label{thm:gen_gp}
We have 
\begin{equation}
C_{\GP}(\Gamma)\ge \hatC_{\GP}^* (\Gamma) . \label{eqn:lower_gen_gp}
\end{equation}
\end{theorem}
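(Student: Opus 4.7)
The plan is to mirror the proofs of Theorems~\ref{thm:gen_wak} and~\ref{thm:gen_wz}, but instantiating the simplified CS-type bound for GP coding (Corollary~\ref{thm:fein_gp}) in place of its WAK/WZ counterparts. The choice of parameters is dictated by the requirement that, in the $n$-fold setting, the three deterministic residual terms in~\eqref{eqn:fein_gp} vanish while the single probability term is driven to zero via the spectral-limit hypotheses and the cost condition.

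First I would fix any admissible process $\{P_{U^n S^n X^n Y^n}\}_{n=1}^\infty\in\scP_\Gamma(\{W^n,P_{S^n}\}_{n=1}^\infty)$ and any $\eta>0$, and apply Corollary~\ref{thm:fein_gp} at blocklength $n$ with cost level $\Gamma+\eta$ and parameters
\begin{align*}
 \gamma_{\rmp} &= n\bigl(\underI(\bU;\bY)-\eta\bigr),\\
 \gamma_{\rmc} &= n\bigl(\overI(\bU;\bS)+\eta\bigr),\\
 \tfrac{1}{n}\log L &= \overI(\bU;\bS)+2\eta,\\
 \tfrac{1}{n}\log|\calM| &= \underI(\bU;\bY)-\overI(\bU;\bS)-4\eta,\\
 \delta &= 2^{-n}.
\end{align*}
A direct computation then gives $L|\calM|/2^{\gamma_{\rmp}}=2^{-n\eta}$ and $\sqrt{2^{\gamma_{\rmc}}/L}=2^{-n\eta/2}$, so these two terms together with $\delta$ all vanish as $n\to\infty$.

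Next, by the union bound, the probability term on the right-hand side of~\eqref{eqn:fein_gp} is dominated by the sum of: (i) $\Pr\{\tfrac{1}{n}\log\tfrac{P_{Y^n|U^n}(Y^n|U^n)}{P_{Y^n}(Y^n)}<\underI(\bU;\bY)-\eta\}$, which vanishes by the definition of $\underI(\bU;\bY)$ in~\eqref{eqn:pliminf}; (ii) $\Pr\{\tfrac{1}{n}\log\tfrac{P_{S^n|U^n}(S^n|U^n)}{P_{S^n}(S^n)}>\overI(\bU;\bS)+\eta\}$, which vanishes by the definition of $\overI(\bU;\bS)$ in~\eqref{eqn:plimsup}; and (iii) $\Pr\{\rvg_n(X^n)>\Gamma+\eta\}$, which vanishes by the admissibility condition~\eqref{eqn:cost_con}. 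Combined with the previous step, this produces, for every $\eta>0$, a sequence of GP codes $\{\Phi_n\}$ of rate at least $\underI(\bU;\bY)-\overI(\bU;\bS)-4\eta$ with $\Pe(\Phi_n;\Gamma+\eta)\to 0$.

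Finally, to eliminate the $4\eta$ slack in the rate and the $\eta$ slack in the cost, I would pick a sequence $\eta_k\downarrow 0$ and invoke the diagonal line argument \cite[Thm.~1.8.2]{Han10}, exactly as in the proof of Theorem~\ref{thm:gen_wz}, to extract a single sequence of codes whose rate converges to $\underI(\bU;\bY)-\overI(\bU;\bS)$ and whose cost converges to $\Gamma$; taking the supremum over admissible processes then yields $C_{\GP}(\Gamma)\ge\hatC_{\GP}^*(\Gamma)$. The main technical subtlety is reconciling the bound of Corollary~\ref{thm:fein_gp}, whose cost appears inside the event $\{x\in\calT_{\rmg}^{\GP}(\Gamma+\eta)^c\}$ in probability only, with the almost-sure constraint built into Definition~\ref{def:GP-code}; for this one invokes Proposition~\ref{proposition:cost-conversion} to expurgate the constructed code into one that obeys $\rvg_n(x^n)\le\Gamma+\eta$ almost surely without inflating the error probability.
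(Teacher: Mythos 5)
Your proof is correct and follows essentially the same route as the paper's: fix an admissible process and $\eta>0$, instantiate Corollary~\ref{thm:fein_gp} with $\gamma_{\rmp}=n(\underI(\bU;\bY)-\eta)$, $\gamma_{\rmc}=n(\overI(\bU;\bS)+\eta)$, $\frac{1}{n}\log L=\overI(\bU;\bS)+2\eta$, $\frac{1}{n}\log|\calM|=\underI(\bU;\bY)-\overI(\bU;\bS)-4\eta$, observe that the residual terms vanish and that the union-bounded probability term vanishes by the definitions of $\underI$, $\overI$ and~\eqref{eqn:cost_con}, and finish with the diagonal line argument. One small inaccuracy in your closing remark: there is no ``almost-sure constraint built into Definition~\ref{def:GP-code}'' to reconcile with --- that definition measures the code by the probabilistic criterion $\Pe(\Phi;\Gamma)$, which already includes the cost-violation event $\{\rvg(f(M,S))>\Gamma\}$ inside the probability, so the bound of Corollary~\ref{thm:fein_gp} plugs in directly; Proposition~\ref{proposition:cost-conversion} serves only to reassure that one could post-hoc expurgate to an almost-sure cost code without increasing the error, and it is not invoked in the general-formula argument.
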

Tan~\cite{Tan12b} also showed that the inequality in \eqref{eqn:lower_gen_gp} is, in fact, tight. However, unlike  in the general WZ scenario, the encoding function $P_{X^n|U^n S^n}$ cannot be assumed to be deterministic in general.  When  the channel and state are discrete, stationary and memoryless, Tan~\cite{Tan12b} showed that the general formula in \eqref{eqn:gen_gp} reduces to the conventional one derived by Gel'fand-Pinsker~\cite{GP80} in~\eqref{eqn:gp_formula}.  The proof of Theorem~\ref{thm:gen_gp} parallels that for Theorem~\ref{thm:gen_wz} and thus, we  omit it. 

\section{Achievable Second-Order Coding Rates} \label{sec:2nd}
In this section, we demonstrate achievable second-order coding rates~\cite{strassen, PPV10,Hayashi09, Hayashi08, Kot97} for the three side-information problems of interest. Essentially, we are interested in characterizing the $(n,\veps)$-optimal rate region for the WAK problem, the $(n,\veps)$-Wyner-Ziv rate-distortion function and the $(n,\veps)$-capacity of GP problem up to the second-order term.  We do this by applying the multidimensional Berry-Ess\'een theorem~\cite{Got91, Ben03} to the finite blocklength CS-type bounds in Corollaries~\ref{cor:fein}, \ref{thm:fein_wz} and \ref{thm:fein_gp}.   Throughout, we will not concern ourselves with optimizing    the third-order terms.  

The following important definition will be used throughout this section.
\begin{definition}
Let $k$ be a positive integer. Let $\bV \in \bbR^{k\times k}$  be a positive-semidefinite matrix that is not the all-zeros matrix but is allowed to be rank-deficient. Let the Gaussian random vector $\bZ\sim\calN(\bzero,\bV)$. Define the set 
\begin{equation}
\scS(\bV,\veps):= \{ \bz\in \bbR^k: \Pr ( \bZ\le\bz )\ge 1-\veps\}. \label{eqn:Sv}
\end{equation}
\end{definition}
This set was introduced in~\cite{TK12} and is, roughly speaking, the multidimensional analogue of the $Q^{-1}$ function.  Indeed, for $k=1$ and  any standard deviation $\sigma>0$, 
\begin{equation}
 \scS(\sigma^2,\veps) = [\sigma Q^{-1}(\veps), \infty). 
\end{equation}
Also, $\bone_k$ and $\bzero_{k\times k}$ denote the length-$k$ all-ones column vector  and the $k\times k$ all-zeros matrix respectively. 

\subsection{Achievable Second-Order Coding Rates for the WAK problem} \label{sec:2nd_wak}
In this section, we derive an inner bound to $\scR_{\WAK}(n,\veps)$  in \eqref{eqn:Rne}  by the use of Gaussian approximations.   Instead of simply applying the Berry-Ess\'een theorem to the information spectrum term within the  simplified CS-type bound in~\eqref{eqn:fein}, we enlarge our inner bound   by using a ``time-sharing'' variable $T$, which is independent of $(X,Y)$. This technique was also used for the multiple access channel  (MAC) by Huang and Moulin~\cite{Huang12}. Note that in the finite blocklength setting, the region $\scR_{\WAK}(n,\veps)$ does not have to be convex unlike in the asymptotic case; cf.~\eqref{eqn:wak_region}.  For fixed finite sets $\calU$ and $\calT$, let $\tilde{\scP}(P_{XY})$ be the set of   all  $P_{UTXY}\in\scP(\calU\times\calT\times\calX\times\calY)$ such that the $\calX\times\calY$-marginal of $P_{UTXY}$ is $P_{XY}$,  $U-(Y,T)-X$  forms a Markov chain and $T$ is independent of $(X,Y)$.
\begin{definition}
The {\em entropy-information density vector} for the WAK problem for $P_{UTXY}\in\tilde{\scP}(P_{XY})$  is defined as 
\begin{equation} \label{eqn:ei_vec}
\bj(U,X,Y|T):= \begin{bmatrix}
 \log \frac{1}{P_{X|UT}(X|U,T)} \\ \log\frac{P_{Y|UT}(Y|U,T)}{P_{Y }(Y )} 
\end{bmatrix}. 
\end{equation}
\end{definition}
Note that the mean of the entropy-information density vector  in~\eqref{eqn:ei_vec} is the vector of the entropy and mutual information, i.e.,
\begin{equation}
\bJ(P_{UTXY}):=\bbE[\bj(U,X,Y|T)]=\begin{bmatrix}
H(X|U,T) \\ I(U;Y|T)
\end{bmatrix} .
\end{equation}
The mutual information $I(U;Y|T)=I(U, T;Y)$ because $T$ and $Y$ are independent.
\begin{definition}
The {\em entropy-information dispersion matrix}  for the WAK problem for a fixed $P_{UTXY}\in\tilde{\scP}(P_{XY})$  is defined as 
\begin{align}
\bV(P_{UTXY} ) &:=\bbE_T \left[ \cov( \bj(U,X,Y|T) ) \right] \label{eqn:defV} \\
&=\sum_{t \in\calT} P_T(t) \cov  ( \bj(U,X,Y|t) ) .
\end{align}
\end{definition}
We abbreviate the deterministic quantities $\bJ(P_{UTXY}) \in \bbR_+^2$ and $\bV(P_{UTXY})\succeq 0$ as $\bJ$ and $\bV$ respectively when the distribution $P_{UTXY}\in\tilde{\scP}(P_{XY})$ is obvious from the context.  
\begin{definition}
If $\bV(P_{UTXY})\ne \bzero_{2\times 2}$,  define $\scR_{\mathrm{in}}(n,\veps; P_{UTXY})$ to be   the set of rate pairs $(R_1,R_2)$ such that $\bR := [R_1, R_2]^T$ satisfies
\begin{equation} \label{eqn:Rin}
  \bR\in\bJ + \frac{\scS(\bV,\veps)}{\sqrt{n}}+ \frac{2\log n}{n}\bone_2 .
\end{equation}
If $\bV(P_{UTXY})= \bzero_{2\times 2}$,  define $\scR_{\mathrm{in}}(n,\veps; P_{UTXY})$ to be  the set of rate pairs $(R_1,R_2)$ such that
\begin{equation}
  \bR\in\bJ+  \frac{2\log n}{n}\bone_2 .
\end{equation}
\end{definition}
From  the simplified CS-type bound for the WAK problem in Corollary~\ref{cor:fein}, we can derive the following:
\begin{theorem}[Inner Bound to $(n,\veps)$-Optimal Rate Region] \label{thm:second}
For every $0<\veps<1$ and all $n$ sufficiently large, the $(n,\veps)$-optimal rate region $\scR_{\WAK}(n,\veps)$ satisfies
\begin{equation} \label{eqn:ib_wak1}
\bigcup_{P_{UTXY}\in\tilde{\scP}(P_{XY})} \scR_{\mathrm{in}} (n,\veps; P_{UTXY}) \subset\scR_{\WAK}(n,\veps).
\end{equation}
Furthermore, the union over $P_{UTXY}$ can be restricted to those
 distributions for which the supports $\calU$ and $\calT$ of auxiliary random
 variables $U$ and $T$ satisfy that 
$\lvert\calU\rvert\leq \lvert\calY\rvert+4$ and
 $\lvert\calT\rvert\leq 5$ respectively.
\end{theorem}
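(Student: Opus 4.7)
The plan is to apply Corollary \ref{cor:fein} to the $n$-fold product source $(X^n,Y^n)\sim P_{XY}^n$ together with a deterministic time-sharing sequence. Specifically, fix $P_{UTXY}\in\tilde{\scP}(P_{XY})$ and a rate pair $\bR=[R_1,R_2]^T$ lying in the right-hand side of \eqref{eqn:ib_wak1}. Choose a deterministic $t^n\in\calT^n$ whose empirical distribution matches $P_T$ up to an additive $O(1/n)$ error in each coordinate; this sequence is treated as side-information known to both encoders and the decoder. Conditional on $t^n$, the auxiliary codewords $U^n$ are generated componentwise according to $P_{U|Y,T=t_i}$, and the binning/resolvability steps of Corollary \ref{cor:fein} are inherited unchanged.

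I would then set
\begin{equation*}
\log|\calM_n|=nR_1,\quad \log|\calL_n|=nR_2,\quad \gamma_{\rmb,n}=nR_1-2\log n,\quad \gamma_{\rmc,n}=nR_2-3\log n,\quad \delta_n=\frac{1}{n},
\end{equation*}
so that the three auxiliary terms $2^{\gamma_{\rmb,n}}/|\calM_n|$, $\sqrt{2^{\gamma_{\rmc,n}}/|\calL_n|}$, and $\delta_n$ in \eqref{eqn:fein} are all $o(n^{-1/2})$. The probability term in \eqref{eqn:fein} becomes
\begin{equation*}
\Pr\Bigl[\textstyle\sum_{i=1}^n\bj(U_i,X_i,Y_i|t_i)\not\le[\gamma_{\rmb,n},\gamma_{\rmc,n}]^T\Bigr],
\end{equation*}
a probability over independent (not identically distributed) $\bbR^2$-valued summands with total mean $n\bJ+O(1)$ and total covariance $n\bV+O(1)$. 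Writing $\bR=\bJ+\bc/\sqrt{n}+(2\log n/n)\bone_2$ for some $\bc\in\scS(\bV,\veps)$ and invoking the multidimensional Berry-Ess\'een theorem for independent summands (G\"otze \cite{Got91}), the complementary event has probability at least $\Pr[\bZ\le\bc]-C/\sqrt{n}\ge 1-\veps-C/\sqrt{n}$ with $\bZ\sim\calN(\bzero,\bV)$, so the original probability is at most $\veps+C/\sqrt{n}$. All residual slack absorbs into the $(2\log n/n)\bone_2$ term for $n$ sufficiently large, establishing \eqref{eqn:ib_wak1}. The degenerate case $\bV=\bzero_{2\times 2}$ is handled by Chebyshev's inequality in place of Berry-Ess\'een.

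The principal technical obstacle is that the summand covariance must match \emph{only} the within-class covariance $\bV=\bbE[\cov(\bj|T)]$ from \eqref{eqn:defV} rather than the total covariance $\bV+\cov(\bJ_T)$ that would arise if $T^n$ were drawn at random; this is exactly why the time-sharing sequence must be chosen deterministically with fixed type, eliminating the between-class contribution and leaving only an $O(1/n)$ mean mismatch that is harmless at the second-order level. The cardinality bounds follow from Fenchel-Eggleston's strengthening of Carath\'eodory's theorem: for each $t$, preserving the $|\calY|-1$ entries of $P_{Y|T=t}=P_Y$, the two components of $\bJ_t$, and the three independent entries of the symmetric $2\times 2$ matrix $\bV_t$ requires $|\calU|\le|\calY|+4$; for the time-sharing alphabet, the vector $(\bJ_t,\bV_t)$ lies in a connected subset of $\bbR^5$, so the convex combination $(\bJ,\bV)=\sum_t P_T(t)(\bJ_t,\bV_t)$ can be realized with $|\calT|\le 5$ points.
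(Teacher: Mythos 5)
Your strategy — a deterministic time-sharing sequence with type within $O(1/n)$ of $P_T$, componentwise test channel $P_{U|Y,T=t_i}$, Corollary~\ref{cor:fein}, and the multidimensional Berry-Ess\'een theorem — is exactly the paper's route, and your observations about why $T^n$ must be deterministic (to realize the within-class dispersion $\bV=\bbE_T[\cov(\bj|T)]$ rather than the total covariance) and about the cardinality reductions via Fenchel-Eggleston-Carath\'eodory are correct.

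However, there is a concrete gap in your choice of thresholds $\gamma_{\rmb,n}=nR_1-2\log n$ and $\gamma_{\rmc,n}=nR_2-3\log n$. With $\bR=\bJ+\bc/\sqrt n+\frac{2\log n}{n}\bone_2$, these give
\begin{equation*}
\gamma_{\rmb,n}=nJ_1+\sqrt n\,c_1,\qquad \gamma_{\rmc,n}=nJ_2+\sqrt n\,c_2-\log n,
\end{equation*}
so after centering and scaling the acceptance region is $\{\bS_n\le[c_1,\,c_2-\log n/\sqrt n]^T+O(1/\sqrt n)\}$ — shifted \emph{downward} (shrunk) in the second coordinate by $\log n/\sqrt n$, with the $O(1/\sqrt n)$ type-mismatch of uncontrolled sign. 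Your claim that the complementary event has probability at least $\Pr[\bZ\le\bc]-C/\sqrt n$ is therefore not what Berry-Ess\'een gives: the correct lower bound is $\Pr[\bZ\le\bc]-\Theta(\log n/\sqrt n)$, since you also lose the Gaussian mass in the strip of width $\log n/\sqrt n$. You cannot then conclude $\Pe\le\veps$; the final sentence asserting that "all residual slack absorbs into the $(2\log n/n)\bone_2$ term" is circular, because that slack was already consumed in defining $\bc$. In effect you have spent all the allotted $\log$-$n$ margin on shrinking the auxiliary residual terms to $o(n^{-1/2})$, which buys nothing, at the cost of leaving no margin to dominate the Berry-Ess\'een error of order $1/\sqrt n$.

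The fix is to set $\gamma_{\rmb,n}=nR_1-\log n$ and $\gamma_{\rmc,n}=nR_2-\log n$ (the paper's choice): the residual terms then scale as $1/n$, $1/\sqrt n$, $1/n$ — some merely $O(n^{-1/2})$, which is perfectly acceptable — but the acceptance region becomes $\{\bS_n\le\bc+(\log n/\sqrt n)\bone_2+O(1/\sqrt n)\}$, shifted \emph{upward}. For $\bV\succ 0$ the Gaussian density is bounded away from zero near $\bc$, so this enlargement gains $\Theta(\log n/\sqrt n)$ in probability, which dominates both the Berry-Ess\'een error and the $1/\sqrt n$ residual; the net bound is $\Pe\le\veps$ for $n$ large. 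You should also address the case $\rank(\bV)=1$: Corollary~\ref{corollary:multidimensional-berry-esseen} requires $\bV\succ 0$ via $\lambda_{\min}(\bV)^{-3/2}$, so a singular nonzero $\bV$ needs a separate argument (e.g.\ writing $\bV=\bv\bv^T$ and reducing to a scalar Berry-Ess\'een, as the paper does); your Chebyshev fallback is stated only for $\bV=\bzero_{2\times 2}$.
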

From the modified CS-type bound for the WAK problem in Theorem~\ref{thm:dt-helper-binning}, we can derive the following:
\begin{theorem}[Modified Inner Bound to $(n,\veps)$-Optimal Rate Region] \label{thm:second-modified}
For every $0<\veps<1$ and all $n$ sufficiently large, the $(n,\veps)$-optimal rate region $\scR_{\WAK}(n,\veps)$ satisfies
\begin{equation} \label{eqn:ib_wak2}
\bigcup_{P_{UTXY}\in\tilde{\scP}(P_{XY})} \scR_{\mathrm{in}}^\prime (n,\veps; P_{UTXY}) \subset\scR_{\WAK}(n,\veps),
\end{equation}
where $\scR_{\mathrm{in}}^\prime (n,\veps; P_{UTXY})$ is the set defined by replacing \eqref{eqn:Rin} with
\begin{equation}
 \bR\in\bigcup_{\rho\ge 0} \,\left\{\bJ + \frac{\scS(\bV,\veps) + [\rho,-\rho]^T}{\sqrt{n}}+ \frac{2\log n}{n}\bone_2 \right\}.  
\end{equation}
\end{theorem}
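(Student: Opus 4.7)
The plan is to specialize the modified CS-type bound of Theorem~\ref{thm:dt-helper-binning} to the product source $P_{XY}^n$, treating the time-sharing sequence $T^n \sim P_T^n$ as common randomness shared between encoders and decoder (absorbed into the $\delta$-term of~\eqref{eqn:dt_modified}) and generating the auxiliary $U^n$ i.i.d.\ through $P_{U|YT}$. The novelty relative to the proof of Theorem~\ref{thm:second} will be the exploitation of the extra bin parameter $J$ to realize the trade-off encoded in the shift $[\rho,-\rho]^T$.

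Fix any $P_{UTXY}\in\tilde{\scP}(P_{XY})$, $\rho\ge 0$, and $\bz=[z_1,z_2]^T\in\scS(\bV,\veps)$. First I would set
\begin{align*}
\gamma_{\rmb} &= nH(X|U,T) + \sqrt{n}\,z_1 + \tfrac{1}{2}\log n, \\
\gamma_{\rmc} &= nI(U;Y|T) + \sqrt{n}\,z_2 + \tfrac{1}{2}\log n, \\
\log J &= \gamma_{\rmc}+\log n, \qquad \delta = 2^{-n},
\end{align*}
and $|\calM| = \lceil 2^{nR_1}\rceil$, $|\calL| = \lceil 2^{nR_2}\rceil$ with $R_1 = H(X|U,T) + (z_1+\rho)/\sqrt{n} + 2\log n/n$ and $R_2 = I(U;Y|T) + (z_2-\rho)/\sqrt{n} + 2\log n/n$. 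The surplus $\rho/\sqrt{n}$ in $R_1$ beyond $\gamma_{\rmb}/n$ is exactly what the new sum-rate constraint $R_1+R_2 \gtrsim (\gamma_{\rmb}+\log J)/n$, forced by the third residual of~\eqref{eqn:dt_modified}, requires in order to admit a matching deficit $-\rho/\sqrt{n}$ in $R_2$; the case $\rho=0$ recovers the choices used for Theorem~\ref{thm:second}.

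Next I would verify that the four residual terms of \eqref{eqn:dt_modified} are all $O(n^{-1/2})$ or smaller: $2^{\gamma_{\rmb}}/|\calM| \le 2^{-\rho\sqrt{n}-(3/2)\log n}$; the sum-rate margin gives $J\cdot 2^{\gamma_{\rmb}}/(|\calM||\calL|) = O(n^{-2})$; $\sqrt{2^{\gamma_{\rmc}}/J} = n^{-1/2}$; and $\delta = 2^{-n}$. The heart of the argument is a multidimensional Berry-Ess\'een~\cite{Got91} estimate on the probability term. Conditioning on a sequence $t^n$ whose empirical type is within $O(n^{-1/2})$ of $P_T$ (the atypical set has exponentially small probability), the independent vectors $\bj(U_i,X_i,Y_i|T_i=t_i)$ have aggregate mean $n\bJ + O(1)$ and aggregate covariance $n\bV + O(\sqrt{n})$, so Berry-Ess\'een delivers
\begin{equation*}
\Pr\Bigl(\textstyle\sum_{i=1}^n \bj(U_i,X_i,Y_i|T_i) \not\le [\gamma_{\rmb},\gamma_{\rmc}]^T\Bigr) \le \veps + O(n^{-1/2})
\end{equation*}
since $\bz\in\scS(\bV,\veps)$. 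Combining all contributions yields a total error of $\veps+O(n^{-1/2})$, and the $O(n^{-1/2})$ correction is absorbed in the standard way by restricting to $\bz$ slightly interior to $\scS(\bV,\veps)$ and taking $n$ large.

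The main technical obstacle will be the degenerate case $\bV = \bzero_{2\times 2}$, in which Berry-Ess\'een is inapplicable; there I would replace it by a Chebyshev-type deviation bound, which suffices because the conditional variance vanishes. The cardinality bounds on $\calU$ and $\calT$ (inherited from Theorem~\ref{thm:second}) follow from a Carath\'eodory-style argument applied to the boundary of $\scR_{\mathrm{in}}^\prime$, noting that $\bJ$ and $\bV$ together depend on a bounded number of scalar parameters. All remaining items --- uniformity of the Berry-Ess\'een estimate over bounded $\bz$, the non-identically-distributed independence structure conditional on $t^n$, and measurability of the code construction --- are routine extensions of the argument used for Theorem~\ref{thm:second}.
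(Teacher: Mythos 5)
The proposal is correct and takes essentially the same approach as the paper: it specializes Theorem~\ref{thm:dt-helper-binning} with parameter choices ($\gamma_{\rmb}\approx \log|\calM|-\rho\sqrt{n}-O(\log n)$, $\gamma_{\rmc}\approx\log|\calL|+\rho\sqrt{n}-O(\log n)$, $J\approx|\calL|2^{\rho\sqrt n}$, small $\delta$) that realize the $[\rho,-\rho]^T$ shift, then applies the multidimensional Berry--Ess\'een theorem as in Appendix~\ref{app:second}. The only small imprecision is the phrase about ``restricting to $\bz$ slightly interior to $\scS(\bV,\veps)$''; the theorem must hold for all $\bz\in\scS(\bV,\veps)$, and the actual mechanism absorbing the $O(n^{-1/2})$ residuals is the $\tfrac{\log n}{2\sqrt n}\bone_2$ margin already built into your choice of $\gamma_{\rmb},\gamma_{\rmc}$ (coming from the $\tfrac{2\log n}{n}\bone_2$ slack in the definition of $\scR_{\mathrm{in}}'$), which dominates the $O(n^{-1/2})$ errors for large $n$ exactly as in the paper's proof of Theorem~\ref{thm:second}.
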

\begin{remark}
We can also restrict the cardinalities $\lvert\calU\rvert$ and
$\lvert\calT\rvert$ of auxiliary random variables in Theorem
\ref{thm:second-modified} in the same way as in Theorem
\ref{thm:second}.  The bound in Theorem \ref{thm:second-modified} is at
least as tight as that in Theorem \ref{thm:second}, and the former is
strictly tighter than the latter for a fixed test channel. However, it
is not clear whether the improvement is strict or not when we take the
union over the test channels.
\end{remark}


By setting $T=Y=U=\emptyset$ and $R_2=0$ in Theorem \ref{thm:second-modified},\footnote{In fact, to be precise, we cannot derive Corollary \ref{cor:lossless} from Theorem \ref{thm:second} because there is the residual term $\frac{2\log n}{n}$ and we cannot set $R_2 = 0$. However, we can use Corollary~\ref{cor:fein} with $U=\emptyset$ to obtain Corollary~\ref{cor:lossless} easily. } we obtain a result first discovered by Strassen~\cite{strassen}.
\begin{corollary}[Achievable Second-Order Coding Rate for Lossless Source Coding] \label{cor:lossless}
Define the  {\em second-order coding rate for lossless source coding} to be 
\begin{equation}
\sigma (P_X,\veps) := \limsup_{n\to\infty} \sqrt{n} (R_X(n,\veps)-H(X)) 
\end{equation}
where $R_X(n,\veps)$ is the minimal rate of almost-lossless compression of source $P_X$ at blocklength $n$ with error probability not exceeding $\veps$. Then, 
\begin{equation}
\sigma(P_X,\veps)\le  \sqrt{\var( \log P_X(X))}Q^{-1}(\veps).
\end{equation}
\end{corollary}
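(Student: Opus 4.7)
The plan is to specialize the simplified CS-type bound of Corollary~\ref{cor:fein} to a degenerate auxiliary variable $U$ and a constant side-information $Y$, and then to apply the univariate Berry--Esseen theorem to the resulting entropy-density event.

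First, I set $U$ and $Y$ to be constants, so that $\log\frac{P_{Y|U}(y|u)}{P_Y(y)}\equiv 0$ and the event $\{(u,y)\in\calT_{\rmc}^{\WAK}(\gamma_{\rmc})^c\}$ has zero probability for every $\gamma_{\rmc}\ge 0$. Hence the union inside the first probability of~\eqref{eqn:fein} collapses to the single event $\{\log\tfrac{1}{P_{X^n}(X^n)}>\gamma_{\rmb}\}$. Since the lossless source coding problem has no helper, the helper rate $R_2(\Phi_n)=\tfrac{1}{n}\log|\calL_n|$ is irrelevant to $R_X(n,\veps)$, so I may take $|\calL_n|=n^2$, $\gamma_{\rmc}=0$ and $\delta_n=1/\sqrt{n}$, making the residuals $\sqrt{2^{\gamma_{\rmc}}/|\calL_n|}+\delta_n=O(1/\sqrt{n})$. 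The main encoder $f_n$ from the construction in Theorem~\ref{thm:dt}, paired with the decoder that ignores the trivial helper output, yields a valid lossless source code of rate $\tfrac{1}{n}\log|\calM_n|$.

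Next, let $V:=\var(\log P_X(X))$ and let $C$ be a constant depending on $\bbE|\log P_X(X)-H(X)|^3$ (finite when $\calX$ is finite). Setting $\eta_n:=(C+2)/\sqrt{n}$, I choose
\[
\gamma_{\rmb}:=nH(X)+\sqrt{nV}\,Q^{-1}(\veps-\eta_n),\qquad \log|\calM_n|:=\gamma_{\rmb}+\tfrac{1}{2}\log n.
\]
The univariate Berry--Esseen theorem, applied to the i.i.d.\ sum $\log\tfrac{1}{P_{X^n}(X^n)}=\sum_{i=1}^n\log\tfrac{1}{P_X(X_i)}$ (mean $H(X)$, variance $V$), then yields
\[
P_{X^n}\!\left[\log\tfrac{1}{P_{X^n}(X^n)}>\gamma_{\rmb}\right]\le Q\!\bigl(Q^{-1}(\veps-\eta_n)\bigr)+\tfrac{C}{\sqrt{n}}=\veps-\eta_n+\tfrac{C}{\sqrt{n}}.
\]
Combining this with $2^{\gamma_{\rmb}}/|\calM_n|=1/\sqrt{n}$ and the $O(1/\sqrt{n})$ helper residuals shows that $\Pe(\Phi_n)\le\veps$ for all $n$ sufficiently large.

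Finally, normalizing the chosen $\log|\calM_n|$ by $n$ gives
\[
\sqrt{n}\bigl(R_X(n,\veps)-H(X)\bigr)\le\sqrt{V}\,Q^{-1}(\veps-\eta_n)+O\!\left(\tfrac{\log n}{\sqrt{n}}\right),
\]
and taking $\limsup_{n\to\infty}$ together with continuity of $Q^{-1}$ at $\veps\in(0,1)$ produces $\sigma(P_X,\veps)\le\sqrt{V}\,Q^{-1}(\veps)$. The argument is essentially routine; the only care needed is in choosing the slack $\eta_n$ so that the Berry--Esseen contribution plus the $O(1/\sqrt{n})$ residuals sums to at most $\veps$, so I do not anticipate any substantive obstacle.
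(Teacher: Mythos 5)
Your proposal follows precisely the route the paper itself indicates (see the footnote attached to the derivation of Corollary~\ref{cor:lossless}: specialize the simplified CS-type bound of Corollary~\ref{cor:fein} to $U=\emptyset$, kill the covering event, ignore the helper, and apply the univariate Berry--Ess\'een theorem). The one quibble is arithmetic: with your choices the error bound sums to
\begin{equation*}
\veps-\eta_n+\frac{C}{\sqrt{n}}+\underbrace{\frac{1}{\sqrt{n}}}_{2^{\gamma_{\rmb}}/|\calM_n|}+\underbrace{\frac{1}{n}}_{\sqrt{2^{\gamma_{\rmc}}/|\calL_n|}}+\underbrace{\frac{1}{\sqrt{n}}}_{\delta_n}=\veps-\eta_n+\frac{C+2}{\sqrt{n}}+\frac{1}{n},
\end{equation*}
so $\eta_n=(C+2)/\sqrt{n}$ leaves an excess of $1/n$ and does not actually yield $\Pe(\Phi_n)\le\veps$; take $\eta_n=(C+3)/\sqrt{n}$ (or set $\log|\calM_n|=\gamma_{\rmb}+\log n$) instead, which changes nothing in the $\limsup$.
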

It is well-known that the  result in Corollary~\ref{cor:lossless} is tight, i.e.,  $\sqrt{\var( \log P_X(X))}Q^{-1}(\veps)$ is indeed the second-order coding rate for lossless source coding~\cite{strassen, Hayashi08, Kot97}. 

We refer to the reader to Appendix~\ref{app:second} for the proof of Theorem~\ref{thm:second} (Appendix~\ref{app:second-modified} for the proof of Theorem~\ref{thm:second-modified}). The proof is based on the CS-type bound in~\eqref{eqn:fein} and   the non-i.i.d.\ version of the multidimesional Berry-Ess\'een theorem by G\"oetze~\cite{Got91}. The proof of the cardinality bounds is provided in Appendix~\ref{app:cardinality_bound}.  The interpretation of this result is clear: From \eqref{eqn:Rin} which is the non-degenerate case, we see that the second-order coding rate region for a fixed $P_{UTXY}$ is represented by the set ${\scS(\bV(P_{UTXY}),\veps)}/{\sqrt{n}}$. Thus, the $(n,\veps)$-optimal rate region converges to the asymptotic WAK region at a rate of $O(1/\sqrt{n})$ which can be predicted by the central limit theorem. More importantly, because   our finite blocklength bound in \eqref{eqn:fein}  treats both the covering and binning error events {\em jointly}, this results in the coupling of the second-order rates through the set $\scS(\bV(P_{UTXY}),\veps)$ and hence, the dispersion matrix $\bV(P_{UTXY})$. This shows that the correlation between the entropy and information densities matters in the determination of the second-order coding rate. 

More specifically, Theorems~\ref{thm:second} and \ref{thm:second-modified} are proved by taking $P_{U^n|Y^n}(u^n|y^n)$ to be equal to $P_{U|TY}^n(u^n|t^n, y^n)$ for some fixed (time-sharing) sequence $t^n \in\calT^n$  and some  joint distribution $P_{UTXY}\in\tilde{\scP}(P_{XY})$. If $\calT=\emptyset$, this is essentially using i.i.d.\ codes. Theorems~\ref{thm:second} and \ref{thm:second-modified}  also show that $|\calT|$ can be upper bounded by $5$. An alternative to this proof strategy is to use conditionally constant composition codes as was done in Kelly-Wagner~\cite{Kel12} to prove their error exponent result. The advantage of this strategy is that it may yield better dispersion matrices because the unconditional dispersion matrix always dominates the  conditional dispersion matrix~\cite[Lemma~62]{PPV10} (in the partial order induced by  semi-definiteness). For using conditionally constant composition codes, we fix a conditional type $V_{Q_Y} \in \scV_n(\calU; Q_Y)$ for every marginal type $Q_Y\in\scP_n(\calY)$. Then, codewords are generated uniformly at random from $\calT_{V_{Q_Y}}(y^n)$ if $y^n\in\calT_{Q_Y}$. However, it does not appear that this strategy yields improved second-order coding rates compared to using  i.i.d.\ codes as given in Theorems~\ref{thm:second} and \ref{thm:second-modified}.

We emphasize here that the restriction of the sizes of the alphabets $\calU$ and $\calT$ only allows us to only preserve the {\em second-order region} defined by the vector $\bJ(P_{UTXY} )$ and the matrix $\bV(P_{UTXY} )$ over all $P_{UTXY}\in\tilde{\scP}(P_{XY})$. An optimized third-order term in \eqref{eqn:Rin} might be dependent on higher-order statistics of the entropy-information density vector $\bj(U,X,Y|T)$ and the quantities that define this third-order term are {\em not preserved}  by the bounds $\lvert\calU\rvert\leq \lvert\calY\rvert+4$ and
 $\lvert\calT\rvert\leq 5$. This remark is also applicable to the second-order rate regions for WZ and GP in Subsections~\ref{sec:2nd_wz} and~\ref{sec:2nd_gp}. However, we note that for lossless source coding~\cite{strassen} or channel coding~\cite{PPV10,TomTan}, under some regularity conditions, the third-order term is neither  dependent on higher-order statistics nor on the alphabet sizes.

To compare our Theorems~\ref{thm:second} and~\ref{thm:second-modified} to that of Verd\'u~\cite{Ver12}, for a fixed $P_{UXY}\in\scP(P_{XY})$,  define $\scR_{\mathrm{in}}^{\mathrm{V}}(n,\veps; P_{UXY})$ to be the set of rate pairs that satisfy 
\begin{align}
R_1 &  \ge  H(X|U) +  \sqrt{ \frac{ V_H(X|U) }{n} }Q^{-1}(\lambda\veps) + \frac{2\log n}{n}  \label{eqn:R1_split}\\
R_2 &  \ge  I(U;Y) +  \sqrt{ \frac{ V_I(U;Y) }{n} }Q^{-1}((1-\lambda)\veps) + \frac{2\log n}{n} \label{eqn:R2_split}
\end{align}
for some $\lambda \in [0, 1]$ where  the marginal entropy and information dispersions are defined as 
\begin{align}
V_H(X|U):=\var \left( \log \frac{1}{P_{X|U}(X|U)}\right) \\
V_I(U;Y) :=\var\left( \log \frac{P_{Y|U}(Y|U)}{ P_Y(Y)} \right)
\end{align}
respectively. Note that if $T=\emptyset$, then $V_H(X|U)$ and $V_I(U;Y)$ are the diagonal elements of the matrix $\bV(P_{UTXY})$ in \eqref{eqn:defV}.   It can easily be seen that Verd\'u's bound on the error probability of the WAK problem~\eqref{eqn:ver_bd} yields the following inner bound on  $\scR_{\WAK}(n,\veps)$. 
\begin{equation}
\bigcup_{P_{UXY}\in\scP(P_{XY})}  \scR_{\mathrm{in}}^{\mathrm{V}}(n,\veps; P_{UXY})\subset\scR_{\WAK}(n,\veps). \label{eqn:verdu_ib_wak}
\end{equation}
This ``splitting'' technique  of $\veps$ into $\lambda\veps$ and $(1-\lambda)\veps$ in \eqref{eqn:R1_split} and \eqref{eqn:R2_split}   was used by MolavianJazi  and Laneman~\cite{Mol12}  in their work on finite blocklength analysis for the MAC.  In    Section~\ref{sec:numerical}, we  numerically  compare the inner bounds for the WAK problem provided in \eqref{eqn:ib_wak1}, \eqref{eqn:ib_wak2} and \eqref{eqn:verdu_ib_wak}.

\begin{remark} \label{remark:second-order-achievability-corner-point}
From the non-asymptotic bound in Remark~\ref{remark:non-asymtptic-bound-for-corner}, we can also show that 
\begin{eqnarray}
\hat{\scR}_{\mathrm{in}}(n,\veps) \subset \scR_{\WAK}(n,\veps),
\end{eqnarray}
where $\hat{\scR}_{\mathrm{in}}(n,\veps)$ is the set of rate pairs $(R_1,R_2)$ such that
\begin{eqnarray}
\begin{bmatrix}
 R_1 \\ R_1 + R_2
\end{bmatrix} \in 
\begin{bmatrix}
 H(X|Y) \\ H(X,Y)
\end{bmatrix} 
 + \frac{\scS(\hat{\bV},\veps)}{\sqrt{n}}+ \frac{2\log n}{n}\bone_2
\end{eqnarray}
for the covariance matrix
\begin{eqnarray}
\hat{\bV} = \cov  \left( \begin{bmatrix}
 -\log P_{X|Y}(X|Y) \\ - \log P_{XY}(X,Y)
\end{bmatrix}  \right).
\end{eqnarray}
\end{remark}

\subsection{Achievable Second-Order Coding Rates for the WZ problem} \label{sec:2nd_wz}
In this section, we leverage on the simplified CS-type bound in Corollary~\ref{thm:fein_wz} to derive an achievable second-order coding rate for the WZ problem. We do so by first finding an inner bound to the $(n,\veps)$-Wyner-Ziv rate-distortion region $\scR_{\WZ}(n,\veps)$  defined in \eqref{eqn:ne_rd}. Subsequently we find an upper bound to the  $(n,\veps)$-Wyner-Ziv  rate-distortion  function $R_{\WZ}(n,\veps)$  defined in \eqref{eqn:ne_wz}. We also show that the (direct part of the) dispersion of lossy source coding found by Ingber-Kochman~\cite{ingber11} and Kostina-Verd\'u~\cite{kost12} can be recovered from the CS-type bound in Corollary~\ref{thm:fein_wz}. This is not unexpected because the lossy source coding (rate-distortion) problem  is a special case of the Wyner-Ziv problem where the side-information is absent.

We will again employ the ``time-sharing'' strategy used in
Section~\ref{sec:2nd_wak} and show that the cardinality of the time-sharing alphabet $\calT$ can be bounded. Note again that in the finite-blocklength
setting $\scR_{\WZ}(n,\veps)$  does not have to be convex, unlike in the
asymptotic setting. For fixed finite sets $\calU$ and $\calT$, let
$\tilde{\scP}(P_{XY})$ be the collection of all joint distributions
$P_{UTXY} \in \scP(\calU\times\calT\times\calX\times\calY)$ such that
the $\calX\times\calY$-marginal of $P_{UTXY}$ is $P_{XY}$, $U-(X,T)-Y$
forms a Markov chain and $T$ is independent of $(X,Y)$. 
A pair $(P_{UTXY}, P_{\hatX|UYT})$ of a joint distribution
$P_{UTXY}\in\tilde{\scP}(P_{XY})$ and a reproduction channel
$P_{\hatX|UYT}\colon\calU\times\calY\times\calT\to\hatcalX$ defines 
a joint distribution $P_{UTXY\hatX}$ such that
\begin{align}
& P_{UTXY\hatX}(u,t,x,y,\hatx) \nn\\
&=P_{XY}(x,y)P_T(t)P_{U|YT}(u|y,t)P_{\hatX|UYT}(\hatx|u,y,t).
\end{align}
Further, a pair of $P_{UTXY}\in\tilde{\scP}(P_{XY})$ and $P_{\hatX|UYT}$ induces
a random variable 
\begin{equation}
 \rvd(X,\hatX|T):=\rvd(X_T,\hatX_T)
\end{equation}
where $(X_t,\hatX_t)$ for any $t\in\calT$ has distribution
$P_{X\hatX|T=t}$. In other words, for fixed $t\in\calT$, 
\begin{align}
& \Pr\{\rvd(X,\hatX|T=t)=d\} \nn\\
& =\sum_{\substack{x, \hatx:\\ \rvd(x,\hatx)=d}}\sum_{u,y}P_{XY}(x,y)P_{U|YT}(u|y,t)P_{\hatX|UYT}(\hatx|u,y,t).
\end{align}

\begin{definition}
For a pair $(P_{UTXY}, P_{\hatX|UYT})$ of $P_{UTXY}\in\tilde{\scP}(P_{XY})$ and $P_{\hatX|UYT}$, the {\em information-density-distortion  vector} for the WZ problem is defined as 
\begin{equation}
\bj(U,X,Y,\hatX|T):=\begin{bmatrix}
-\log \frac{P_{Y|UT}(Y|U,T)}{P_{Y}(Y)}\\
\log \frac{P_{X|UT}(X|U,T)}{P_{X}(X)}\\
\rvd(X, \hatX|T) \\
\end{bmatrix}. \label{eqn:id_wz}
\end{equation}
\end{definition}

Since $\bbE[\rvd(X,\hatX)]=\sum_{t}P_{T}(t)\bbE_{P_{X\hatX|T}}[\rvd(X_T,\hatX_T)|T=t]$,
the expectation of information-density-distortion  vector  is given by 
\begin{align}
\bJ( P_{UTXY}, P_{\hatX|UYT})
&:=\bbE [\bj(U,X,Y,\hatX|T)] \\
&= \begin{bmatrix}
-I(U;Y|T)\\ I(U;X|T) \\ \bbE [\rvd(X,\hatX)]
\end{bmatrix}. \label{eqn:wz_mi}
\end{align}
Observe that the {\em sum} of the first two components of \eqref{eqn:wz_mi} resembles the Wyner-Ziv rate-distortion function defined in \eqref{eqn:wz_rd_func}. As such when stating an achievable $(n,\veps)$-Wyner-Ziv rate-distortion region, we    project the first two terms onto an affine subspace representing their sum.  See \eqref{eqn:Rin_wz} and \eqref{eqn:Rin_wz2} below. 
\begin{definition}
The {\em information-distortion dipersion matrix} for the WZ problem for
 a pair of  $P_{UTXY} \in \tilde{\scP}(P_{XY})$  and $P_{\hatX|UYT}$ is defined as 
\begin{equation}
\bV(P_{UTXY}, P_{\hatX|UYT}) : =\bbE_T \left[  \cov( \bj(U,X,Y,\hatX|T) )\right].
\end{equation}
\end{definition}
\begin{definition}
Let $\bM \in \bbR^{2\times 3}$ be the matrix
\begin{equation}  \label{eqn:defM}
\bM:=\begin{bmatrix}
1 & 1 & 0\\
0 & 0 & 1
\end{bmatrix}.
\end{equation}
If $\bV(P_{UTXY}, P_{\hatX|UYT})\ne \bzero_{3\times 3}$, define $\scR_{\mathrm{in}}(n,\veps; P_{UTXY},P_{\hatX|UYT})$ to be the set of all  rate-distortion pairs $(R,D)$ satisfying
\begin{equation} \label{eqn:Rin_wz}
\begin{bmatrix}
R \\ D
\end{bmatrix} \in \bM\left(\bJ  + \frac{ \scS(\bV ,\veps) }{\sqrt{n}} + \frac{2\log n}{n} \bone_3\right). 
\end{equation}
where $\bJ := \bJ( P_{UTXY},P_{\hatX|UYT})$ and $\bV := \bV( P_{UTXY},P_{\hatX|UYT})$. Else if $\bV(P_{UTXY}, P_{\hatX|UYT})\ne \bzero_{3\times 3}$, define $\scR_{\mathrm{in}}(n,\veps; P_{UTXY},P_{\hatX|UYT})$ to be the set of all  rate-distortion pairs $(R,D)$ satisfying 
\begin{equation}\label{eqn:Rin_wz2}
\begin{bmatrix}
R \\ D
\end{bmatrix} \in \bM\left(\bJ  +   \frac{2\log n}{n} \bone_3 \right). 
\end{equation}
\end{definition}
In \eqref{eqn:Rin_wz}, the matrix $\bM$ serves  project the three-dimensional set $\bJ + { \scS(\bV ,\veps) }/{\sqrt{n}} \subset\bbR^3$ onto two dimensions by linearly combining the first two mutual information terms to give $I(U;X|T)- I(U;Y|T)  = I(U;X|Y,T)$ (by the Markov chain $U-(X,T)-Y$).  From  the simplified CS-type bound for the WZ problem in Corollary~\ref{thm:fein_wz} and the multidimensional Berry-Ess\'een theorem~\cite{Got91}, we can derive the following:
\begin{theorem}[Inner Bound to the $(n,\veps)$-Wyner-Ziv Rate-Distortion Region] \label{thm:wz_2nd}
For every $0<\veps<1$ and all $n$ sufficiently large, the $(n,\veps)$-Wyner-Ziv rate-distortion region $\scR_{\WZ}(n,\veps)$ satisfies
\begin{align}
&\bigcup_{P_{UTXY}\in\tilde{\scP}(P_{XY}), P_{\hatX|UYT}} \scR_{\mathrm{in}}(n,\veps; P_{UTXY},P_{\hatX|UYT}) \nn\\
&~~~~~~~~~~~~~~~~~~~~~~~\subset \scR_{\WZ}(n,\veps). \label{eqn:inner_wz_rd}
\end{align}
Furthermore, the union over a pair of $P_{UTXY}$ and $P_{\hatX|UYT}$ can be restricted to those
 distributions for which the supports $\calU$ and $\calT$ of auxiliary random
 variables $U$ and $T$ satisfy that 
$\lvert\calU\rvert\leq \lvert\calX\rvert+8$ and
 $\lvert\calT\rvert\leq 9$ respectively.
\end{theorem}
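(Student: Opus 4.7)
The plan is to evaluate the simplified CS-type bound in Corollary~\ref{thm:fein_wz} for the $n$-fold i.i.d.\ source $P_{XY}^n$, combined with a \emph{time-sharing} sequence $t^n\in\calT^n$ whose empirical type lies within $O(1/n)$ in total variation of $P_T$. For a fixed pair $(P_{UTXY},P_{\hatX|UYT})$, I would take
\[
 P_{U^n|X^n}(u^n|x^n):=\prod_{i=1}^n P_{U|XT}(u_i|x_i,t_i),\qquad P_{\hatX^n|U^nY^n}(\hatx^n|u^n,y^n):=\prod_{i=1}^n P_{\hatX|UYT}(\hatx_i|u_i,y_i,t_i),
\]
and apply the one-shot bound \eqref{eqn:dt_wz} with the $n$-letter objects (so that the distortion function there becomes $\rvd_n$). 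Under this choice, the three events inside the probability in \eqref{eqn:dt_wz} become tail events on the conditionally independent sample average $\frac{1}{n}\sum_{i=1}^n\bj(U_i,X_i,Y_i,\hatX_i|t_i)$ of the information-density-distortion vector exceeding a threshold $\bc\in\bbR^3$.

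I would then select $\gamma_{\rmp}:=-nc_1$, $\gamma_{\rmc}:=nc_2$, $D:=c_3$, $L:=\lceil n^2\cdot 2^{\gamma_{\rmc}}\rceil$, $|\calM|:=\lceil n^4\cdot 2^{\gamma_{\rmc}-\gamma_{\rmp}}\rceil$, and $\delta:=2^{-n}$, for any $\bc=(c_1,c_2,c_3)^T$ satisfying $\bc\in\bJ+\scS(\bV,\veps)/\sqrt{n}+(\log n/n)\bone_3$. These choices guarantee that each of the residual terms $L/(2^{\gamma_{\rmp}}|\calM|)$, $\sqrt{2^{\gamma_{\rmc}}/L}$ and $\delta$ is $O(1/n)$, while $\frac{1}{n}\log|\calM|=c_1+c_2+O(\log n/n)$ and $D=c_3$. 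Applying G\"otze's non-i.i.d.\ multidimensional Berry-Ess\'een theorem~\cite{Got91} to the conditionally independent vectors $\{\bj(U_i,X_i,Y_i,\hatX_i|t_i)\}_{i=1}^n$ bounds the main probability by $\Pr(\bZ\not\le\sqrt{n}(\bc-\bJ_n))+O(1/\sqrt{n})$, where $\bZ\sim\calN(\bzero,\bV_n)$ and $\bJ_n,\bV_n$ are the empirical averages of the per-letter conditional means and covariances along $t^n$. Since the type of $t^n$ is within $O(1/n)$ of $P_T$, we have $\bJ_n=\bJ+O(1/n)$ and $\bV_n=\bV+O(1/n)$, and the Gaussian probability is at most $\veps$ by the definition of $\scS(\bV,\veps)$; the extra $O(1/\sqrt{n})$ Berry-Ess\'een slack is absorbed into the $(2\log n/n)\bone_3$ cushion by a mild perturbation of $\bc$. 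Since $\bM\bc=[c_1+c_2,c_3]^T$, the resulting $(R,D)$ lies in $\bM(\bJ+\scS(\bV,\veps)/\sqrt{n}+(2\log n/n)\bone_3)$, proving~\eqref{eqn:Rin_wz}. The rank-deficient case $\bV=\bzero_{3\times 3}$ is handled separately by a direct Chebyshev/concentration argument, delivering~\eqref{eqn:Rin_wz2}.

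For the cardinality bounds, I would apply the Fenchel-Eggleston-Carath\'eodory support lemma twice, along the lines of Appendix~\ref{app:cardinality_bound} (used to prove the analogous bound in Theorem~\ref{thm:second}). To bound $|\calU|$, one preserves the $\calX$-marginal $P_X$ ($|\calX|-1$ parameters), the three components of $\bJ(P_{UTXY},P_{\hatX|UYT})$, and the six upper-triangular entries of $\bV(P_{UTXY},P_{\hatX|UYT})$, yielding $|\calU|\le|\calX|-1+3+6=|\calX|+8$. To bound $|\calT|$, one convexifies over $T$ while preserving only the three entries of $\bJ$ and the six of $\bV$ (the marginal $P_{XY}$ is preserved automatically because $T$ is independent of $(X,Y)$), yielding $|\calT|\le 9$.

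The main obstacle is the careful bookkeeping of the non-i.i.d.\ Berry-Ess\'een bound in the presence of time-sharing, including the simultaneous tracking of the three residual terms in \eqref{eqn:dt_wz} and the $O(1/n)$ discrepancies between the conditional statistics evaluated on the rational type of $t^n$ and those evaluated on the target $P_T$. Once these approximations are confined to the $O(\log n/n)$ third-order term and the rank-deficient case is treated separately, the argument is essentially a translation of the one-shot CS-type bound in Corollary~\ref{thm:fein_wz} into the second-order language of the multidimensional set $\scS(\bV,\veps)$.
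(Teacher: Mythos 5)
Your proposal is correct and follows essentially the same route as the paper's own proof in Appendix~\ref{app:wz_2nd}: it evaluates the simplified CS-type bound (Corollary~\ref{thm:fein_wz}) with a product test channel and reproduction channel built along a time-sharing sequence $t^n$ of near-$P_T$ type, chooses $\gamma_{\rmp},\gamma_{\rmc},L,\lvert\calM\rvert,\delta$ so that all residual terms are $O(1/n)$, invokes G\"otze's multidimensional Berry-Ess\'een theorem on the conditionally independent information-density-distortion vectors, absorbs the $O(\log n/\sqrt{n})$ Gaussian slack into the $\frac{2\log n}{n}\bone_3$ cushion, treats the singular $\bV$ case separately, and obtains the cardinality bounds via a double application of the Fenchel-Eggleston-Carath\'eodory support lemma (once for $\calU$ preserving $P_X$, $\bJ$ and the six entries of $\bV$, once for $\calT$ preserving $\bJ$ and the six entries of $\bV$). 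The only deviations from the paper are cosmetic parameterizations (e.g.\ $\delta=2^{-n}$ vs.\ $1/n$ and working forward from a chosen threshold $\bc$ rather than backward from a rate-distortion pair via $\tilde{\bz}$), which do not change the substance of the argument.
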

\begin{remark} \label{rem:rep_ch_det}
The assumption that the reproduction channel $P_{\hatX|UTX}$ is
 stochastic is used to establish bounds on the cardinalities of the
 auxiliary random variables $U$ and $T$ (see Remark \ref{remark:support_lemma}).  This is because even though the functional representation lemma~\cite[Appendix~A]{elgamal} ensures that the first two entries of $\bj(u,x,y,\hatx|t)$  in \eqref{eqn:id_wz} are preserved using a deterministic reproduction channel and appropriate bounds on $|\calU |$ and $|\calT|$, the last entry concerning the distortion $\rvd(x,\hatx|t)$ may not be preserved using the same techniques. 
\end{remark}
The proof of this result is provided in Appendix~\ref{app:wz_2nd}. Further projecting onto the first dimension (the rate) for a fixed distortion level $D$ yields the following:
\begin{theorem}[Upper  Bound to the $(n,\veps)$-Wyner-Ziv Rate-Distortion Function] \label{thm:wz_rd_2nd}
For every $0<\veps<1$ and all $n$ sufficiently large, the $(n,\veps)$-Wyner-Ziv rate-distortion function $R_{\WZ}(n,\veps,D)$ satisfies 
\begin{align}
R_{\WZ}(n,\veps,D) \le& \inf\bigg\{R : (R,D) \in \bigcup_{P_{UTXY}\in\tilde{\scP}(P_{XY} ),P_{\hatX|UYT} } \nn\\
&~~~~~~ \scR_{\mathrm{in}}(n,\veps; P_{UTXY},P_{\hatX|UYT}) \bigg\}.
\end{align}
\end{theorem}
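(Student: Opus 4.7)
The plan is to derive Theorem \ref{thm:wz_rd_2nd} as a direct corollary of Theorem \ref{thm:wz_2nd} by projecting the two-dimensional region onto the rate axis at a fixed distortion level. Recall from \eqref{eqn:ne_wz} that $R_{\WZ}(n,\veps,D) = \inf\{R : (R,D) \in \scR_{\WZ}(n,\veps)\}$. Monotonicity of the infimum under set inclusion then gives: for any set $\scR \subset \scR_{\WZ}(n,\veps)$,
\begin{equation}
R_{\WZ}(n,\veps,D) \le \inf\{R : (R,D) \in \scR\}.
\end{equation}

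First I would invoke Theorem \ref{thm:wz_2nd}, which establishes that for every $0<\veps<1$ and all $n$ sufficiently large,
\begin{equation}
\bigcup_{P_{UTXY}\in\tilde{\scP}(P_{XY}),\, P_{\hatX|UYT}} \scR_{\mathrm{in}}(n,\veps; P_{UTXY},P_{\hatX|UYT}) \subset \scR_{\WZ}(n,\veps).
\end{equation}
Applying the monotonicity observation above with $\scR$ taken to be the union on the left-hand side immediately yields
\begin{equation}
R_{\WZ}(n,\veps,D) \le \inf\left\{R : (R,D) \in \bigcup_{P_{UTXY}\in\tilde{\scP}(P_{XY}),\, P_{\hatX|UYT}} \scR_{\mathrm{in}}(n,\veps; P_{UTXY},P_{\hatX|UYT}) \right\},
\end{equation}
which is precisely the claim. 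The cardinality bounds on $\calU$ and $\calT$ carry over directly from Theorem \ref{thm:wz_2nd}, so no additional support-reduction argument is needed.

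There is essentially no technical obstacle, since all the heavy lifting (the CS-type non-asymptotic bound in Corollary \ref{thm:fein_wz}, the Gaussian approximation via the multidimensional Berry-Ess\'een theorem, and the cardinality reductions) has already been absorbed into Theorem \ref{thm:wz_2nd}. The only minor point worth stating explicitly is that the projection via the matrix $\bM$ in \eqref{eqn:defM} already encodes the rate as the first coordinate and the distortion as the second, so the ``$(R,D)$ with fixed $D$'' slice is well-defined and no further rewriting is required.
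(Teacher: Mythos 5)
Your proof is correct and takes essentially the same approach as the paper, which simply states that Theorem~\ref{thm:wz_rd_2nd} follows from Theorem~\ref{thm:wz_2nd} by ``further projecting onto the first dimension (the rate) for a fixed distortion level $D$.'' The monotonicity-of-infimum-under-set-inclusion observation you spell out is exactly the elementary step the paper leaves implicit.
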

Theorems~\ref{thm:wz_2nd}  and~\ref{thm:wz_rd_2nd} are very similar in spirit to the result on the achievable second-order coding rate for the WAK problem. The marginal contributions from the distortion error event, the packing error event, the covering error event as well as their correlations are  all involved in the   dispersion matrix $\bV( P_{UTXY},P_{\hatX|UYT})$. 

It is worth mentioning why  for the inner bound to the second-order region in Theorem~\ref{thm:wz_2nd}, we should, in general, employ  stochastic reproduction functions $ P_{\hatX|UYT}$ instead of  a deterministic ones $g:\calU\times\calY\to\hcalX$. The reasons are twofold: First, this is to facilitate the bounding of
the cardinalities of the auxiliary alphabets $\calU$ and $\calT$  in Theorem~\ref{thm:wz_2nd}. This is done using variants of the support lemma~\cite[Appendix~A]{elgamal}. See Lemma~\ref{support_lemma_U} and~\ref{support_lemma_T} in Appendix~\ref{app:cardinality_bound}. The preservation of the expected distortion $\bbE\rvd(X,\hatX)$ requires that $P_{\hatX|UYT}$ is stochastic. See Theorem~\ref{thm:cardinality_bound_WZ} in Appendix~\ref{app:cardinality_bound}. Second, and more importantly,  it is not {\em a priori} clear without a converse (outer) bound on $ \scR_{\WZ}(n,\veps)$ that the second-order inner bound we have in~\eqref{eqn:inner_wz_rd} cannot be enlarged via the
use of a stochastic reproduction function $P_{\hatX | U Y T}$.   The same observation holds verbatim  for the GP problem where we use $P_{X|US}$ instead of a deterministic encoding function from $\calU\times\calS$ to $\calX$.


At this juncture, it is natural to wonder whether we are able to recover the dispersion for lossy source coding~\cite{ingber11,kost12} as a special case of Theorem~\ref{thm:wz_rd_2nd} (like Corollary~\ref{cor:lossless} is a special case of Theorem~\ref{thm:second-modified}). This does not seem straightforward because of the distortion error event in~\eqref{eqn:fein_wz}.  However, we can start from the CS-type bound in~\eqref{eqn:fein_wz}, set $Y= \emptyset$,  $U=\hatX$ and use the method of types~\cite{Csi97} or the notion of the $D$-tilted information~\cite{kost12} to obtain the specialization for the direct part. Before stating the result, we define a few quantities. 
Let  the rate-distortion function of the source $X\sim Q\in\scP(\calX)$ be denoted as 
\begin{equation}
R(Q,D) :=\min_{P_{\hatX,X}: P_X=Q,\bbE \rvd(X,\hatX)\le D}I(X;\hatX) ,\label{eqn:rdfun}
\end{equation}
where $\bbE\rvd(X,\hatX):=\sum_{ x,\hatx}P_{\hatX,X}(\hatx,x)\rvd(x,\hatx)$.  Also, define the {\em $D$-tilted information}  to be 
\begin{equation} \label{eqn:Dtilt}
j(x,D) :=-\log \bbE \left[ \exp\left( \lambda^*D - \lambda^*\rvd(x,\hatX^*\right)\right]
\end{equation}
where the expectation is with respect to the unconditional distribution of $\hatX^*$, the output distribution that optimizes the rate-distortion function in \eqref{eqn:rdfun} and 
\begin{equation}
\lambda^* := -\frac{\partial}{\partial D} R(P_X, D).
\end{equation}

\begin{theorem}[Achievable Second-Order Coding Rate for Lossy Source Coding] \label{thm:rd_disp}
Define the  {\em second-order coding rate for lossy source coding} to be 
\begin{equation}
\sigma (P_X,D,\veps) := \limsup_{n\to\infty} \sqrt{n} (R_X(n,\veps;D)-R(P_X,D)) 
\end{equation}
where $R_X(n,\veps;D)$ is the minimal rate of compression of source $X\sim P_X$ up to distortion $D$ at blocklength $n$ and probability of excess distortion not exceeding $\veps$.
We have
\begin{equation}
\sigma (P_X,D,\veps)  \le \sqrt{\var(j(X,D))}Q^{-1}(\veps)
\end{equation}
\end{theorem}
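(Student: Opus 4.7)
The plan is to specialize the CS-type bound in Corollary~\ref{thm:fein_wz} to point-to-point lossy source coding by setting $Y=\emptyset$, $U=\hatX$, and $\gamma_{\rmp}=0$, whereupon the packing event is vacuous and the one-shot bound collapses to
\begin{equation*}
\Pe(\Phi;D)\le P_{X\hatX}\!\left[\log\tfrac{P_{X|\hatX}(X|\hatX)}{P_X(X)}>\gamma_{\rmc}\ \cup\ \rvd(X,\hatX)>D\right]+\frac{L}{|\calM|}+\sqrt{\frac{2^{\gamma_{\rmc}}}{L}}+\delta.
\end{equation*}
The sharp dispersion $V=\var(j(X,D))$ must then be extracted via a type-based analysis, since a naive application with a single fixed test channel would only deliver the (in general larger) variance of the information density $\log(P_{X|\hatX^*}(X|\hatX^*)/P_X(X))$.

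In the method-of-types route, I would partition source sequences by their empirical type $\hat Q\in\scP_n(\calX)$ and apply the one-shot bound per type $Q$ using a conditional-type codebook attaining $R(Q,D)$ and average distortion $\Theta(1/\sqrt n)$ below $D$. The per-type rate is $nR(Q,D)+O(\sqrt n\log n)$, and aggregating over the $(n+1)^{\lvert\calX\rvert}$ types adds only an $O(\log n)$ overhead to the log-codebook size. The heart of the analysis is Csisz\'ar's directional-derivative identity $\partial R(P,D)/\partial P(x)=j(x,D)-R(P,D)$ (modulo a Lagrange multiplier that cancels against $\sum_x(\hat Q(x)-P_X(x))=0$), which yields the Taylor expansion $R(\hat Q,D)-R(P_X,D)=\frac{1}{n}\sum_{i=1}^n(j(X_i,D)-R(P_X,D))+O(1/n)$. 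Applying the Berry-Ess\'een theorem to this linear statistic gives $\sqrt n(R(\hat Q,D)-R(P_X,D))\le\sqrt V\,Q^{-1}(\veps)+O(\log n/\sqrt n)$ with probability at least $1-\veps$, so a total rate of $R(P_X,D)+\sqrt{V/n}\,Q^{-1}(\veps)+O(\log n/n)$ suffices to make the excess-distortion probability at most $\veps$.

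The alternative $D$-tilted-information route keeps a single product test channel $P_{\hatX^{(n)}|X}^n$ attaining $R(P_X,D-\Delta_n)$ for $\Delta_n=\Theta(\log n/\sqrt n)$ and invokes Csisz\'ar's KKT identity $\log(P_{X|\hatX^{(n)}}(x|\hatx)/P_X(x))=j^{(n)}(x)+\lambda^{(n)}((D-\Delta_n)-\rvd(x,\hatx))$ to recast the covering event; the generalized lossy AEP of~\cite{kost12} is then used to replace the information-density threshold with one based on $j(X^n,D)$, whose CLT-dispersion is precisely $V$. In both routes, choosing $L=n^2\cdot 2^{\gamma_{\rmc}}$, $|\calM|=n^4\cdot 2^{\gamma_{\rmc}}$, $\delta=2^{-n}$, and $\gamma_{\rmc}=nR(P_X,D)+\sqrt{nV}\,Q^{-1}(\veps)+O(\log n)$ renders the residual one-shot terms $O(1/n)$, completing the bound $\sigma(P_X,D,\veps)\le\sqrt V\,Q^{-1}(\veps)$. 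The main obstacle is isolating $j(X,D)$ (rather than the raw information density $\log(P_{X|\hatX^*}/P_X)$, whose variance can be strictly larger, as is visible already for a BSS under Hamming distortion where $V=0$) in the CLT variance; the method-of-types route does this cleanly through the derivative identity, while the tilted-information route requires an auxiliary AEP-style estimate to bridge between the CS-type bound and the sharp dispersion.
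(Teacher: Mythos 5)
Your starting point and the overall architecture are right: you correctly specialize Corollary~\ref{thm:fein_wz} to the point-to-point case, correctly identify that a naive i.i.d.\ evaluation yields the variance of the raw information density rather than $\var(j(X,D))$, and your choice of $L$, $|\calM|$, $\delta$, $\gamma_{\rmc}$ does render the residual terms $O(1/n)$. The BSS/Hamming example showing the two variances can differ is exactly the right sanity check. Where you diverge from, and in one place fall short of, the paper is in how the variance $\var(j(X,D))$ is actually extracted.

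First, the paper does \emph{not} go straight from Corollary~\ref{thm:fein_wz}; it first proves Corollary~\ref{thm:fein_rd}, which replaces the \emph{actual} output marginal $P_{\hatX}$ of the test channel by an \emph{arbitrary} reference distribution $Q_{\hatX}$ at a cost of $2^{-\nu}$. This step is load-bearing in both of the paper's proofs because the test channels used (the type-conditional shell distribution in the method-of-types route, and the $D$-ball conditional channel in the tilted-information route) have complicated, non-product output marginals whose information densities are intractable; swapping to $Q_{\hatX^n}$ equal to a mixture of uniform shell distributions (respectively, to $P^n_{\hatX^\star}$) makes the information density a clean, computable object. You keep $P_X$ in the denominator throughout, so you implicitly keep the actual $P_{\hatX}$. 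Your method-of-types route partially sidesteps this by applying a separate code per type (the per-type output distribution is then fixed), and the paper's Lemma~\ref{lemma:approximation-of-rd} and the rate-redundancy Lemma~\ref{lemma:rate-redundancy} contain essentially your derivative-identity-plus-Taylor argument, so that branch should work; it is a cosmetic reorganization of the paper's single-code-plus-mixture-$Q_{\hatX^n}$ construction, trading a union over types in the output distribution for a union over types in the codebook.

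The tilted-information route, as you have written it, has a genuine gap. You propose a \emph{product} test channel $P_{\hatX|X}^n$ attaining $R(P_X,D-\Delta_n)$ and the KKT identity $\log\bigl(P_{X|\hatX}(x|\hatx)/P_X(x)\bigr) = j^{(n)}(x) + \lambda^{(n)}\bigl((D-\Delta_n)-\rvd(x,\hatx)\bigr)$. But then the $n$-fold information density is $\sum_i j^{(n)}(X_i) + \lambda^{(n)}\bigl(n(D-\Delta_n) - n\rvd_n(X^n,\hatX^n)\bigr)$, and the second summand has standard deviation $\Theta(\sqrt n)$ because $\hatX^n$ is random given $X^n$. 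That fluctuation contributes to the variance you will obtain from the Berry-Ess\'een step, and it does \emph{not} cancel under the union with the distortion error; you end up with the variance of the information density (which your own BSS example shows can strictly exceed $\var(j(X,D))$), not $\var(j(X,D))$. The generalized lossy AEP (Lemma~\ref{lemma:kostina-verdu}) does not repair this, because it bounds $\log\bigl(1/P^n_{\hatX^\star}(\calB_D(x^n))\bigr)$ --- a function of $x^n$ alone --- and that quantity is the information density only for the $D$-ball conditional channel $P_{\hatX^n|X^n}(\cdot|x^n) = P^n_{\hatX^\star}(\cdot)/P^n_{\hatX^\star}(\calB_D(x^n))$ restricted to $\calB_D(x^n)$, not for a product channel. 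The paper's tilted-information proof uses precisely that $D$-ball conditional channel (together with $Q_{\hatX^n}=P^n_{\hatX^\star}$ via Corollary~\ref{thm:fein_rd}), so that the information density is deterministic given $X^n$, the distortion constraint holds with probability one, and Kostina--Verd\'u's lemma gives $\sum_i j(X_i,D)+O(\log n)$ directly. To fix your route, replace the product channel and KKT recasting with the $D$-ball conditional channel and the $Q_{\hatX}$ substitution; the KKT identity then plays no role.
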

Two proofs of Theorem~\ref{thm:rd_disp} are provided in Appendix~\ref{app:2nd_wz_rd}, one based on  the method of types and the other based on the  $D$-tilted information in \eqref{eqn:Dtilt}.  For the former proof based on the method of types, we need to assume that $Q\mapsto R(Q,D)$ is differentiable in a small neighborhood of $P_X$ and $P_X$ is supported on a finite set.  For the second proof, $\calX$ can be an abstract alphabet. Note that $R(P_X,D)=\bbE_{X\sim P_X}[ j(X,D)]$. We remark that for discrete memoryless sources,  the  $D$-tilted information $j(x,D)$ coincides with the  derivative of the  rate-distortion function with respect to the source~\cite{ingber11}
\begin{equation}
R'(x,D)=\frac{\partial}{\partial Q(x)} R(Q,D) \bigg|_{Q=P_X} \label{eqn:der_rd}. 
\end{equation} 


\subsection{Achievable Second-Order Coding Rates for the GP problem} \label{sec:2nd_gp}
We conclude this section by stating and achievable second-order coding rate for the GP problem by presenting  a lower bound to the $(n,\veps,\Gamma)$-capacity $C_{\GP}(n,\veps,\Gamma)$ defined in \eqref{eqn:cne_gp}. As in the previous two subsections, we start with definitions. For two finite sets $\calU$ and $\calT$, define $\tilde{\scP}(W, P_S) $ to be the collection of all $P_{UTSXY}\in\scP  (\calU\times \calT\times\calS\times\calX\times\calY)$ such that  the $\calS$-marginal of $P_{UTSXY}$ is $P_S$, $P_{Y|XS}=W$, $U-(X,S,T)-Y$ forms a Markov chain and $T$  is independent of $S$.  Note that $P_{UTSXY}$ does not necessarily have to satisfy the cost constraint in \eqref{eqn:gpcost2}.

In addition, to facilitate the time-sharing for the cost function, we define
\begin{equation}
\rvg(X|T):=\rvg(X_T)
\end{equation}
where $X_t$ for any $t\in \calT$ has distribution $P_{X|T=t}$. 

\begin{definition}
The {\em information-density-cost vector} for the GP problem for $P_{UTSXY} \in \tilde{\scP}(W, P_S) $ is defined as 
\begin{equation}
\bj(U,S,X,Y|T):=\begin{bmatrix}
\log\frac{P_{Y|UT}(Y|U,T)}{P_{Y|T}(Y|T)} \\
-\log\frac{P_{S|UT}(S|U,T)}{P_S(S)}  \\
-\rvg(X|T)
\end{bmatrix}.
\end{equation}
\end{definition}
Since $\sum_t P_T(t)\bbE_{P_{X|T}}[\rvg(X_T)|T=t]=\bbE[\rvg(X)]$, the expectation of this vector with respect to $P_{UTSXY}$ is the vector of mutual informations and the  negative cost, i.e.,
\begin{equation}
\bJ(P_{UTSXY}):=\bbE[\bj(U,S,X,Y|T)]= \begin{bmatrix}
I(U;Y|T) \\- I(U;S|T)\\ -\bbE[\rvg(X)]
\end{bmatrix}.
\end{equation}
\begin{definition}
The {\em information-dispersion matrix} for the GP problem for $P_{UTSXY} \in \tilde{\scP} (W, P_S) $ is defined as 
\begin{equation}
\bV(P_{UTSXY}) := \bbE_T [ \cov(  \bj(U,S,X,Y|T)  ) ] .
\end{equation}
\end{definition}
\begin{definition}
Let $\bM$ be the matrix defined in~\eqref{eqn:defM}.  If $\bV(P_{UTSXY})\ne \bzero_{3\times 3}$, define the set $\scR_{\mathrm{in}}(n,\veps; P_{UTSXY})$ to be the set of all rate-cost pairs $(R,\Gamma)$ satisfying 
\begin{equation} \label{eqn:Rin_gp}
\begin{bmatrix} R\\ -\Gamma\end{bmatrix}\in \bM\left(\bJ - \frac{\scS(\bV,\veps)}{\sqrt{n}}-\frac{2\log n}{n}\bone_3\right)
\end{equation}
where $\bJ := \bJ( P_{UTSXY})$ and $\bV := \bV( P_{UTSXY})$. Else if $\bV(P_{UTXY}, g)\ne \bzero_{3\times 3}$, define $\scR_{\mathrm{in}}(n,\veps; P_{UTSXY})$ to be the set of all rate-cost pairs $(R,\Gamma)$ satisfying 
\begin{equation}\label{eqn:Rin_gp2}
\begin{bmatrix} R\\ -\Gamma\end{bmatrix}\in \bM\left(\bJ  - \frac{2\log n}{n}\bone_3\right). \end{equation} 
\end{definition}
By leveraging on our finite blocklength CS-type bound for the GP problem in~\eqref{eqn:fein_gp}, we obtain the following:

\begin{theorem}[Inner Bound to the $(n,\veps)$-GP Capacity-Cost Region] \label{thm:gp_disp} For every $0<\veps<1$ and all $n$ sufficiently large, the $(n,\veps)$-GP capacity-cost region $\scC_{\GP}(n,\veps)$ satisfies
\begin{equation}
\bigcup_{P_{UTSXY} \in  \tilde{\scP} (W, P_S)}\scR_{\mathrm{in}}(n,\veps; P_{UTSXY})\subset\scC_{\GP}(n,\veps).
\end{equation}
Furthermore, the union over $P_{UTSXY}$ can be restricted to those
 distributions for which the supports $\calU$ and $\calT$ of auxiliary random
 variables $U$ and $T$ satisfy that 
$\lvert\calU\rvert\leq \lvert\calS\rvert\lvert\calX\rvert+6$ and
 $\lvert\calT\rvert\leq 9$ respectively.
\end{theorem}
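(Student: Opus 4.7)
The plan is to parallel the proof of the WZ inner bound in Theorem~\ref{thm:wz_2nd}: apply the simplified Channel-Simulation-type bound of Corollary~\ref{thm:fein_gp} to the $n$-shot GP problem under a time-sharing codebook construction, and invoke G\"otze's multidimensional Berry-Ess\'een theorem~\cite{Got91} to obtain a Gaussian approximation for the joint probability of the three error events appearing inside~\eqref{eqn:fein_gp}.

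Fix $P_{UTSXY}\in\tilde{\scP}(W,P_S)$ and pick a deterministic time-sharing sequence $t^n\in\calT^n$ whose type differs from $P_T$ by at most $O(1/n)$ entry-wise. Build the length-$n$ auxiliary and encoding kernels as $P_{U|ST}^n(u^n|s^n,t^n)=\prod_{i=1}^n P_{U|ST}(u_i|s_i,t_i)$ and $P_{X|UST}^n(x^n|u^n,s^n,t^n)=\prod_{i=1}^n P_{X|UST}(x_i|u_i,s_i,t_i)$. Conditioned on $T^n=t^n$, the tuples $(U_i,S_i,X_i,Y_i)$ are independent but not identically distributed, which is the exact setting of G\"otze's theorem. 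Then instantiate Corollary~\ref{thm:fein_gp} with $\delta=1/\sqrt n$, $\gamma_{\rmc}$ chosen slightly above $nI(U;S|T)$, $\log L=\gamma_{\rmc}+\log n$, and $\gamma_{\rmp}=\log L+\log|\calM_n|+\tfrac{1}{2}\log n$, so that each of the three residual terms $L|\calM_n|/2^{\gamma_{\rmp}}$, $\sqrt{2^{\gamma_{\rmc}}/L}$ and $\delta$ is $O(1/\sqrt n)$.

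Under these parameter choices, the complement of the union of the packing, covering, and cost-violation events in the first term of~\eqref{eqn:fein_gp} becomes, after centering by $\bJ(P_{UTSXY})$ and normalizing by $\sqrt n$, an orthant event for the vector $\bW_n := \frac{1}{\sqrt n}\sum_{i=1}^n[\bj(U_i,S_i,X_i,Y_i\mid t_i)-\bJ(P_{USXY\mid T=t_i})]$. G\"otze's non-i.i.d.\ Berry-Ess\'een theorem bounds the probability of this orthant event from below by the corresponding Gaussian probability minus an error of order $O(1/\sqrt n)$, where the limiting Gaussian has zero mean and covariance $\bV(P_{UTSXY})$. The Gaussian probability is at least $1-\veps$ precisely when the (negated) centered threshold vector lies in $\scS(\bV,\veps)$, and applying the projection $\bM$ in~\eqref{eqn:defM} to convert the joint condition on $(I(U;Y|T),-I(U;S|T),-\bbE[\rvg(X)])$ into the pair $(R,-\Gamma)$ yields exactly~\eqref{eqn:Rin_gp}, with all $O(\log n/n)$ slacks absorbed into the $\frac{2\log n}{n}\bone_3$ third-order term.

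The construction above only satisfies the average-cost error criterion~\eqref{eqn:gp_error}; to upgrade this to the almost-sure cost constraint built into $\scC_{\GP}(n,\veps)$, I would invoke the expurgation result in Proposition~\ref{proposition:cost-conversion}. The cardinality bounds $|\calU|\le|\calS||\calX|+6$ and $|\calT|\le 9$ follow from support-lemma arguments preserving the entries of $\bJ(P_{UTSXY})$ and $\bV(P_{UTSXY})$, analogous to those for Theorems~\ref{thm:second} and~\ref{thm:wz_2nd}. The main obstacle is the simultaneous control of all three residual terms: one must place $\gamma_{\rmp}-\log L-\log|\calM_n|$ and $\log L-\gamma_{\rmc}$ at $\Theta(\log n)$ while ensuring the Berry-Ess\'een threshold gives exactly $\veps$ (not a fraction thereof) to the desired order, and one must track the $O(1/n)$ discrepancy between the type of $t^n$ and $P_T$ through both $\bJ$ and $\bV$ so that only third-order corrections appear in the final statement.
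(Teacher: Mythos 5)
Your proposal follows essentially the same route as the paper's proof in Appendix~\ref{app:gp_2nd}: fix a time-sharing sequence $t^n$ of near-$P_T$ type, use the conditionally-i.i.d.\ kernels $P_{UX|TS}^n(\fndot|t^n,\fndot)$, instantiate Corollary~\ref{thm:fein_gp} with parameters making the residuals $O(1/\sqrt{n})$, and apply G\"otze's multivariate Berry-Ess\'een followed by the projection $\bM$. Your parameter choices ($\delta=1/\sqrt n$ rather than $1/n$; $\gamma_{\rmp}-\log L-\log|\calM_n|=\tfrac12\log n$ rather than $\log n$) differ from the paper's ($\gamma_{\rmp}=\log|\calM_n|L_n+\log n$, $\gamma_{\rmc}=\log L_n-\log n$, $\delta=1/n$) but are equally valid since all residuals remain $O(1/\sqrt{n})$ and are dominated by the $\Theta(\log n/\sqrt n)$ slack built into $\tilde\bz$. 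One minor note: invoking Proposition~\ref{proposition:cost-conversion} is unnecessary for this theorem, since $\scC_{\GP}(n,\veps)$ is already defined via $\Pe(\Phi_n;\Gamma)\le\veps$ with the joint-event criterion of \eqref{eqn:gp_error}, not the almost-sure cost constraint; the paper does not invoke the expurgation step here. The ``obstacles'' you flag (simultaneous control of residual terms, tracking the $O(1/n)$ type discrepancy through $\bJ$ and $\bV$) are real but routine and are handled identically in the paper, with the $O(1/n)$ type discrepancy absorbed into the $O(\log n/n)$ third-order term.
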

The assumption that the encoding function $P_{X|US}$ is stochastic appears to be necessary for establishing bounds on $|\calU|$ and $|\calT|$. See Remark \ref{rem:rep_ch_det}.  By projecting onto the first dimension (the rate) for a fixed cost $\Gamma\ge 0$, we obtain:
\begin{theorem}[Lower Bound to the $(n,\veps)$-GP Capacity] \label{thm:gp_disp2}
For every $0<\veps<1$ and all $n$ sufficiently large,  the $(n,\veps)$-GP capacity-cost function  $C_{\GP}(n,\veps,\Gamma)$ satisfies
\begin{align} \label{eqn:Cne_bound}
C_{\GP}(n,\veps,\Gamma)\ge &  \sup\bigg\{ R:(R,\Gamma)\in\bigcup_{P_{UTSXY} \in  \tilde{\scP} (W, P_S)} \nn\\
&~~~~~~~~~~ \scR_{\mathrm{in}}(n,\veps; P_{UTSXY})\bigg\}.
\end{align}
\end{theorem}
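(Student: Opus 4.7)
The plan is to derive Theorem~\ref{thm:gp_disp2} as an almost immediate consequence of Theorem~\ref{thm:gp_disp}, since the $(n,\veps)$-capacity-cost function is by definition~\eqref{eqn:cne_gp} the supremum of the first coordinate $R$ over the slice $\{(R',\Gamma'):\Gamma'=\Gamma\}$ of $\scC_{\GP}(n,\veps)$. Because Theorem~\ref{thm:gp_disp} gives the set-theoretic inclusion
\[
\bigcup_{P_{UTSXY} \in  \tilde{\scP} (W, P_S)}\scR_{\mathrm{in}}(n,\veps; P_{UTSXY})\subset\scC_{\GP}(n,\veps),
\]
the same supremum over the smaller left-hand set is a lower bound to $C_{\GP}(n,\veps,\Gamma)$, which is exactly~\eqref{eqn:Cne_bound}. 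The cardinality bounds on $\calU$ and $\calT$ transfer automatically, as they are built into Theorem~\ref{thm:gp_disp}.

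So the substantive content lies entirely in Theorem~\ref{thm:gp_disp}. I would prove that result by evaluating the simplified CS-type bound~\eqref{eqn:fein_gp} of Corollary~\ref{thm:fein_gp} in the $n$-fold setting. Concretely, fix $P_{UTSXY}\in\tilde{\scP}(W,P_S)$ and a deterministic ``time-sharing'' sequence $t^n\in\calT^n$ whose empirical type approximates $P_T$. Apply~\eqref{eqn:fein_gp} to the non-i.i.d.\ product distribution $\prod_{i=1}^n P_{USXY|T}(\cdot|t_i)$. With the choices
\begin{align*}
 \gamma_{\rmp} & \approx n\, I(U;Y|T) + \sqrt{n}\, z_1, \\
 \gamma_{\rmc} & \approx n\, I(U;S|T) + \sqrt{n}\, z_2, \\
 \log L        & \approx \gamma_{\rmc} + \tfrac{3}{2}\log n, \\
 \log|\calM|   & \approx \gamma_{\rmp} - \log L - \tfrac{1}{2}\log n, \\
 \delta        & = 1/n,
\end{align*}
the three residual terms $L|\calM|/2^{\gamma_{\rmp}}$, $\sqrt{2^{\gamma_{\rmc}}/L}$ and $\delta$ all decay at rate $O(n^{-1/2})$ or faster. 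What remains is the principal probability in~\eqref{eqn:fein_gp}, which asks that the standardized sum of the independent vectors $\bj(U_i,S_i,X_i,Y_i|t_i)-\bbE[\bj(\cdot|t_i)]$ avoid three half-spaces whose offsets are $(z_1,z_2,z_3)$. Applying G\"otze's non-i.i.d.\ multivariate Berry-Ess\'een theorem~\cite{Got91} shows that this probability is at most the Gaussian tail $\Pr(\bZ\not\le[-z_1,z_2,-z_3]^T)+O(n^{-1/2})$, where $\bZ\sim\calN(\bzero,\bV(P_{UTSXY}))$. Choosing $(z_1,z_2,z_3)$ so that $[-z_1,z_2,-z_3]^T$ lies on the boundary of $\scS(\bV,\veps)$ gives the inner-bound description in~\eqref{eqn:Rin_gp}, while the third inequality is absorbed into the cost slice $-\Gamma\le-\bbE[\rvg(X)]-z_3/\sqrt{n}$.

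The main obstacles are twofold. First, one must absorb the time-sharing over the empirical distribution of $t^n$, using a convexification argument in the spirit of Huang-Moulin~\cite{Huang12}, so that only distributions $P_{UTSXY}$ with $T$ independent of $S$ appear in the union. Second, and more delicately, one must bound the cardinalities $|\calU|\le|\calS||\calX|+6$ and $|\calT|\le 9$ via the support lemma. The difficulty here is that the second-order region in~\eqref{eqn:Rin_gp} depends not only on the mean vector $\bJ(P_{UTSXY})$ (three functionals) and the cost constraint, but also on the six independent entries of the symmetric $3\times 3$ dispersion matrix $\bV(P_{UTSXY})$ that determine $\scS(\bV,\veps)$. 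Preserving all of these continuous functionals simultaneously through Carath\'eodory-type arguments yields the stated cardinality bounds; the accounting is analogous to, but slightly more involved than, the corresponding steps for the WAK bound in Theorem~\ref{thm:second} and the WZ bound in Theorem~\ref{thm:wz_2nd}, and is carried out in Appendix~\ref{app:cardinality_bound}.
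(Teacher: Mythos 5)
Your derivation of Theorem~\ref{thm:gp_disp2} from Theorem~\ref{thm:gp_disp} by taking the supremum of $R$ over the $\Gamma$-slice of the inner-bound region is exactly how the paper obtains the result, and your sketch of the underlying Theorem~\ref{thm:gp_disp} — applying Corollary~\ref{thm:fein_gp} with a time-sharing-indexed product test channel, invoking G\"otze's non-i.i.d.\ multivariate Berry-Ess\'een theorem, and then bounding $|\calU|$ and $|\calT|$ via Carath\'eodory-type arguments — matches Appendix~\ref{app:gp_2nd} and Appendix~\ref{app:cardinality_bound} in substance. Your parameter choices for $\gamma_{\rmp}$, $\gamma_{\rmc}$, $L$ and $|\calM|$ differ from the paper's only up to harmless $O(\log n)$ offsets, so the residual terms still vanish at the required rate.
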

The proof of Theorem~\ref{thm:gp_disp}  parallels that for the WZ case in Theorem~\ref{thm:wz_2nd} so it is omitted for brevity.    The matrix $\bM$ serves to project the first two components of each element in the  set $\bJ + {\scS(\bV,\veps)}/{\sqrt{n}}$ onto one dimension. Indeed, for a fixed $P_{UTSXY} \in \tilde{\scP} (W, P_S)$, the first two components read $I(U;Y|T)-I(U;S|T)$ which, if $T=\emptyset$ and the random variables $(U,S,X,Y)$ are capacity-achieving, reduces  to the GP formula in~\eqref{eqn:gp_formula}. Hence,  the   set $ \bM{\scS(\bV,\veps)}/{\sqrt{n}} \subset\bbR$  quantifies all possible backoffs from the asymptotic GP capacity-cost region $\scC_{\GP} $ (defined in \eqref{eqn:gp_asymp}) at blocklength $n$ and average error probability $\veps$ based on our CS-type finite blocklength bound for the GP problem in~\eqref{eqn:fein_gp}.   The bound in \eqref{eqn:Cne_bound} is clearly much tighter than the one provided in~\cite{Tan12b} which is based on the use of Wyner's PBL and Markov lemma.


Now by setting $S=T=\emptyset$,  $U=X$ and $\Gamma=\infty$ in Theorem~\ref{thm:gp_disp2}, we recover the direct part of the  second-order coding rate  for channel coding without cost constraints~\cite{strassen, PPV10, Hayashi09}.

\begin{corollary}[Achievable Second-Order Coding Rate for Channel Coding] \label{cor:disp_cc}
Fix a  non-exotic~\cite{PPV10} discrete memoryless channel $W:\calX\to\calY$ with channel capacity $C(W) = \max_{P_X} I(X;Y)$. Define the {\em second-order coding rate for channel coding} to be 
\begin{equation}
\sigma(W,\veps):=\limsup_{n\to\infty} \sqrt{n}(C(W)-C_W(n,\epsilon))
\end{equation}
where $C_W(n,\epsilon)$ is the maximal rate of transmission over the channel $W$ at blocklength $n$ and average error probability $\veps$. Then, 
\begin{equation}
\sigma(W,\veps)\le\min_{P_{X^*}}\sqrt{\var\left( \log \frac{ W(Y^*|X^*) } {P_{Y^*} (Y^*)}\right)}Q^{-1}(\veps) \label{eqn:ach_cc}
\end{equation}
where $(X^*,Y^*)\sim P_{X^*}\times W$ and the minimization is over all  capacity-achieving input distributions.
\end{corollary}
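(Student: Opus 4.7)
The plan is to specialize Theorem~\ref{thm:gp_disp2} to the pure channel coding setting by collapsing away state and cost. Specifically, take $\calS$ and $\calT$ to be singletons so that $S$ and $T$ are deterministic, set $U = X$, and let $\Gamma = \infty$ so the cost constraint is vacuous. Under these choices $W:\calX\times\calS\to\calY$ reduces to a DMC $W:\calX\to\calY$, and the family $\tilde{\scP}(W,P_S)$ reduces to the set of joint distributions $P_X \times W$ with $P_X \in \scP(\calX)$. The Markov chain $U-(X,S,T)-Y$ holds trivially since $U=X$.

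First I would compute how the information-density-cost vector $\bj(U,S,X,Y|T)$ degenerates. Its second entry $-\log \frac{P_{S|UT}(S|U,T)}{P_S(S)}$ is identically zero since $S$ is trivial, and its third entry involving $-\rvg(X|T)$ can be dropped in the limit $\Gamma \to \infty$. Only its first entry $\log \frac{W(Y|X)}{P_Y(Y)}$ survives, so the vector $\bJ$ has first coordinate $I(X;Y)$, second coordinate $0$, and irrelevant third coordinate. Applying the projection matrix $\bM$ in \eqref{eqn:defM} thus identifies the first row of $\bM\bJ$ with $I(X;Y)$. Likewise $\bV(P_{UTSXY})$ collapses to the scalar channel dispersion $V(P_X) := \var\bigl(\log \tfrac{W(Y|X)}{P_Y(Y)}\bigr)$ in its top-left entry, and the relevant slice of $\scS(\bV,\veps)$ becomes the half-line $[\sqrt{V(P_X)}\,Q^{-1}(\veps), \infty)$.

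Substituting into Theorem~\ref{thm:gp_disp2} and discarding the trivial cost coordinate yields, for every $P_X$,
\begin{equation}
C_W(n,\veps) \,\ge\, I(X;Y) - \sqrt{\tfrac{V(P_X)}{n}}\,Q^{-1}(\veps) - \tfrac{2\log n}{n}.
\end{equation}
Restricting the union on the right-hand side to capacity-achieving input distributions $P_{X^*}$ (so that $I(X^*;Y^*) = C(W)$) and then optimizing over all such $P_{X^*}$ gives
\begin{equation}
C_W(n,\veps) \,\ge\, C(W) - \min_{P_{X^*}} \sqrt{\tfrac{V(P_{X^*})}{n}}\,Q^{-1}(\veps) - O\!\left(\tfrac{\log n}{n}\right),
\end{equation}
from which \eqref{eqn:ach_cc} follows by rearranging and taking $\limsup_{n\to\infty}\sqrt{n}(\,\cdot\,)$.

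The only subtlety, rather than a genuine obstacle, is the justification for restricting the union to capacity-achieving input distributions: a non-capacity-achieving $P_X$ has $I(X;Y) < C(W)$, and the resulting constant gap would dominate the $O(1/\sqrt{n})$ back-off, so such choices cannot contribute to $\sigma(W,\veps)$. The ``non-exotic'' hypothesis on $W$ from \cite{PPV10} is invoked solely to ensure that the minimum of $V(P_{X^*})$ over the (possibly non-unique) set of capacity-achieving distributions is the quantity that indeed matches the tight converse dispersion, so that the one-sided inequality in~\eqref{eqn:ach_cc} is the correct asymptotic statement.
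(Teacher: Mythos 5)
Your proposal is correct and matches the paper's own (implicit) derivation, which is precisely to specialize Theorem~\ref{thm:gp_disp2} by setting $S=T=\emptyset$, $U=X$, and $\Gamma=\infty$. The only nitpick is a bookkeeping constant: after applying $\bM$ the third-order residual is $\tfrac{4\log n}{n}$ rather than $\tfrac{2\log n}{n}$ (since $[\bM\bone_3]_1 = 2$), which is of course inconsequential after multiplying by $\sqrt{n}$ and taking the $\limsup$.
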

The bound in \eqref{eqn:ach_cc} is has long been known to be an equality~\cite{strassen}. Note that the unconditional dispersion in~\eqref{eqn:ach_cc}  $\var\left( \log \frac{ W(Y^*|X^*) } {P_{Y^*} (Y^*)}\right)$ coincides with the conditional dispersion~\cite{PPV10} since it is being evaluated at a capacity-achieving input distribution.  As such, the converse can be proved using the meta-converse in~\cite{PPV10} or an   modification of the Verd\'u-Han converse~\cite[Lem.\ 3.2.2]{Han10} with an judiciously chosen output distribution as was done in~\cite{Hayashi09}. In fact, we can also derive a generalization of Corollary~\ref{cor:disp_cc} with cost constraints incorporated~\cite[Thm.~3]{Hayashi09} using  similar techniques as in the proof of Theorem~\ref{thm:rd_disp}. Namely, we use a uniform distribution over a particular type class (constant composition codes) as the input distribution. The type is chosen to be close to the optimal input distribution (assuming it is unique).
\section{Numerical Examples} \label{sec:numerical}

\subsection{Numerical Example for WAK Problem}

In this section, we use an example to illustrate the inner bound on $(n,\varepsilon)$-optimal rate region 
for the WAK problem obtained in Theorem \ref{thm:second}. We neglect the small $O\left(\frac{\log n}{n}\right)$ term.
The source is taken to be a discrete symmetric binary source DSBS($\alpha$), i.e., 
\begin{equation} \label{eq:dsbs}
P_{XY} = \frac{1}{2}\left[
\begin{array}{cc}
1 - \alpha & \alpha \\
\alpha & 1 - \alpha 
\end{array}
\right].
\end{equation}
In this case, the optimal rate region reduces to
\begin{align}
\scR_{\WAK}^* = \bigg\{ (R_1,R_2):& R_1 \ge h(\beta * \alpha), \nn\\
& R_2 \ge 1 - h(\beta),~0 \le \beta \le \frac{1}{2}  \bigg\},
\end{align}
where $h(\cdot)$ is the binary entropy function and $\beta * \alpha := \beta (1-\alpha) + (1-\beta) \alpha$ is the binary convolution.
The above region is attained by setting the backward test channel from $U$ to $Y$ to be a BSC with some crossover probability $\beta$.
All the elements in the entropy-information dispersion matrix $\bV(\beta)$ can be evaluated in closed form in terms of $\beta$. Define
$\bJ(\beta) := [h(\beta*\alpha), 1-h(\beta)]^T$. In Fig.~\ref{fig:comp}, we plot the second-order region
\begin{equation}\label{eqn:tilde_region}
\tilde{\scR}_{\mathrm{in}}(n,\varepsilon) := \bigcup_{0 \le \beta \le \frac{1}{2}} \left\{ (R_1,R_2) : \bR \in \bJ(\beta) + \frac{\scS(\bV(\beta),\veps)}{\sqrt{n}} \right\}.
\end{equation}
The first-order region $\scR_{\WAK}^*$ and the second-order region with simple time-sharing ($|{\cal T}| = 2$) are also shown for comparison.
More precisely, the simple time-sharing is between $\beta = 0$ and $\beta = 1/2$.
As expected, as the block length increases, the $(n,\varepsilon)$-optimal rate region tends to the first-order one. Interestingly, 
at small block length, time-sharing makes the second-order $(n,\varepsilon)$-optimal rate region in (\ref{eqn:tilde_region}) larger compared to
that without time-sharing. Especially, the simple time-sharing is better than $\tilde{\scR}_{\mathrm{in}}(n,\varepsilon)$ for $n=500$ because
the rank of the entropy-information dispersion matrix $\lambda \bV(0) + (1-\lambda) \bV(1/2)$ for $0 < \lambda \le 1$ is one.\footnote{It should be
noted that the rank of $\bV(1/2)$ is zero.}
\begin{figure}
\centering
\includegraphics[width=.93\columnwidth]{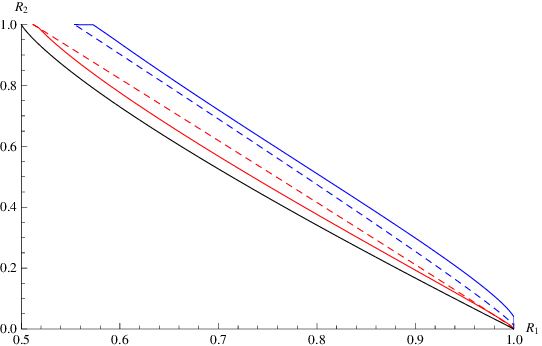} 
\caption{A comparison between $\tilde{\scR}_{\mathrm{in}}(n,\veps)$  without time-sharing (solid line) and the time-sharing region (dashed line) for $\veps=0.1$. The regions are to the top right of the curves. The blue and red curves are for $n = 500$ and $n = 10,000$ respectively. The black curve is the first-order region~\eqref{eqn:wak}.}
\label{fig:comp}
\end{figure}

We also consider the region $\tilde{\scR}_{\mathrm{in}}^{\mathrm{V}}(n,\veps)$ which is  the analogue of $\tilde{\scR}_{\mathrm{in}}(n,\veps)$ but derived from Verd\'u's bound in \eqref{eqn:ver_bd}. In Fig.~\ref{fig:second}, we compare the second-order coefficients, namely that derived from our bound $\scS(\bV(\beta),\veps)$ and 
\begin{align}
\scS^{\mathrm{V}}(\bV(\beta),\veps):=\bigcup_{0\le\lambda\le 1} & \Big\{ (z_1,z_2): z_1 \ge\sqrt{V_H(\beta)}Q^{-1}(\lambda\veps), \nn\\
& z_2 \ge \sqrt{V_I(\beta)}Q^{-1}((1-\lambda)\veps)\Big\}. \label{eqn:SVV}
\end{align}
Note that the difference between the two regions is quite small even for $\veps=0.5$. This is because, for this example, the covariance of the entropy- and information-density  (off-diagonal  in the dispersion matrix) is negative so the difference between $\Pr(Z_1\ge z_1 \mbox{ or } Z_2\ge z_2)$ and $\Pr(Z_1\ge z_1)+ \Pr(Z_2\ge z_2)$ is small. In this case, the $2$-dimensional Gaussian $\bZ\sim\calN(\bzero,\bV(\beta))$ has a negative covariance and hence the probability mass in the first and third quadrants are small. Hence, the union bound is not very loose in this case. 

\begin{figure}
\centering
\includegraphics[width=.93\columnwidth]{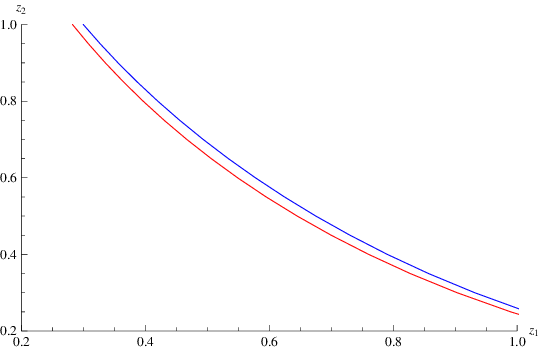} 
\caption{A comparison between $\scS(\bV(\beta), \veps)$ (defined in \eqref{eqn:Sv}) and $\scS^{\mathrm{V}}(\bV (\beta),\veps)$ (defined in~\eqref{eqn:SVV}) for $\beta = h^{-1}(0.5)$ and $\veps = 0.5$.  The red and blue curves are the boundaries  of  $\scS(\bV(\beta), \veps)$ and  $\scS^{\mathrm{V}}(\bV (\beta),\veps)$ respectively.  The regions lie to the top right of the curves.}
\label{fig:second}
\end{figure}

Next, we consider the binary joint source given by $P_{X|Y}(1|0) = P_{X|Y}(0|1) = \alpha$ and $P_Y(0) = p \le \frac{1}{2}$, which is a generalization of \eqref{eq:dsbs}. This example was investigated in \cite{GKEH:09ISIT}, and the optimal rate region reduces to
\begin{align}
\scR_{\WAK}^* = \Big\{ (R_1,R_2):& R_1 \ge h(\beta * \alpha), \nn\\
&R_2 \ge h( p) - h(\beta),~0 \le \beta \le p  \Big\}.
\end{align} 
The above region is attained by setting the backward test channel from $U$ to $Y$ to be BSC with some crossover probability $0 \le \beta \le p$.
All the elements in the entropy-information dispersion matrix $\bV(\beta)$ can be evaluated in closed form in terms of $\beta$. Define
$\bJ(\beta) := [h(\beta*\alpha), h(  p)-h(\beta)]^T$. In Fig.~\ref{fig:comp-asymmetric}, we plot the second-order region
\begin{equation}\label{eqn:tilde_region-asymmetric}
\tilde{\scR}_{\mathrm{in}}(n,\varepsilon) := \bigcup_{0 \le \beta \le p} \left\{ (R_1,R_2) : \bR \in \bJ(\beta) + \frac{\scS(\bV(\beta),\veps)}{\sqrt{n}} \right\}.
\end{equation}
For comparison, we also plot the second-order region derived from Remark~\ref{remark:second-order-achievability-corner-point}. Around the corner point defined by the entropies $[H(X|Y), H(Y)]^T = [h(\beta), h( p)]^T$, we find that the bound from Remark~\ref{remark:second-order-achievability-corner-point} is tighter than that given by~\eqref{eqn:tilde_region-asymmetric}.
\begin{figure}
\centering
\includegraphics[width=.93\columnwidth]{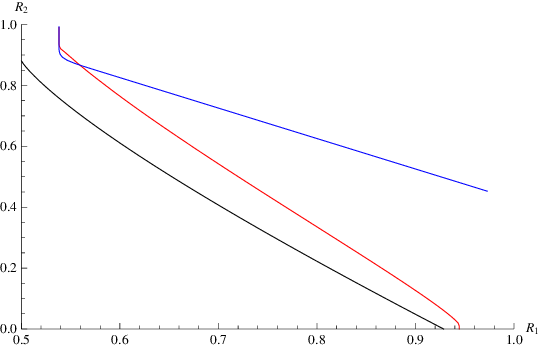} 
\caption{A comparison between $\tilde{\scR}_{\mathrm{in}}(n,\veps)$ (red solid curve) and the bound from Remark \ref{remark:second-order-achievability-corner-point} (blue solid curve) for $\veps=0.1$ and $n=1000$. The regions are to the top right of the curves. }
\label{fig:comp-asymmetric}
\end{figure} 
 

\subsection{Numerical Example for GP Problem}

In this section, we use an example to illustrate the inner bound on $(n,\varepsilon)$-optimal rate 
for the GP problem obtained in Theorem \ref{thm:gp_disp}. We do not consider cost constraints here, i.e., $\Gamma=\infty$.  We also neglect the small $O\left(\frac{\log n}{n}\right)$ term.
We consider the {\em memory with stuck-at faults} example \cite{heegard} (see also \cite[Example 7.3]{elgamal}). 
The state $S=0$ correspond to a faculty memory cell that output $0$ independent of the input value,
the state $S=1$ corresponds to a faculty memory cell that outputs $1$ independent of the input value,
and the state $S=2$ corresponds to a binary symmetric channel with crossover probability $\alpha$.
The probabilities of these states are $\frac{p}{2}$, $\frac{p}{2}$, and $1-p$ respectively. 

It is known \cite{heegard} that
the capacity is
\begin{equation} \label{gp-binary-capacity}
C_{\GP}^* = (1-p)(1 - h(\alpha)).
\end{equation}
The above capacity is attained by setting ${\cal U} = \{0,1\}$ and $P_{U|X}(0|0) = P_{U|S}(1|1) = 1-\alpha$,
$P_{U|S}(u|2) = \frac{1}{2}$, and $X = U$.
All the elements in the information dispersion matrix $\bV$ can be evaluated in closed form.
In Fig.~\ref{fig:gp-comp}, we plot the second-order capacity
\begin{align}
 \tilde{R}_{\GP}(n,\veps;p,\alpha) 
:=& (1-p)(1-h(\alpha)) \nn\\
&- \frac{1}{\sqrt{n}} \min\{ z_1 + z_2 : (z_1,z_2) \in \scS(\bV,\veps) \}.
\end{align}

For comparison, let us consider the case in which the decoder, instead of the encoder, can access the state $S$. In this case, we can regard $X$ as the 
channel input and $(S,Y)$ as the channel output. It is known \cite{heegard} that the capacity $C(W)$ of this channel is the same as \eqref{gp-binary-capacity}.
The dispersion $V$ can be evaluated in closed form by appealing to the law of total variance~\cite{ingber10}. In Fig.~\ref{fig:gp-comp}, we also plot the second order capacity
\begin{equation}
\tilde{C}(n,\varepsilon;p,\alpha) := (1-p)(1-h(\alpha)) - \sqrt{\frac{V}{n}} Q^{-1}(\varepsilon).
\end{equation}
From the figure, we can find that the lower bound $\tilde{R}_{\GP}(n,\veps;p,\alpha)$ on the GP $(n,\varepsilon)$-optimal rate
is smaller than the $(n,\varepsilon)$-optimal rate with decoder side-information
though the first order rates coincide.

\begin{figure}
\centering
\includegraphics[width=.93\columnwidth]{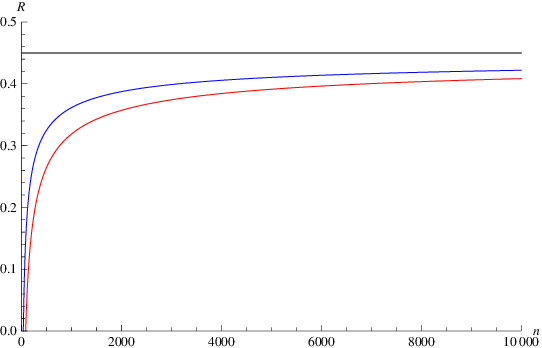} 
\caption{A comparison between $\tilde{R}_{\GP}(n,\veps;p,\alpha)$  (red solid line) and $\tilde{C}(n,\varepsilon;p,\alpha)$ (blue solid line) 
for $\veps=0.001$, $p=0.1$, and $\alpha = 0.11$. 
The black solid line is the first-order capacity~\eqref{gp-binary-capacity}.}
\label{fig:gp-comp}
\end{figure}

\section{Conclusion and Further Work} \label{sec:con}

\subsection{Summary}
In this paper, we proved   new non-asymptotic bounds on the error probability for side-information coding problems, including the WAK, WZ and GP problems.     These bounds then yield known general formulas as simple corollaries. In addition, we used these bounds to provide  achievable second-order coding rates for these three  side-information problems. We argued that when evaluated using i.i.d.\ test channels, the second-order rates evaluated using our non-asymptotic bounds are the best known in the literature  including~\cite{Ver12}. 

\subsection{Further Work on Non-Asymptotic and Second-Order Achievability Bounds }
Other challenging problems involving the derivation of non-asymptotic achievability bounds for multi-terminal problems include the Heegard-Berger \cite[Sec.~11.4]{elgamal}  problem,  multiple description coding \cite[Ch.~13]{elgamal}, Marton's inner bound for  the broadcast channel~\cite[Thm.~8.3]{elgamal},  and hypothesis testing with  multi-terminal data compression~\cite{Han87}.  Achievable second-order coding rate regions for some of these problems have been derived independently and concurrently by Yassaee-Aref-Gohari~\cite{YAG13a,YAG13b} using a completely different technique as discussed in the Introduction but it may be interesting to verify if the technique contained in this paper can be adapted to the above-mentioned coding problems.  

\subsection{Further Work on Non-Asymptotic and Second-Order Converse Bounds}
A natural question that arises from this work is whether one can derive non-asymptotic converse bounds that, when suitably specialized, coincide with the second-order achievability bounds in Section~\ref{sec:2nd}. Apart from the Slepian-Wolf problem~\cite{TK12, nomura} and the Gaussian MAC with degraded message sets \cite{ST13}, this has not been done for other problems in network information theory. Because second-order converse bounds imply the strong converse, it appears that first establishing a strong converse provides intuition for establishing   non-asymptotic converse bounds  that are   tight in the second-order sense after asymptotic evaluation. 

To the best of the authors' knowledge, there are only three approaches that may be used to obtain second-order converses for network problems whose first-order (capacity region) characterization involve  auxiliary random variables. The first is the information spectrum method. For example \cite[Lem.~2]{bouch00} provides a non-asymptotic converse bound for the asymmetric broadcast channel. However, the evaluation is not efficiently computable for large (or even moderate) $n$ as one has to perform an exhaustive search over the space of all $n$-letter auxiliary random variables (or equivalently $n$-letter joint distributions). The second is the entropy and image size characterization technique~\cite{Ahls76}  based on the blowing-up lemma~\cite[Ch.~5]{Csi97}.   This has been used to prove the strong converse for the WAK problem~\cite{Ahls76}  and the GP problem~\cite{tyagi}. However, the use of the blowing-up approach to obtain second-order converse bounds is not straightforward. The third method involves a non-standard change-of-measure argument and was used in the work of Kelly and Wagner~\cite[Thm.~2]{Kel12} to prove an upper bound on the error exponent for WAK coding. Again, it does not appear, at first glance, that this argument is amenable to second-order analysis.

\appendices

\section{Proof of Proposition \ref{proposition:cost-conversion} (Expurgated Code)} \label{proof:proposition:cost-conversion}
\begin{proof}
Let $x_0 \in \calX$ be a prescribed constant satisfying $\rvg(x_0) \le \Gamma$, and let $P_X^*$ be the distribution such that $P_X^*(x_0) = 1$, i.e., $P_X^*(x)=\bone [x=x_0 ]$.
Then, we define
\begin{align}
\tilde{P}_{X|MS}(x|m,s) :=& P_{X|MS}(x|m,s) \bone\left[ \rvg(x) \le \Gamma \right] \nn\\
& + P_{X|MS}\left(\calT_{\rmg}^{\GP}(\Gamma)^c|m,s\right) P_X^*(x).
\end{align}
Then, it is obvious that $\tilde{P}_X\left(\calT_{\rmg}^{\GP}(\Gamma)\right) = 1$. We also have
\begin{align}
& \tilde{P}_{MSXY\hatM}[m \neq \hatm] \nn \\
&= \sum_{\substack{m,\hatm \atop m \neq \hatm}} \sum_{s,x,y} P_M(m) P_S(s) \tilde{P}_{X|MS}(x|m,s) \nn\\
& ~~~~~~~~~~\times W(y|x,s) P_{\hatM|Y}(\hatm|y) \\
&= \sum_{\substack{   m,\hatm \\ m \ne \hatm }} \sum_{s,x,y} P_M(m) P_S(s) P_{X|MS}(x|m,s) \nn\\
& ~~~~~~~~~~\times W(y|x,s) P_{\hatM|Y}(\hatm|y) \bone\left[ \rvg(x) \le \Gamma \right]\nn \\
& \qquad + \sum_{\substack{  m,\hatm \\ m \ne \hatm }} \sum_{s,x,y} P_M(m) P_S(s) P_{X|MS}\left(\calT_{\rmg}^{\GP}(\Gamma)^c|m,s\right) \nn\\
& ~~~~~~~~~~\times P_X^*(x)
   W(y|x,s) P_{\hatM|Y}(\hatm|y) \\
&\le \sum_{\substack{  m,\hatm \\ m \ne \hatm }} \sum_{s,x,y} P_M(m) P_S(s) P_{X|MS}(x|m,s) \nn\\
& ~~~~~~~~~~\times W(y|x,s) P_{\hatM|Y}(\hatm|y) \bone\left[ \rvg(x) \le \Gamma \right]\nn \\
& \qquad+ \sum_{m,\hatm } \sum_{s,x,y} P_M(m) P_S(s) P_{X|MS}\left(\calT_{\rmg}^{\GP}(\Gamma)^c|m,s\right) \nn\\
& ~~~~~~~~~~\times P_X^*(x)
   W(y|x,s) P_{\hatM|Y}(\hatm|y) \\
&= \sum_{\substack{  m,\hatm \\ m \ne \hatm }} \sum_{s,x,y} P_M(m) P_S(s) P_{X|MS}(x|m,s) \nn\\
& ~~~~~~~~~~\times  W(y|x,s) P_{\hatM|Y}(\hatm|y) \bone\left[ \rvg(x) \le \Gamma \right]\nn \\
& \qquad + \sum_{m,s} P_M(m) P_S(s) P_{X|MS}\left(\calT_{\rmg}^{\GP}(\Gamma)^c|m,s\right) \\
&= P_{MSXY\hatM}\left[ \rvg(x) \le \Gamma \cap m \neq \hatm \right] + P_{MSXY\hatM}\left[ \rvg(x) > \Gamma \right] \\
&= P_{MSXY\hatM}\left[ \rvg(x) > \Gamma \cup m \neq \hatm \right]
\end{align}
as desired.
\end{proof}

\section{Channel Resolvability }\label{app:channel_res}
In this appendix, we review notations and known results for   channel
resolvability~\cite[Ch.\ 6]{Han10} \cite{HV93} \cite{Hayashi06} \cite{Cuff12}.

As a start, we first review the properties of the variational distance.  Let
$\scP'(\calU)$ be the set of all sub-normalized non-negative functions
(not necessarily probability distribution unless otherwise stated) on a finite
set $\calU$.  
Note that if $P\in\scP'(\calU)$ is normalized then $P\in\scP(\calU)$,
i.e., $P$ is a distribution on $\calU$.
For $P,Q\in\scP'(\calU)$, we define the variational
distance (divided by 2) as
\begin{align}
 d(P,Q)=\frac{1}{2}\sum_{u\in\calU}\lvert P(u)-Q(u)\rvert.
\end{align}
For two sets $\calU$ and $\calZ$, let $\scP'(\calZ|\calU)$
be the set of all sub-normalized non-negative functions indexed
by $u\in\calU$. When $W\in\scP'(\calZ|\calU)$
is normalized, it is a channel. In this section, we denote the joint distribution induced by $P\in\scP(\calU)$ and $W\in\scP'(\calZ|\calU)$ as $PW\in \scP'(\calU\times\calZ)$. 
 The following properties are useful in
the proof of theorems. Since the proofs are almost the same as well known properties of
the variational distance for normalized distributions, we omit the proofs.
\begin{lemma}
The variational distance satisfies the following properties.
\begin{enumerate}
 \item The monotonicity with respect to marginalization: For $P,Q\in\scP'(\calU)$ and $W,V\in\scP'(\calZ|\calU)$, 
let $P',Q'\in\scP'(\calZ)$ be
\begin{equation}
 P'(z):=\sum_{u\in\calU}P(u)W(z|u),~  Q'(z):=\sum_{u\in\calU}Q(u)V(z|u).
\end{equation}
Then,
\begin{equation}
 d(P',Q')\leq d(PW,QV).
\label{eqn:distance-property1}
\end{equation}
 \item The data-processing inequality: For $P,Q\in\scP'(\calU)$ and $W\in\scP'(\calZ|\calU)$, 
\begin{equation}
 d(PW,QW)\leq d(P,Q).
\label{eqn:distance-property2}
\end{equation}
In particular, when $W \in \scP(\calZ|\calU)$, the equality holds in \eqref{eqn:distance-property2}.
 \item For a distribution $P\in\scP(\calU)$, a sub-normalized measure
       $Q\in\scP'(\calU)$, and any subset $\Gamma\subset\calU$, 
\begin{equation}
 P(\Gamma)\leq Q(\Gamma)+d(P,Q)+\frac{1-Q(\calU)}{2}.
\label{eqn:distance-property3}
\end{equation}
\end{enumerate}
\end{lemma}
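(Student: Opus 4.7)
The plan is to prove each of the three claims independently by unfolding the definition of the variational distance and applying elementary identities; the measure-theoretic work is minimal because everything is over finite alphabets.

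For the monotonicity assertion \eqref{eqn:distance-property1}, I would write out
\begin{equation*}
d(P',Q')=\frac{1}{2}\sum_{z}\Bigl\lvert\sum_{u}\bigl(P(u)W(z|u)-Q(u)V(z|u)\bigr)\Bigr\rvert,
\end{equation*}
pull the absolute value inside the $u$-sum by the triangle inequality, and recognize the resulting double sum over $(u,z)$ as $d(PW,QV)$. The data-processing inequality \eqref{eqn:distance-property2} is obtained by the same manipulation, specialized to $V=W$: the common factor $W(z|u)$ comes out of $\lvert P(u)-Q(u)\rvert$, and summing over $z$ produces $\sum_{z}W(z|u)\le 1$ because $W$ is sub-normalized, which collapses the expression to $d(P,Q)$.

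The third bound \eqref{eqn:distance-property3} is the only step requiring genuine care, since $Q$ carries mass $Q(\calU)\le 1$. The plan is to split the alphabet as $A:=\{u\in\calU:P(u)\ge Q(u)\}$ and $A^{c}$, and to use the two identities
\begin{equation*}
\sum_{u}\bigl(P(u)-Q(u)\bigr)=1-Q(\calU),\qquad 2d(P,Q)=\sum_{u\in A}\bigl(P(u)-Q(u)\bigr)+\sum_{u\in A^{c}}\bigl(Q(u)-P(u)\bigr).
\end{equation*}
Adding the first identity (rewritten as a difference of the same two sums) to the second yields $\sum_{u\in A}(P(u)-Q(u))=d(P,Q)+\tfrac{1-Q(\calU)}{2}$. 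Since $P(\Gamma)-Q(\Gamma)=\sum_{u\in\Gamma}(P(u)-Q(u))$ and the contribution from $\Gamma\cap A^{c}$ is non-positive, restricting the sum to $\Gamma\cap A\subset A$ gives the stated inequality.

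The main (and minor) obstacle is the bookkeeping in the third part: one must keep straight that the mass defect $1-Q(\calU)$ appears as an additive contribution beyond $d(P,Q)$ because, unlike in the case of two probability distributions, the sum $\sum_{u}(P(u)-Q(u))$ need not vanish. Once the signed decomposition on $A$ versus $A^{c}$ is written down, the bound follows by inspection.
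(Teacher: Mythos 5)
Your proposal is correct and matches the paper's argument essentially line for line: parts (1) and (2) are the same triangle-inequality and factor-out-$W$ manipulations, and part (3) uses the same decomposition set (your $A$ is the paper's $\calU_P=\{u:P(u)\ge Q(u)\}$) to arrive at the identity $\sum_{u\in A}(P(u)-Q(u))=d(P,Q)+\tfrac{1-Q(\calU)}{2}$ before restricting to $\Gamma$. The only difference is cosmetic: you isolate the two signed identities explicitly and add them, whereas the paper carries out the same telescoping inline.
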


\begin{remark}
Combining \eqref{eqn:distance-property1} for $V=W$ and
 \eqref{eqn:distance-property2}, we have
\begin{equation}
 d(P',Q')\leq d(P,Q).
\label{eqn:distance-property4}
\end{equation}
Although the above inequality is usually referred as the data-processing
 inequality, we will use  \eqref{eqn:distance-property2} in the proofs
 of non-asymptotic bounds.
\end{remark}



Next, we introduce the concept of \emph{smoothing} of a distribution \cite{Renner05}.
For a distribution $P\in\scP(\calU)$ and a subset $\calT\subset\calU$,
a smoothed sub-normalized function $\barP$ of $P$ is derived by
\begin{equation}
 \barP(u) := P(u)\bone[u\in\calT].
\end{equation}
Note that 
the distance between the original distribution and a smoothed one is 
\begin{equation}
 d(P,\barP)=\frac{P(\calT^c)}{2}.
\label{eqn:distance-of-smoothing}
\end{equation}
Similarly, for a channel $W\colon\calU\to\calZ$ and a subset $\calT\subset\calU\times\calZ$, a
smoothed one $\barW\in\scP'(\calZ|\calU)$ is derived by 
\begin{equation}
 \barW(z|u) := W(z|u)\bone[(u,z)\in\calT]
\end{equation}
and it satisfies
\begin{equation}
 d(PW,P\barW)=\frac{PW(\calT^c)}{2},
\label{eqn:distance-of-smoothing2}
\end{equation}
where $PW \in \scP(\calU\times\calZ)$ is the joint distribution induced by $P$ and $W$. 

Now, we consider the problem of channel resolvability.
Let a channel $P_{Z|U}:\calU\to\calZ$ and an input distribution $P_U$ be given.
We would like to approximate the output distribution 
\begin{equation}
 P_Z(z)=\sum_{u\in\calU} P_U(u)P_{Z|U}(z|u)
\end{equation}
by using $P_{Z|U}$ and as small an amount of randomness as possible. 
This is done by 
means of a designing a deterministic map from a finite set $\calI$ to a codebook $\calC=\{u_i\}_{i\in\calI} \subset\calU$.  
For a given resolvability code $\calC$, let
\begin{equation}
 P_{\tilZ}(z)=\sum_{i\in\calI}\frac{1}{\sizeI}P_{Z|U}(z|u_i) \label{eqn:deftilZ}
\end{equation}
be the simulated output distribution. The approximation error is
evaluated by the distance $d(P_{\tilZ},P_Z)$.

We consider using the random coding technique as follows. We randomly and independently generate  codewords
$u_1,u_2,\dots,u_{\sizeI}$ according to $P_U$.
To derive an
upper bound on the averaged approximation error 
$\bbE_{\calC}\left[d(P_{\tilZ},P_Z)\right]$, it is convenient to consider a smoothing operation defined as follows.
For the set
\begin{equation}
 \calT_{\rmc}(\gamma_\rmc):=\left\{
(u,z):\log\frac{P_{Z|U}(z|u)}{P_Z(z)}\leq \gamma_\rmc
\right\},
\label{eqn:calT-for-smoothing}
\end{equation}
let
\begin{equation}
 \barP_{Z|U}(z|u):=P_{Z|U}(z|u)\bone[(u,z)\in\calT_\rmc(\gamma_\rmc)].
\label{eqn:smooth-P_Z|U}
\end{equation}
Moreover, for fixed resolvability code
$\calC=\{u_1,\dots,u_{\sizeI}\}$, let
\begin{equation}
 \barP_{\tilZ}(z):=\sum_{i\in\calI}\frac{1}{\sizeI}\barP_{Z|U}(z|u_i).
\end{equation}
Then, we have the following lemma known as {\em soft covering}, which is an improvement of \cite[Lemma 2]{Hayashi06}.

\begin{lemma}
[Corollary 7.2 of \cite{Cuff12}]\label{lem:resolvability}
 For any $\gamma_\rmc\ge0$, we have
\begin{align}
\bbE_{\calC}\left[d(\barP_{\tilZ},\barP_Z)\right]
  \leq \frac{\Delta(\gamma_\rmc,P_{UZ})}{2 \sqrt{\sizeI}} \label{eqn:resolv1} 
\end{align}
where $\barP_Z(z) = \sum_u P_U(u) \barP_{Z|U}(z|u)$.
\end{lemma}

\begin{remark}
Although the statement of \cite[Corollary 7.2]{Cuff12} consists of two terms, 
the second term corresponds to the right hand side of \eqref{eqn:resolv1}.
Since our target distribution $\barP_Z$ is smoothed, the first term of \cite[Corollary 7.2]{Cuff12} does not appear in \eqref{eqn:resolv1}.
\end{remark}


\section{Simulation of Test Channel}\label{app:sim}
In this appendix, we develop two lemmas which form crucial components of
the proof of all CS-type bounds.  To do this, we consider the problem related to
channel simulation~\cite{Luo09 ,BSST, BSST02 , Win02, Cuff12}.  Roughly
speaking, the problem is described as follows. 
For a given message set $\calL$ and a code $\calC = \{u_1,\ldots,u_{\sizeL}\}$, our goal is
to construct a stochastic map $\varphi:\calZ \to \calL$ such that the joint distribution $P_{\hatL Z}$
of $(\varphi(Z),Z)$ is indistinguishable from $P_{L\tilZ}$, where $P_{L\tilZ}$ is the joint distribution such that
$u_L$ is sent over the channel $P_{Z|U}$ for the uniform random number $L$ on $\calL$.
This is done by the argument of the {\em likelihood encoder} \cite{Cuff12} (see also \cite{CS13}). However, we need to modify 
the argument in \cite{Cuff12} since our goal is, in fact, to approximate a smoothed version of $P_{L\tilZ}$.
We will use notations introduced in Appendix \ref{app:channel_res}.

\begin{remark} \label{remark:channel-simulation}
In the earlier version of this paper \cite{WKT13b}, we were considering exactly the problem of channel simulation, where we simulate the joint distribution $P_{UZ}$ by the aid of the common randomness. However, simulating the marginal $P_U$ is unnecessary to derive bounds on WAK, WZ, and GP problems. Thus, we consider approximation of $P_{L\tilZ}$ in this paper, which enables us
to remove a residual term in \cite{WKT13b} that stems from the use of the common randomness.
\end{remark}

To construct a stochastic map from $\calZ$ to $\calL$, 
we first consider the channel resolvability code as follows. 
Let us generate a codebook $\calC=\{u_{1},\dots,u_{\sizeL}\}$, where
each codeword $u_{l}$ is randomly and independently generated from
$P_U$, which is the marginal of $P_{UZ}$.
Let $L$ be the uniform random numbers on $\calL$.
Moreover, let $\barP_{Z|U}$ be a smoothed version of $P_{Z|U}$ defined in 
\eqref{eqn:smooth-P_Z|U}.
Then, $\calC$, $L$, and $\barP_{Z|U}$ induce the sub-normalized
measure
\begin{equation}
 \barP_{L\tilZ}(l,z) := \frac{1}{\sizeL} \barP_{Z|U}(z|u_l).
\end{equation}
Marginal 
$\barP_{\tilZ}$ is also induced as
\begin{align}
 \barP_{\tilZ}(z) = \sum_{l}\frac{1}{\sizeL}\barP_{Z|U}(z|u_l).
\end{align}

Now, we define a stochastic map
$\varphi_\calC\colon\calZ\to\calL$ as\footnote{When $\barP_{\tilZ}(z) = 0$, we define $\varphi_\calC(l|z)$ arbitrarily. }
\begin{equation}
 \varphi_\calC(l|z)=\frac{\barP_{L\tilZ}(l,z)}{\barP_{\tilZ}(z)}.
\label{eqn:varphi_calC}
\end{equation}
Let $\hatL$ be the output of the stochastic map $\varphi_\calC$ for the input $Z$. Then, 
the joint distribution of $\hatL$ and $Z$ is given by
\begin{equation}
 P_{\hatL Z}(l,z) = P_Z(z) \varphi_\calC(l|z).
\label{eqn:sim-joint-distrib}
\end{equation}

We also introduce a smoothed version of 
$P_{\hatL Z}$ as follows:
\begin{equation}
 \barP_{\hatL Z}(l,z) = \barP_Z(z) \varphi_\calC(l|z),
\label{eqn:sim-joint-distrib-smoothed}
\end{equation}
where $\barP_Z$ is the marginal of $\barP_{UZ}:=P_U\barP_{Z|U}$; i.e.~$\barP_{Z}(z):=\sum_uP_U(u)\barP_{Z|U}(z|u)$.

Now, we prove two lemmas which can be used to evaluate the performance of
the approximation of $\barP_{L\tilZ}$.

\begin{lemma}
\label{lem:sim1}
We have 
\begin{equation}
d(P_{\hatL Z}, \barP_{L \tilZ}) \leq \frac{ P_{UZ}( (u,z)\notin \calT_{\rmc}(\gamma_{\rmc}) ) }{2}+ d(\barP_{\hatL Z}, \barP_{L \tilZ}).
\end{equation}
\end{lemma}

\begin{IEEEproof}
By the triangular inequality, we have
\begin{equation}
 d(P_{\hatL Z}, \barP_{L \tilZ}) \leq d(P_{\hatL Z}, \barP_{\hatL Z})+d(\barP_{\hatL Z}, \barP_{L \tilZ}).
\end{equation}
Further, we can bound the first term of the right hand side of the above
 inequality as
\begin{align}
 d(P_{\hatL Z}, \barP_{\hatL Z})
&=d(P_Z \varphi_\calC, \barP_Z \varphi_\calC)\\
&= d(P_Z,\barP_Z)\label{eqn:tmp1-proof-lem-sim1}\\
&\leq d(P_{UZ},\barP_{UZ})\label{eqn:tmp2-proof-lem-sim1}\\
&=\frac{ P_{UZ}( (u,z)\in \calT_{\rmc}(\gamma_{\rmc})^c ) }{2}\label{eqn:tmp3-proof-lem-sim1}
\end{align}
where \eqref{eqn:tmp1-proof-lem-sim1} follows 
the data-processing inequality \eqref{eqn:distance-property2}, \eqref{eqn:tmp2-proof-lem-sim1} follows from the monotonicity property in
 \eqref{eqn:distance-property1}, and 
\eqref{eqn:tmp3-proof-lem-sim1} 
follows from \eqref{eqn:distance-of-smoothing2}.
\end{IEEEproof}

\begin{lemma}
\label{lem:sim2}
We have 
\begin{align}
\bbE_{\calC}[d(\barP_{\hatL Z}, \barP_{L \tilZ})] 
 \leq \frac{\Delta(\gamma_{\rmc}, P_{UZ} )}{2 \sqrt{\sizeL}}. 
\label{eqn:sim2}
\end{align}
\end{lemma}
\begin{IEEEproof}
By noting that the definition of $\varphi_{\calC}$ in \eqref{eqn:varphi_calC} implies 
$\barP_{L\tilZ} = \barP_{\tilZ} \varphi_{\calC}$, we have
\begin{align}
d(\barP_{\hat{L}Z},\barP_{L\tilZ})
&= d(\barP_Z \varphi_{\calC}, \barP_{\tilZ} \varphi_{\calC}) \\
&= d(\barP_Z, \barP_{\tilZ}).
\end{align}
Then, by taking the expectation with respect to the codebook $\calC$ and by invoking 
Lemma \ref{lem:resolvability}, we have the desired bound.
\end{IEEEproof}


\section{Proof of the First Non-Asymptotic Bound for    WAK   in Theorem~\ref{thm:dt}}\label{app:dt_wz}
\subsection{Code Construction}
We construct a WAK code by using the stochastic map introduced in
Appendix \ref{app:sim}.  Let
$\calZ=\calY$ and $Z=Y$, that is, let $P_{UZ}=P_{UY}$, where $P_{UY}$ is
the marginal of the given distribution $P_{UXY}\in\scP(P_{XY})$. Also let   $\tilZ=\tilY$  per~\eqref{eqn:deftilZ}.  It
should be noted here that, in this case, $\calT_\rmc(\gamma_\rmc)$
defined in \eqref{eqn:calT-for-smoothing} is equivalent to
$\calT_\rmc^\WAK(\gamma_\rmc)$ defined in \eqref{eqn:T2}.  Now, let us
consider the stochastic map $\varphi_\calC$ constructed from the smoothed measure $\barP_{L\tilY}$
(cf.~\eqref{eqn:varphi_calC}).

By using $\varphi_\calC$, we construct a WAK code $\Phi$ as follows.  
The main encoder uses a random bin
coding $f\colon\calX\to\calM$.  The helper uses the stochastic map
$\varphi_\calC\colon\calY\to\calL$.  That is, 
when the side
information is $y\in\calY$,
the helper generates
$l\in\calL$ according to $\varphi_\calC(\fndot|y)$ and sends $l$ to the decoder.
For given $m\in\calM$ and 
$l\in\calL$, the decoder
outputs the unique $\hatx\in\calX$ such that $f(\hatx)=m$ and
\begin{equation}
 (u_{l},\hatx)\in\calT_\rmb^\WAK(\gamma_\rmb).
\end{equation}
If no such unique $\hatx$ exists, or if there is more than one such
$\hatx$, then a decoding error is declared.

\subsection{Analysis of Error Probability}
Let $\hatL$ be the random index chosen by the helper via the stochastic
map $\varphi_\calC(\fndot|Y)$.
Note that the joint distribution of $\hatL$ and $Y$ is given as
follows; cf.~\eqref{eqn:sim-joint-distrib}
\begin{equation}
  P_{\hatL Y}(l,y) = P_Y(y)\varphi_\calC(l|y)
\label{eqn:proof-WAK-joint-distrib}
\end{equation}
and then, the joint distribution of $\hatL,Y$ and $X$ is given as
\begin{equation}
  P_{\hatL XY}(l,x,y)=P_{\hatL Y}(l,y)P_{X|Y}(x|y).
\end{equation}
The smoothed versions $\barP_{\hatL Y}$ and $\barP_{\hatL XY}$ are given by substituting $P_Y$ in
\eqref{eqn:proof-WAK-joint-distrib} with $\barP_Y$;
cf.~\eqref{eqn:sim-joint-distrib-smoothed}.

If the decoding error occurs, at least one of the following events occurs:
\begin{align*}
 \calE_1&:=\left\{ (u_{l},x)\notin\calT_\rmb^\WAK(\gamma_\rmb) \right\}\\
 \calE_2&:=\left\{  \exists\,\tilx\neq x\text{ s.t. }f(\tilx)=f(x),(u_{l},\tilx)\in\calT_\rmb^\WAK(\gamma_\rmb)\right\}
\end{align*}
Hence, the error probability averaged over random coding $f$ and
the random codebook $\calC$ can be bounded as
\begin{align}
 \bbE_{f}\bbE_\calC[\Pe(\Phi)]
&= \bbE_{f}\bbE_\calC\left[
P_{\hatL XY}(\calE_1\cup\calE_2)
\right]\label{eqn:tmp1-proof-WAK}.
\end{align}

Let
\begin{align}
 \calE_{12}:=& \Big\{
(u,x):(u,x)\notin\calT_\rmb^\WAK(\gamma_\rmb) \text{ or }
\exists\,\tilx\neq x \nn\\
&~~~\text{ s.t. }f(\tilx)=f(x),(u,\tilx)\in\calT_\rmb^\WAK(\gamma_\rmb)
\Big\}.
\end{align}
Then, for fixed $f$ and $\calC$, we have
\begin{align} 
&P_{\hatL XY}(\calE_1\cup\calE_2) \nn\\
&= P_{\hatL XY}((u_l,x) \in \calE_{12}) \\
&\leq \barP_{LX\tilY}((u_l,x) \in \calE_{12}) + \frac{1 - \barP_{LX\tilY}(\calL \times \calX \times \calY)}{2} \nn\\
&~~~~~~	+ d(P_{\hatL XY}, \barP_{LX\tilY}) \label{eqn:tmp2-proof-WAK} \\
&= \barP_{LX\tilY}((u_l,x) \in \calE_{12}) + \frac{1 - \barP_{LX\tilY}(\calL \times \calX \times \calY)}{2} \nn\\
&~~~~~~	+ d(P_{\hatL Y} P_{X|Y}, \barP_{L\tilY} P_{X|Y}) \\
&\leq \barP_{LX\tilY}((u_l,x) \in \calE_{12}) + \frac{1 - \barP_{LX\tilY}(\calL \times \calX \times \calY)}{2} \nn\\
&~~~~~~	+ d(P_{\hatL Y}, \barP_{L\tilY}) \label{eqn:tmp3-proof-WAK} \\
&\leq \barP_{LX\tilY}((u_l,x) \notin \calT_\rmb^\WAK(\gamma_\rmb)) \nn\\
&~~~ + \barP_{LX\tilY}[ \exists\,\tilx\neq x\text{ s.t. }f(\tilx)=f(x),(u_l,\tilx)\in\calT_\rmb^\WAK(\gamma_\rmb) ] \nn\\
&~~~+\frac{1 - \barP_{LX\tilY}(\calL \times \calX \times \calY)}{2} 
	+ d(P_{\hatL Y}, \barP_{L\tilY}) \\
&= P_{LX\tilY}((u_l,x) \notin \calT_\rmb^\WAK(\gamma_\rmb) \cap (u_l,y) \in \calT_\rmc^\WAK(\gamma_\rmc)) \nn\\
&~~~ + \barP_{LX\tilY}[ \exists\,\tilx\neq x\text{ s.t. }f(\tilx)=f(x),(u_l,\tilx)\in\calT_\rmb(\gamma_\rmb) ] \nn\\
&~~~+\frac{1 - \barP_{LX\tilY}(\calL \times \calX \times \calY)}{2} 
	+ d(P_{\hatL Y}, \barP_{L\tilY})
	\label{eqn:tmp5-proof-WAK}
\end{align}
where 
\eqref{eqn:tmp2-proof-WAK} follows from \eqref{eqn:distance-property3} for 
$\barP_{LX\tilY} = \barP_{L\tilY} P_{X|Y}$ in the role of $Q$, and
\eqref{eqn:tmp3-proof-WAK} follows from the data-processing inequality \eqref{eqn:distance-property2}.
By taking average over $\calC$, the first term in \eqref{eqn:tmp5-proof-WAK} is given by
\begin{align}
& \bbE_\calC\left[ P_{LX\tilY}((u_l,x) \notin \calT_\rmb^\WAK(\gamma_\rmb) \cap (u_l,y) \in \calT_\rmc^\WAK(\gamma_\rmc)) \right] \nn \\
&= \bbE_\calC\bigg[ \sum_{u,x,y} \sum_{l} \frac{1}{\sizeL} \bone[u_l = u] P_{Y|U}(y|u) P_{X|Y}(x|y) \nn\\
&~~~\times	\bone[(u,x) \notin \calT_\rmb^\WAK(\gamma_\rmb) \cap (u,y) \in \calT_\rmc^\WAK(\gamma_\rmc)] \bigg] \\
&= P_{UXY}((u,x) \notin \calT_\rmb^\WAK(\gamma_\rmb) \cap (u,y) \in \calT_\rmc^\WAK(\gamma_\rmc)),
\label{eqn:tmp51-proof-WAK}
\end{align}
the third term in 
\eqref{eqn:tmp5-proof-WAK} is given by
\begin{align}
& \bbE_\calC\left[ 1 - \barP_{LX\tilY}(\calL \times \calX \times \calY) \right]  \nn \\
&= 1 - \bbE_\calC\bigg[ \sum_{u,x,y} \sum_l \frac{1}{\sizeL} \bone[u_l = u] P_{Y|U}(y|u)P_{X|Y}(x|y) \nn\\
&~~~\times 	\bone[ (u,y) \in \calT_\rmc^\WAK(\gamma_\rmc) ] \bigg] \\
&= P_{UY}((u,y) \notin \calT_\rmc^\WAK(\gamma_\rmc)),
\label{eqn:tmp52-proof-WAK}
\end{align}
and the fourth term in 
\eqref{eqn:tmp5-proof-WAK} is upper bounded as 
\begin{align}
\bbE_\calC\left[ d(P_{\hatL Y}, \barP_{L\tilY}) \right]
 \leq& \frac{P_{UY}((u,y) \notin \calT_\rmc^\WAK(\gamma_\rmc))}{2} \nn\\
 &+ \frac{\Delta(\gamma_{\rmc}, P_{UY} )}{2 \sqrt{\sizeL}},
 \label{eqn:tmp53-proof-WAK}
\end{align}
where we used Lemma \ref{lem:sim1} and Lemma \ref{lem:sim2}.
Furthermore, by taking average over $f$ and $\calC$, the second term in \eqref{eqn:tmp5-proof-WAK} is upper bounded as 
\begin{align}
& \bbE_f\bbE_\calC
\Big[
\barP_{LXY}[\exists\,\tilx\neq x\text{ s.t. } \nn \\
& ~~~~~~~~~~~f(\tilx)=f(x),(u_l,\tilx)\in\calT_\rmb^\WAK(\gamma_\rmb)]
\Big] \nn\\
&= \bbE_f\bbE_\calC
\bigg[ \sum_{u,x,y} \sum_l \frac{1}{\sizeL} \bone[u_l = u] \barP_{Y|U}(y|u) P_{X|Y}(x|y) \nn\\
&~~~\times	\bone[\exists\,\tilx\neq x\text{ s.t. }f(\tilx)=f(x),(u,\tilx)\in\calT_\rmb^\WAK(\gamma_\rmb)] \bigg] \\
&= \bbE_f 
\bigg[ \sum_{u,x,y} \barP_{UXY}(u,x,y) \nn\\
&~~~\times	\bone[\exists\,\tilx\neq x\text{ s.t. }f(\tilx)=f(x),(u,\tilx)\in\calT_\rmb^\WAK(\gamma_\rmb)] \bigg] \\
&\leq \sum_{u,x,y}\barP_{UXY}(u,x,y) \nn\\
&~~~\times \sum_{\tilx\neq x}\bbE_f[\bone[f(\tilx)=f(x)]]\bone[(u,\tilx)\in\calT_\rmb^\WAK(\gamma_\rmb)]\\
&\leq
 \frac{1}{\sizeM}\sum_uP_U(u)\sum_{\tilx}  \bone[(u,\tilx)\in\calT_\rmb^\WAK(\gamma_\rmb)] \label{eqn:tmp65-proof-WAK}\\
&=\frac{1}{\sizeM}\sum_{(u,\tilx)\in\calT_\rmb^\WAK(\gamma_\rmb)}P_U(u)
\label{eqn:tmp7-proof-WAK}
\end{align}
where we used the fact $\sum_{x,y}\barP_{UXY}(u,x,y)\leq P_U(u)$ in \eqref{eqn:tmp65-proof-WAK}. 
Hence, by \eqref{eqn:tmp5-proof-WAK},
\eqref{eqn:tmp51-proof-WAK}, \eqref{eqn:tmp52-proof-WAK}, \eqref{eqn:tmp53-proof-WAK},
and \eqref{eqn:tmp7-proof-WAK}, we
have
\begin{align}
& \bbE_{f}\bbE_\calC[\Pe(\Phi)] \nn\\
&=\bbE_f\bbE_\calC\left[P_{K\hatL\hatU XY}(\calE_1\cup\calE_2)\right]\\
&\leq P_{UXY}((u,x)\notin\calT_\rmb^\WAK(\gamma_\rmb)\cup (u,y)\notin\calT_\rmc^\WAK(\gamma_\rmc))\nn\\
&\qquad + 
\frac{\Delta(\gamma_{\rmc}, P_{UY} )}{2 \sqrt{\sizeL}}  +\frac{1}{\sizeM}\sum_{(u,\tilx)\in\calT_\rmb^\WAK(\gamma_\rmb)}P_U(u).
\label{eqn:tmp10-proof-WAK}
\end{align}
Consequently, there exists at least one code $(f,\calC)$ such that 
$\Pe(\Phi)$ is smaller than the right-hand-side of the inequality above.
This completes the proof of Theorem~\ref{thm:dt}.

\section{Proof of the Second Non-Asymptotic Bound for    WAK   in Theorem~\ref{thm:dt-helper-binning}}\label{app:dt_wz-helper-binning}
To prove Theorem~\ref{thm:dt-helper-binning}, we modify the proof of 
Theorem~\ref{thm:dt} as follows. Since the analysis of error can be done in a similar manner as 
Appendix \ref{app:dt_wz}, we only show the code construction.

First, we use $\calJ=\{1,\dots,J\}$ instead of $\calL$ in the construction of 
    $\varphi_\calC$,  where $J$ is the given integer.  
Then, the helper and the decoder are modified as follows.
The helper first uses the stochastic map
$\varphi_\calC\colon\calY\to\calJ$.  That is, it generates
$j\in\calJ$ according to $\varphi_\calC(\fndot|y)$ when the side
information is $y\in\calY$.
Then, the helper sends $j$ by using random bin coding
$\bin\colon\calJ\to\calL$. This means that to every $j\in\calJ$, it  independently and uniformly assigns a random  index $l\in\calL$.  For given $m\in\calM$ and 
$l\in\calL$, the decoder
outputs the unique $\hatx\in\calX$ such that $f(\hatx)=m$ and
\begin{equation}
 (u_{j},\hatx)\in\calT_\rmb^\WAK(\gamma_\rmb)
\end{equation}
for some $j\in\calJ$ satisfying $\bin(j)=l$.
If no such unique $\hatx$ exists, or if there is more than one such
$\hatx$, then  a decoding error is declared.

\section{Proof of the Non-Asymptotic Bound for    WZ   in Theorem~\ref{thm:cs_wz}}\label{app:cs_wz}
\subsection{Code Construction}
Similar to WAK coding in the previous two sections, we use the stochastic map introduced in Appendix \ref{app:sim}.  Also, the proof is rather similar to the WAK one  so we just highlight the key steps, pointing the reader to various points of Appendix~\ref{app:dt_wz} for the details of the calculations. 

In WZ coding, let $\calZ=\calX$ and $P_{UZ}=P_{UX}$.  Also let  $\tilZ=\tilX$  per~\eqref{eqn:deftilZ}.  
Note   that  $\calT_\rmc(\gamma_\rmc)$
defined in \eqref{eqn:calT-for-smoothing} is equivalent  to
$\calT_\rmc^\WZ(\gamma_\rmc)$ defined in \eqref{eqn:T2_wz}.
Now, let us consider the stochastic map $\varphi_\calC$ defined in~\eqref{eqn:varphi_calC}.

By using $\varphi_\calC$, we construct a WZ code $\Phi$ as follows.  
The encoder first uses the stochastic map
$\varphi_\calC\colon \calX\to\calL$.  That is, it generates
$l\in\calL$ according to $\varphi_\calC(\fndot|x)$ when the source
output is $x\in\calX$. 
Then, the encoder sends $l$ by using random bin coding
$\bin\colon\calL\to\calM$. This means that to every $l\in\calL$, it  independently and uniformly assigns a random  index $m\in\calM$.    For given $m\in\calM$, $y\in\calY$, the decoder
finds the unique index $l\in\calL$ such that $\bin(l)=m$ and
\begin{equation}
 (u_{l},y)\in\calT_\rmp^\WZ(\gamma_\rmp). \label{eqn:wz_pack}
\end{equation}
Then, decoder outputs $\hatx\in\hcalX$ according to $P_{\hatX|UY}(\fndot|u_{l},y)$. We assume that we use the {\em stochastic} reproduction function  $P_{\hatX|UY}$ throughout. If the {\em deterministic} reproduction function $g:\calU\times\calY\to\hcalX$ is used, the decoder outputs $\hatx=g(u_{l},y)$.
If no unique $l$ satisfying \eqref{eqn:wz_pack} exists, or if there is more than one such
$l$ satisfying~\eqref{eqn:wz_pack}, then a decoding error is declared.

\subsection{Analysis of Probability of Excess Distortion}
Let $\hatL$ be the random index chosen by the encoder via the stochastic
map $\varphi_\calC(\fndot| X)$.
Note that the joint distribution of $\hatL ,X$ is given as
follows; cf.~\eqref{eqn:sim-joint-distrib}
\begin{equation}
P_{\hatL X} (l,x)= P_{X}(x) \varphi_{\calC}(l|x).
\label{eqn:proof-WZ-joint-distrib}
\end{equation}
Next,  the joint distribution of $\hatL,X,Y,\hatX$ is given as
\begin{equation}
P_{\hatL X Y \hatX}(l,x,y,\hatx) = P_{\hatL X}(l,x) P_{Y|X}(y|x) P_{\hatX | U Y} (\hatx | u_{l} ,y).
\end{equation}
The smoothed versions $\barP_{\hatL X}$ and $\barP_{\hatL XY\hatX}$ are given by substituting $P_X$ in
\eqref{eqn:proof-WZ-joint-distrib} with $\barP_X$;
cf.~\eqref{eqn:sim-joint-distrib-smoothed}.

If the distortion exceeds $D$, 
at least one of the following events occurs:
\begin{align}
 \calE_0&:=\left\{  (x,\hatx)\notin\calTsWZ(D) \right\}\\
 \calE_1&:=\left\{  (u_{l},y)\notin\calT_\rmp^\WZ(\gamma_\rmp)\right\}\\
 \calE_2&:=\left\{  \exists\,\till\neq l\text{ s.t. }\bin(\till)=\bin(l),(u_{\till},y)\in\calT_\rmp^\WZ(\gamma_\rmp)\right\}.
\end{align}
Hence, the   probability  of excess distortion averaged over the random coding $\bin$ and the
random codebook $\calC$ can be bounded as
\begin{align}
& \bbE_{\bin}\bbE_\calC [\Pe(\Phi;D)] \nn\\
&\leq
\bbE_{\bin} \bbE_\calC  \left[ P_{\hatL X Y \hatX} (\calE_0\cup\calE_1\cup\calE_2) \right] \\
& \le  \bbE_\calC \left[ P_{\hatL X Y \hatX} (\calE_0\cup\calE_1) \right] + \bbE_{\bin} \bbE_\calC  \left[P_{\hatL X Y  } ( \calE_2)\right] 
\label{eqn:tmp1-proof-WZ}.
\end{align}

At first, we evaluate the first term in \eqref{eqn:tmp1-proof-WZ}.
For fixed $\calC$,
\begin{align}
& P_{\hatL X Y \hatX} (\calE_0\cup\calE_1) \nn\\
&\le   \barP_{L \tilX Y \hatX} (\calE_0\cup\calE_1)  + \frac{1-\barP_{L \tilX Y \hatX}(\calL\times\calX\times\calY\times\hat{\calX})}{2} \nn\\
&~~~+ d(P_{\hatL X Y \hatX},\barP_{L \tilX Y \hatX}) \label{eqn:tmp2-proof-WZ} \\
 &\le   \barP_{L \tilX Y \hatX} (\calE_0\cup\calE_1)  + \frac{1-\barP_{L \tilX Y \hatX}(\calL\times\calX\times\calY\times\hat{\calX})}{2} \nn\\
 &~~~+ d(P_{\hatL X  },\barP_{L \tilX  }) \label{eqn:tmp3-proof-WZ}  \\
 &=P_{LXY\hatX}((u_l,x)\in\calT_\rmc^\WZ(\gamma_\rmc)\cup (x,\hatx)\notin\calTsWZ(D) \nn\\
 &~~~~~~~~~~~~~~~~ \cup (u_l,y)\notin\calT_\rmp^\WZ(\gamma_\rmp) )\nn\\
 &~~~ + \frac{1-\barP_{L \tilX Y \hatX}(\calL\times\calX\times\calY\times\hat{\calX})}{2} + d(P_{\hatL X  },\barP_{L \tilX  }) \label{eqn:tmp4-proof-WZ}  
\end{align}
where  \eqref{eqn:tmp2-proof-WZ} follows from \eqref{eqn:distance-property3}, \eqref{eqn:tmp3-proof-WZ} follows from the same reasoning that led to \eqref{eqn:tmp3-proof-WAK} and \eqref{eqn:tmp4-proof-WZ} from the same reasoning that led to \eqref{eqn:tmp5-proof-WAK}.

By the same reasoning that led to \eqref{eqn:tmp51-proof-WAK} for the WAK problem, the expectation of the first term in \eqref{eqn:tmp4-proof-WZ} can be expressed as 
\begin{align}
&\bbE_\calC\Big[P_{LXY\hatX}((u_l,x)\in\calT_\rmc^\WZ(\gamma_\rmc)\cup (x,\hatx)\notin\calTsWZ(D) \nn\\
&~~~~~~~~~~~~~~~\cup (u,y)\notin\calT_\rmp^\WZ(\gamma_\rmp) ) \Big]  \\
&= P_{UX Y \hatX} ((u,x)\in\calT_\rmc^\WZ(\gamma_\rmc)\cup (x,\hatx)\notin\calTsWZ(D) \nn\\
&~~~~~~~~~~~~~~~ \cup (u,y)\notin\calT_\rmp^\WZ(\gamma_\rmp) )  \label{eqn:first_WZ_proof}
\end{align}
By the same reasoning that led to \eqref{eqn:tmp52-proof-WAK} for the WAK problem, the expectation of the  second term in \eqref{eqn:tmp4-proof-WZ}   can be evaluated  as 
\begin{equation}
\bbE_\calC [ 1- \barP_{L \tilX Y \hatX}(\calL\times\calX\times\calY\times\hat{\calX}) ] = P_{UX}( (u,x) \notin \calT_\rmc^\WZ(\gamma_\rmc)). \label{eqn:all_WZ_proof}
\end{equation}
Similarly to  \eqref{eqn:tmp53-proof-WAK} for the WAK problem, the expectation of the third term in  \eqref{eqn:tmp4-proof-WZ} can be bounded as 
\begin{align}
\bbE_\calC [ d(P_{\hatL X  },\barP_{L \tilX  })]\le\frac{ P_{UX} ( (u,x) \notin \calT_\rmc^\WZ(\gamma_\rmc)) }{2} + \frac{\Delta(\gamma_{\rmc}, P_{UX})}{2\sqrt{|\calL|}} . \label{eqn:sim_WZ_proof}
\end{align}
Now we bound the final term in \eqref{eqn:tmp1-proof-WZ} using steps similar to the ones leading to \eqref{eqn:tmp7-proof-WAK} for the WAK problem. We have 
\begin{align}
&\bbE_{\bin} \bbE_\calC  \left[P_{\hatL X Y  } ( \calE_2)\right] \nn\\
& =\bbE_{\bin} \bbE_\calC \bigg[\sum_{u,x,y,l}\frac{1}{|\calL|}\bone[u_l=u] P_{\hatL X Y U }(l,x,y,u)\bone[\exists \, \till\ne l \nn\\
&~~~~~~~~~~~~ \mbox{ s.t. }\kappa(\till) =\kappa(l), (u_{\till} ,y) \in  \calT_\rmp^\WZ(\gamma_\rmp)] \bigg] \\
& \le\bbE_{\bin} \bbE_\calC \bigg[\sum_{u,x,y,l}\frac{1}{|\calL|} \bone[u_l=u] P_{\hatL X Y U }(l,x,y,u) \nn\\
&~~~~~~~~~~ \sum_{\till \ne l} \bone[\kappa(\till) =\kappa(l) ]\cdot \bone[ (u_{\till} ,y) \in  \calT_\rmp^\WZ(\gamma_\rmp)] \bigg] \\
&\le \frac{1}{|\calM|} \bbE_\calC\bigg[ \sum_{u,x,y,l}\frac{1}{|\calL|} \bone[u_l=u] P_{\hatL X Y U }(l,x,y,u) \nn\\
&~~~~~~~~~~  \sum_{\till \ne l }  \bone[ (u_{\till} ,y) \in  \calT_\rmp^\WZ(\gamma_\rmp)] \bigg]\\
&\le \frac{|\calL|}{|\calM|} \sum_{u,y} P_U(u) P_Y(y)\bone\left[ (u ,y) \in  \calT_\rmp^\WZ(\gamma_\rmp)] \right] \\
&= \frac{|\calL|}{|\calM|} \sum_{(u,y) \in   \calT_\rmp^\WZ(\gamma_\rmp)} P_U(u) P_Y(y). \label{eqn:packing_WZ_proof}
\end{align}
By uniting \eqref{eqn:tmp1-proof-WZ}, \eqref{eqn:tmp4-proof-WZ}, \eqref{eqn:first_WZ_proof}, \eqref{eqn:all_WZ_proof}, \eqref{eqn:sim_WZ_proof} and \eqref{eqn:packing_WZ_proof}, we obtain the final bound
\begin{align}
& \bbE_{\bin}\bbE_\calC [\Pe(\Phi;D)] \nn\\
&\le  P_{UX Y \hatX} ((u,x)\notin\calT_\rmc^\WZ(\gamma_\rmc)\cup (x,\hatx)\notin\calTsWZ(D) \nn\\
&~~~~~~~~~~~~\cup (u,y)\notin\calT_\rmp^\WZ(\gamma_\rmp) )  \nn\\
 &~~~+ \frac{\Delta(\gamma_{\rmc}, P_{UX})}{2\sqrt{|\calL|}}+\frac{|\calL|}{|\calM|} \sum_{(u,y) \in   \calT_\rmp^\WZ(\gamma_\rmp)} P_U(u) P_Y(y). \label{eqn:final_proof_WZ}
\end{align}
This implies there is a deterministic code whose probability of excess distortion is no greater than the right-hand-side of \eqref{eqn:final_proof_WZ}.  This completes the proof of Theorem~\ref{thm:cs_wz}.
\section{Proof of the Non-Asymptotic Bound for    GP   in Theorem~\ref{thm:cs_gp}}\label{app:cs_gp}

Since the analysis of error probability can be done in an almost similar manner as those
of WAK and WZ, we only show the code construction for GP.

\subsection{Code Construction}
As in WAK, we use the stochastic map introduced in Appendix~\ref{app:sim}.  
In GP coding, let $\calZ=\calS$ and $P_{UZ}=P_{US}$.
Note that that, $\calT_\rmc(\gamma_\rmc)$
defined in \eqref{eqn:calT-for-smoothing} is equivalent to
$\calT_\rmc^\GP(\gamma_\rmc)$ defined in \eqref{eqn:Tc_gp}
in this case.

For GP coding, we construct $\sizeM$ stochastic maps. Each stochastic map corresponds to  a message in $\calM$.  For each message
$m\in\calM$, generate a codebook
$\calC^{(m)}=\{u_{1}^{(m)},\dots,u_{\sizeL}^{(m)}\}$ where each $u_{l}^{(m)}$ is independently drawn according to
$P_U$.  Then, for each $\calC^{(m)}$ ($m\in\calM$), construct a
stochastic map $\varphi_{\calC^{(m)}}$ as defined in
\eqref{eqn:varphi_calC}.

By using $\{\varphi_{\calC^{(m)}}\}_{m\in\calM}$, 
we construct a GP code $\Phi$ as follows.  
Given the message $m\in\calM$ and the channel state $s\in\calS$, the encoder first 
generates
$l\in\calL$ according to $\varphi_{\calC^{(m)}}(\fndot|s)$.
Then, the encoder generates $x\in\calX$ according to $P_{X|US}(\fndot|u_{l}^{(m)},s)$
and inputs $x$ into the channel. 
If the randomly generated $x$ results in $\rvg(x)>\Gamma$ (i.e., the channel input does not satisfy the cost constraint), declare an cost-constraint violation error.\footnote{Even if $\rvg(x)>\Gamma$ occurs, we still send $x$ through the channel. 
The error event for this occurrence must be taken into accounted in the error analysis.}
Given the channel output $y\in\calY$, the decoder
finds the unique index $\hatm\in\calM$ such that 
\begin{equation}
 (u_{l}^{(\hatm)},y)\in\calT_\rmp^\GP(\gamma_\rmp)
\end{equation}
for some $l\in\calL$.  If there is no unique index $\hatm \in\calM$ or more than one, declare a decoding error. This is a Feinstein-like decoder~\cite{Han10} for average probability of error.
If no such unique $\hatm$ exists, or if there exists more than one such
$\hatm$, then a decoding error is declared.

\section{Preliminaries for Proofs of the Second-Order Coding Rate} \label{app:preliminaries-second}

In this appendix, we provide some technical results that will be used in Appendices \ref{app:second} and \ref{app:wz_2nd}.
More specifically, we will use the following multidimensional Berry-Ess\'een theorem and its corollary.
\begin{theorem}[G\"{o}etze~\cite{Got91}] \label{theorem:multidimensional-berry-esseen}
Let $\bU_1,\ldots,\bU_n$ be independent random vectors in $\bbR^k$ with zero mean. Let $\bS_n = \frac{1}{\sqrt{n}}(\bU_1 + \cdots + \bU_n)$,
$\cov(\bS_n) = \bI$, and $\xi = \frac{1}{n} \sum_{i=1}^n \bbE[ \| \bU_i \|^3_2]$. Let the standard Gaussian random vector $\bZ\sim\calN(\bzero,\bI)$.
Then, for all $n \in \bbN$, we have
\begin{equation}
\sup_{\mathscr{C} \in \mathfrak{C}_k} \left| \Pr\{ \bS_n \in \mathscr{C} \} - \Pr\{ \bZ \in \mathscr{C} \} \right| 
\le \frac{C_k \xi}{ \sqrt{n}},
\end{equation}
where $\mathfrak{C}_k$ is the family of all convex, Borel measurable  subsets of $\bbR^k$, and where $C_k$ is a constant
that depends on the dimension $k$.
\end{theorem}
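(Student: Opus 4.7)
The plan is to follow the standard route to multidimensional Berry--Ess\'{e}en bounds: smooth the indicator of a convex set, apply a Lindeberg-style replacement argument term by term, and then dequantize using a Gaussian surface-area estimate. Throughout, I would work with the reduction in which $\bU_1,\dots,\bU_n$ are already normalized so that $\cov(\bS_n)=\bI$; the $\sqrt{k}$ factor should emerge only in the desmoothing step.

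First I would introduce, for each convex Borel set $\mathscr{C}\subset\bbR^k$ and each smoothing parameter $\epsilon>0$, a mollified indicator $h_{\mathscr{C},\epsilon}(\bx) := \bbE[\mathbf{1}\{\bx+\epsilon\bG\in\mathscr{C}\}]$ with $\bG\sim\calN(\bzero,\bI)$. Standard heat-semigroup estimates give $\|\nabla^m h_{\mathscr{C},\epsilon}\|_\infty \lesssim \epsilon^{-m}$ for $m=1,2,3$, independent of the geometry of $\mathscr{C}$. The sandwich
\begin{equation}
h_{\mathscr{C}^{-\epsilon\sqrt{k}},\epsilon}(\bx)\,\le\,\mathbf{1}_{\mathscr{C}}(\bx)\,\le\,h_{\mathscr{C}^{+\epsilon\sqrt{k}},\epsilon}(\bx)
\end{equation}
(where $\mathscr{C}^{\pm\delta}$ denote outer/inner parallel bodies) reduces the problem to (i) estimating $\bbE[h_{\mathscr{C},\epsilon}(\bS_n)] - \bbE[h_{\mathscr{C},\epsilon}(\bZ)]$ for fixed $\epsilon$ and $\mathscr{C}$, and (ii) controlling $\bbE[\mathbf{1}\{\bZ\in\mathscr{C}^{+\epsilon\sqrt{k}}\setminus\mathscr{C}^{-\epsilon\sqrt{k}}\}]$, the Gaussian mass of a thickening of $\partial\mathscr{C}$.

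For part (i), I would use Lindeberg's swapping trick: let $\bG_1,\dots,\bG_n$ be independent Gaussian vectors with $\bG_i\sim\calN(\bzero,\cov(\bU_i))$, set $\bT_j:=\frac{1}{\sqrt n}(\bG_1+\cdots+\bG_{j-1}+\bU_{j+1}+\cdots+\bU_n)$, and write
\begin{equation}
\bbE[h(\bS_n)]-\bbE[h(\bZ)]=\sum_{j=1}^{n}\bbE\!\left[h\!\left(\bT_j+\tfrac{1}{\sqrt n}\bU_j\right)-h\!\left(\bT_j+\tfrac{1}{\sqrt n}\bG_j\right)\right].
\end{equation}
Expanding $h=h_{\mathscr{C},\epsilon}$ to third order around $\bT_j$ and using $\bbE[\bU_j]=\bbE[\bG_j]=\bzero$, $\cov(\bU_j)=\cov(\bG_j)$, I would get each summand bounded by $\frac{C}{\epsilon^3 n^{3/2}}(\bbE\|\bU_j\|_2^3+\bbE\|\bG_j\|_2^3)$. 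Since $\bbE\|\bG_j\|_2^3\lesssim\bbE\|\bU_j\|_2^3$ by Jensen (comparing to the Gaussian third moment given by its covariance and then back to $\bU_j$), the whole telescoping sum is bounded by $C\xi/(\epsilon^3\sqrt n)$.

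For part (ii) --- which is the hard part --- I need the Gaussian surface-area bound: for any convex $\mathscr{C}\subset\bbR^k$ and any $\delta>0$,
\begin{equation}
\Pr\{\bZ\in\mathscr{C}^{+\delta}\setminus\mathscr{C}^{-\delta}\}\,\le\,C\sqrt{k}\,\delta.
\end{equation}
With $\delta=\epsilon\sqrt{k}$, this contributes $Ck\epsilon$. Balancing $k\epsilon$ against $\xi/(\epsilon^3\sqrt n)$ by choosing $\epsilon\asymp(\xi/(\sqrt n\,k))^{1/4}$ would yield a rate of order $(k^3\xi/\sqrt n)^{1/4}$, which is weaker than the claimed $\sqrt{k}\,\xi/\sqrt n$. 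To recover the claimed rate, I would iterate the argument in Bentkus's style: apply the weak bound to show that $\bbE[h(\bS_n)]$ is already close to $\bbE[h(\bZ)]$, then bootstrap by smoothing with a smaller $\epsilon$ while exploiting this preliminary closeness to avoid paying full price for third derivatives of $h$. After one or two iterations the geometric factor collapses to the right power of $\sqrt{k}$. The main obstacle, and the place where all the explicit constants (such as the $254$) would come from, is the joint tuning of $\epsilon$ and the bootstrap across the $k$-dependent Gaussian surface-area inequality; this is the step where I would expect to invest the most care.
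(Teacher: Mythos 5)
The paper does not actually prove Theorem~\ref{theorem:multidimensional-berry-esseen}: it is imported verbatim as a black-box citation to G\"{o}etze~\cite{Got91} (with the constant $254$ and the $\sqrt{k}$ dependence stated but not derived), so there is no ``paper's own proof'' against which to match your argument. I will therefore assess your sketch on its own merits.

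Your outline correctly names the standard ingredients --- Gaussian mollification of $\mathbf{1}_{\mathscr{C}}$, a Lindeberg swapping argument with third-order Taylor control, and a Gaussian surface-area estimate for convex bodies --- and your bookkeeping up to the naive rate $(k^3\xi/\sqrt{n})^{1/4}$ is sound. The genuine gap is precisely the step you defer to in a single sentence: ``iterate in Bentkus's style \dots\ after one or two iterations the geometric factor collapses.'' This is not a finishing detail; it is the entire content of the theorem. The reason the naive balancing of $k\epsilon$ against $\xi/(\epsilon^3\sqrt{n})$ cannot be improved by local tinkering is that the third-derivative bound $\|\nabla^3 h_{\mathscr{C},\epsilon}\|_\infty\lesssim \epsilon^{-3}$ is essentially tight for indicator mollifications, and the thin-shell Gaussian mass $\Pr\{\bZ\in\mathscr{C}^{+\delta}\setminus\mathscr{C}^{-\delta}\}\lesssim k^{1/4}\delta$ (Ball/Nazarov; your $\sqrt{k}\,\delta$ is a looser but still admissible variant) cannot be sharpened in $\delta$. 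What G\"{o}etze and Bentkus actually do is \emph{not} a two-step bootstrap of the preliminary bound but an induction (on $n$, or a fixed-point argument via a recursive smoothing inequality) in which the quantity being controlled appears on \emph{both} sides: one bounds $\Pr\{\bS_n\in\mathscr{C}^{+\delta}\setminus\mathscr{C}^{-\delta}\}$ by its Gaussian analogue \emph{plus} the Berry--Ess\'{e}en discrepancy itself, substitutes this into the Lindeberg replacement estimate to replace the blown-up third-derivative term by a concentration-function term, and then solves the resulting recursion to obtain the linear-in-$\xi/\sqrt{n}$ rate. Until that closed-loop argument is written out --- including the careful tracking of constants that produces something like $254$ --- the proposal does not establish the stated bound; it establishes only the weaker $(k^3\xi/\sqrt{n})^{1/4}$ estimate. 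A second, smaller issue: your Jensen comparison of $\bbE\|\bG_j\|_2^3$ to $\bbE\|\bU_j\|_2^3$ needs a justification (e.g.\ $\bbE\|\bG_j\|_2^3\le C(\bbE\|\bG_j\|_2^2)^{3/2}=C(\operatorname{tr}\cov(\bU_j))^{3/2}\le C\,\bbE\|\bU_j\|_2^3$ by Gaussian moment comparison and then Jensen in the reverse direction), which you should state explicitly since it is used to telescope with a single $\xi$.
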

It should be noted that Theorem \ref{theorem:multidimensional-berry-esseen} can be applied for random vectors that are independent but not necessarily
identical. 

We will frequently encounter random vectors with non-identity covariance matrices. Thus, we slightly modify 
Theorem \ref{theorem:multidimensional-berry-esseen} in a similar manner as \cite[Corollary 7]{TK12} as follows.
\begin{corollary} \label{corollary:multidimensional-berry-esseen}
Let $\bU_1, \ldots, \bU_n$ be independent random vectors in $\bbR^k$ with 
zero mean.
Let $\bS_n = \frac{1}{\sqrt{n}}(\bU_1 + \cdots + \bU_n)$,
$\cov(\bS_n) = \bV \succ 0$,
 and $\xi = \frac{1}{n} \sum_{i=1}^n \bbE[ \| \bU_i \|_2^3]$. 
 Let the  Gaussian random vector $\bZ\sim\calN(\bzero,\bV)$. Then, for all $n \in \mathbb{N}$, 
\begin{equation}
\sup_{\mathscr{C} \in \mathfrak{C}_k} \left| \Pr\{ \bS_n \in \mathscr{C} \} - \Pr\{ \bZ \in \mathscr{C} \} \right| 
\le \frac{C_k \xi}{\lambda_{\min}(\bV)^{3/2} \sqrt{n}},
\end{equation}
where $\mathfrak{C}_k$ is the family of all convex, Borel measurable  subsets of $\mathbb{R}^k$, 
where $C_k$ is a constant
that depends on the dimension $k$, and where 
$\lambda_{\min}(\bV)$ is the smallest eigenvalue of $\bV$.
\end{corollary}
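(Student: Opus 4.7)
The plan is to reduce the statement to Theorem \ref{theorem:multidimensional-berry-esseen} via a whitening transformation. Since $\bV \succ 0$, the matrix $\bV^{-1/2}$ is well defined, and we set $\bU_i' := \bV^{-1/2} \bU_i$ for $i = 1, \ldots, n$. These are independent, zero-mean random vectors in $\bbR^k$, and the normalized sum $\bS_n' := \frac{1}{\sqrt{n}}(\bU_1' + \cdots + \bU_n') = \bV^{-1/2} \bS_n$ has covariance matrix
\begin{equation*}
\cov(\bS_n') = \bV^{-1/2} \bV \bV^{-1/2} = \bI.
\end{equation*}
Thus the hypotheses of Theorem \ref{theorem:multidimensional-berry-esseen} are satisfied by $\bU_1', \ldots, \bU_n'$.

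Next, I would transfer the supremum over convex sets through the linear bijection $\bx \mapsto \bV^{-1/2} \bx$. For any $\mathscr{C} \in \mathfrak{C}_k$, the image $\mathscr{C}' := \bV^{-1/2} \mathscr{C}$ is again convex and Borel measurable, and the map $\mathscr{C} \mapsto \mathscr{C}'$ is a bijection of $\mathfrak{C}_k$ onto itself. Letting $\bZ' \sim \calN(\bzero, \bI)$, we have $\bV^{1/2} \bZ' \stackrel{\mathrm{d}}{=} \bZ$, so that
\begin{equation*}
\Pr\{\bS_n \in \mathscr{C}\} = \Pr\{\bS_n' \in \mathscr{C}'\}, \qquad \Pr\{\bZ \in \mathscr{C}\} = \Pr\{\bZ' \in \mathscr{C}'\}.
\end{equation*}
Applying Theorem \ref{theorem:multidimensional-berry-esseen} to the primed variables therefore yields
\begin{equation*}
\sup_{\mathscr{C} \in \mathfrak{C}_k} \left| \Pr\{\bS_n \in \mathscr{C}\} - \Pr\{\bZ \in \mathscr{C}\} \right| \le \frac{254 \sqrt{k}\, \xi'}{\sqrt{n}},
\end{equation*}
where $\xi' := \frac{1}{n} \sum_{i=1}^n \bbE[\|\bU_i'\|_2^3]$.

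Finally, I would bound $\xi'$ in terms of $\xi$ using the operator norm of $\bV^{-1/2}$. Since $\|\bV^{-1/2}\|_{\mathrm{op}} = \lambda_{\min}(\bV)^{-1/2}$, we have
\begin{equation*}
\|\bU_i'\|_2 = \|\bV^{-1/2} \bU_i\|_2 \le \lambda_{\min}(\bV)^{-1/2} \|\bU_i\|_2,
\end{equation*}
so that $\bbE[\|\bU_i'\|_2^3] \le \lambda_{\min}(\bV)^{-3/2} \bbE[\|\bU_i\|_2^3]$, which gives $\xi' \le \lambda_{\min}(\bV)^{-3/2} \xi$. Substituting this bound yields the claimed inequality. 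The routine part is the whitening; the only mild subtlety is keeping track of the fact that the image of a convex set under an invertible linear map is again convex, so the supremum over $\mathfrak{C}_k$ is preserved, but there is no real obstacle here.
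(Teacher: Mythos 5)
Your proof is correct and is exactly the standard whitening argument that the paper implicitly invokes by citing \cite[Cor.~7]{TK12} rather than writing out a proof. All three steps—normalizing by $\bV^{-1/2}$ to reduce to the identity-covariance case, observing that an invertible linear map induces a bijection on the class of convex Borel sets while preserving the probabilities in question, and bounding the transformed third-moment term via $\|\bV^{-1/2}\|_{\mathrm{op}} = \lambda_{\min}(\bV)^{-1/2}$—are correctly executed.
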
 

\section{Achievability Proof of the Second-Order Coding Rate for    WAK in Theorem~\ref{thm:second}}\label{app:second}

\begin{proof}
It suffices to show the inclusion $\scR_{\mathrm{in}} (n,\veps; P_{UTXY}) \subset\scR_{\WAK}(n,\veps)$ 
for fixed $P_{UTXY}\in\tilde{\scP}(P_{XY})$. 

We first consider the case such that $\bV = \bV(P_{UTXY}) \succ 0$.
First, note that $\bR \in \scR_{\mathrm{in}} (n,\veps; P_{UTXY})$ implies
\begin{equation} \label{eq:redundancy-of-WAK}
\tilde{\bz} := \sqrt{n} \left( \bR - \bJ - \frac{2\log n}{n}\bone_2 \right) \in \scS(\bV,\veps).
\end{equation}
We fix a time-sharing sequence $t^n \in \calT^n$ with type $P_{t^n} \in\scP_n(\calT)$ such that
\begin{equation} \label{eq:time-sharing-sequence}
|P_{t^n}(t) - P_T(t)| \le \frac{1}{n}
\end{equation}
for every $t \in \calT$~\cite{Huang12}. Then, we consider the test channel given by $P_{U^n|Y^n}(u^n|y^n) = P_{U|TY}^n(u^n|t^n,y^n)$, and we use
Corollary \ref{cor:fein} for $P_{U^n X^n Y^n} = P_{XY}^n P_{U^n|Y^n}$ by setting $\gamma_{\rmb} = \log |\calM_n| - \log n$,
$\gamma_{\rmc} = \log |\calL_n| - \log n$, and $\delta = \frac{1}{n}$. Then, there exists a WAK code $\Phi_n$ such that
\begin{align}
& 1- \Pe(\Phi_n) \nn\\
&\ge  \Pr\left\{ \sum_{i=1}^n \bj(U_i,X_i,Y_i|t_i) \le n \bR - \log n \bone_2 \right\} - \frac{2}{n} - \sqrt{\frac{1}{n}} \\
&= \Pr\left\{ \frac{1}{\sqrt{n}} \sum_{i=1}^n \left( \bj(U_i,X_i,Y_i|t_i) - \bJ \right) \le \tilde{\bz} + \frac{\log n}{\sqrt{n}} \bone_2 \right\} \nn\\
&~~~ - \frac{2}{n} - \sqrt{\frac{1}{n}}.
\label{eq:lower-bound-correct-probability}
\end{align}
By using Corollary \ref{corollary:multidimensional-berry-esseen} to the first term of \eqref{eq:lower-bound-correct-probability}, we have
\begin{align}
1 - \Pe(\Phi_n) &\ge  \Pr\left\{ \bZ \le \tilde{\bz} + \frac{\log n}{\sqrt{n}} \bone_2 \right\} - O\left( \frac{1}{\sqrt{n}}\right)  \label{eqn:wakbe1}\\
&=\Pr\{ \bZ \le \tilde{\bz} \} + O\left( \frac{\log n}{\sqrt{n}} \right)  \label{eqn:wak1}\\
&\ge 1 - \veps  
\label{eq:second-order-WAK-final-bound}
\end{align}
for sufficiently large $n$, where \eqref{eqn:wak1} follows from the Taylor's approximation, and \eqref{eq:second-order-WAK-final-bound} follows from \eqref{eq:redundancy-of-WAK}.

Next, we consider the case with $\bV$ is singular but not $0$. In this case, we cannot apply Corollary \ref{corollary:multidimensional-berry-esseen}
because $\lambda_{\min}(\bV) = 0$. Since $\mathrm{rank}(\bV) =1$, we can write $\bV = \bv \bv^T$ by using the vector $\bv$.
Let $\bA_i = \bj(U_i,X_i,Y_i|t_i) - \bJ$. Then we can write $\bA_i = \bv B_i$ by using the scalar independent random variables $\{B_i \}_{i=1}^n$.
Thus, by using the ordinary Berry-Ess\'een theorem \cite[Ch.\ XVI]{feller} for $\{B_i \}_{i=1}^n$, we can derive \eqref{eq:second-order-WAK-final-bound}.

Finally, we consider the case where  $\bV = \mathbf{0}$. In this case, by setting $\tilde{\bz} = \mathbf{0}$ in \eqref{eq:lower-bound-correct-probability}, we can find that the
right hand side converges to $1$.  

For the bounds on the cardinalities of auxiliary random variables, see
 Appendix \ref{app:cardinality_bound}.
\end{proof}

\section{Achievability Proof of the Second-Order Coding Rate for    WAK in Theorem~\ref{thm:second-modified}} \label{app:second-modified}
\begin{proof}
We only provide a sketch of the proof because most of the steps are the same as Appendix \ref{app:second}. The only modification is that we use Theorem \ref{thm:dt-helper-binning} instead of Corollary \ref{cor:fein} by setting $\gamma_{\rmb} = \log |\calM_n| - \rho \sqrt{n} - \log n$,
$\gamma_{\rmc} = \log |\calL_n| + \rho \sqrt{n} - \log n$, $J_n = |\calL_n| 2^{\rho \sqrt{n}}$, and $\delta = \frac{1}{n}$. \end{proof}

\section{Achievability Proof of the Second-Order Coding Rate for    WZ in Theorem~\ref{thm:wz_2nd}}\label{app:wz_2nd}

\begin{proof}
It suffices to show the inclusion $\scR_{\mathrm{in}}(n,\veps; P_{UTXY},P_{\hatX|UYT})\subset \scR_{\WZ}(n,\veps)$ for fixed pair $(P_{UTXY},P_{\hatX|UYT})$ of $P_{UTXY}\in\tilde{\scP}(P_{XY} )$ and $P_{\hatX|UYT}$.
We assume that $\bV = \bV(P_{UTXY},P_{\hatX|UYT}) \succ 0$, since the case where $\bV$ is singular can be handled in a similar manner
as Appendix \ref{app:second} (see also \cite[Proof of Theorem 5]{TK12}).

First, note that $[R,D]^T \in \scR_{\mathrm{in}}(n,\veps; P_{UTXY},P_{\hatX|UYT})$ implies 
\begin{align} \label{eq:redundancy-of-WZ}
\tilde{\bz} := \sqrt{n}\left(\left[ \begin{array}{c} - \frac{1}{n} \log \frac{L_n}{|\calM_n|} \\ \frac{1}{n} \log L_n \\ D \end{array} \right]
 - \bJ - \frac{2 \log n}{n} \bone_3 \right) \in \scS(\bV,\veps)
\end{align}
for some positive integer $L_n$.  
We fix a sequence $t^n \in \calT^n$ satisfying \eqref{eq:time-sharing-sequence} for every $t \in \calT$. Then, we consider the test
channel given by $P_{U^n|X^n}(u^n|x^n) = P_{U|TX}^n(u^n|t^n,x^n)$
 and the reproduction channel given by $P_{\hatX^n|U^nY^n}(\hatx^n|u^n,y^n)=P_{\hatX|UYT}^n(\hatx^n|u^n,y^n,t^n)$.
Then, Corollary \ref{thm:fein_wz} for $P_{U^nX^nY^n\hatX^n} = P_{XY}^n P_{U^n|X^n} P_{\hatX^n|U^nY^n}$ with 
 $\gamma_{\rmp} = \log \frac{L_n}{|\calM_n|} + \log n$, $\gamma_{\rmc} = \log L_n - \log n$, and $\delta = \frac{1}{n}$
shows that there exists a WZ code such that
\begin{align}
& 1 - \Pe(\Phi_n;D) \ge \nn\\
&  \Pr\left\{ \sum_{i=1}^n \bj(U_i,X_i,Y_i,\hatX_i|t_i) \le 
	\left[ \begin{array}{c} -\log \frac{L_n}{|\calM_n|} \\ \log L_n \\ n D \end{array}\right] - \log n \bone_3 \right\} \nn\\
&~~~	- \frac{2}{n} - \sqrt{\frac{1}{n}} \\
&=  \Pr\left\{ \frac{1}{\sqrt{n}} \sum_{i=1}^n \left( \bj(U_i,X_i,Y_i,\hatX_i|t_i) - \bJ \right) \le \tilde{\bz} + \frac{\log n}{\sqrt{n}} \bone_3 \right\} \nn\\
&~~~- \frac{2}{n} - \sqrt{\frac{1}{n}}.
\label{eq:lower-bound-correct-probability-WZ}
\end{align} 
Now the rest of the proof proceeds by using the multidimensional
 Berry-Ess\'een theorem as in \eqref{eqn:wakbe1} to
 \eqref{eq:second-order-WAK-final-bound} for the WAK problem. 

For the bounds on the cardinalities of auxiliary random variables, see
 Appendix \ref{app:cardinality_bound}.
\end{proof}

\section{Achievability Proof of the Second-Order Coding Rate for Lossy Source Coding in Theorem~\ref{thm:rd_disp}} \label{app:2nd_wz_rd}

We slightly modify a special case of Corollary \ref{thm:fein_wz} as follows, which will be used in both Appendices \ref{app:proof-method-type} and \ref{app:proof-d-tilted}.

\begin{corollary} \label{thm:fein_rd}
For arbitrary distribution $Q_{\hat{X}} \in \scP(\hat{\calX})$, and for arbitrary constants $\gamma_{\rmc}, \nu \ge 0$ and $\delta, \tilde{\delta} > 0$, there exists a lossy source code $\Phi$ with probability of excess distortion satisfying 
\begin{align}
\Pe(\Phi;D) &\le P_{\hat{X} X}\left[ \log \frac{P_{\hat{X}|X}(\hatx|x)}{Q_{\hat{X}}(\hatx)} > \gamma_{\rmc} - \nu \mbox{ or } \rvd(x, \hatx) >D \right] \nn\\
&~~~ + \tilde{\delta} + \sqrt{\frac{2^{\gamma_{\rmc}}}{\tilde{\delta}|\calM|}} + \delta + 2^{-\nu}.
\end{align}
\end{corollary}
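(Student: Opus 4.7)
The plan is to specialize Corollary~\ref{thm:fein_wz} to the lossy source coding setting (no side information) and then introduce the auxiliary distribution $Q_{\hat{X}}$ through a simple change of measure. Concretely, I would invoke Corollary~\ref{thm:fein_wz} with $Y=\emptyset$, with $U=\hat{X}$ and test channel $P_{U|X}=P_{\hat{X}|X}$ (so the reproduction channel $P_{\hat X|UY}$ is the identity $\bone[\hat x = u]$), and with the packing threshold $\gamma_{\rmp}=0$. Because $Y$ is degenerate, the packing event $\{\log(P_{Y|U}(Y|U)/P_Y(Y))<\gamma_{\rmp}\}$ reduces to $\{0<0\}=\emptyset$, so it contributes nothing. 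Using Bayes' rule, $P_{X|U}(x|u)/P_X(x)=P_{\hat X|X}(\hat x|x)/P_{\hat X}(\hat x)$ when $u=\hat x$, so the covering event becomes $\{\log(P_{\hat X|X}(\hat X|X)/P_{\hat X}(\hat X))>\gamma_{\rmc}\}$. The residual terms collapse to $L/|\calM|+\sqrt{2^{\gamma_{\rmc}}/L}+\delta$.

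Next I would replace the induced output marginal $P_{\hat{X}}$ by the arbitrary $Q_{\hat{X}}$. Writing
\[
\log\frac{P_{\hat X|X}(\hat x|x)}{P_{\hat X}(\hat x)} \;=\; \log\frac{P_{\hat X|X}(\hat x|x)}{Q_{\hat X}(\hat x)} + \log\frac{Q_{\hat X}(\hat x)}{P_{\hat X}(\hat x)},
\]
the event $\{\log(P_{\hat X|X}/P_{\hat X})>\gamma_{\rmc}\}$ is contained in the union $\{\log(P_{\hat X|X}/Q_{\hat X})>\gamma_{\rmc}-\nu\}\cup\{\log(Q_{\hat X}/P_{\hat X})>\nu\}$, so the union bound lets me split off the undesirable $P_{\hat X}$-dependence. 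A direct change-of-measure calculation,
\[
P_{\hat X}\!\left[\log\tfrac{Q_{\hat X}(\hat X)}{P_{\hat X}(\hat X)}>\nu\right] \;=\; \sum_{\hat x:\,Q_{\hat X}(\hat x)>2^{\nu}P_{\hat X}(\hat x)} P_{\hat X}(\hat x) \;\le\; 2^{-\nu}\sum_{\hat x}Q_{\hat X}(\hat x) \;=\; 2^{-\nu},
\]
supplies the $2^{-\nu}$ residual term in the claim.

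Finally, I would optimize the free integer $L$ by taking $L=\lceil\tilde{\delta}|\calM|\rceil$ (assuming $\tilde{\delta}|\calM|\ge 1$; otherwise the bound is trivial). This yields $L/|\calM|\le \tilde{\delta}$ up to harmless rounding and $\sqrt{2^{\gamma_{\rmc}}/L}\le \sqrt{2^{\gamma_{\rmc}}/(\tilde{\delta}|\calM|)}$, producing the stated residuals $\tilde{\delta}+\sqrt{2^{\gamma_{\rmc}}/(\tilde{\delta}|\calM|)}+\delta$ together with the $2^{-\nu}$ change-of-measure term.

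The proof is essentially bookkeeping, with no substantial obstacle; the only point that requires a small amount of care is the change-of-measure step, which is a routine Markov-type argument, and the integer rounding of $L$, which is a minor technicality. This specialization is the cleanest entry point for recovering the Kostina--Verd\'u dispersion in the subsequent arguments of Appendices~\ref{app:proof-method-type} and~\ref{app:proof-d-tilted}, since the freedom in the reference distribution $Q_{\hat{X}}$ is precisely what allows one either to choose the optimal reproduction distribution $P_{\hat X^*}$ for the $D$-tilted information, or to pick the uniform distribution over a type class for the method-of-types argument.
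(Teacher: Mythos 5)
Your proposal reproduces the paper's proof step for step: specializing Corollary~\ref{thm:fein_wz} with $\gamma_{\rmp}=0$ and $L=\tilde\delta|\calM|$ (the paper simply sets $L$ this way, ignoring integrality, while you round; a harmless difference), followed by the additive split $\log(P_{\hatX|X}/P_{\hatX}) = \log(P_{\hatX|X}/Q_{\hatX}) + \log(Q_{\hatX}/P_{\hatX})$, the union bound, and the Markov-type change-of-measure estimate $P_{\hatX}[\log(Q_{\hatX}/P_{\hatX})>\nu]\le 2^{-\nu}$. The only difference is expository: you spell out the specialization $Y=\emptyset$, $U=\hatX$, identity reproduction channel, which the paper leaves implicit behind the phrase ``as a special case.''
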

\begin{proof}
As a special case of Corollary \ref{thm:fein_wz}, we have
\begin{align} 
\Pe(\Phi;D) &\le P_{\hat{X} X}\left[ \log \frac{P_{\hat{X}|X}(\hatx|x)}{P_{\hat{X}}(\hatx)} > \gamma_{\rmc}  \mbox{ or } \rvd(x, \hatx) > D\right] \nn\\
&~~~+ \tilde{\delta} + \sqrt{\frac{2^{\gamma_{\rmc}}}{\tilde{\delta}|\calM|}} + \delta,
\label{eq:proof:thm:fein_rd-1}
\end{align}
where we set $\gamma_{\rmp} = 0$ and $L = \tilde{\delta} |\calM|$. We can further upper bound the first term of \eqref{eq:proof:thm:fein_rd-1} as
\begin{align}
& P_{\hat{X} X}\left[ \log \frac{P_{\hat{X}|X}(\hatx|x)}{P_{\hat{X}}(\hatx)} > \gamma_{\rmc}  \mbox{ or } \rvd(x, \hatx) > D \right]   \\
&= P_{\hat{X} X}\bigg[ \log \frac{P_{\hat{X}|X}(\hatx|x)}{Q_{\hat{X}}(\hatx)} + \log \frac{Q_{\hat{X}}(\hatx)}{P_{\hat{X}}(\hatx)} > \gamma_{\rmc} \nn\\
&~~~~~~~~~~~~ \mbox{ or } \rvd(x, \hatx) > D \bigg] \\
&\le P_{\hat{X} X}\bigg[ \log \frac{P_{\hat{X}|X}(\hatx|x)}{Q_{\hat{X}}(\hatx)} > \gamma_{\rmc} -\nu \mbox{ or }  \log \frac{Q_{\hat{X}}(\hatx)}{P_{\hat{X}}(\hatx)} > \nu \nn\\
&~~~~~~~~~~~~  \mbox{ or } \rvd(x, \hatx) > D \bigg] \\
&\le P_{\hat{X} X}\left[ \log \frac{P_{\hat{X}|X}(\hatx|x)}{Q_{\hat{X}}(\hatx)} > \gamma_{\rmc} -\nu  \mbox{ or } \rvd(x, \hatx) > D \right] \nn\\
&~~~	+ P_{\hat{X} X}\left[ \log \frac{Q_{\hat{X}}(\hatx)}{P_{\hat{X}}(\hatx)} > \nu \right] \\
&=  P_{\hat{X} X}\left[ \log \frac{P_{\hat{X}|X}(\hatx|x)}{Q_{\hat{X}}(\hatx)} > \gamma_{\rmc} -\nu  \mbox{ or } \rvd(x, \hatx) > D \right] \nn\\
&~~~	+ P_{\hat{X} }\left[ \log \frac{Q_{\hat{X}}(\hatx)}{P_{\hat{X}}(\hatx)} > \nu \right] \\
&\le  P_{\hat{X} X}\left[ \log \frac{P_{\hat{X}|X}(\hatx|x)}{Q_{\hat{X}}(\hatx)} > \gamma_{\rmc} -\nu  \mbox{ or } \rvd(x, \hatx) > D \right]
	+ 2^{-\nu}.
\end{align}
This completes the proof.
\end{proof}

\begin{remark}
By showing Corollary \ref{thm:fein_rd} directly instead of via Corollary \ref{thm:fein_wz}, we can eliminate the residual term $\tilde{\delta}$. 
\end{remark}

\subsection{Proof Based on the Method of Types } \label{app:proof-method-type}

To prove Theorem \ref{thm:rd_disp} by the method of types, we use the following lemma.
\begin{lemma}[Rate-Redundancy \cite{ingber11}] \label{lemma:rate-redundancy}
Suppose that $R(P_X,D)$ is differentiable w.r.t.~$D$ and twice differentiable w.r.t.~$P_X$ at some neighbourhood of $(P_X,D)$.
Let $\varepsilon$ be given probability and let $\Delta R$ be any quantity chosen such that
\begin{eqnarray}
P_X^n \left[ R(P_{x^n}, D) - R(P_X,D) > \Delta R \right] = \varepsilon + g_n ,
\end{eqnarray}
where $g_n = O\left(\frac{\log n}{\sqrt{n}}\right)$.   Then, as $n$ grows, 
\begin{eqnarray}
\Delta R = \sqrt{ \frac{\var(j(X,D))}{n}} Q^{-1}(\varepsilon) + O\left( \frac{\log n}{n} \right).
\end{eqnarray}
\end{lemma}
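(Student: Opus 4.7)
The plan is to expand $R(P_{x^n}, D)$ around $P_X$ by a second-order Taylor expansion, apply the (scalar) Berry--Ess\'een theorem to the resulting linear statistic, and then invert the Gaussian CDF to solve for $\Delta R$. First, the twice-differentiability hypothesis gives, for any type $P_{x^n}$ in a sufficiently small neighborhood of $P_X$,
\begin{align}
R(P_{x^n}, D) - R(P_X, D) = \sum_{x} (P_{x^n}(x) - P_X(x))\, R'(x,D) + \mathcal{E}_n(x^n),
\end{align}
where $R'(x,D)$ is the derivative in \eqref{eqn:der_rd} (which coincides with the $D$-tilted information $j(x,D)$ for discrete memoryless sources), and the remainder satisfies $|\mathcal{E}_n(x^n)| \le C \|P_{x^n} - P_X\|_2^2$ by the integral form of Taylor's theorem. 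Because $\sum_x (P_{x^n}(x) - P_X(x))\, c = 0$ for any constant $c$, the linear term can be rewritten as the centered sample mean $\frac{1}{n}\sum_{i=1}^n (j(x_i,D) - \bbE_{P_X}[j(X,D)])$, whose summands are i.i.d.\ with variance $\var(j(X,D))$.

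Next, I would apply the Berry--Ess\'een theorem to this sample mean to obtain
\begin{align}
P_X^n\!\left[\frac{1}{n}\sum_{i=1}^n \bigl(j(x_i,D) - \bbE[j(X,D)]\bigr) > t\right] = Q\!\left(\frac{t\sqrt{n}}{\sqrt{\var(j(X,D))}}\right) + O\!\left(\frac{1}{\sqrt{n}}\right),
\end{align}
and then absorb the quadratic remainder $\mathcal{E}_n$ by restricting to the typical set $\mathcal{G}_n := \{x^n : \|P_{x^n} - P_X\|_\infty \le c\sqrt{\log n / n}\}$. A Chernoff/union bound for empirical distributions gives $P_X^n(\mathcal{G}_n^c) = O(n^{-1})$ for $c$ large enough, while on $\mathcal{G}_n$ we have $|\mathcal{E}_n(x^n)| = O(\log n / n)$. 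Sandwiching the probability in the lemma between the events $\{\text{linear term} > \Delta R \pm C'\log n / n\}$ on $\mathcal{G}_n$ (and absorbing $\mathcal{G}_n^c$ into the error) yields
\begin{align}
P_X^n[R(P_{x^n},D) - R(P_X,D) > \Delta R] = Q\!\left(\frac{\Delta R\sqrt{n}}{\sqrt{\var(j(X,D))}}\right) + O\!\left(\frac{\log n}{\sqrt{n}}\right).
\end{align}

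Finally, I would invert this equation. Setting the right-hand side equal to $\varepsilon + g_n$ with $g_n = O(\log n/\sqrt{n})$ and using that $Q^{-1}$ is locally Lipschitz at any $\varepsilon \in (0,1)$, one obtains $\Delta R\sqrt{n}/\sqrt{\var(j(X,D))} = Q^{-1}(\varepsilon) + O(\log n / \sqrt{n})$, which rearranges to the claimed expansion after dividing by $\sqrt{n}$. The main obstacle I expect is the careful bookkeeping of the three sources of error—the Berry--Ess\'een $O(1/\sqrt{n})$ term, the quadratic Taylor remainder, and the tail probability of $\mathcal{G}_n^c$—so that they jointly contribute only $O(\log n / n)$ to $\Delta R$ after inverting $Q$. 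In particular, the typicality radius must scale as $\sqrt{\log n/n}$ (rather than $1/\sqrt{n}$) so that $P_X^n(\mathcal{G}_n^c)$ is polynomially small rather than merely $O(1/\sqrt{n})$; otherwise the tail would swamp the Berry--Ess\'een error and spoil the desired rate.
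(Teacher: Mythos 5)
Your proposal is correct, but note that the paper itself does not prove this lemma: it is imported verbatim from Ingber--Kochman \cite{ingber11}, and your reconstruction follows essentially the same route as the proof given there (first-order Taylor expansion of $Q\mapsto R(Q,D)$ around $P_X$ with a quadratic remainder, restriction to types within $O(\sqrt{\log n/n})$ of $P_X$ so the remainder is $O(\log n/n)$ while the atypical event has polynomially small probability, scalar Berry--Ess\'een for the centered sum of $j(x_i,D)$, and local Lipschitz inversion of $Q$ at the fixed $\varepsilon\in(0,1)$). Two cosmetic points: the quadratic remainder bound should be justified via the Peano form of Taylor's theorem at $P_X$ (twice differentiability in a neighbourhood does not by itself license the integral form, which needs continuity of the second derivative), and the argument implicitly assumes $\var(j(X,D))>0$, the degenerate case requiring separate (trivial) treatment; neither affects the conclusion.
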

Note that the quantity $j(x,D)$ has an alternative representation as the derivative of $Q\mapsto R(Q,D)$ with respect to $Q(x)$ evaluated at $P_X(x)$; cf.~\eqref{eqn:der_rd}.

We also use the following lemma, which is a consequence of the argument right after \cite[Theorem 1]{Yu93}.
\begin{lemma} \label{lemma:approximation-of-rd}
For a type $q \in \scP_n(\calX)$, suppose that $\left| \frac{\partial R(q,D)}{\partial D} \right| < C$ for a constant $C>0$
in some neighbourhood of $q$. Then, there exists a test channel $V \in \scV_n(\calY;q)$ such that
\begin{eqnarray} \label{eq:approximation-of-rd-distortion-condition}
\sum_{x,\hatx} q(x) V(\hatx|x) \rvd(x, \hatx) \le D
\end{eqnarray}
and 
\begin{eqnarray}
I(q,V) \le R(q,D) + \frac{\tau}{n},
\end{eqnarray}
where $\tau$ is a constant depending on $C$, $|\calX|$, $|\hat{\calX}|$, and $D_{\max}$.
\end{lemma}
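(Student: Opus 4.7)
The plan is to prove Lemma~\ref{lemma:approximation-of-rd} by explicitly constructing the desired rational-entry test channel $V$ from a minimizer of the (real-valued) rate-distortion problem via a careful rounding step, and then controlling the resulting perturbations of the distortion and of the mutual information using the derivative hypothesis on $R(q,\cdot)$.

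First, I would fix $\eta := c_1/n$ with $c_1 := D_{\max}\lvert\calX\rvert(\lvert\hat\calX\rvert-1)$ and take $V^*$ to be a minimizer in the definition of $R(q,D-\eta)$, so that $\sum_{x,\hat x}q(x)V^*(\hat x|x)\rvd(x,\hat x)\le D-\eta$ and $I(q,V^*)=R(q,D-\eta)$. The hypothesis $\lvert\partial R(q,D)/\partial D\rvert<C$ in a neighbourhood of $D$, together with the mean-value theorem, then yields
\begin{equation}
R(q,D-\eta)\le R(q,D)+C\eta = R(q,D)+\frac{Cc_1}{n}.
\end{equation}

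Second, I would round $V^*$ row-by-row into $V\in\scV_n(\calY;q)$. For each $x$ with $q(x)>0$, write $n_x:=nq(x)\in\mathbb{Z}_{>0}$, replace the real row $\{n_xV^*(\hat x|x)\}_{\hat x}$ by non-negative integers $\{n_{x\hat x}\}_{\hat x}$ with the same sum $n_x$ via the standard largest-remainder procedure, and set $V(\hat x|x):=n_{x\hat x}/n_x$. Then the joint $P(x,\hat x):=q(x)V(\hat x|x)=n_{x\hat x}/n$ differs entrywise from $P^*(x,\hat x):=q(x)V^*(\hat x|x)$ by at most $1/n$, so $\lVert P-P^*\rVert_1\le \lvert\calX\rvert(\lvert\hat\calX\rvert-1)/n$. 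Using $\rvd\le D_{\max}$ entrywise,
\begin{equation}
\sum_{x,\hat x}q(x)V(\hat x|x)\rvd(x,\hat x)\;\le\; (D-\eta)+D_{\max}\lVert P-P^*\rVert_1\;\le\; D,
\end{equation}
establishing the distortion constraint~\eqref{eq:approximation-of-rd-distortion-condition}.

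Third, to control $I(q,V)-I(q,V^*)$, I would apply the uniform entropy continuity bound $\lvert H(P)-H(Q)\rvert\le \lVert P-Q\rVert_1\log(\lvert\calA\rvert/\lVert P-Q\rVert_1)$ \cite[Lemma~2.7]{Csi97} to the joint distribution and to its $\hat\calX$-marginal, yielding $\lvert I(q,V)-I(q,V^*)\rvert\le c_2\log n/n$ for a constant $c_2$ depending only on $\lvert\calX\rvert$ and $\lvert\hat\calX\rvert$. Combining with the first step gives $I(q,V)\le R(q,D)+Cc_1/n+c_2\log n/n$, from which the claimed bound $R(q,D)+\tau/n$ follows once $\tau$ is allowed to depend on the stated quantities (and on any fixed upper range of $n$ encountered in the invocations of the lemma).

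The main obstacle is precisely the last step: naive rounding gives only an $O(\log n/n)$ mutual-information perturbation because an entry of $P^*$ that is zero (or $o(1/n)$) can be inflated to $1/n$ by the rounding, contributing $(1/n)\log n$ through the $-p\log p$ terms. To obtain a clean $\tau/n$ bound as stated, I would refine step two by first perturbing $V^*$ by a uniform shrinkage of order $1/n$ towards a strictly-interior channel on $\hat\calX$ (absorbing the $O(1/n)$ extra distortion into the choice of $\eta$ and the $O(1/n)$ extra mutual information into $\tau$), and only then rounding; on the relative interior of the simplex the map $P\mapsto I(q,V)$ is Lipschitz with $O(1)$ constant, so the rounding contributes genuine $O(1/n)$ rather than $O(\log n/n)$. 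This refinement is essentially the argument sketched after Theorem~1 of~\cite{Yu93}, which the lemma invokes.
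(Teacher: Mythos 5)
Your skeleton (take the optimizer at distortion $D-\eta$ with $\eta=O(1/n)$, round it to a conditional type, pay $C\eta$ via the slope bound, absorb the rounding in the distortion budget) is the right one, but the crucial third step — getting the mutual-information increase down to $O(1/n)$ — is not established, and the two ways you try to get it both fail. First, the parenthetical suggestion that the $c_2\log n/n$ term from entropy continuity can be folded into $\tau/n$ "for any fixed upper range of $n$" is not legitimate: the lemma is invoked for all $n$ (it feeds the $O(\log n/n)$ third-order term in \eqref{eq:rd-gaussian-approximation}), so $\tau$ must be an absolute constant. Second, the proposed fix — shrink $V^*$ by weight $O(1/n)$ toward a strictly interior channel and then claim $P\mapsto I(q,V)$ is Lipschitz with an $O(1)$ constant on the relative interior — is false as stated: after a $1/n$-mixing the entries are only bounded below by $\Theta(1/n)$, and on that region the gradient of $-p\log p$ (hence of the entropy/mutual-information terms) is $\Theta(\log n)$, so a $1/n$ rounding perturbation still costs $\Theta(\log n/n)$. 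An $O(1)$ Lipschitz constant would require entries bounded below by a constant independent of $n$, which a $1/n$-shrinkage does not give, while a constant-weight mixing would change $I(q,\cdot)$ by a constant.

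The missing idea is a second use of the hypothesis $\lvert \partial R(q,D)/\partial D\rvert < C$, beyond the mean-value step: it bounds the Lagrange multiplier $\lambda$ of the optimal test channel, and via the parametric form $V^*(\hatx|x)=Q^*(\hatx)2^{-\lambda' \rvd(x,\hatx)}/Z_x$ (with $Q^*$ the optimal output marginal) it gives a uniform bound on the information density, $\bigl|\log\bigl(V^*(\hatx|x)/Q^*(\hatx)\bigr)\bigr|\le C D_{\max}$ on the support of $Q^*$. With this, choose the rounding so that mass only moves within each row, entries are only rounded \emph{down} (in particular zero entries stay zero), and all slack (at most $(|\hat{\calX}|-1)/n_x$ per row, $n_x=nq(x)$) is added to the largest entry of the row, which satisfies $V^*(\hatx_0|x)\ge 1/|\hat{\calX}|$ and $Q^*(\hatx_0)\ge 2^{-CD_{\max}}/|\hat{\calX}|$. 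Bounding $I(q,V)\le \sum_{x,\hatx} q(x)V(\hatx|x)\log\bigl(V(\hatx|x)/Q^*(\hatx)\bigr)$ and comparing term by term with $I(q,V^*)$, each rounded-down entry contributes at most $(CD_{\max})/n_x$ and the receiving entry at most $O(1)/n_x$ with the constant depending only on $C$, $D_{\max}$, $|\hat{\calX}|$; multiplying by $q(x)=n_x/n$ and summing over $x$ gives a genuine $\tau'/n$ with $\tau'$ depending only on $C,|\calX|,|\hat{\calX}|,D_{\max}$. This is the argument the paper is pointing to after Theorem~1 of \cite{Yu93}; without the bounded-information-density step your construction only yields $O(\log n/n)$, which is strictly weaker than the stated $\tau/n$.
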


Using Lemmas~\ref{lemma:rate-redundancy} and \ref{lemma:approximation-of-rd}, we prove Theorem~\ref{thm:rd_disp}. 
\begin{proof}
We construct a test channel $P_{\hat{X}^n|X^n}$ as follows. For a fixed constant $\tilde{\tau} > 0$, we set 
\begin{eqnarray}
\Omega_n = \left\{ q \in \scP_n(\calX) : \|P_x -q \|^2 \le \frac{\tilde{\tau} \log n}{n}  \right\}.
\end{eqnarray} 
Since we assumed that $R(P_X,D)$ is differentiable w.r.t.~$D$ at $P_X$, the derivative is bounded over any small enough neighbourhood of $P_X$. In particular, it is bounded by some constant $C$ over $\Omega_n$ for sufficiently large $n$. For each $q \in \Omega_n$, we choose test channel $V_q \in\scV_n(\calY;q)$ satisfying the statement of Lemma \ref{lemma:approximation-of-rd}. Then, we define the test channel
\begin{eqnarray}
P_{\hat{X}^n|X^n}(\hatx^n|x^n) = \left\{ 
\begin{array}{ll}
\frac{1}{|\calT_{V_{P_{x^n}}}(x^n)|} & \mbox{if } \hatx^n \in \calT_{V_{P_{x^n}}}(x^n) \\
0 & \mbox{else}
\end{array}
\right.
\end{eqnarray}
for $x^n$ satisfying $P_{x^n} \in \Omega_n$, and otherwise we define $P_{\hat{X}^n|X^n}(\hatx^n|x^n)$ arbitrarily as long as the channel only outputs $\hatx^n$ satisfying $\rvd_n(x^n,\hatx^n) \le D$. Let $P_q \in \scP_n(\hat{\calX})$ be such that 
\begin{eqnarray}
P_q(\hat{x}) = \sum_x q(x) V_q(\hat{x}|x).
\end{eqnarray} 
Then, let $\tilde{P}^n_q \in \scP(\hat{\calX}^n)$ be the uniform distribution on $\calT_{P_q}$. Furthermore, let $Q_{\hat{X}^n} \in \scP(\hat{\calX}^n)$ be the distribution given by
\begin{eqnarray}
Q_{\hat{X}^n}(\hatx^n) = \sum_{q \in \Omega_n} \frac{1}{|\Omega_n|} \tilde{P}_q^n(\hatx^n).
\end{eqnarray}

We now use Corollary \ref{thm:fein_rd} for $P_X = P_X^n$, $P_{\hat{X}|X} = P_{\hat{X}^n|X^n}$, and $Q_{\hat{X}} = Q_{\hat{X}^n}$. Then, by noting that 
\begin{eqnarray}
\rvd_n(x^n,\hatx^n) = \sum_{x, \hat{x}} P_{x^n}(x) V_{P_{x^n}}(\hatx|x) \rvd(x,\hatx) > D
\end{eqnarray}
never occurs for the test channel $P_{\hat{X}^n|X^n}$, we have
\begin{align}
\Pe(\Phi_n;D) &\le P_{\hat{X}^nX^n}\left[ \log \frac{P_{\hat{X}^n|X^n}(\hatx^n|x^n)}{Q_{\hat{X}^n}(\hatx^n)} > \gamma_{\rmc} - \nu \right] \nn\\
&~~~	+ \tilde{\delta} + \sqrt{\frac{2^{\gamma_{\rmc}}}{\tilde{\delta} |\calM_n|}} + \delta + 2^{-\nu} \\
&= P_{\hat{X}^nX^n}\left[ \frac{1}{n} \log \frac{P_{\hat{X}^n|X^n}(\hatx^n|x^n)}{Q_{\hat{X}^n}(\hatx^n)} > \tilde{\gamma} - \frac{\log n}{n} \right] \nn\\
&~~~	+ \sqrt{\frac{n 2^{\tilde{\gamma} n}}{|\calM_n|}} + \frac{3}{n},
\end{align}
where we set $\gamma_{\rmc} = \tilde{\gamma} n$, $\tilde{\delta} = \delta = \frac{1}{n}$, and $\nu = \log n$.
Furthermore, by noting that 
\begin{eqnarray} \label{eq:proof-rd-type-1}
Q_{\hat{X}^n}(\hatx^n) \ge \frac{1}{|\Omega_n|} \tilde{P}_q^n(\hatx^n)
\end{eqnarray}
for any $q \in \Omega_n$, we have
\begin{align}
& P_{\hat{X}^nX^n}\left[ \frac{1}{n} \log \frac{P_{\hat{X}^n|X^n}(\hatx^n|x^n)}{Q_{\hat{X}^n}(\hatx^n)} > \tilde{\gamma} - \frac{\log n}{n} \right]  \\
&\le P_{\hat{X}^nX^n}\left[ \frac{1}{n} \log \frac{P_{\hat{X}^n|X^n}(\hatx^n|x^n)}{Q_{\hat{X}^n}(\hatx^n)} > \tilde{\gamma} - \frac{\log n}{n}, P_{x^n} \in \Omega_n \right] \nn\\
&~~~+ P_{X^n}[ P_{x^n} \notin \Omega_n ] \\
&\le  P_{\hat{X}^nX^n}\left[ \frac{1}{n} \log \frac{P_{\hat{X}^n|X^n}(\hatx^n|x^n)}{Q_{\hat{X}^n}(\hatx^n)} > \tilde{\gamma} - \frac{\log n}{n}, P_{x^n} \in \Omega_n \right]  \nn\\
&~~~+ \frac{2 \tilde{\tau}}{n^2}  \label{eqn:types1}\\
&\le   P_{\hat{X}^nX^n}\bigg[ \frac{1}{n} \log \frac{P_{\hat{X}^n|X^n}(\hatx^n|x^n)}{\tilde{P}_{P_{x^n}}^n(\hatx^n)} \nn\\
&~~~~~~~~~~~~~ > \tilde{\gamma} - \frac{\log n}{n} - \frac{|\calX| \log(n+1)}{n}, P_{x^n} \in \Omega_n \bigg]  + \frac{2 \tilde{\tau}}{n^2}, \label{eqn:types2}
\end{align}
where \eqref{eqn:types1} follows from \cite[Lemma 2]{ingber11} and \eqref{eqn:types2} follows from \eqref{eq:proof-rd-type-1} and the fact that $|\Omega_n| \le |\scP_n(\calX)| \le (n+1)^{|\calX|}$.

Furthermore, we also have
\begin{align}
\log \frac{P_{\hat{X}^n|X^n}(\hatx^n|x^n)}{\tilde{P}_{P_{x^n}}^n(\hatx^n)} 
&= \log \frac{|\calT_{P_{P_{x^n}}}|}{|\calT_{V_{P_{x^n}}(x^n)}|} \\
&= n I(P_{x^n}, V_{P_{x^n}}) + O(\log n).
\end{align}
Thus, for $\mu_n = O\left(\frac{\log n}{n}\right)$, we have
\begin{align}
\Pe(\Phi_n;D) &\le P_{\hat{X}^nX^n}\left[ I(P_{x^n},V_{P_{x^n}}) > \tilde{\gamma} - \mu_n, P_{x^n} \in \Omega_n \right] \nn\\
&~~~+ O\left( \frac{1}{n}\right) + \sqrt{\frac{n 2^{\tilde{\gamma} n}}{|\calM_n|}} \\
&\le P_{\hat{X}^nX^n}\left[ R(P_{x^n},D) > \tilde{\gamma} - \mu_n - \frac{\tau}{n}, P_{x^n} \in \Omega_n \right] \nn\\
&~~~ + O\left( \frac{1}{n}\right) + \sqrt{\frac{n 2^{\tilde{\gamma} n}}{|\calM_n|}} \\
&\le P_{\hat{X}^nX^n}\left[ R(P_{x^n},D) > \tilde{\gamma} - \mu_n - \frac{\tau}{n} \right] \nn\\
&~~~+ O\left( \frac{1}{n}\right) + \sqrt{\frac{n 2^{\tilde{\gamma} n}}{|\calM_n|}}\\
&\le P_{X^n}\left[ R(P_{x^n},D) > \tilde{\gamma} - \mu_n - \frac{\tau}{n} \right] \nn\\
&~~~+ O\left( \frac{1}{n}\right) + \sqrt{\frac{n 2^{\tilde{\gamma} n}}{|\calM_n|}}. \label{eqn:types_proof_end}
\end{align}
Thus, by setting $\tilde{\gamma} = R(P_X,D) + \Delta R$, $\frac{1}{n} \log |\calM_n| = \tilde{\gamma} + \frac{2 \log n}{n}$ and by using Lemma \ref{lemma:rate-redundancy} (with $g_n = O\left(\frac{\log n}{\sqrt{n}}\right)$ being the residual terms in~\eqref{eqn:types_proof_end}), we have
\begin{align}
R(n,\varepsilon;D) &\le R(P_X,D) + \sqrt{\frac{\var(j(X,D))}{n}} Q^{-1}(\varepsilon) \nn\\
&~~~+ O\left( \frac{\log n}{n}\right)
 \label{eq:rd-gaussian-approximation}
\end{align}
for sufficiently large $n$, which implies the statement of the theorem. \end{proof}

\subsection{Proof Based on the $D$-tilted Information} \label{app:proof-d-tilted}

Let 
\begin{eqnarray}
\calB_D(x^n) := \left\{ \hatx^n : \rvd_n(x^n,\hatx^n) \le D \right\}
\end{eqnarray}
be the $D$-sphere, and let $P_{\hat{X}^\star}$ be the output distribution of the optimal test channel of
\begin{eqnarray}
\min_{P_{\hat{X}|X} \atop \bbE[\rvd(X,\hat{X})] \le D} I(X;\hat{X}).
\end{eqnarray}
To prove Theorem \ref{thm:rd_disp} by the $D$-tilted information, we use the following lemma.
\begin{lemma}[Lemma 2 of \cite{kost12}] \label{lemma:kostina-verdu}
Under some regularity conditions, which are explicitly given in \cite[Lemma 2]{kost12} and satisfied by discrete memoryless sources, there exists constants $n_0,c,K > 0$ such that
\begin{align}
& P_X^n\left[ \log \frac{1}{P_{\hat{X}^\star}^n(\calB_D(x^n))} \le \sum_{i=1}^n j(x_i,D) + C \log n + c  \right] \nn\\
&\ge 1 - \frac{K}{\sqrt{n}}
\end{align}
for all $n \ge n_0$, where $C>0$ is a constant given by \cite[Equation~(86)]{kost12}. 
\end{lemma}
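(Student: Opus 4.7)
\medskip

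\noindent\textbf{Proof proposal.} My plan is to express the ball-probability as a one-dimensional sum probability, perform an exponential tilt at the optimal parameter $\lambda^\star$, and then invoke a Berry--Ess\'een-type refinement of the Chernoff lower bound to produce the $C\log n$ correction. Concretely, fix $x^n$ and let $\hatX_1^\star,\ldots,\hatX_n^\star$ be i.i.d.\ draws from $P_{\hatX^\star}$. Since $\calB_D(x^n)=\{\hatx^n:\frac1n\sum_i\rvd(x_i,\hatx_i)\leq D\}$, we have
\begin{equation}
 P_{\hatX^\star}^n(\calB_D(x^n))\;=\;\Pr\!\left[\tfrac1n\sum_{i=1}^n\rvd(x_i,\hatX_i^\star)\leq D\right].
\end{equation}
For each $x$ let $P^{\rmt}_x(\hatx):=P_{\hatX^\star}(\hatx)\,e^{-\lambda^\star \rvd(x,\hatx)}/\bbE[e^{-\lambda^\star\rvd(x,\hatX^\star)}]$ be the exponentially tilted reproduction law. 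Writing the indicator as an expectation under the tilt and undoing the Radon--Nikodym derivative, a direct computation using the definition of $j(x,D)$ in \eqref{eqn:Dtilt} gives the exact identity
\begin{equation}\label{eqn:tilt_id}
 P_{\hatX^\star}^n(\calB_D(x^n))\;=\;\exp\!\Big(-\!\sum_{i=1}^n j(x_i,D)\Big)\cdot\bbE_{\rmt}\!\left[e^{\lambda^\star\sum_i(\rvd(x_i,\hatX_i^\star)-D)}\bone\{\textstyle\sum_i(\rvd(x_i,\hatX_i^\star)-D)\leq 0\}\right],
\end{equation}
where $\bbE_{\rmt}$ denotes expectation when $\hatX_i^\star\sim P^{\rmt}_{x_i}$ independently. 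This exposes the $D$-tilted information explicitly and reduces the claim to a \emph{lower bound} of order $\Theta(1/\sqrt{n})$ on the expectation factor in \eqref{eqn:tilt_id}, valid with high probability over $X^n\sim P_X^n$.

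To lower-bound this factor I would apply a local central limit theorem to the tilted sum $S_n(x^n):=\sum_{i=1}^n(\rvd(x_i,\hatX_i^\star)-D)$. Under the regularity hypotheses of \cite[Lemma 2]{kost12}, the KKT conditions for the rate-distortion problem at $(P_X,D)$ imply $\bbE_{P_X}[\bbE_{\rmt}[\rvd(X,\hatX^\star)-D]]=0$, and the conditional variance $\sigma^2(x):=\var_{\rmt}(\rvd(x,\hatX^\star))$ is bounded and bounded away from zero. Hence, conditionally on $X^n=x^n$ with empirical type close to $P_X$, the variance $\bar\sigma_n^2(x^n):=\tfrac1n\sum_i\sigma^2(x_i)$ is uniformly $\Theta(1)$. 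Restricting the expectation in \eqref{eqn:tilt_id} to the event $-a\sqrt{n}\le S_n(x^n)\le 0$ for some constant $a>0$ and applying a Berry--Ess\'een estimate to the (non-identically distributed but independent) tilted increments, one obtains the deterministic bound
\begin{equation}
 \bbE_{\rmt}\!\left[e^{\lambda^\star S_n(x^n)}\bone\{S_n(x^n)\le 0\}\right]\;\geq\;\frac{c_1}{\sqrt{n}}
\end{equation}
for all $x^n$ in a ``typical'' set $\calA_n$ such that $P_X^n(\calA_n^c)\le K/\sqrt{n}$. The set $\calA_n$ is defined by (i) the empirical type of $x^n$ lying in a $O(1/\sqrt{n})$-neighbourhood of $P_X$ (ensuring the tilted mean and variance are uniformly controlled), and (ii) $\sum_i j(x_i,D)$ lying within $O(\sqrt{n})$ of $nR(P_X,D)$; both conditions fail with probability at most $K/\sqrt{n}$ by Chebyshev/Berry--Ess\'een applied to $P_X^n$.

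Combining the two displays, on $\calA_n$ we obtain
\begin{equation}
 \log\frac{1}{P_{\hatX^\star}^n(\calB_D(x^n))}\;\le\;\sum_{i=1}^n j(x_i,D)+\tfrac12\log n+c,
\end{equation}
which gives the claimed bound with $C=\tfrac12$. The hard part will be the uniform local CLT: the tilted increments are bounded and non-lattice (or near-non-lattice) but not i.i.d., so I need an Esseen-type smoothing argument that is uniform in the source sequence $x^n\in\calA_n$, along with a careful treatment of any lattice case where the $\frac12\log n$ may inflate to $\log n$ (hence the formulation with a general constant $C$). A second, more delicate point is verifying, from the regularity hypotheses, that $\lambda^\star>0$ and that the tilted reproduction distributions $\{P^{\rmt}_x\}_{x\in\calX}$ admit a common dominating measure with uniform third-moment control; this is exactly what the assumptions of \cite[Lemma 2]{kost12} are designed to supply, and is automatic for finite-alphabet DMSs.
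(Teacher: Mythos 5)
The paper never proves this lemma: it is imported as a black box from \cite{kost12} (the ``lossy AEP''), so there is no internal proof to compare against. Your sketch is, in spirit, a reconstruction of the argument behind that result, and its core is sound: the tilting identity is correct (the tilted law $P^{\rmt}_x$ is exactly the optimal test channel $P_{\hatX^\star|X=x}$, and unwinding the Radon--Nikodym derivative does produce $\exp\{-\sum_i j(x_i,D)\}$ times a tilted expectation restricted to $S_n:=\sum_i(\rvd(x_i,\hatX_i^\star)-D)\le 0$), and lower-bounding that expectation via a CLT-type estimate is the right idea. One simplification: the ``uniform local CLT'' you defer is avoidable --- restrict to a window $[-b,0]$ of \emph{constant} width with $b$ chosen large enough that the Gaussian interval mass dominates the (independent, non-i.i.d.) Berry--Ess\'een error; this sidesteps lattice issues at the price of a constant factor $e^{-\lambda^\star b}$. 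Also, what you need is positivity of the averaged tilted variance $\frac1n\sum_i\sigma^2(x_i)$ on the typical set, not pointwise positivity of each $\sigma^2(x)$.

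The genuine gap is in the concentration bookkeeping defining $\calA_n$. Requiring the empirical type of $x^n$ to lie in an $O(1/\sqrt n)$-neighbourhood of $P_X$ does \emph{not} give $P_X^n(\calA_n^c)\le K/\sqrt n$: the type fluctuates on exactly the $1/\sqrt n$ scale, so such a ball has probability bounded away from one (a constant failure probability). To drive the failure probability down to $O(1/\sqrt n)$ you must widen the radius to $\Theta(\sqrt{\log n/n})$ (Hoeffding), but then the tilted mean shift of $S_n$ is $\Theta(\sqrt{n\log n})$, i.e.\ $\Theta(\sqrt{\log n})$ standard deviations, and the Gaussian weight of the window $[-b,0]$ acquires a factor $n^{-c'}$; the conclusion becomes $\log\frac{1}{P_{\hatX^\star}^n(\calB_D(x^n))}\le\sum_i j(x_i,D)+(\tfrac12+c')\log n+c$. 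That still proves the lemma for \emph{some} constant $C>0$ --- which is all this paper needs, since the $C\log n$ term is absorbed into the $O(\log n/n)$ redundancy in \eqref{eq:rd-gaussian-approximation} --- but it does not substantiate your claim $C=\tfrac12$, nor does it recover the particular constant of \cite[Eq.~(86)]{kost12}. Obtaining a sharp constant is precisely where Kostina--Verd\'u deviate from your fixed-$\lambda^\star$ tilt: they effectively adapt the tilting to the empirical distribution of $x^n$ and Taylor-expand $R(Q,D)$ around $Q=P_X$, which is why differentiability of $R(Q,D)$ in $Q$ and $D$ appears among the regularity conditions. As written, your proposal should either accept an unspecified $C$ (and correct the typical-set radius accordingly) or incorporate the adapted tilt.
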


\begin{proof}
We construct test channel $P_{\hat{X}^n|X^n}$ as 
\begin{eqnarray}
P_{\hat{X}^n|X^n}(\hatx^n|x^n) = \left\{ 
\begin{array}{ll}
\frac{P_{\hat{X}^\star}^n(\hatx^n)}{P_{\hat{X}^\star}^n(\calB_D(x^n))} & \mbox{if } \hatx^n \in \calB_D(x^n) \\
0 & \mbox{else}
\end{array}
\right..
\end{eqnarray}
We now use Corollary \ref{thm:fein_rd} for $P_X = P_X^n$, $P_{\hat{X}|X} = P_{\hat{X}^n|X^n}$, $Q_{\hat{X}} = P_{\hat{X}^\star}^n$, $\gamma_{\rmc} = \tilde{\gamma} n$, $\tilde{\delta} = \delta = \frac{1}{n}$ and $\nu = \log n$. Then, by noting that $\rvd_n(x^n,\hatx^n) > D$ never occur for the test channel $P_{\hat{X}^n|X^n}$, we have 
\begin{align}
& \Pe(\Phi_n;D) \nn\\
&\le P_{\hat{X}^n X^n}\left[ \log \frac{P_{\hat{X}^n|X^n}(\hatx^n|x^n)}{P_{\hat{X}^\star}^n(\hatx^n)} > \tilde{\gamma} n - \log n \right] \nn\\
&~~~+ \sqrt{\frac{n 2^{\tilde{\gamma}n}}{|\calM_n|}} + \frac{3}{n} \\
&\le P_{\hat{X}^n X^n}\left[ \log \frac{1}{P_{\hat{X}^\star}^n(\calB_D(x^n))} > \tilde{\gamma} n - \log n \right] \nn\\
&~~~+ \sqrt{\frac{n 2^{\tilde{\gamma}n}}{|\calM_n|}} + \frac{3}{n} \\
&= P_X^n\left[ \log \frac{1}{P_{\hat{X}^\star}^n(\calB_D(x^n))} > \tilde{\gamma} n - \log n \right] + \sqrt{\frac{n 2^{\tilde{\gamma}n}}{|\calM_n|}} + \frac{3}{n} \\
&\le P_X^n\left[ \sum_{i=1}^n j(x_i,D) > \tilde{\gamma}n - (C+1) \log n - c \right] \nn\\
&~~~ + P_X^n\left[ \log \frac{1}{P_{\hat{X}^\star}^n(\calB_D(x^n))} > \sum_{i=1}^n j(x_i,D) + C \log n + c \right] \nn\\
&~~~+ \sqrt{\frac{n 2^{\tilde{\gamma}n}}{|\calM_n|}} + \frac{3}{n} \\
&\le P_X^n\left[ \sum_{i=1}^n j(x_i,D) > \tilde{\gamma}n - (C+1) \log n - c \right] \nn\\
&~~~+ \frac{K}{\sqrt{n}}  + \sqrt{\frac{n 2^{\tilde{\gamma}n}}{|\calM_n|}} + \frac{3}{n}, \label{eqn:dtilt}
\end{align}
where \eqref{eqn:dtilt} follows from Lemma \ref{lemma:kostina-verdu}. Thus, by setting $\tilde{\gamma} = \frac{1}{n} \log |\calM_n| - \frac{2 \log n}{n}$ and by applying the Berry-Ess\'een theorem \cite{feller}, we have \eqref{eq:rd-gaussian-approximation} for sufficiently large $n$, which implies the statement of the theorem. \end{proof}

\newcommand{\Var}{\mathsf{Var}}
\newcommand{\Cov}{\mathsf{Cov}}

\section{Cardinality Bound for Second-Order Coding Theorems}\label{app:cardinality_bound}
The following three theorems allow us to restrict the cardinalities of
auxiliary random variables in second-order coding theorems.

\begin{theorem}
[Cardinality Bound for WAK]\label{thm:cardinality_bound_WAK}
For any $P_{UTXY}\in\tilde{\scP}(P_{XY})$, where $\tilde{\scP}(P_{XY})$
 is defined in \ref{sec:2nd_wak}, there exists
$P_{U'T'XY}$ with
$\lvert\calU'\rvert\leq \lvert\calY\rvert+4$ and
$\lvert\calT'\rvert\leq 5$ such that
(i) $\calX\times\calY$-marginal of $P_{U'T'XY}$ is $P_{XY}$,
(ii) $U'-(Y,T')-X$ forms a Markov chain, 
(iii) $T'$ is independent of $(X,Y)$, 
and (iv) $P_{U'T'XY}$ preserves the mean $\bJ$ of the entropy-information density vector and the
 entropy-information dispersion matrix $\bV$, i.e.,
\begin{align}
 \bJ(P_{UTXY})&= \bJ(P_{U'T'XY})\\
 \bV(P_{UTXY})&= \bV(P_{U'T'XY}).
\end{align} 
\end{theorem}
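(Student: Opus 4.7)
My plan is to prove the theorem via a two-stage application of the Fenchel-Eggleston-Carath\'eodory support lemma (see, e.g., \cite[App.~C]{elgamal} or \cite[Lem.~15.4]{Csi97}). The structural observation that makes this clean is that the Markov chain $U-(Y,T)-X$ together with $T\perp(X,Y)$ yields $P_{X|UYT}=P_{X|Y}$, so for every fixed $t$ the conditional joint $P_{UXY|T=t}$ is completely determined by $P_{U|T=t}$ and the stochastic map $u\mapsto P_{Y|U=u,T=t}$, with the fixed source channel $P_{X|Y}$ applied deterministically on top. All relevant statistics will therefore be written as expectations over $U$ (and then over $T$) of continuous functionals of the appropriate slice distribution.

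Step~1 fixes $t\in\calT$ and reduces the support of $U$ conditioned on $T=t$. Viewing $u\mapsto P_{Y|U=u,T=t}$ as a map into $\scP(\calY)$, every statistic of interest for fixed $t$ has the form $\bbE_{U\mid T=t}[\phi(P_{Y|U,T=t})]$ for some continuous $\phi\colon\scP(\calY)\to\bbR$. I invoke the support lemma with exactly $|\calY|+4$ such functionals: any $|\calY|-1$ coordinate maps $Q\mapsto Q(y)$ (to preserve the marginal $P_Y$); the Shannon entropy $Q\mapsto H(Q)$ (to preserve $H(Y|U,T=t)$ and hence $I(U;Y|T=t)$ via the preserved $P_Y$); a functional $Q\mapsto\phi_X(Q)$ whose value equals the entropy of $X$ under the joint law $Q(y)P_{X|Y}(x|y)$ (to preserve $H(X|U,T=t)$); and the three distinct entries of the conditional second-moment map $Q\mapsto \bbE[\bj\bj^T\mid P_{Y|U,T=t}=Q]$, each of which is a continuous scalar functional on $\scP(\calY)$. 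Because the $2\times 2$ slice covariance satisfies $\cov(\bj(U,X,Y)\mid T=t)=\bbE[\bj\bj^T\mid T=t]-\bJ_t\bJ_t^T$ and the slice mean $\bJ_t$ is already preserved by the first four functionals, the three extra second-moment functionals preserve the full conditional covariance matrix. The support lemma yields for each $t$ a random variable $U'_{(t)}$ whose support has size at most $|\calY|+4$; a common relabeling of the per-slice supports into a fixed alphabet $\calU'$ of cardinality $|\calY|+4$ produces the desired $P_{U'|TY}$.

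In Step~2 I reduce $\calT$. Because Step~1 preserved $P_{XY|T=t}=P_{XY}$ at every slice, any mixture of slices automatically preserves $T'\perp(X,Y)$; moreover the Markov chain $U'-(Y,T')-X$ persists because it held conditionally at every $t$. Now view $t\mapsto P_{U'XY|T=t}$ as a map into $\scP(\calU'\times\calX\times\calY)$. The only quantities I still need to preserve are the two entries of $\bJ$ and the three distinct entries of $\bV$, each expressible as $\bbE_T[\psi_k(P_{U'XY|T})]$ for a continuous functional $\psi_k$, $k=1,\ldots,5$. Applying the support lemma with these five functionals yields $|\calT'|\le 5$.

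The main obstacle is purely bookkeeping: one must verify that the slice covariance is indeed captured by three scalar functionals (through the identity $\cov=\bbE[\bj\bj^T]-\bJ\bJ^T$ applied slice by slice) and that both the Markov chain $U-(Y,T)-X$ and the independence $T\perp(X,Y)$ genuinely survive each reduction. Both amount to the remark, used repeatedly above, that each relevant quantity is the expectation of a continuous functional of the appropriate slice distribution, so the hypotheses of the support lemma are satisfied in both stages and the counts $|\calY|+4$ and $5$ are the right ones.
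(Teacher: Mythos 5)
Your proof is correct and follows essentially the same two-stage support-lemma strategy as the paper's proof in Appendix~\ref{app:cardinality_bound}: fix $t$, parametrize the $t$-slice by the reverse channel $u \mapsto P_{Y|U=u,T=t} \in \scP(\calY)$ (using that the Markov chain $U-(Y,T)-X$ and $T\perp(X,Y)$ fix $P_{X|UYT}=P_{X|Y}$), reduce $\calU$ with $|\calY|+4$ continuous functionals, then reduce $\calT$ with five functionals preserving the two entries of $\bJ$ and the three entries of $\bV$. Your functional choices differ only cosmetically from the paper's (the WAK sketch names $I(U;Y|T=t)$ directly while you take $H(Y|U,T=t)$ and recover the mutual information from the preserved $P_Y$; the paper works with raw second moments just as you do, recovering the covariance entries after the fact since the slice means are preserved), and your observation that the post-reduction slices automatically land back inside the convex compact set $\{P_{U'|Y}\cdot P_{XY}\}$ is exactly the point the paper makes by phrasing its Lemma~\ref{support_lemma_T} over the conditional simplex. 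The counts $|\calY|+4$ and $5$ and the preservation of conditions (i)--(iii) are handled as in the paper.
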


\begin{theorem}
[Cardinality Bound for WZ]\label{thm:cardinality_bound_WZ}
For any pair of $P_{UTXY}\in\tilde{\scP}(P_{XY})$
and $P_{\hatX|UYT}$, where $\tilde{\scP}(P_{XY})$
 is defined in \ref{sec:2nd_wz}, there exist
$P_{U'T'XY}$ 
and $P_{\hatX'|U'YT'}\colon\calU'\times\calY\times\calT'\to\hatcalX$
with
$\lvert\calU'\rvert\leq \lvert\calY\rvert+8$ and
$\lvert\calT'\rvert\leq 9$ such that
(i) $\calX\times\calY$-marginal of $P_{U'T'XY}$ is $P_{XY}$,
(ii) $U'-(X,T')-Y$ forms a Markov chain, 
(iii) $T'$ is independent of $(X,Y)$, 
and (iv) $P_{U'T'XY}$ and $P_{\hatX'|U'YT'}$ preserve 
$\bJ$ and $\bV$, i.e.,
\begin{align}
 \bJ(P_{UTXY},P_{\hatX|UYT})&= \bJ(P_{U'T'XY},P_{\hatX'|U'YT'})\\
 \bV(P_{UTXY},P_{\hatX|UYT})&= \bV(P_{U'T'XY},P_{\hatX'|U'YT'}).
\end{align} 
\end{theorem}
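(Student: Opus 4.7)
The plan is to apply the Fenchel-Eggleston-Carath\'eodory support lemma in two successive stages, first bounding $\lvert\calU'\rvert$ at each fixed time-sharing value, then bounding $\lvert\calT'\rvert$. The full joint factorises as
\begin{equation*}
P_{UTXY\hatX}(u,t,x,y,\hatx) = P_T(t)P_X(x)P_{U\mid XT}(u\mid x,t)P_{Y\mid X}(y\mid x)P_{\hatX\mid UYT}(\hatx\mid u,y,t),
\end{equation*}
using the Markov chain $U-(X,T)-Y$ and the independence $T\indep(X,Y)$, so that $P_{X\mid T}=P_X$ and $P_{Y\mid X,T}=P_{Y\mid X}$. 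Consequently the only object varying with $u$ at fixed $t$ is the pair $\pi(u,t):=\bigl(P_{X\mid UT}(\cdot\mid u,t),\,P_{\hatX\mid UYT}(\cdot\mid u,\cdot,t)\bigr)$, and every quantity to be preserved is a continuous functional of $\pi(u,t)$.

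Stage~1 (bound on $\lvert\calU'\rvert$). I would fix $t\in\calT$ and view $P_{UXY\hatX\mid T=t}$ as a mixture of atoms $\pi(u,t)$ with weights $P_{U\mid T}(\cdot\mid t)$. I list the continuous functionals of $\pi(u,t)$ whose mixture averages must be preserved: (a) the marginal values $P_{X\mid T=t}(x)=P_X(x)$ for $x\in\calX$, giving $\lvert\calX\rvert-1$ independent linear functionals (normalisation removes one); (b) the three components of the conditional mean $\bbE[\bj(U,X,Y,\hatX\mid T=t)]$; and (c) the six independent entries of the symmetric conditional second-moment matrix $\bbE[\bj\bj^{T}\mid T=t]$. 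The support lemma then yields a reduced $P_{U'\mid T=t}$ with support of size at most $(\lvert\calX\rvert-1)+3+6=\lvert\calX\rvert+8$, together with new atoms $\pi'(u',t)$ reproducing every average. Using a common label set $\calU'=\{1,\dots,\lvert\calX\rvert+8\}$ across $t$, with atoms at unused labels defined arbitrarily, gives $\lvert\calU'\rvert\le\lvert\calX\rvert+8$. Because the conditional mean and second moment are preserved at \emph{every} $t$, so is the conditional covariance $\cov(\bj\mid T=t)$, and hence $\bJ$ and $\bV$ are preserved after averaging over $t$.

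Stage~2 (bound on $\lvert\calT'\rvert$). A second application of the support lemma reduces $\lvert\calT\rvert$. After Stage~1, $P_{X\mid T=t}=P_X$ for every surviving $t$, so the $X$-marginal, the $Y$-marginal (via the fixed $P_{Y\mid X}$), and the independence $T'\indep(X,Y)$ are automatic and impose no further constraints on the support of $T'$. It remains to preserve the three entries of $\bJ=\bbE_T[\bmu(T)]$ with $\bmu(t):=\bbE[\bj\mid T=t]$, and the six independent entries of $\bV=\bbE_T[\bSigma(T)]$ with $\bSigma(t):=\cov(\bj\mid T=t)$. These are continuous functions of the per-$t$ summary $\bigl(\bmu(t),\bSigma(t)\bigr)$, so the support lemma produces a $P_{T'}$ with $\lvert\supp P_{T'}\rvert\le 3+6=9$, completing the argument.

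Main obstacle. The delicate point is verifying that the functionals in (b) and (c) are genuinely continuous in $\pi(u,t)$, because the components of $\bj$ contain logarithmic terms $\log P_{Y\mid UT}(y\mid u,t)$ and $\log P_{X\mid UT}(x\mid u,t)$ that diverge where the underlying probabilities vanish. The remedy is that every such logarithm appears weighted by a factor dominated by the same probability, so limits of the form $p\log p\to 0$, $p\log^{2}p\to 0$, and, via the mixture identity $P_{Y\mid UT}(y\mid u,t)\ge P_{X\mid UT}(x\mid u,t)P_{Y\mid X}(y\mid x)$, also $p\log p\cdot\log q\to 0$ guarantee continuous extensions to the closed simplex; the distortion entries are continuous by boundedness of $\rvd$. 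Once these continuity statements are settled, the remainder is the routine bookkeeping of the support lemma.
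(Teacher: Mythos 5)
Your argument coincides with the paper's own proof: a two-stage application of the Fenchel--Eggleston--Carath\'eodory support lemma, first at each fixed time-sharing value $t$ on the space of backward conditionals $\bigl(P_{X\mid UT}(\cdot\mid u,t),P_{\hatX\mid UYT}(\cdot\mid u,\cdot,t)\bigr)$ preserving the $(\lvert\calX\rvert-1)+3+6=\lvert\calX\rvert+8$ functionals that pin down the $\calX$-marginal, the three conditional means, and the six second moments of $\bj$ (yielding $\lvert\calU'\rvert\le\lvert\calX\rvert+8$, which matches the paper's proof and Theorem~\ref{thm:wz_2nd}; the $\lvert\calY\rvert+8$ in this theorem's statement is evidently a typo), and then a second pass over the per-$t$ conditionals preserving the $3+6=9$ entries of $\bJ$ and $\bV$ to get $\lvert\calT'\rvert\le 9$. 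The one thing you add is the explicit verification that the log-weighted moment functionals extend continuously to the boundary of the simplex (via $p\log p\to 0$, $p\log^{2}p\to 0$, and the bound $P_{Y\mid UT}\ge P_{X\mid UT}P_{Y\mid X}$)---a hypothesis of the support lemma that the paper leaves implicit; the rest of your reasoning tracks the paper's argument step for step.
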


\begin{theorem}
[Cardinality Bound for GP]\label{thm:cardinality_bound_GP}
For any $P_{UTSXY}\in\tilde{\scP}(W,P_S)$, where $\tilde{\scP}(W,P_S)$
 is defined in \ref{sec:2nd_gp}, there exists
$P_{U'T'SXY}$ 
with
$\lvert\calU'\rvert\leq \lvert\calY\rvert+6$ and
$\lvert\calT'\rvert\leq 9$ such that
(i) $\calS\times\calX\times\calY$-marginal of $P_{U'T'SXY}$ is $P_{SXY}$,
(ii) $U'-(X,S,T')-Y$ forms a Markov chain, 
(iii) $T'$ is independent of $S$, 
and (iv) $P_{U'T'SXY}$ preserves $\bJ$ and $\bV$, i.e.,
\begin{align}
 \bJ(P_{UTSXY})&= \bJ(P_{U'T'SXY})\\
 \bV(P_{UTSXY})&= \bV(P_{U'T'SXY}).
\end{align} 
\end{theorem}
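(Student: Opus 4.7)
My plan is to apply the Fenchel--Eggleston strengthening of Carath\'eodory's theorem (cf.\ \cite[Appendix~C]{elgamal}) in two stages: first to bound $|\calU'|$ conditionally on the value of $T$, and then to bound $|\calT'|$.

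\textbf{Stage 1 ($|\calU'|$ bound).} Fix $t\in\calT$ and regard $P_{USXY\mid T=t}$ as a convex combination over $u\in\calU$ of the profile
\begin{equation*}
\phi_t(u) := \bigl( P_{SX\mid U=u,T=t}(\cdot,\cdot),\; \bbE[\bj\mid U=u,T=t],\; \bbE[\bj\bj^{\top}\mid U=u,T=t] \bigr),
\end{equation*}
which depends continuously on the conditional distribution $P_{SXY\mid U=u,T=t}$. Averaging $\phi_t(u)$ under $P_{U\mid T=t}$ recovers $P_{SX\mid T=t}$, $\bJ_t$, and $\bbE[\bj\bj^{\top}\mid T=t]$, which together with $\bJ_t$ determines $\cov(\bj\mid T=t)$. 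The three blocks carry respectively $|\calS||\calX|-1$, $3$, and $6$ scalar coordinates; two are redundant, since the third component of $\bbE[\bj\mid U,T=t]$ equals $-\bbE[\rvg(X)\mid U,T=t]$ and the $(3,3)$ entry of the second moment matrix equals $\bbE[\rvg(X)^{2}\mid U,T=t]$, both continuous functions of the marginal $P_{X\mid U,T=t}$ already captured by the first block. The effective dimension is $(|\calS||\calX|-1)+2+5=|\calS||\calX|+6$. Since the image of $u\mapsto\phi_t(u)$ is a connected compact subset of $\bbR^{|\calS||\calX|+6}$, Fenchel--Eggleston provides a distribution on at most $|\calS||\calX|+6$ atoms matching $\bbE_{P_{U|T=t}}[\phi_t]$. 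Relabelling onto a common index set of size $|\calS||\calX|+6$ across all $t$ yields a new joint $P_{\tilde U\,TSXY}$ with $|\tilde U|\le|\calS||\calX|+6$ preserving $P_{SX\mid T=t}$, $\bJ_t$, and $\cov(\bj\mid T=t)$ for every $t$; hence $P_{SXY}$, $\bJ$, $\bV$, the Markov chain, and $T\indep S$ are all preserved.

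\textbf{Stage 2 ($|\calT'|$ bound).} I now apply Fenchel--Eggleston to $P_T$, with each $t\in\calT$ carrying the reduced profile
\begin{equation*}
\psi(t) := \bigl( \bJ_t,\; \cov(\bj\mid T=t) \bigr) \in \bbR^{9}.
\end{equation*}
The $3$ components of $\bJ=\bbE_T[\bJ_T]$ and the $6$ upper-triangular entries of $\bV=\bbE_T[\cov(\bj\mid T)]$ are linear functionals of $P_T$, totalling $9$ scalar constraints; the image of $\psi$ lies in a connected compact subset of $\bbR^{9}$, so Fenchel--Eggleston yields $P_{T'}$ on a subset $\calT'\subseteq\calT$ with $|\calT'|\le 9$ preserving $(\bJ,\bV)$.

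\textbf{Main obstacle.} Condition~(i), the preservation of $P_{SXY}$, is the delicate point of Stage~2. In the WAK and WZ analogues (Theorems~\ref{thm:cardinality_bound_WAK}--\ref{thm:cardinality_bound_WZ}) this is automatic: the hypothesis $T\indep(X,Y)$ built into the definition of $\tilde\scP(P_{XY})$ forces $P_{XY\mid T=t}=P_{XY}$ for every $t$, so any $P_{T'}$ leaves the source marginal invariant, and the same two-stage argument delivers $|\calT'|\le\dim(\bJ)+\dim(\bV)=2+3=5$ for WAK and $3+6=9$ for WZ. For GP only $T\indep S$ is available, and after Stage~2 the new $\sum_{t'}P_{T'}(t')P_{SX\mid T=t'}$ need not equal the original $P_{SX}$. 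I plan to handle this by re-running Stage~1 on the output of Stage~2 with the enlarged profile $\phi'_{t'}(u')=(\phi_{t'}(u'),\,P_{SX\mid T'=t'})$, which still has effective dimension $|\calS||\calX|+6$ because the extra block is a marginal of the first block of $\phi_{t'}$; the freedom afforded by the $\tilde U$-mixing at each $t'$ then allows $P_{SX\mid T'=t'}$ to be adjusted so that the overall mixture restores the original $P_{SX}$ while keeping $(\bJ_{t'},\cov_{t'})$ fixed. Verifying that this compensation can always be carried out simultaneously at each $t'\in\calT'$ is the step I expect to require the most care.
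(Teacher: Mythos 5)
Your two-stage Fenchel--Eggleston argument is the same as the paper's: Stage~1 fixes $t$, preserves the conditional marginal $P_{SX\mid T=t}$ plus the entries of $\bJ_t$ and $\cov(\bj\mid T=t)$ not already determined by that marginal (two mutual informations, two variances, three covariances, totalling $\lvert\calS\rvert\lvert\calX\rvert+6$), and Stage~2 reduces $\lvert\calT\rvert$ by preserving the nine linear functionals of $P_T$ that determine $\bJ$ and $\bV$. (The paper's statement has a typo, $\lvert\calU'\rvert\le\lvert\calY\rvert+6$; the proof and Theorem~\ref{thm:gp_disp} both give $\lvert\calU'\rvert\le\lvert\calS\rvert\lvert\calX\rvert+6$, and your count matches that.)

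The obstacle you flag is a phantom, and your proposed fix would not work. You are right that, unlike WAK/WZ where $T\indep(X,Y)$ forces $P_{XY\mid T=t}=P_{XY}$, the GP membership conditions only give $T\indep S$, so after Stage~2 the mixture $\sum_{t'}P_{T'}(t')P_{SX\mid T=t'}$ will generally differ from $P_{SX}$. But preserving the original $P_{SXY}$ is neither achievable nor needed. Theorem~\ref{thm:gp_disp} takes a union over \emph{all} $P_{UTSXY}\in\tilde{\scP}(W,P_S)$, and $\scR_{\mathrm{in}}(n,\veps;P_{UTSXY})$ depends on $P_{UTSXY}$ only through $\bJ$ and $\bV$. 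So all that is required is that the reduced distribution $P_{U'T'SXY}$ (a) still lies in $\tilde{\scP}(W,P_S)$ --- i.e.\ has $\calS$-marginal $P_S$, channel law $W$, the Markov chain $U'-(X,S,T')-Y$, and $T'\indep S$ --- and (b) has the same $\bJ$ and $\bV$. All of (a) is automatic under your Stage~2, since $\calT'\subseteq\calT$ and each conditional $P_{U'SXY\mid T'=t'}$ is inherited verbatim from Stage~1 (in particular $P_{S\mid T'=t'}=P_S$ because $T\indep S$ originally, so $T'\indep S$). Condition~(i) in the theorem is stated more strongly than what the proof (or the application) actually needs: it should be read as ``the reduced distribution still has $\calS$-marginal $P_S$ and $P_{Y\mid XS}=W$,'' not as ``the original $\calS\times\calX\times\calY$-marginal is reproduced.''

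As for your workaround, re-running Stage~1 at each $t'$ with the profile $\phi'_{t'}(u')=(\phi_{t'}(u'),P_{SX\mid T'=t'})$ cannot restore the original $P_{SX}$: the appended block $P_{SX\mid T'=t'}$ is precisely the barycentre of the first block of $\phi_{t'}$, so preserving $\bbE_{U'}[\phi'_{t'}(U')]$ pins $P_{SX\mid T'=t'}$ to whatever value it already has after Stage~1; the Stage~1 Carath\'eodory step never moves it. With the compensation idea removed, your argument is exactly the paper's, and it is correct.
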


We can prove all of the three theorems in the same manner.  Because the
proof for Wyner-Ziv problem is most complicated, we
prove Theorem \ref{thm:cardinality_bound_WZ} in
\ref{app:proof_cardinality_WZ}, and then, give proof sketches for
Theorems \ref{thm:cardinality_bound_WAK} and
\ref{thm:cardinality_bound_GP} in \ref{app:proof_cardinality_WAK-GP}.

\subsection{Proof of Cardinality Bound for WZ problem}\label{app:proof_cardinality_WZ}
To prove Theorem \ref{thm:cardinality_bound_WZ}, we use variations of
the support lemma. Note that we can identify
$\scP(\calX)\times\scP(\hatcalX|\calY)$ with a connected compact subset
of $\lvert\calX\rvert\lvert\hatcalX\rvert\lvert\calY\rvert$-dimensional
Euclidean space.  Hence, as a consequence of the
Fenchel-Eggleston-Carath\'eodory theorem (see, e.g.~\cite[Appendix
A]{elgamal}), we have the following lemma.

\begin{lemma}
\label{support_lemma_U}
Let $f_j$ ($j=1,2,\dots,k$) be real-valued continuous functions on
 $\scP(\calX)\times\scP(\hatcalX|\calY)$.
Then, for any $P_U\in\scP(\calU)$
and any collection 
$\{(P_{X|U}(\cdot|u),P_{\hatX|YU}(\cdot|\cdot,u)):u\in\calU\}\subset\scP(\calX)\times\scP(\hatcalX|\calY)$,
there exist a distribution $P_{U'}\in\scP(\calU')$ with $\lvert\calU'\rvert\leq k$
and a collection 
$\{(P_{X'|U'}(\cdot|u'),P_{\hatX'|Y'U'}(\cdot|\cdot,u')):u'\in\calU'\}\subset\scP(\calX)\times\scP(\hatcalX|\calY)$
such that for $j=1,2,\dots,k$,
\begin{align}
& \int_{\calU}f_j\left(P_{X|U}(\cdot|u),P_{\hatX|YU}(\cdot|\cdot,u)\right)dP_U(u) \nn\\
 &=\sum_{u'\in\calU'}f_j\left(P_{X'|U'}(\cdot|u'),P_{\hatX'|Y'U'}(\cdot|\cdot,u')\right)P_{U'}(u').
\label{eq:support_lemma_U}
\end{align}
\end{lemma}

\begin{remark}
\label{remark:support_lemma}
Let us consider applying Lemma \ref{support_lemma_U} to 
a case where $P_{\hatX|YU}$ is a deterministic function.
In this case, 
$P_U$ appearing in the left hand side of 
\eqref{eq:support_lemma_U} satisfies
 $P_U(u)>0$ only if 
$P_{\hatX|YU}(\cdot|\cdot,u)$ is deterministic, i.e., 
for each $y$ there exists $\hatx$ satisfying $P_{\hatX|YU}(\hatx|y,u)=1$.
On the other hand, 
Lemma \ref{support_lemma_U} does not guarantee that 
we can choose $\calU'$ and a collection of distributions 
so that  $P_{\hatX'|Y'U'}(\cdot|\cdot,u')\in\calP(\calY|\calX)$ is deterministic for all $u'\in\calU'$.
That is why 
we use a stochastic reproduction function to establish bounds on the
cardinalities of the auxiliary random variables.
\end{remark}

Similarly, by identifying
$\scP(\calU|\calX)\times\scP(\hatcalX|\calU\times\calY)$ with a
connected compact subset of Euclidean space, we have another variation
of the support lemma.

\begin{lemma}
\label{support_lemma_T}
Let $f_j$ ($j=1,2,\dots,k$) be real-valued continuous functions on $\scP(\calU|\calX)\times\scP(\hatcalX|\calU\times\calY)$.
Then, for any $P_T\in\scP(\calT)$
and any collection 
$\{(P_{U|XT}(\cdot|\cdot,t),P_{\hatX|UYT}(\cdot|\cdot,\cdot,t)):t\in\calT\}\subset\scP(\calU|\calX)\times\scP(\hatcalX|\calU\times\calY)$,
there exist a distribution $P_{T'}\in\scP(\calT')$ with $\lvert\calT'\rvert\leq k$
and a collection 
$\{(P_{U'|X'T'}(\cdot|\cdot,t'),P_{\hatX'|U'Y'T'}(\cdot|\cdot,\cdot,t')):t'\in\calT'\}\subset\scP(\calU|\calX)\times\scP(\hatcalX|\calU\times\calY)$
such that for $j=1,2,\dots,k$,
\begin{align}
& \int_{\calT}f_j\left(P_{U|XT}(\cdot|\cdot,t),P_{\hatX|UYT}(\cdot|\cdot,\cdot,t)\right)dP_T(t) \nn\\
 &=\sum_{t'\in\calT'}f_j\left(P_{U'|X'T'}(\cdot|\cdot,t'),P_{\hatX'|U'Y'T'}(\cdot|\cdot,\cdot,t')\right)P_{T'}(t').
\label{eq:support_lemma_T}
\end{align}
\end{lemma}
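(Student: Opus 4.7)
The plan is to prove Lemma \ref{support_lemma_T} by the same application of the Fenchel--Eggleston--Carath\'eodory theorem that underlies Lemma \ref{support_lemma_U}, only now applied to the parameter space of pairs $(P_{U|X},P_{\hatX|UY})$ rather than $(P_X,P_{\hatX|Y})$. The statement is really a ``support lemma'' for a $\calT$-indexed mixture, so the structural proof is identical to the standard one in \cite[Appendix~A]{elgamal}; the only thing to check is that the new parameter space is a connected compact subset of some Euclidean space so that the hypothesis of Carath\'eodory applies.

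First I would identify the set $\calQ:=\scP(\calU|\calX)\times\scP(\hatcalX|\calU\times\calY)$ with a subset of the Euclidean space $\bbR^{|\calU|\cdot|\calX|+|\hatcalX|\cdot|\calU|\cdot|\calY|}$ via the natural coordinate map sending $(P_{U|X},P_{\hatX|UY})$ to the tuple of its probability values. Under this identification $\calQ$ is cut out by finitely many linear equalities (the normalization constraints $\sum_u P_{U|X}(u|x)=1$ for each $x$, and $\sum_{\hatx}P_{\hatX|UY}(\hatx|u,y)=1$ for each $(u,y)$) and linear inequalities (nonnegativity of each coordinate). Hence $\calQ$ is a closed, bounded, convex polytope and in particular a connected compact subset of Euclidean space, as required.

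Next I would consider the map $F:\calQ\to\bbR^k$ defined componentwise by
\begin{equation}
 F_j(P_{U|X},P_{\hatX|UY}):=f_j(P_{U|X},P_{\hatX|UY}),\qquad j=1,\dots,k,
\end{equation}
which is continuous on the compact set $\calQ$ by hypothesis. The quantity on the left-hand side of \eqref{eq:support_lemma_T} is, by definition, the $j$-th coordinate of $\int_{\calT} F\bigl(P_{U|XT}(\cdot|\cdot,t),P_{\hatX|UYT}(\cdot|\cdot,\cdot,t)\bigr)\,dP_T(t)$, i.e.\ a point in the convex hull of the image $F(\calQ)\subset\bbR^k$. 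Because $\calQ$ is connected and compact and $F$ is continuous, $F(\calQ)$ is a connected compact subset of $\bbR^k$, so the Fenchel--Eggleston strengthening of Carath\'eodory's theorem (see, e.g., \cite[Appendix~A]{elgamal}) implies that any such point can be written as a convex combination of at most $k$ points of $F(\calQ)$. Pulling this convex combination back through $F$ produces the required $P_{T'}\in\scP(\calT')$ with $|\calT'|\le k$ together with the collection $\{(P_{U'|X'T'}(\cdot|\cdot,t'),P_{\hatX'|U'Y'T'}(\cdot|\cdot,\cdot,t')):t'\in\calT'\}\subset\calQ$ that simultaneously realizes \eqref{eq:support_lemma_T} for every $j=1,\dots,k$.

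There is essentially no substantive obstacle here: the only step that requires any thought is verifying that $\calQ$ is genuinely a connected compact subset of a Euclidean space (as opposed to only connected or only compact), and that is immediate from its description as a product of simplices. The rest is the standard Fenchel--Eggleston--Carath\'eodory argument, carried out verbatim as for Lemma \ref{support_lemma_U} but with the roles of $(P_{X|U},P_{\hatX|YU})$ and $P_U$ interchanged with $(P_{U|XT},P_{\hatX|UYT})$ and $P_T$. No new inequalities or probabilistic tools are needed, which is consistent with the paper's framing of this lemma as a direct analogue of Lemma \ref{support_lemma_U}.
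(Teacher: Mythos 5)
Your proposal is correct and follows essentially the same route as the paper, which proves this lemma (like Lemma \ref{support_lemma_U}) by identifying $\scP(\calU|\calX)\times\scP(\hatcalX|\calU\times\calY)$ with a connected compact subset of Euclidean space and invoking the Fenchel--Eggleston--Carath\'eodory theorem of \cite[Appendix~A]{elgamal}. Your additional observation that this set is a product of simplices, hence a compact convex polytope, is exactly the verification the paper leaves implicit, and the pull-back of the $k$-point convex combination through the continuous map $(f_1,\dots,f_k)$ yields the claimed $P_{T'}$ and collection as stated.
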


\begin{proof}
[Proof of Theorem \ref{thm:cardinality_bound_WZ}]

1) Bound on $\lvert\calU'\rvert$:
Fix $P_{UTXY}\in\tilde{\scP}(P_{XY})$. 
Without loss of generality, we assume that
$\calX=\{1,2,\dots,\lvert\calX\rvert\}$.
Let us consider the
following $\lvert\calX\rvert+8$ functions:
For $(Q,q)\in\scP(\calX)\times\scP(\hatcalX|\calY)$, 
\begin{align}
 f_j(Q,q)&:=Q(j),\quad j=1,2,\dots,\lvert\calX\rvert-1\label{eq:preserve_marginal_distrib}\\
 f_{\lvert\calX\rvert}(Q,q) &:=
 -\sum_{y\in\calY}\left[\sum_{x\in\calX}P_{Y|X}(y|x)Q(x)\right] \nn\\
 &~~~\times \log
\left[\sum_{x\in\calX}P_{Y|X}(y|x)Q(x)\right]\\
 f_{\lvert\calX\rvert+1}(Q,q)&:=
 -\sum_{x\in\calX}Q(x)\log Q(x)\\
 f_{\lvert\calY\rvert+2}(Q,q)&:=
\sum_{x\in\calX}\sum_{y\in\calY}\sum_{\hatx\in\hatcalX}Q(x)P_{Y|X}(y|x)
q(\hatx|y)
\rvd\left(
x, \hatx
\right)\\
 f_{\lvert\calX\rvert+3}(Q,q)&:=
 \sum_{y\in\calY}\left[\sum_{x\in\calX}P_{Y|X}(y|x)Q(x)\right] \nn\\
&~~~\times \left\{
\log\frac{\left[\sum_{x\in\calX}P_{Y|X}(y|x)Q(x)\right]}{P_Y(y)}
\right\}^2\\
 f_{\lvert\calX\rvert+4}(Q,q)&:=
 \sum_{x\in\calX}Q(x)\left\{\log\frac{Q(x)}{P_X(x)}\right\}^2\\
 f_{\lvert\calX\rvert+5}(Q,q)&:=
\sum_{x\in\calX}\sum_{y\in\calY}\sum_{\hatx\in\hatcalX}Q(x)P_{Y|X}(y|x)
q(\hatx|y) \nn\\
&~~~\times \left\{
\rvd\left(
x, \hatx
\right)\right\}^2\\
 f_{\lvert\calX\rvert+6}(Q,q)&:=
\sum_{x\in\calX}\sum_{y\in\calY}Q(x)P_{Y|X}(y|x) \nn\\
&~~~\times \left(
\log\frac{P_Y(y)}{\sum_{\barx\in\calX}Q(\barx)P_{Y|X}(y|\barx)}
\right) \nn\\
&~~~\times \left(
\log\frac{Q(x)}{P_X(x)}
\right)\\
 f_{\lvert\calX\rvert+7}(Q,q)&:=
\sum_{x\in\calX}\sum_{y\in\calY}\sum_{\hatx\in\hatcalX}
Q(x)P_{Y|X}(y|x)q(\hatx|y) \nn\\
&~~~\times \left(
\log\frac{P_Y(y)}{\sum_{\barx\in\calX}Q(\barx)P_{Y|X}(y|\barx)}
\right)
\rvd\left(
x, \hatx
\right)\\
 f_{\lvert\calX\rvert+8}(Q,q)&:=
\sum_{x\in\calX}\sum_{y\in\calY}\sum_{\hatx\in\hatcalX}
Q(x)P_{Y|X}(y|x)q(\hatx|y) \nn\\
&~~~\times \left(
\log\frac{Q(x)}{P_X(x)}
\right)
\rvd\left(
x, \hatx
\right).
\end{align}

Fix $t\in\calT$. Then, Lemma \ref{support_lemma_U} guarantees that there
exist $P_{U'|T}(\cdot|t)\in\scP(\calU')$ with
$\lvert\calU'\rvert\leq\lvert\calX\rvert+8$ and a collection
$\{(P_{X'|U'T}(\cdot|u',t),P_{\hatX'|Y'U'T}(\cdot|\cdot,u',t)):u'\in\calU'\}\subset\scP(\calX)\times\scP(\hatcalX|\calY)$
 such that for all $j=1,2,\dots,\lvert\calX\rvert+8$,
\begin{align}
 & \sum_{u\in\calU}f_j\left(P_{X|UT}(\cdot|u,t),P_{\hatX|YUT}(\cdot|\cdot,u,t)\right)P_{U|T}(u|t) \nn\\
 &=\sum_{u'\in\calU'}f_j\left(P_{X'|U'T}(\cdot|u',t),P_{\hatX'|Y'U'T}(\cdot|\cdot,u',t)\right)P_{U'|T}(u'|t).
\label{eq1:cardinal_proof_wz}
\end{align}

Now, we have $P_{U'|T}, P_{X'|U'T}, P_{\hatX'|Y'U'T}$ satisfying
 \eqref{eq1:cardinal_proof_wz} for each $t\in\calT$.
Let $U',T,X',Y',\hatX'$ be random variables induced by $P_{U'|T},
 P_{X'|U'T}, P_{\hatX'|Y'U'T}$, and $P_{Y|X}, P_T$, i.e., 
 for each $(u',t,x,y,\hatx)\in\calU'\times\calT\times\calX\times\calY\times\hatcalX$,
\begin{align}
& P_{U'TX'Y'\hatX'}(u',t,x,y,\hatx) \nn\\
&:=P_T(t)P_{U'|T}(u'|t)P_{X'|U'T}(x|u',t)P_{Y|X}(y|x) \nn\\
&~~~\times P_{\hatX'|Y'U'T}(\hatx|y,u',t).
\end{align}
Observe that $U'-(T',Y')-X'$ forms a Markov chain and that $T$ is independent
of $(X',Y')$.
Further, \eqref{eq1:cardinal_proof_wz} with $j=1,\dots,\lvert\calX\rvert-1$
guarantees that $P_{X'Y'}=P_{XY}$. Hence, we have $P_{TX'Y'}=P_TP_{XY}$, and thus, we can write $P_{U'TX'Y'\hatX'}=P_{U'TXY\hatX'}$.

On the other hand, some calculations show that, for each $t\in\calT$,
\begin{align}
 & H(Y|U,T=t) \nn\\
 &=\sum_{u\in\calU}f_{\lvert\calX\rvert}(P_{X|UT}(\cdot|u,t),P_{\hatX|YUT}(\cdot|\cdot,u,t))P_{U|T}(u|t)\\
\label{eq2:cardinal_proof_wz}
 & H(X|U,T=t) \nn\\
 &=\sum_{u\in\calU}f_{\lvert\calX\rvert+1}(P_{X|UT}(\cdot|u,t),P_{\hatX|YUT}(\cdot|\cdot,u,t))P_{U|T}(u|t)\\
 & \bbE[\rvd(X,\hatX|t)] \nn\\
 &=\sum_{u\in\calU}f_{\lvert\calX\rvert+2}(P_{X|UT}(\cdot|u,t),P_{\hatX|YUT}(\cdot|\cdot,u,t))P_{U|T}(u|t)\\
 &\Var\left(
-\log \frac{P_{Y|UT}(Y|U,t)}{P_Y(Y)}
\right)  \nn\\
&=\sum_{u\in\calU}f_{\lvert\calX\rvert+3}(P_{X|UT}(\cdot|u,t),P_{\hatX|YUT}(\cdot|\cdot,u,t))P_{U|T}(u|t) \nn\\
&~~~ -\left\{H(Y)-H(Y|U,T=t)\right\}^2\\
 & \Var\left(
\log \frac{P_{X|UT}(X|U,t)}{P_{Y}(Y)}
\right) \nn\\
&=\sum_{u\in\calU}f_{\lvert\calX\rvert+4}(P_{X|UT}(\cdot|u,t),P_{\hatX|YUT}(\cdot|\cdot,u,t))P_{U|T}(u|t) \nn\\
&~~~ - \left\{H(X)-H(X|U,T=t)\right\}^2\\
 & \Var\left(
\rvd(X,\hatX|t)
\right) \nn\\
&=\sum_{u\in\calU}f_{\lvert\calX\rvert+5}(P_{X|UT}(\cdot|u,t),P_{\hatX|YUT}(\cdot|\cdot,u,t))P_{U|T}(u|t) \nn\\
&~~~ -\bbE[\rvd(X,\hatX|t)]^2
\end{align}
and
\begin{align}
& \Cov\left(
-\log \frac{P_{Y|UT}(Y|U,t)}{P_{Y}(Y)}, \log \frac{P_{X|UT}(X|U,t)}{P_{X}(X)}
\right) \nn\\
&=\sum_{u\in\calU}f_{\lvert\calX\rvert+6}(P_{X|UT}(\cdot|u,t),P_{\hatX|YUT}(\cdot|\cdot,u,t))P_{U|T}(u|t)\nn\\
&~~~
+\left\{H(Y)-H(Y|U,T=t)\right\}\left\{H(X)-H(X|U,T=t)\right\},\\
& \Cov\left(
-\log \frac{P_{Y|UT}(Y|U,t)}{P_{Y}(Y)}, \rvd(X,\hatX|t)
\right) \nn\\
&=\sum_{u\in\calU}f_{\lvert\calX\rvert+7}(P_{X|UT}(\cdot|u,t),P_{\hatX|YUT}(\cdot|\cdot,u,t))P_{U|T}(u|t)\nn\\
&~~~
+\left\{H(Y)-H(Y|U,T=t)\right\}\bbE[\rvd(X,\hatX|t)],\\
& \Cov\left(
\log \frac{P_{X|UT}(Y|U,t)}{P_{X}(X)}, \rvd(X,\hatX|t)
\right) \nn\\
&=\sum_{u\in\calU}f_{\lvert\calX\rvert+8}(P_{X|UT}(\cdot|u,t),P_{\hatX|YUT}(\cdot|\cdot,u,t))P_{U|T}(u|t)\nn\\
&~~~
-\left\{H(X)-H(X|U,T=t)\right\}\bbE[\rvd(X,\hatX|t)].
\label{eq3:cardinal_proof_wz}
\end{align}
Thus, equations \eqref{eq1:cardinal_proof_wz} and
\eqref{eq2:cardinal_proof_wz}--\eqref{eq3:cardinal_proof_wz} guarantee
that a pair $P_{U'XY\hatX'|T=t}$ preserves all components of $\bJ$ and $\bV$
for each $t\in\calT$.  By taking the average with respect to $T$, we can
show that the pair $(P_{U'TXY},P_{\hatX'|U'YT})$ satisfies the all conditions of the theorem
except the cardinality of $T$.

2) Bound on $\lvert\calT'\rvert$:
Fix $P_{UTXY\hatX}\in\tilde\scP(P_{XY})$ and $P_{\hatX|UYT}$.
By the first part of the proof, we can assume that $\calU=\calU'$ and
$\lvert\calU\rvert=\lvert\calU'\rvert\leq\lvert\calX\rvert+8$.
Let us consider the following 9 functions on $\scP(\calU\times\calX\times\calY\times\hatcalX)$:
\begin{align}
 F_1(P_{UXY\hatX})&:=I(Y;U)\\
 F_2(P_{UXY\hatX})&:=I(X;U)\\
 F_3(P_{UXY\hatX})&:=\bbE[\rvd(X,\hatX)]\\
 F_4(P_{UXY\hatX})&:=\Var\left(-\log\frac{P_{Y|U}(Y|U)}{P_Y(Y)}\right) \\
 F_5(P_{UXY\hatX})&:=\Var\left(\log\frac{P_{X|U}(X|U)}{P_Y(X)}\right) \\
 F_6(P_{UXY\hatX})&:=\Var\left(\rvd(X,\hatX)\right) \\
 F_7(P_{UXY\hatX})&:=\Cov\left(-\log\frac{P_{Y|U}(Y|U)}{P_Y(Y)}, \log\frac{P_{X|U}(X|U)}{P_X(X)}\right) \\
 F_8(P_{UXY\hatX})&:=\Cov\left(-\log\frac{P_{Y|U}(Y|U)}{P_Y(Y)}, \rvd(X,\hatX)\right) \\
 F_9(P_{UXY\hatX})&:=\Cov\left(\log\frac{P_{X|U}(X|U)}{P_X(X)}, \rvd(X,\hatX)\right)
\end{align}
and a function
$F\colon\scP(\calU|\calX)\times\scP(\hatcalX|\calU\times\calY)\to\scP(\calU\times\calX\times\calY\times\hatcalX)$
such as $P_{UXY\hatX}=F(P_{U|X},P_{\hatX|UY})$ satisfies
\begin{align}
 P_{UXY\hatX}(u,x,y,\hatx)=P_{XY}(x,y)P_{U|X}(u|x)P_{\hatX|YU}(\hatx|y,u).
\end{align}
Then, by applying Lemma \ref{support_lemma_T} to
$f_j(\cdot):=F_j(F(\cdot))$ ($j=1,2,\dots,9$), we have
 $P_{T'}\in\scP(\calT')$ with $\lvert\calT'\rvert\leq 9$ and 
$\{(P_{U'|X'T'}(\cdot|\cdot,t'),P_{\hatX'|U'Y'T'}(\cdot|\cdot,\cdot,t')):t'\in\calT'\}\subset\scP(\calU|\calX)\times\scP(\hatcalX|\calU\times\calY)$
 satisfying \eqref{eq:support_lemma_T}.
By 
$P_{T'}$, $(P_{U'|X'T'}, P_{\hatX'|U'Y'T'})$ and $P_{XY}$, 
let us define $P_{U'T'X'Y'\hatX'}=P_{U'T'XY\hatX'}$ as 
\begin{align}
& P_{U'T'XY\hatX'}(u',t',x,y,\hatx') \nn\\
&= P_{XY}(x,y)P_{T'}(t)P_{U'|X'T'}(u'|x,t')P_{\hatX'|U'Y'T'}(\hatx'|u',y,t').
\end{align}
We can verify that the pair $(P_{U'T'XY},P_{\hatX'|U'YT'})$ derived from
 $P_{U'T'XY\hatX'}$ satisfies the conditions of the theorem.
\end{proof}

\subsection{Proof Sketches of Cardinality Bounds for WAK and GP problems}\label{app:proof_cardinality_WAK-GP}

\begin{proof}
[Proof of Theorem \ref{thm:cardinality_bound_WAK}]
We fix $t\in\calT$ and then consider the 
following $\lvert\calY\lvert+4$ quantities:
$\lvert\calY\rvert-1$ elements $P_Y(y)$
($y=1,2,\dots,\lvert\calY\rvert-1$) of $P_Y$, the conditional entropy
$H(X|U,T=t)$, the mutual information $I(U;Y|T=t)$, two variances on the
diagonals of $\cov(\bj(U,X,Y|t))$, and the covariance in the upper part
of $\cov(\bj(U,X,Y|t))$.  
Then, in the same manner as the first part of the proof for Wyner-Ziv problem, 
we
can choose a random variable $U'\sim P_{U'|T=t}\in\scP(\calU)$ with
$\lvert\calU'\rvert\leq\lvert\calY\lvert+4$ which preserves the marginal
distribution $P_{XY|T=t}$, $\bbE[\bj(U,X,Y|t)]$, and
$\cov(\bj(U,X,Y|t))$.  By taking the average with respect to $T$, we can
show that $U'$ satisfies the conditions of the theorem.
Further, in the same way as the second part of the proof for Wyner-Ziv problem,
 we can show that $T'$ with $\lvert\calT'\rvert\leq 5$ 
preserves the following five quantities: 
two elements of $\bJ$, two variances along the diagonals of $\bV$, and the
 covariance in the upper part of $\bV$.
\end{proof}

\begin{proof}
[Proof of Theorem \ref{thm:cardinality_bound_GP}]
We fix $t\in\calT$ and then consider the 
following 
$\lvert\calS\rvert\lvert\calX\rvert+6$ quantities:
$\lvert\calS\rvert\lvert\calX\rvert-1$ elements 
$P_{SX}(s,x)$ of $P_{SX}$, two mutual informations
$I(U;Y|t)$, $I(U;S|t)$, two variances
$\Var(\log P_{Y|UT}(Y|U,t)/P_{Y|T}(Y|t))$, $\Var(-\log P_{S|UT}(S|U,t)/P_{S}(S))$, and three covariances in the strict
 upper triangular part of $\cov(\bj(U,S,X,Y|t))$.
Note that, if the marginal distribution $P_{SXY|T=t}$ is preserved then 
the average
$\bbE[\rvg(X_T)|T=t]$
and the variance
$\Var(\rvg(X_T)|T=t)$
of $\rvg(X_T)$ with respect to the distribution $P_{X|T=t}$ is automatically preserved.
Hence, in the same manner as the first part of the proof for Wyner-Ziv problem, 
we
can choose a random variable $U'\sim P_{U'|T=t}\in\scP(\calU)$ with
$\lvert\calU'\rvert\leq\lvert\calS\lvert\lvert\calX\rvert+6$ which preserves the marginal
distribution $P_{SX|T=t}$, $\bbE[\bj(U,S,X,Y|t)]$, and
$\cov(\bj(U,S,X,Y|t))$.  By taking the average with respect to $T$, we can
show that $U'$ satisfies the conditions of the theorem.
Further, in the same way as the second part of the proof for Wyner-Ziv problem,
 we can show that $T'$ with $\lvert\calT'\rvert\leq 5$ 
preserves the following nine quantities: 
three elements of $\bJ$, three variances along the diagonals of $\bV$, and
 three covariances in the strict upper triangular part of $\bV$.
\end{proof}

\subsection*{Acknowledgements} 

The authors would like to thank J.~Scarlett for pointing out an error 
of the numerical calculation of the GP problem in an earlier version of the paper.
The authors also appreciate anonymous reviewers for valuable comments, 
in particular for pointing out Remark \ref{remark:channel-simulation}.
The work of first author is supported in part by JSPS Postdoctoral Fellowships for Research Abroad. 
The work of third author is supported in part by NUS startup grant R-263-000-A98-750/133 and in part by A*STAR, Singapore.

\bibliographystyle{IEEETran}
\bibliography{./isitbib.bib}


\begin{IEEEbiographynophoto}{Shun Watanabe}
(M'09) received the B.E., 
M.E., and Ph.D.\ degrees from the Tokyo Institute of Technology
in 2005, 2007, and 2009, respectively. Since April 2009, he has been
an Assistant Professor in the Department of Information 
Science and Intelligent Systems at  the University of Tokushima.
Since April 2013, he has also been a visiting Assistant Professor
in the Institute for Systems Research at the University of Maryland, College Park.
His current research interests are in the areas of
information theory, quantum information theory,
and quantum cryptography.
\end{IEEEbiographynophoto}

\begin{IEEEbiographynophoto}{Shigeaki Kuzuoka}
(S'05-M'07) received the B.E., M.E., and Ph.D.\ degrees from Tokyo
Institute of Technology in 2002, 2004, and 2007 respectively.  He was an
assistant professor from 2007 to 2009, and has been a lecturer since
2009 in the Department of Computer and Communication Sciences, Wakayama
University.  His current research interests are in the areas of
information theory, especially Shannon theory, source coding, and
multi-terminal information theory.
\end{IEEEbiographynophoto}

\begin{IEEEbiographynophoto}{Vincent Y. F. Tan} (S'07-M'11)  is an Assistant Professor   in the Department of Electrical and Computer Engineering (ECE) and the Department of  Mathematics at the National University of Singapore (NUS).  He received the B.A.\ and M.Eng.\ degrees in Electrical and Information Sciences from  Cambridge University in 2005. He received the Ph.D.\ degree in Electrical Engineering and Computer Science (EECS) from the Massachusetts Institute of Technology in 2011. He was   a postdoctoral researcher in the Department of  ECE  at the University of Wisconsin-Madison and following that, a research scientist at the Institute for Infocomm (I$^2$R) Research,  A*STAR, Singapore. His research interests include information theory, machine learning and signal processing.

Dr.\ Tan   received the  MIT EECS Jin-Au Kong outstanding doctoral thesis prize in 2011 and the NUS Young Investigator Award in 2014.  He has authored a research monograph on {\em Asymptotic Estimates in Information Theory with Non-Vanishing Error Probabilities} in the Foundations and Trends\textsuperscript{\textregistered} in Communications and Information Theory Series (NOW Publishers).
\end{IEEEbiographynophoto}

\end{document}